\title{How Expressive Are Graph Neural Networks in the Presence of Node Identifiers?}
  \author{Arie Soeteman}
\affiliation{%
  \institution{Institute for Logic Language and Computation\\University of Amsterdam}
  \city{Amsterdam}
  \country{the Netherlands}}
\email{a.w.soeteman@uva.nl}
\author{Michael Benedikt}
\affiliation{%
  \institution{Department of Computer Science\\
  University of Oxford}
  \city{Oxford}
  \country{UK}}
\email{michael.benedikt@cs.ox.ac.uk}
\author{Martin Grohe}
\affiliation{%
  \institution{Lehrstuhl Informatik 7\\
  RWTH Aachen University}
  \city{Aachen}
  \country{Germany}}
\email{grohe@informatik.rwth-aachen.de}
\author{Balder ten Cate}
\affiliation{%
  \institution{Institute for Logic Language and Computation\\
  University of Amsterdam}
  \city{Amsterdam}
  \country{the Netherlands}}
\email{b.d.tencate@uva.nl}
\setlist{leftmargin=5.5mm,partopsep=0pt,topsep=0pt,itemsep=2pt,parsep=2pt} 
\newcommand{\kw}[1]{{\ensuremath{\mathsf{#1}}}\xspace} %michael: generic keyword
\newtheorem{dummy}{Dummy}[section]
\newtheorem{theorem}[dummy]{Theorem}
\newtheorem{proposition}[theorem]{Proposition}
\newtheorem{definition}[theorem]{Definition}
\newtheorem{example}[theorem]{Example}
\newtheorem{lemma}[theorem]{Lemma}
\newtheorem{claim}[theorem]{Claim}
\newcommand{\tdash}{\text{-}}
\newcommand{\agnn}{{\mathcal G}} %better choice?
\newcommand{\altgnn}{{\mathcal H}} %better choice?
\newcommand{\agraph}{G} %a colored graph
\newcommand{\altgraph}{H} %another colored graph
\newcommand{\apgraph}{G^u}
\newcommand{\altpgraph}{H^v}
\newcommand{\rneighborhood}{\upharpoonright
 r}
\newcommand{\agg}{\kw{agg}}
\newcommand{\com}{\kw{com}}
\newcommand{\balder}[1]{}
\newcommand{\arie}[1]{}
\newcommand{\martin}[1]{}
\newcommand{\michael}[1]{}
\newcommand{\group}{{\mathcal S}}
\newcommand{\combineclass}{\mathcal C}
\newcommand{\uddl}{\kw{LDDL}}
\newcommand{\false}{\kw{false}}
\newcommand{\WGMLtrue}{\WGML(\top)}
\newcommand{\WGMLmodal}{\WGML(\kw{modal})}
\newcommand{\reals}{{\mathbb R}}
\newcommand{\relu}{\kw{ReLU}}
\newcommand{\sigmoid}{\kw{sigmoid}}
\newcommand{\feature}{\kw{Feature}}
\newcommand{\localsum}{\kw{LocalSum}}
\newcommand{\localmax}{\kw{LocalMax}}
\newcommand{\localmin}{\kw{LocalMin}}
\newcommand{\globalsum}{\kw{GlobalSum}}
\newcommand{\globallocalsum}{\kw{GlobalLocalSum}}
\newcommand{\localmaxF}{(\localmax)--($F$)\xspace}
\newcommand{\localsumF}{(\localsum)--($F$)\xspace}
\newcommand{\unarysemilinear}{\ensuremath{\text{Semilin}_1}}
\newcommand{\localsumsemilinear}{\localsum--Semilinear\xspace}
\newcommand{\localsumcontinuous}{\localsum--Continuous\xspace}
\newcommand{\localsumrelu}{\localsum--\FFN(\ReLU)\xspace}
\newcommand{\localsumsigmoid}{\localsum--\FFN(\ReLU, \sigmoid)\xspace}
\newcommand{\globalsumsemilinear}{\globalsum--Semilinear\xspace}
\newcommand{\localmaxsigmoid}{\localmax--\FFN(\relu,\sigmoid)\xspace}
\newcommand{\localmaxcontinuous}{\localmax--Continuous\xspace}
\newcommand{\localmaxrelu}{\localmax--\FFN(\ReLU)\xspace}
\newcommand{\localmaxsemilinear}{\localmax--Semilinear\xspace}
\newcommand{\localmaxunarysemilinear}{\localmax--\FFN(Semilin$_1$)\xspace}
\newcommand{\ifPos}{\kw{ifPos}}
\newcommand{\ReLU}{\kw{ReLU}}
\newcommand{\Heavi}{\kw{H}}
\newcommand{\FFN}{{\kw{FFN}}}
\newcommand{\FFNs}{{\kw{FFNs}}}
\newcommand{\policy}[2]{\textup{\ensuremath{(#1/#2)}}}
\newcommand{\GNN}{\textsc{GNN}\xspace}
\newcommand{\GNNs}{\textsc{GNN}s\xspace}
\newcommand{\GML}{\textsc{GML}\xspace}
\newcommand{\WGML}{\textsc{WGML}\xspace}
\newcommand{\ordinvfo}{\textsc{OrdInvFO} \xspace}
\newcommand{\ML}{\textsc{ML}\xspace}
\newcommand{\HASH}{\textsc{hash}\xspace}
\newcommand{\step}{\kw{step}}
\newcommand{\test}{\kw{test}}
\newcommand{\stay}{\kw{stay}}
\newcommand{\adom}{\kw{Adom}}
\newcommand{\myeat}[1]{}
\newcommand{\calL}{\mathcal{L}}
\newcommand{\calQ}{\mathcal{Q}}
\newcommand{\col}{\textup{col}\xspace}
\newcommand{\emb}{\textup{emb}\xspace}
\newcommand{\key}{\textup{key}\xspace}
\newcommand{\val}{\textup{val}\xspace}
\newcommand{\lab}{\textup{lab}}
\newcommand{\refine}{\textup{refine}}
\newcommand{\multiset}[1]{\{\!\!\{#1\}\!\!\}}
\newcommand{\multisets}[1]{\mathcal{M}(#1)}
\newcommand{\widebox}[1]{%
  \fbox{%
    \parbox{0.50\textwidth}{\centering #1}%
  }%
}
\newcommand{\sem}[1]{\left\llbracket#1\right\rrbracket}
\DeclareMathOperator{\colr}{cr}
\newcommand{\Nat}{{\mathbb N}}
\newcommand{\fbisim}{\ensuremath{\xrightarrow{\text{bisim}}}}
\newcommand{\covers}{\ensuremath{\xrightarrow{\text{cov}}}}
\newcommand{\FO}{\textsc{FO}}
\newcommand{\FOC}{\textsc{FO+C}}
\newcommand{\ord}{\operatorname{ord}}
\newcommand{\ordinvfoc}{\textsc{OrdInvFO+C} \xspace}
\newcommand{\num}{\textup{num}\xspace}
\begin{document}
\sloppy % Enable math line breaking, I think this is better than the overflows we have otherwise

\begin{abstract}
Graph neural networks (GNNs) are a widely used class of machine learning models for graph-structured data, based on local aggregation over neighbors. GNNs have close connections to logic. In particular, their expressive power is linked to that of modal logics and bounded-variable logics with counting. 
In many practical scenarios, graphs processed by GNNs have node features that act as unique identifiers. In this work, we study how such identifiers affect the expressive power of GNNs.
We initiate a study of the \emph{key-invariant} expressive power of GNNs, inspired by the notion of order-invariant definability in finite model theory: \emph{which node queries that depend only on the underlying graph structure
can GNNs express on graphs with unique node identifiers?}
We provide answers for various classes of GNNs with local max- or sum-aggregation.

\end{abstract}

{
  \setlength{\parskip}{0pt}
\maketitle
}% \setlength{\parindent}{0pt}

\section{Introduction} \label{sec:intro}
Graph Neural Networks (\GNNs), and specifically message-passing neural networks \cite{scarselli2008graph,gilmer2017neural} are deep learning models for graph data. Since \GNNs act natively on graphs, their predictions are invariant under isomorphisms. \GNNs play a key role in machine learning on graphs and have been applied successfully in numerous domains such as molecular property prediction \cite{besharatifard2024review}, system monitoring \cite{huang2024link} and knowledge graph completion \cite{zhu2021neural,cucala2022explainable}. \GNNs are defined in a finite sequence of layers. Each layer computes a real-valued \emph{embedding vector} for each node $u$ of the graph, by taking the embedding vectors produced by the previous layer as input, applying an \emph{aggregation function} to the vectors of the neighbors of $u$, and applying a \emph{combination function} to combine the result of the aggregation with the current embedding vector of $u$. Typical aggregation functions include maximum and sum. The most common choice for combination function is a feedforward neural network. The initial embedding vector of each node is an encoding of the node features, and the embeddings produced by the final layer are used to compute an output, such as a Boolean value in the case of node classification.

Logic and descriptive complexity have played a central role in the study 
of \GNNs over the last years. For example,  
indistinguishability by GNNs with sum-aggregation has been characterized
in terms of the \emph{color refinement} test (which can be equivalently cast in terms of  \emph{graded bisimulations}) \cite{howpowerful,goneural}, and, 
when viewed as a query language over node labeled graphs,
\GNNs with sum-aggregation subsume \emph{graded modal logic}
(GML) \cite{barceloetallogical} and are contained in \emph{first-order logic with counting} (FO+C) \cite{grohedescriptivegnn}. 
Subsequently, the field has seen numerous logical characterizations of \GNN with specific aggregation or combination functions (e.g., \cite{schonherr2025logical,barcelo2025logical}), or with more expressive message-passing architectures (e.g., \cite{goneural, geerts2022expressiveness, soeteman2025logical, hauke2025aggregate}). Connections to logic have also been
used, for instance, to obtain decision procedures for static analysis problems involving \GNNs~\cite{benedikt2024Decidability}.

One conclusion arising from this line of work is that
the basic \GNN architecture, as we described it above, is limited in its expressive power: there are
many natural queries that cannot be expressed by a \GNN, which limits the range of classifiers that 
can potentially be learned when using \GNNs as graph learning architecture.
Several approaches for addressing this shortcoming have been explored. In particular, \GNNs can approximate every function over bounded size graphs, if these graphs are augmented with randomly generated node features \cite{Abboud2021Surprising}. This result builds on the observation that with high probability random node features are unique identifiers that distinguish all nodes in a graph.
% \emph{individualizes} a graph with high probability, generating unique features for each node. 
Note however that adding random node features means that 
 \GNNs are no longer guaranteed to be invariant for isomorphisms of the original graph (they are isomorphism-invariant only in expectation). Other 
approaches have been developed for generating unique node identifiers~\cite{dasoulas2020coloring,franks2023systematic, pellizzoni2024expressivity} but they suffer from the same problem. 

This raises a natural question, which we investigate here:
\begin{itemize}
\item[(*)]
\emph{Which isomorphism invariant properties can \GNNs express on graphs with unique node identifiers?}
\end{itemize}

We have motivated this question based on the 
study of randomized \GNNs, but it is of broader
interest, since unique node identifiers
are often present in practical applications.
We give two examples. 

Firstly, geometric \GNNs work on graphs where node vectors have been enriched with geometric attributes, such as coordinates in Euclidean space or distances between nodes. Assuming that no two nodes occupy the same point in space, coordinates are unique node identifiers. In this geometric setting, isomorphic graphs with different geometric attributes yield different predictions, but it is generally desirable that the output of the \GNN is invariant (or equivariant) under group transformations, such as translations or rotations in Euclidean space (see \cite{han2025survey} for an overview). Answers to (*) provide a lower bounds for the expressiveness of these models by assessing the expressiveness of \GNNs invariant for \emph{all} bijections on the geometric attributes, instead of only the bijections from a specific group (cf.~also~Section~\ref{sec:groups}).

Secondly, inspired by transformer architectures, a recent trend in graph machine learning is the use of positional encodings to enhance the expressiveness of graph neural networks. Positional encodings are additional node features that ideally reflect the structural role a node plays in a graph. Various approaches to obtain such positional encodings have been proposed, the most important of which are based on random walks \cite{RampasekGDLWB22, grotschla2024benchmarking}, local substructures \cite{YouGYL21,BouritsasFZB23}, and eigenvectors \cite{dwivedi2020generalization,KreuzerBHLT21}. There are also methods for learning effective positional encodings~\cite{canturk2024graphpositional,HuangL0YZJ024,KanatsoulisCJLR25}. The node features provided by positional encodings are not necessarily unique node identifiers, though often they are for ``typical graphs''. %\michael{Citation/pointer to something more precise about typicality?} 
Furthermore, they are not necessarily isomorphism invariant, which raises 
the question which isomorphism-invariant properties can be expressed using such positional encodings.
We do not pursue this here, but our results provide a starting point for investigating this question for different types of positional encodings.

% Secondly, graph transformers generate positional encodings from the Laplacian of a graph \cite{dwivedi2020generalization}. \red{Expand a bit on the relevant methods here, and give a bit of explanation on what these positional encodings achieve.} 

A completely different motivation for our research comes from work in finite model theory on order-invariant logics (see Section~\ref{para:order-inv} for the definition). It was realized early that over finite structures, order-invariant first-order logic is more expressive than plain first-order logic~\cite{Gurevich84}. The expressiveness of order-invariant logics has been an active research topic. For example, it has been proved that order-invariant first-order logic can only express (weakly) local properties \cite{GroheS00}, and interesting connections to descriptive complexity and proof complexity have been established \cite{ChenF12}.

\begin{figure}
\begin{center}
\small
\begin{tikzcd}[column sep=5mm]
& \textbf{Arbitrary} & \\
\textbf{Semilinear} \arrow[ur] && \textbf{Continuous} \arrow[ul] \\
\text{\FFN($\ReLU,\Heavi$)} \arrow[u] && \text{\FFN(\ReLU,$\sigmoid$)} \arrow[u] \\
& \textbf{\FFN(\ReLU)} \arrow[ul] \arrow[ur] 
\end{tikzcd}
\end{center}
\caption{Combination function classes ordered by increasing expressive power. Here $\sigmoid$ denotes the activation $\sigmoid(x) = \frac{1}{1+e^{-x}}$, and $\Heavi$ is the Heaviside step function.}
\label{fig:combination_fns_hierarchy}
\end{figure}
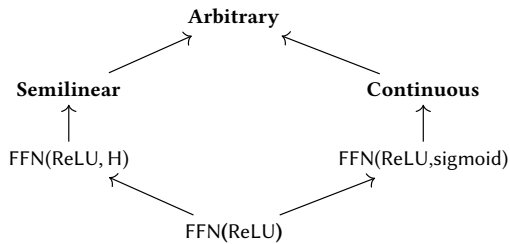

\subsection*{Contributions}
Inspired by the notion of order-invariance in finite model theory, we define a \emph{key-invariant} \GNN
to be a \GNN that operates on graphs including
a key (i.e., a real-valued node feature acting as 
a unique node identifier), such that
the predictions of the \GNN depend only on the
underlying graph and not on the keys
(so long as each node has a different key).
This allows us to rephrase the question (*) as 
asking what is the expressive power of key-invariant \GNNs.

We study the expressibility of \emph{node queries} (i.e., queries that take as input a pointed graph and that return true/false) by key-invariant \GNNs using
either \localmax or \localsum aggregation, with different types of combination functions---$\ReLU$-FFNs,  continuous functions, semilinear functions and arbitrary functions (cf.~Figure~\ref{fig:combination_fns_hierarchy}).
In each setting, we study 
the expressible node queries from several different angles:

\begin{enumerate}
    \item Closure properties, i.e., invariance for suitable variants of bisimulation and/or color refinement;
    \item Lower bounds and upper bounds in terms of variants of graded modal logic (GML), order-invariant FO, and first-order logic with counting (FO+C);
    \item Collapse and non-collapse results,  addressing the question when key-invariant GNNs can be reduced to key-oblivious GNNs;
    \item Specific case studies, such as $\calQ_{\text{even}}$ (the node query that tests whether the given node has an even number of neighbors) and
    isomorphism type queries $\calQ_{\apgraph}$ (i.e., node queries that test whether the connected component of an input pointed graph is isomorphic to a specific graph $\apgraph$). 
\end{enumerate}

\begin{table*}[t]
\scriptsize
\centering
\renewcommand{\arraystretch}{1.05}
% Left table
\begin{minipage}{0.48\textwidth}
\centering
\caption{Logic bounds, closure properties and collapse results for the expressive power of key-invariant \localmax \GNNs.}
\begin{tabular}{p{0.3\textwidth} p{0.27\textwidth} p{0.27\textwidth}}
\toprule
Combination fns & Subsumes & Subsumed by \\
\midrule
\FFN(\relu) 
& \makecell[l]{%
  $\WGMLtrue$ (Thm.~\ref{thm:WGML encoded by localmax}(a))
}& \\
\addlinespace
Continuous & \makecell[l]{%
  $\WGMLmodal$\\(Thm.~\ref{thm:WGML encoded by localmax}(b))
}&
\makecell[l]{%
  Queries closed under inverse\\
  functional bisimulations\\
  (Thm.~\ref{thm:max yes-open closed under inverse fbisim})
}\\
\addlinespace
\makecell[l]{Continuous with \\ \policy{>0}{<0} acceptance}
& \multicolumn{2}{c}{%
  \widebox{Key-oblivious \localmaxrelu = Modal Logic (Thm.~\ref{thm:collapseyesnolocalmax})}
} \\
\addlinespace
Semilinear 
& \makecell[l]{%
  \uddl (Thm.~\ref{thm:uddllower});\\
  Isomorphism types\\(Thm.~\ref{thm:localmaxsemilinear recognizes up to isomorphism})
}& \\
Arbitrary && \ordinvfo (Thm.~\ref{thm:orderinvariant}) \\
\addlinespace
\bottomrule
\end{tabular}
\label{tab:localmxsummary}
\vspace{0.05cm}
\end{minipage}
\hfill
% Right table
\begin{minipage}{0.48\textwidth}
\centering
\caption{Logic bounds, closure properties and collapse results for the expressive power of key-invariant \localsum \GNNs.}
\begin{tabular}{ p{0.3\textwidth} p{0.27\textwidth} p{0.27\textwidth}}
\toprule
Combination fns & Subsumes & Subsumed by \\
\midrule
\FFN(\relu)  &&
\makecell[l]{\ordinvfoc (Thm \ref{thm:SumReluinOrdInvFOC})} \\
\addlinespace
\makecell[l]{\FFN(\relu) with \\ \policy{\geq 1}{\leq 0} acceptance}
& \multicolumn{2}{c}{%
  \widebox{Key-oblivious \localsumrelu
  (Thm.~\ref{thm:localsumgapcollapse})}
} \\
\addlinespace
Continuous  & 
& Queries closed under inverse coverings
  (Thm.~\ref{thm:closed-under-coverings}) \\
\addlinespace
\makecell[l]{Continuous with \\  \policy{>0}{<0} acceptance}
& \multicolumn{2}{c}{%
  \widebox{Key-oblivious \localsumcontinuous\\ (all strongly local CR-invariant queries)
  (Thm.~\ref{thm:>0/<0 localsumcontinuous collapses to key-oblivious})}
} \\
\addlinespace
Semilinear + $(\cdot)^2$ 
& \mbox{Isomorphism types} (Thm.~\ref{thm:iso test with local sum}) & \\
\addlinespace
Arbitrary 
& \multicolumn{2}{c}{%
  \widebox{All strongly local queries (Thm.~\ref{thm:localsumarbunlimited})}
} \\
\bottomrule
\end{tabular}
\label{tab:localsumsummary}
\vspace{0.4cm}
\end{minipage}
\end{table*}

Our main results are summarized in Tables~\ref{tab:localmxsummary} and~\ref{tab:localsumsummary}. 
Based on these results, the relative key-invariant
expressive power of the various classes of 
\GNNs is depicted in
Figures~\ref{fig:max GNNs hierarchy} and~\ref{fig:sum-hierarchy}.
We highlight a few interesting conclusions that can be drawn from these results:
\begin{itemize}
    \item Whereas without keys every \localmax \GNN is equivalent to one with \FFN(\ReLU) combination functions, in the key-invariant setting different combination classes generate a strict hierarchy  (cf.~Figure~\ref{fig:max GNNs hierarchy}).

    \item Whereas without keys every \localsum \GNN is equivalent to one with continuous combination functions, in the key-invariant setting discontinuous functions increase expressive power (cf.~Figure~\ref{fig:sum-hierarchy}).
    
    \item Key-invariant \localmax \GNNs with arbitrary combination functions are subsumed by order-invariant FO. Key-invariant \localsum \GNNs with arbitrary combination functions, on the other hand, are expressively complete: they can express \emph{all} strongly local node queries. 
    %\item We show that key-invariant \localsum \GNNs with \FFN(\ReLU) combination functions are subsumed by order-invariant FO+C.
    \item The expressive power of key-invariant \localmax and \localsum \GNNs with \emph{continuous} combination functions strongly depends on the acceptance policy. 
    If we require the \GNN to output a positive on yes-instances and a negative value on no-instances (and  never output $0$), key-invariant \GNNs with continuous combination functions are no more expressive than ordinary \GNNs on unkeyed graphs. Without this restriction on the output, keys \emph{do} increase expressive power.
\end{itemize}
%\michael{I am not sure that we should emphasize the big table as the big takeaway although we should include it. Instead or in addition, we could emphasize
%the connections: of local max to order invariant FO, a new dynamic logic \uddl, variants of GML, and functional bisimulations; for local sum to CR-invariance, coverings, and FO+C}

% Moved this to the discussion section
%\balder{It we are still worried about getting logicians interested (although I think we are ok), perhaps we can list here some of the core tools that we 
% draw on: counting logics; modal and dynamic logics; bisimulations, coverings and tree unravellings; 
% collapse results in embedded finite model theory; moment-injective encodings; ...}

\subsection*{Outline}
%After a brief discussion of related work in Section~\ref{sec:related},
Section~\ref{sec:prelims} defines \GNNs over keyed graphs and introduces the notion of key-invariance.
Section~\ref{sec:oblivious} reviews relevant (mostly) known results for key-oblivious \GNNs, which link their expressive power to logics.
Section~\ref{sec:max} and Section~\ref{sec:sum} contain our results for key-invariant \localmax \GNNs and \localsum \GNNs, respectively. Finally,
Section~\ref{sec:discussion} briefly discusses a number of further related aspects.

\begin{figure}
    \centering
    \includegraphics[width=0.95\linewidth]{pictures/Invariant_GNN_MAX_circles.jpg}
    \caption{Relative expressive power of key-invariant \localmax \GNNs with varying acceptance policies and combination functions. Key-invariant \policy{>0}{<0} \localmaxcontinuous \GNNs collapse to key-oblivious \localmax \GNNs. $\Diamond^{\geq 2} \top$ and $\Diamond^{\geq 2} p$ are formulas in \WGML (Definition \ref{def:weaklygradedml}), $\Diamond^{=1} \top$ is a formula in \GML (section \ref{subsec:Logics and key-oblivious}), and $\langle \step ; \step\rangle^{=1}$ is a formula in \uddl (section \ref{sec:uddl}). $\calQ_{C^u_3}$ is the isomorphism type of a triangle with a distinguished node $u$ and uniform labeling. It remains open for key-invariant \localmax \GNNs whether semilinear combination functions have the same expressive power as arbitrary combination functions, and whether they subsume the expressive power of continuous combination functions.}
    \label{fig:max GNNs hierarchy}
\end{figure}
\begin{figure}
    \centering
\includegraphics[width=0.95\linewidth]{pictures/Invariant_GNN_SUM_circles.jpg}
    % \arie{Can we find a single non-expressible query to place in $(>0/<0)$ Continuous $\setminus$ semilinear. We don't know whether `even neighbors' is in semilinear.}
    \caption{Relative expressive power of key-invariant \localsum \GNNs with varying acceptance policies and combination functions. Key-invariant \policy{\geq 1}{\leq 0} \localsumrelu \GNNs collapse to key-oblivious \localsumrelu \GNNs, and key-invariant \policy{>0}{<0} \localsumcontinuous \GNNs collapse to key-oblivious \localsumcontinuous \GNNs. $\calQ_{G^u_{\triangle p}}$ is the isomorphism type of a rooted triangle with a single $p$ node (see Example \ref{ex:localsum} and Theorem \ref{thm:iso test with local sum}). It remains open whether key-invariant \localsumrelu \GNNs are more expressive than key-oblivious \localsumrelu \GNNs.}
    \label{fig:sum-hierarchy}
\end{figure}

\section{Preliminaries}
\label{sec:prelims}

\subsection{Data model}
Fix a finite set $\Pi=\{p_1, \ldots, p_n\}$ of
atomic binary node features.
A \emph{graph} is a triple $\agraph=(V,E,\lab)$ where $V$ is a set of nodes, $E \subseteq V\times V$ is an irreflexive and symmetric edge relation and $\lab: V \to \{0,1\}^{|\Pi|}$ is a labeling function where $\lab(v)$ is called the label of $v$. 
We often use $V(\agraph), E(\agraph)$ to denote the vertices and edges of $\agraph$. A \emph{valued graph} is a tuple $(\agraph,\val)$ with a value function $\val: V(\agraph) \to \reals$ that assigns a real value to each node. A \emph{keyed graph} is a valued graph where the value function is injective, in this case we also refer to $\val$ as a keying and refer to it as `$\key$'. We call $(\agraph,\val)$ a valued extension of $\agraph$, or a keyed extension when $\val$ is a keying and often write $\agraph_\val, \agraph_\key$ for valued and keyed extension of $\agraph$. A \emph{pointed graph} $\apgraph$ is a graph with a distinguished node $u$, and likewise for pointed valued graphs $\apgraph_\val$ and pointed keyed graphs $\apgraph_\key$. An isomorphism between pointed graphs $\apgraph, \altpgraph$ is a bijection $f: V(\agraph) \to V(\altgraph)$ that preserves labels and edges both ways, and where $f(u)=v$. 

A \emph{node query} $\calQ$ is an isomorphism closed class of pointed graphs. We denote its complement by $\calQ^c$. We will investigate expressive power in terms of expressible node queries. We give two examples of node queries that we use throughout this work. $\calQ_{\text{even}}$ contains all pointed graphs $\apgraph$ where $u$ has an even number of neighbors. $\calQ_{\apgraph}$ is the isomorphism type of $\apgraph$---the query that contains pointed graph $\altpgraph$ if and only if the connected component of $v$ in $\altpgraph$ is isomorphic to $\apgraph$.

\subsection{GNNs}
\label{sec:GNN-prelims}
In what follows, we will use the notation $\multiset{x_1, \ldots, x_n}$ for multisets, and $\oplus$ for concatenation. 
We will denote by $\multisets{X}$ the set of all finite multisets of elements from $X$.

An \emph{embedded} graph $(\agraph,\emb)$ is a graph with an \emph{embedding function}
$\emb: V(\agraph) \to \reals^D$ for some $D>0$. A $(D,D')\tdash\GNN$ consists of a finite sequence of layers $(\calL_1, \dots, \calL_\ell)$. Each layer is a pair $\calL_i = (\agg_i, \com_i)$ consisting of an aggregation function $\agg_i: \multisets{\reals^{D_{i}}} \to \reals^{D_{i}}$ and a combination function $\com_i: \reals^{2 D_{i}} \to \reals^{D_{i+1}}$, where $D=D_1, D'=D_{\ell+1}$. A layer $\calL = (\agg, \com)$ computes a function from embedded graphs to embeddings defined by:
\begin{align*}
    \calL(\agraph,\emb)(u) &= \com(\emb(u), \agg(\multiset{\emb(v) \,\mid (u,v) \in E(G)}))
\end{align*}
We sometimes abuse notation and write $\calL(\agraph,\emb)$ for $(\agraph, \calL(\agraph,\emb))$. Accordingly, a \GNN $\agnn = (\calL_1, \dots, \calL_\ell)$ computes function $\agnn(\agraph,\emb) := \calL_\ell \circ \dots \circ \calL_1(\agraph,\emb)$. Given a valued pointed graph $\apgraph_\val$ we write $\agnn(\apgraph_\val)$ for $\agnn(\agraph_\val,\emb_G)(u)$ where for all $v \in V(\agraph)$, $\emb_G(v)$ is the $n+1$-vector $\lab(v) \oplus(\val(v))$. 
In other words, $\agnn(\apgraph_\val)$ is the embedding for node $u$ obtained by applying $\agnn$ to the graph with the initial embedding defined by the labels and values of $\agraph_\val$.

\paragraph{\GNN node-classifiers}
A \GNN node-classifier $\agnn$ is a $(|\Pi|+1,1)$-\GNN (where $\Pi$ is the set of propositions) which we interpret as accepting a pointed valued graph $\apgraph_\val$ if $\agnn(\apgraph_\val)>0$ and rejecting $\apgraph_\val$ otherwise. We will also refer to this as the
``$\policy{>0}{\leq 0}$ acceptance policy'', and we define several other, more restrictive acceptance policies below. See also Section~\ref{sec:policies} for 
further discussion. 

\paragraph{Key invariance} A \GNN node-classifier is \emph{key-invariant} if it classifies all keyed extensions of a pointed graph equally. Such a key-invariant classifier thus defines a node query $\calQ_\agnn$ that contains exactly the pointed graphs of which the keyed extensions are accepted by $\agnn$. A \GNN $\agnn$ expresses a node query $\calQ$ if $\calQ = \calQ_\agnn$. 

\paragraph{Key-oblivious} A \emph{key-oblivious \GNN node-classifier} is defined similarly as above but on graphs without values, i.e. it is a $(|\Pi|,1)$\tdash\GNN that uses as initial embedding $\emb(v)=\lab(v)$. Every key-oblivious \GNN can be viewed as a key-invariant \GNN that simply ignores the node values. Hence, the expressive power of key-invariant \GNNs is lower bounded by that of key-oblivious \GNNs.

We will consider several further restrictions of the 
default $\policy{>0}{\leq 0}$ acceptance policy of \GNN node-classifiers. A $\policy{>0}{<0}$ \GNN node-classifier never outputs $0$ on keyed graphs, and a $\policy{\geq1}{
\leq0}$ \GNN node-classifier never outputs values in the range $(0,1)$ on keyed graphs. 

\begin{example}\label{ex:gnn}
   Consider the node query \[\calQ_{\Diamond p_i} = \{\apgraph \mid \text{there is a $v$ such that $E(u,v)$ and $\lab(v)_i=1$}\}\] 
   consisting of all
   pointed graphs $\apgraph$ 
   in which the distinguished node $u$ has at least one 
   neighbor whose label includes $p_i$. This  query is expressible 
   by the key-oblivious \GNN consisting of a single layer $\calL_1=(\agg_1,\com_1)$, where
   \begin{itemize}
       \item $\agg_1:\multisets{\reals^D}\to\reals^D$ (with $D=|\Pi|+1$)
       returns the coordinate-wise maximum of a given multiset of vectors.
       \item $\com_1:\reals^D\to \reals$ is the $i$-th projection.
   \end{itemize}
   It can be verified that, on input $\apgraph$, this GNN returns $1$ if $\apgraph\in\calQ_{\Diamond p_i}$ and 0 otherwise. Thus, it is in fact a 
   $\policy{\geq 1}{\leq 0}$ GNN node-classifier.
\end{example}

\paragraph{Aggregation and Combination functions}

We investigate the expressiveness of key-invariant \GNNs where either all aggregation functions are coordinate-wise maximum (\localmax \GNNs) or all aggregation functions are coordinate-wise sum (\localsum \GNNs).  Given a class of real-valued functions $F: \reals^n \to \reals^m$ for $n,m \in \mathbb{N}$, we write \localmaxF and \localsumF to refer to \localmax \GNNs, respectively \localsum \GNNs, containing only combination functions in $F$. When no $F$ is specified, it is to be understood that all real valued functions are allowed as combination functions.

We consider several classes of combination functions, shown in Figure \ref{fig:combination_fns_hierarchy}.
We denote by \FFN($\mathcal{A}$),
where $\mathcal{A}$ is a set of real-valued unary functions $f:\reals\to\reals$, the set of
feedforward networks (of arbitrary finite depth and with rationals weights and biases) using activation
functions from $\mathcal{A}$. Formally, we can 
think of \FFN($\mathcal{A}$) as consisting of all functions $f:\reals^n\to \reals^m$ 
definable as compositions of affine functions with rational coefficients 
and unary functions in $\mathcal{A}$. In practice, $\mathcal{A}$ will typically consist of ReLU, the sigmoid function $\sigmoid(x) = \frac{1}{1+e^{-x}}$ 
or the Heaviside step
function
\[
    \Heavi(x) = \begin{cases}
      1 & \text{if $x\geq 0$} \\
      0 & \text{otherwise}
      \end{cases}
    \]
Thus,
for example, \FFN(\ReLU) refers to the familiar class of feed forward networks with ReLU activations. 
We also consider more general classes of combination functions, such as all continuous functions, all functions, and  \emph{semilinear functions}:
any function $f:\reals^n\to \reals^m$ that is definable
by a first-order formula (without parameters)
$\phi(x_1, \ldots, x_n,y)$ over the
structure $(\reals,+,<,0,1)$, i.e., such that $(\reals,+,<,0,1)\models\phi(a_1,\ldots,a_n,b)$ iff $b=f(a_1,\ldots,a_n)$.

\begin{restatable}{lemma}{lemSemilinearOne}
    The semilinear functions are precisely all real-valued functions that can be obtained through composition 
    from (i) affine functions with rational coefficients, 
    and (ii) the ternary function $\ifPos$, where
    \[\ifPos(x,y,z) = 
      \begin{cases}
         y & \text{if $x>0$} \\
         z & \text{otherwise}
      \end{cases}\]
\end{restatable}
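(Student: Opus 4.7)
The plan is to show both inclusions. For the forward direction, compositions of rational affine maps and $\ifPos$ are semilinear: every rational $p/n$ is FO-definable over $(\reals,+,<,0,1)$ by $n\cdot y = p$ with $p,n$ sums of $1$'s, so rational affine functions are semilinear; $\ifPos$ is semilinear since its graph is defined by $(x>0 \wedge w=y)\vee(x\le 0 \wedge w=z)$; and semilinearity is preserved under composition by existentially quantifying over intermediate values.

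For the reverse direction, I would invoke the well-known quantifier elimination for the theory of $(\reals,+,<,0,1)$ (an ordered divisible abelian group with a distinguished positive element). QE implies that every FO-definable subset of $\reals^k$ is a finite boolean combination of half-spaces of the form $\{\bar{x}:\sum_i q_i x_i \bowtie q_0\}$ with $q_i\in\mathbb Q$ and $\bowtie\in\{<,=\}$ (integer coefficients obtained from QE can always be rescaled by a common denominator). Given a semilinear $f:\reals^n\to\reals^m$, I would treat each output coordinate separately, so assume $m=1$. Its graph $\Gamma_f\subseteq\reals^{n+1}$ is then a boolean combination of such half-spaces, which can be refined into finitely many cells $C_1,\ldots,C_k$ each defined by a conjunction of rational linear (in)equalities in $(\bar{x},y)$. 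Projecting to $\reals^n$ gives a finite partition of the domain into rationally defined regions, and on each region $f$ is forced, by single-valuedness and the cell's constraints on $y$, to coincide with a single rational affine function $\bar{x}\mapsto \sum_i q_i x_i + q_0$.

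Finally, I would realise this piecewise-affine structure by nested $\ifPos$ compositions. A sign condition ``$L(\bar{x})>0$'' branches directly as $\ifPos(L(\bar{x}),\cdot,\cdot)$; ``$L(\bar{x})\ge 0$'' reduces to ``$-L(\bar{x})>0$'' with branches swapped; an equality $L(\bar{x})=0$ arises as the conjunction of $\ge$ and $\le$; and conjunctions and disjunctions of sign conditions are obtained by further nesting. Enumerating the regions in a fixed order and selecting the matching rational affine output by nested $\ifPos$ yields the desired expression, and the vector-valued case is recovered by handling coordinates independently. The main obstacle is the bridge from QE to a clean piecewise-affine form: one must carefully refine the graph's cell decomposition so that pieces project to pairwise compatible regions of the domain, and use single-valuedness of $f$ to discard cells whose $y$-slices are not singletons. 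This is where the argument is most delicate, but it is a standard normalisation once QE is in hand.
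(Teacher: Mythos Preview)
Your proposal is correct and follows essentially the same route as the paper. The paper's proof simply invokes the ``well known equivalent definition'' of semilinear functions as those that are piecewise rational-affine on a finite partition of $\reals^n$ by rational linear inequalities, and then remarks that the lemma follows immediately from this. You spell out both halves more explicitly: you derive the piecewise-affine normal form from quantifier elimination for $(\reals,+,<,0,1)$, and you indicate how to realise the resulting case analysis by nested $\ifPos$. The part you flag as ``delicate'' (passing from a boolean combination of half-spaces describing the graph to a partition of the domain on which $f$ is rational-affine) is exactly the content the paper packages into the cited characterisation; your sketch via cell refinement and single-valuedness is the standard way to unpack it.
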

For \emph{unary} semilinear functions, there is another
characterization that will be helpful. Let \unarysemilinear~denote the set of all unary semilinear functions $f:\reals\to\reals$. We note that this includes $ReLU$ and $\Heavi$ but not, for example, $\sigmoid$.

\begin{restatable}{lemma}{lemSemilinearTwo}
\label{lem:unary semilinear = relu,H}
    Every unary semilinear function is
    definable by a FFN using $\relu$ and 
    $\Heavi$ as activation functions. Consequently, \FFN(\relu,\Heavi) is expressively equivalent to \FFN(\unarysemilinear).
\end{restatable}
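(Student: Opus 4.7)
The plan is to first establish a canonical piecewise-affine normal form for unary semilinear functions via quantifier elimination, then realize any such function as an $\FFN$ using only $\relu$ and $\Heavi$ activations. The equivalence $\FFN(\relu,\Heavi)\equiv\FFN(\unarysemilinear)$ then follows by substituting, inside any larger $\FFN$, each unary semilinear activation with its ReLU--Heaviside realization.

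For the structural step, I would invoke the fact that the first-order theory of $(\reals,+,<,0,1)$ (essentially the theory of ordered divisible abelian groups with a named constant $1$) admits quantifier elimination, so the graph of any unary semilinear $f:\reals\to\reals$ is a Boolean combination of sets of the form $\{(x,y):\alpha x+\beta y \,\square\, \gamma\}$ with rational $\alpha,\beta,\gamma$ and $\square\in\{<,=\}$. Taking a cell decomposition of this semilinear subset of $\reals^2$ and using that every vertical line must meet the graph of a function exactly once, I would conclude that there exist finitely many rational break points $a_1<\cdots<a_k$ such that on each open interval of the induced partition $f$ is an affine function with rational coefficients, while each $f(a_i)$ is an arbitrary rational (possibly unrelated to the limits from either side).

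For the construction, I would write $f = f_{\text{cont}} + f_{\text{jump}} + f_{\text{point}}$, where $f_{\text{cont}}$ captures the changing slopes using terms of the form $c_i\,\relu(x-a_i)$, $f_{\text{jump}}$ fixes intercept discontinuities across break points using terms $d_i\,\Heavi(x-a_i)$, and $f_{\text{point}}$ adjusts the value at each $a_i$ using the singleton indicator $\Heavi(x-a_i)+\Heavi(a_i-x)-1$, which equals $1$ at $x=a_i$ and $0$ elsewhere. All three summands lie in $\FFN(\relu,\Heavi)$, so $f$ does as well. For the equivalence, the inclusion $\FFN(\relu,\Heavi)\subseteq\FFN(\unarysemilinear)$ is immediate since both $\relu$ and $\Heavi$ are themselves unary semilinear, and for the reverse direction each unary semilinear activation in a $\FFN(\unarysemilinear)$ network is replaced, unit by unit, by the subnetwork supplied by the first part of the lemma.

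The main obstacle is the bookkeeping in the construction: one has to handle three distinct phenomena (slope changes, jump discontinuities, and values at break points that may disagree with both adjacent limits) and then choose the coefficients $c_i$ and $d_i$ so that all three summands reconstruct $f$ exactly on every open piece and at every break point. Neither the quantifier elimination step nor the standard $\relu$ encoding of continuous piecewise-linear functions is novel, so the real work lies in aligning the Heaviside corrections with the break points extracted from the cell decomposition.
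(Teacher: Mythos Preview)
Your proposal is correct. The normal form you extract via quantifier elimination is exactly the one the paper appeals to (a finite partition of $\reals$ into rational-endpoint intervals on each of which $f$ is affine with rational coefficients), and your closing remarks on the equivalence $\FFN(\relu,\Heavi)\equiv\FFN(\unarysemilinear)$ match the paper's.

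Where you diverge is in the realization step. The paper decomposes \emph{by interval}: for a piecewise description with breakpoints $a<b$ it writes $f = f^{(-\infty,a)}+f^{[a,b]}+f^{(b,\infty)}$, where each summand equals $f$ on its interval and $0$ elsewhere, using $\relu$ to clip the argument into the interval and a $\Heavi$ term to cancel the constant offset outside. You instead decompose \emph{by feature type}: $f = f_{\text{cont}} + f_{\text{jump}} + f_{\text{point}}$, the standard kink/step/spike decomposition. Your route is a bit more modular---each $c_i\,\relu(x-a_i)$, $d_i\,\Heavi(x-a_i)$, and singleton indicator $\Heavi(x-a_i)+\Heavi(a_i-x)-1$ has a single global effect, so the coefficients can be read off one at a time---and it makes explicit that the value at a break point may disagree with both one-sided limits. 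The paper's interval-based construction has the complementary virtue that each summand is manifestly supported on one piece, so correctness on each piece is checked locally without tracking accumulated coefficients across all earlier breakpoints. Both constructions yield networks of the same asymptotic size and use $\relu$ and $\Heavi$ in essentially the same roles (slope control vs.\ discontinuity control), so the difference is organizational rather than substantive.
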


\section{Key-Oblivious \GNNs}
\label{sec:oblivious}

In this section, we review known logical and combinatorial characterisations of the 
expressive power of key-oblivious \GNNs, 
which will provide the backdrop for the 
results in Section~\ref{sec:max} and 
Section~\ref{sec:sum}. We also show that
the node query $\calQ_{\text{even}}$ is not
expressible by a key-oblivious \localsumrelu \GNN, 
which will be helpful as this query will serve as a 
running example in subsequent sections.

\subsection{Color Refinement and Bisimulation} \label{subsec:colorrefbisimulation}

The distinguishing power of key-oblivious \localsum \GNNs
with arbitrary combination functions can be characterized in 
terms of the \emph{color refinement} algorithm, which originates in the study of the graph isomorphism problem.
We briefly recall it. For convenience,
we will identify colors with natural numbers.
A \emph{node coloring}
for a graph $\agraph$ is then a
map $\col:V(G)\to\Nat$. 
Fix some injective map $\HASH:2^\Pi \uplus (\Nat\times\multisets{\Nat})\to\Nat$, where
$2^\Pi$ is the set of all node labels.
For a graph $\agraph$, by the 
\emph{initial coloring} of $\agraph$ we
will mean the coloring $\col_\agraph$ given by
$\col_\agraph(u)=\HASH(\lab(u))$.
The color refinement algorithm (\refine) takes as input a graph $\agraph$, a node coloring $\col$ and an integer $d\geq 0$, and it produces a 
new node coloring for the same graph,
% denoted
% $\WL(G,\col,d)$,
as follows:
\begin{algorithm}[H]
\begin{algorithmic}[1]
\Function{\refine}{$\agraph,\col, d$}
    \State $\col^0 = \col$
    \For{$i = 1, \dots, d$}
             \State $\col^{i} := \{ u \mapsto \text{\HASH} (\col^{i-1}(u),$ 
             
             \State \hspace{22mm} $\multiset{\col^{i-1}(v) | (u,v) \in E(G)})\mid u\in V\}$
    \EndFor
    \State \Return $\col^d$
\EndFunction
\end{algorithmic}
\end{algorithm}
We will denote by $CR(\apgraph)$ the infinite
sequence $(c_1,c_2,\ldots)\in\Nat^\Nat$
where $c_i=\refine(\agraph,\col_G,i)(u)$, where $\col_G$ is the initial coloring. That is, 
$c_i$ is the color of $u$ after $i$ rounds of color refinement. We also denote  by
$CR^{(i)}(\apgraph)$ the color $\refine(\agraph,\col_G,i)(u)$.
\begin{definition}[CR-invariant]
\label{def:CR invariant queries}
    Node query $\calQ$ is CR-invariant if for all pointed graphs $\apgraph$ and $\altpgraph$,
    $CR(\apgraph)=CR(\altpgraph)$ implies that $\apgraph \in\calQ$ iff
    $\altpgraph \in\calQ$.
    We say that 
    $\calQ$ is \emph{r-round CR-invariant}
    if  for all pointed graphs $\apgraph$ and $\altpgraph$, $CR^{(r)}(\apgraph)=CR^{(r)}(\altpgraph)$ implies that $\apgraph\in\calQ$ iff $\altpgraph\in\calQ$.
\end{definition}

For a pointed graph $\apgraph$ and $r>0$, we denote by $\apgraph\rneighborhood$ the pointed graph $H^u$, where
$H$ is the induced subgraph of $G$ containing nodes
reachable from $u$ in at most $r$ steps.
\begin{definition}[Strongly Local]     
\label{def:local queries}
Node query $\calQ$ is strongly local if there is an $r \geq 0$,
such that, for all pointed graphs $\apgraph$,
$\apgraph\in\calQ$ iff $\apgraph \upharpoonright
 r\in\calQ$.
\end{definition}

\begin{restatable}{theorem}{thmOblivLocalSumCR}
    \label{thm:localsumcontinuous expresses all local CR-invariant queries}
    Let $\calQ$ be a node query. The following are equivalent:
    \begin{enumerate}
        \item $\calQ$ is expressed by a key-oblivious \localsum \GNN;
        \item $\calQ$ is expressed by a key-oblivious   \localsumcontinuous \GNN  with \policy{\geq 1}{\leq 0} acceptance policy;
        %\michael{Did we say somewhere what the default is when we omit the combination functions?}
        \item $\calQ$ is strongly local and CR invariant;
        \item  $\calQ$ is $r$-round CR-invariant for some $r\geq 0$.
    \end{enumerate}
\end{restatable}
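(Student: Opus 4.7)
The plan is to establish the four-way equivalence by proving the cycle $(2)\Rightarrow(1)\Rightarrow(4)\Rightarrow(3)\Rightarrow(2)$. The implication $(2)\Rightarrow(1)$ is immediate, since a continuous combination function with the more restrictive $\policy{\geq 1}{\leq 0}$ acceptance policy is a special case of an arbitrary combination function with the default $\policy{>0}{\leq 0}$ policy.

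For $(1)\Rightarrow(4)$, I would prove by induction on the number of layers $\ell$ that the embedding assigned to $u$ by an $\ell$-layer \localsum \GNN depends only on $CR^{(\ell)}(\apgraph)$. This is the standard ``color refinement upper-bounds GNNs'' argument: initial embeddings depend only on labels (hence on $CR^{(0)}$), and at each subsequent layer a node's new embedding is a function of its previous embedding together with the multiset of its neighbors' previous embeddings, which by the inductive hypothesis is determined by $CR^{(\ell)}$. Hence any query expressed by an $\ell$-layer GNN is $\ell$-round CR-invariant. The implication $(4)\Rightarrow(3)$ is near-definitional: $r$-round CR-invariance entails full CR-invariance since $CR^{(r)}$ is a component of the full CR sequence, and since $CR^{(r)}(\apgraph) = CR^{(r)}(\apgraph\rneighborhood)$ (the $r$-round color only inspects the $r$-neighborhood of $u$), $r$-round CR-invariance implies $\calQ(\apgraph)=\calQ(\apgraph\rneighborhood)$, i.e., strong locality.

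The substantive direction is $(3)\Rightarrow(2)$, which I would split into two steps. The first step strengthens (3) to (4): given $\calQ$ strongly local with parameter $r_0$ and CR-invariant, I would show $\calQ$ is $r$-round CR-invariant for some $r$. The key idea is that whenever $CR^{(r_0)}(\apgraph)=CR^{(r_0)}(\altpgraph)$, the $r_0$-neighborhoods of $u$ and $v$ have identical $r_0$-round unfolding trees at their roots, and one can ``close them up'' beyond depth $r_0$ to obtain finite pointed graphs $\widetilde{\apgraph},\widetilde{\altpgraph}$ that are fully CR-equivalent while retaining $\apgraph\rneighborhood$ and $\altpgraph\rneighborhood$ as their $r_0$-neighborhoods. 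One way to achieve this is by attaching a shared ``completion gadget'' to the boundary of each neighborhood, designed so that the CR stabilization on both extensions yields the same stable coloring; alternatively, one may invoke a Leighton-style common covering argument. CR-invariance applied to the extended graphs, combined with strong locality to pull back to $\apgraph,\altpgraph$, yields $\calQ$-agreement and hence $r$-round CR-invariance for a uniform $r\geq r_0$.

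In the second step, with $r$-round CR-invariance established, I would construct an $r$-layer \localsum \GNN expressing $\calQ$ via a Morris/Xu-style injective encoding. Each layer uses a continuous combination function (built from $\relu$'s and multiplication by irrational scalars, say) so that the sum of hashed neighbor embeddings is injective on the countable set of attainable embeddings; after $r$ layers each node's embedding uniquely determines its $CR^{(r)}$ color. A final continuous readout then separates the discretely-spaced set of accepting encodings from the rejecting ones, outputting values $\geq 1$ versus $\leq 0$, which is achievable by piecewise-linear interpolation between the isolated encoding points. The main obstacle I anticipate is the lifting argument in the first step: ensuring that the extended graphs $\widetilde{\apgraph},\widetilde{\altpgraph}$ are genuinely fully CR-equivalent while their $r_0$-neighborhoods are preserved requires a genuine graph-theoretic construction rather than routine bookkeeping about GNN layers.
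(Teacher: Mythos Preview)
Your cycle and the easy implications $(2)\Rightarrow(1)$, $(1)\Rightarrow(4)$, $(4)\Rightarrow(3)$ are correct and match the paper. The substance lies in $(3)\Rightarrow(4)$ and in the construction $(4)\Rightarrow(2)$.

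For $(3)\Rightarrow(4)$ you rightly flag this as the crux, but neither of your two suggested routes is workable as stated. Leighton's common-cover theorem requires \emph{full} CR-equivalence of its inputs; here you only have $r$-round equivalence, so it does not apply. The ``shared completion gadget'' idea has no concrete content: the boundary spheres of $\apgraph\rneighborhood$ and $\altpgraph\rneighborhood$ need not be in bijection, and there is no evident way to glue one gadget symmetrically so as to manufacture full CR-equivalence while leaving both $r$-neighborhoods intact. The paper takes a different and cleaner route. It passes from $\apgraph,\altpgraph$ to finite covers $\widehat{G}^u,\widehat{H}^v$ of large girth (Otto's construction via high-girth Cayley graphs). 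Covering maps preserve CR, so $\widehat{G}^u$ is fully CR-equivalent to $\apgraph$ and likewise for $H$. High girth forces $\widehat{G}^u\rneighborhood$ and $\widehat{H}^v\rneighborhood$ to be \emph{trees}, and for pointed trees equality of the $r$-round CR color at the root already implies isomorphism of the $r$-neighborhoods (a short inductive lemma). One then chases
\[
\altpgraph \;\sim_{CR}\; \widehat{H}^v \;\sim_{\mathrm{loc}}\; \widehat{H}^v\rneighborhood \;\cong\; \widehat{G}^u\rneighborhood \;\sim_{\mathrm{loc}}\; \widehat{G}^u \;\sim_{CR}\; \apgraph,
\]
using CR-invariance for the outer links and strong locality for the inner ones. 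Note that the covers do \emph{not} preserve the original $r$-neighborhoods (they unfold cycles), so the paper's diagram is genuinely different from yours: rather than building extensions that keep the neighborhoods and gain CR-equivalence, it builds covers that keep CR-equivalence and gain tree-like neighborhoods, reducing the comparison to the tractable tree case.

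For $(4)\Rightarrow(2)$ your outline is right, but ``\relu's and multiplication by irrational scalars'' is not a recipe for an injective multiset-sum encoding. The paper keeps colors in $\Nat$ throughout and uses a concrete moment-injective function: with $f(x)=e^x$, the map $\multiset{n_1,\ldots,n_k}\mapsto\sum_i f(n_i)$ is injective on finite multisets of naturals and its range is a discrete countable subset of $\reals$, so a continuous $g$ can biject it back onto $\Nat$. Iterating, a continuous \localsum \GNN computes $CR^{(r)}(\apgraph)\in\Nat$ exactly as a feature; a final continuous map sending accepting colors to $1$ and rejecting colors to $0$ (any map $\Nat\to\{0,1\}$ extends continuously to $\reals$) gives the $\policy{\geq 1}{\leq 0}$ classifier.
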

 The equivalence between (1) and (4) is well known \cite{howpowerful,goneural,wagstaff2022universal}. We show the equivalence with (2) and (3), since it will be relevant for us later, using arguments based on \cite{amir2023neural} and \cite{Otto04,Otto12}.

An analogous result holds for \localmax \GNNs, using bisimulations instead of color refinement.
Bisimulations can be defined in several equivalent ways. One is algorithmic, namely by modifying color refinement so that in each round we hash the \emph{set} of neighbor colors instead of the multiset. The more standard definition, however, is as follows:

\begin{definition}[Bisimulation]
Given pointed graphs $\apgraph,\altpgraph$, a bisimulation is a relation $B \subseteq V(\agraph) \times V(\altgraph)$ such that $B(u,v)$, and whenever $B(u_0,v_0)$ for any $u_0\in V(\agraph), v_0 \in V(\altgraph)$ then 
\begin{enumerate}
    \item $u_0$ and $v_0$ have the same label;
    \item For each neighbor $u_1$ of $u_0$ in $\agraph$ there is a neighbor $v_1$ of $v_0$ in $\altgraph$ such that $B(u_1,v_1)$;
    \item For each neighbor $v_1$ of $v_0$ in $\altgraph$ there is a neighbor $u_1$ of $u_0$ in $\agraph$ such that $B(u_1,v_1)$.
\end{enumerate}
\end{definition}
\begin{restatable}{proposition}{propbisimulationlocalmaxobliv}[from~\cite{schonherr2025logical,bernardowalega}]
\label{prop:bisimulation-localmaxobliv}
    Let $\calQ$ be a node query, then the following are equivalent:
    \begin{enumerate}
        \item 
    $\calQ$ is expressed by a key-oblivious \localmax \GNN;
    \item $\calQ$ is expressed by a key-oblivious \localmaxrelu \GNN with \policy{\geq 1}{\leq 0} acceptance policy;
    \item 
    $\calQ$ is strongly local and closed under bisimulation.
    \end{enumerate}
\end{restatable}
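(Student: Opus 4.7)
My plan is to prove the three-way equivalence by the cycle $(2)\Rightarrow(1)\Rightarrow(3)\Rightarrow(2)$. The implication $(2)\Rightarrow(1)$ is immediate: a \localmaxrelu \GNN with $\policy{\geq 1}{\leq 0}$ acceptance is in particular a \localmax \GNN under the default $\policy{>0}{\leq 0}$ acceptance.

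For $(1)\Rightarrow(3)$, I would first establish strong locality: after $\ell$ layers of aggregation the embedding at $u$ depends only on nodes within distance $\ell$ of $u$, so an $\ell$-layer \GNN expresses a strongly local query with radius $\ell$. For bisimulation invariance, I would induct on the layer number, showing that whenever $B$ is a bisimulation between $\apgraph$ and $\altpgraph$ with $(u_0,v_0)\in B$, the embeddings of $u_0$ and $v_0$ coincide after every layer. The key observation is that \localmax aggregation depends only on the \emph{set} (not the multiset) of neighbor embeddings, and the back-and-forth conditions of bisimulation guarantee that these sets agree on both sides once the inductive hypothesis is applied coordinate-wise.

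For $(3)\Rightarrow(2)$, which carries the main content, I would argue as follows. Since $\Pi$ is finite, a standard counting argument shows that there are only finitely many depth-$r$ bisimulation classes of pointed graphs, so any bisimulation-closed strongly local query with radius $r$ is a finite union of such classes. By a Hennessy--Milner-style argument, each depth-$r$ bisimulation class is defined by a modal logic formula of modal depth at most $r$, hence $\calQ$ itself is defined by some modal formula $\varphi$. I would then translate $\varphi$ into a \localmaxrelu \GNN with $\{0,1\}$-valued intermediate embeddings by structural induction: atomic propositions are read directly from the input encoding; the Boolean connectives are encoded using standard ReLU tricks, e.g.\ $\phi\wedge\psi$ as $\relu(\phi+\psi-1)$ and $\neg\phi$ as $1-\phi$; and $\Diamond\phi$ is implemented by composing \localmax aggregation with a coordinate projection, which returns $1$ iff some neighbor satisfies $\phi$. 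Since the final embedding equals $1$ on yes-instances and $0$ on no-instances, the resulting \GNN meets the $\policy{\geq 1}{\leq 0}$ acceptance policy.

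The main obstacle is the Hennessy--Milner step: although infinitely many pointed graphs can realize a given depth-$r$ bisimulation class, the class must still be characterized by a \emph{finite} modal formula. This relies on the finiteness of $\Pi$ and the bounded depth $r$, and is a classical fact whose finite-model variant suffices here. Once it is in place, the remaining inductive \GNN construction is routine, with the only subtlety being to package the values of all subformulas into a wide enough embedding vector so that a single uniform GNN realises the whole formula.
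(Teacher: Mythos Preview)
Your overall strategy coincides with the paper's: prove the cycle $(2)\Rightarrow(1)\Rightarrow(3)\Rightarrow(2)$, and for $(3)\Rightarrow(2)$ first show that $\calQ$ is definable in modal logic and then translate the modal formula into a \localmaxrelu\ \GNN with $\{0,1\}$-valued features. The implications $(2)\Rightarrow(1)$ and $(1)\Rightarrow(3)$ are handled exactly as in the paper.

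There is, however, a genuine gap in your $(3)\Rightarrow(2)$ argument, and it is not where you locate it. You write that ``any bisimulation-closed strongly local query with radius $r$ is a finite union of [depth-$r$ bisimulation] classes'', deriving this with a bare ``so'' from the finiteness of the set of depth-$r$ classes. But finiteness of the partition does not by itself imply that $\calQ$ respects it: you must show that $\calQ$ is closed under \emph{$r$-bisimulation}, whereas your hypothesis only gives closure under \emph{full} bisimulation together with strong locality. Over finite graphs two pointed graphs can be $r$-bisimilar without being bisimilar, and restricting to the $r$-neighbourhood does not repair this (the $r$-neighbourhoods of $r$-bisimilar finite pointed graphs need not themselves be bisimilar). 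The infinite tree-unravelling that one would reach for classically is unavailable because it leaves the class of finite graphs.

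The paper closes exactly this gap by invoking Otto's finite upgrading construction \cite{Otto04}: given $r$-bisimilar $\apgraph$ and $\altpgraph$, one builds finite $\widehat{G}^u$, $\widehat{H}^v$ that are fully bisimilar to $\apgraph$, $\altpgraph$ respectively and whose $r$-neighbourhoods are bisimilar to one another; chaining bisimulation invariance and strong locality along this diagram yields $\apgraph\in\calQ\Leftrightarrow\altpgraph\in\calQ$. Once this is in place, the Hennessy--Milner step you flag as the ``main obstacle'' (a finite characteristic modal formula for each depth-$r$ class) is indeed routine, as is the inductive \GNN translation. So your plan is right in outline; you have only misidentified which step carries the weight.
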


\subsection{Logics and Logical Characterizations}
\label{subsec:Logics and key-oblivious}

We now review some uniform bounds on the queries expressible by \GNNs in terms of logics.
%michael{I gather that here we only mention logics that are related to prior work, saving (e.g.) order-invariant FO for later}
%michael{Does it make sense to wait on the local sum-related logics until later? The way we do it now, we have propositions and corollaries that do not resonate until we get to local-sum}
%\balder{It would make sense to have all the results that are just about key-oblivious GNNs  here in the preliminaries, so they don't disrupt the story line later on, but I wouldn't insist. }

\paragraph{Graded modal logic (\GML)}
The formulas of \GML are given by
the recursive grammar $\phi ::= p \mid \top \mid \phi\land\psi\mid \neg\phi\ \mid \Diamond^{\geq k}\phi$, where $p\in \Pi$ and $k\in\mathbb{N}$. \emph{Satisfaction} of such a formula at a node $u$ in a graph $\agraph$ (denoted: 
$\apgraph\models\phi$) is defined inductively, as usual, where $\apgraph\models p_i$ iff $\lab(u)_i=1$, the Boolean operators have the standard interpretation, and $\apgraph\models\Diamond^{\geq k}\phi$
iff $|\{v\mid (u,v)\in E \text{ and } \agraph^v\models\phi\}|\geq k$.
We use $\Diamond\phi$ as
shorthand for $\Diamond^{\geq 1}\phi$, $\Box\phi$ as shorthand for $\neg\Diamond\neg\phi$, $\Diamond^{=k}\phi$ as shorthand for $\Diamond^{\geq k} (\phi) \wedge \neg \Diamond^{\geq k+1} (\phi)$ and $\Diamond^{\leq k} \phi$ as shorthand for $\neg \Diamond^{\geq k+1} \phi$.
Every \GML-formula $\phi$ gives rise to a
node query $\calQ_\phi := \{\apgraph\mid \apgraph\models\phi\}$. The fragment of \GML where
$\Diamond^{\geq k}$ is allowed only for $k=1$ is known as 
\emph{modal logic (ML)}. We say a query $\calQ$ is expressible in a logic, or is in a logic, if it is equivalent to $\calQ_\phi$ for some $\phi$ in the logic. 
\begin{theorem}[\cite{schonherr2025logical,bernardowalega}]
\label{thm:ML-localmaxobliv}
   Let $\calQ$ be a node query. The following are equivalent:
    \begin{enumerate}
        \item $\calQ$ is expressible by a key-oblivious \localmax \GNN;
        \item $\calQ$ is expressible by a key-oblivious \localmaxrelu\ \GNN with \policy{\geq 1}{\leq 0} acceptance policy;
        \item $\calQ$ is expressible in modal logic. 
        \end{enumerate}
\end{theorem}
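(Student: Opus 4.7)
The plan is to observe that Proposition~\ref{prop:bisimulation-localmaxobliv} already gives the equivalence of (1) and (2), and further identifies both conditions with the class of node queries that are strongly local and closed under bisimulation. The remaining task is therefore to show that this combinatorial condition---strongly local and bisimulation-closed---coincides exactly with expressibility in modal logic \ML, which will complete the three-way equivalence.

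For the easier direction, I would show that every \ML-formula $\phi$ of modal depth $d$ is bisimulation-invariant and strongly local with radius $d$, both by a routine induction on $\phi$: atomic labels and Boolean connectives are immediate, and the $\Diamond$ case uses the back-and-forth clauses of bisimulation on the neighbor side (for invariance), and the fact that $\Diamond\psi$ at $u$ depends only on the neighbors of $u$ and the $\psi$-value there (for locality). This shows (3) implies the combinatorial condition, hence (1) and (2).

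For the converse, suppose $\calQ$ is strongly local with radius $r$ and closed under bisimulation. The key combinatorial input is that, over a fixed finite label set $\Pi$, there are only finitely many $r$-bisimulation types, and each type $\tau$ can be pinned down by a characteristic modal formula $\chi_\tau$ of modal depth at most $r$, built inductively from the atomic labels true at the root together with a conjunction of $\Diamond\chi_\sigma$ over the depth-$(r{-}1)$ types $\sigma$ realized by a neighbor and a single $\Box\bigvee_\sigma\chi_\sigma$ over that same finite set. Strong locality plus bisimulation-closure together imply $r$-bisimulation-invariance of $\calQ$, so $\calQ$ is a finite union of $r$-types and can be written as $\bigvee_{\tau\in\calQ}\chi_\tau$, an \ML-formula.

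The main obstacle is the last step: upgrading from \emph{full} bisimulation-invariance (which Proposition~\ref{prop:bisimulation-localmaxobliv} supplies) to $r$-bisimulation-invariance under the hypothesis of strong locality. The intended argument is that two pointed graphs with the same $r$-bisimulation type at the root admit a full bisimulation between their tree unfoldings truncated at depth $r$; since $\calQ$ sees only the $r$-neighborhood by strong locality, bisimulation-closure can then be applied to these truncated unfoldings to conclude agreement on $\calQ$. The rest of the argument is a standard Hennessy--Milner style bookkeeping, but this upgrade is where one must be careful to invoke strong locality precisely rather than asking for modal definability to fall out of bisimulation-invariance alone.
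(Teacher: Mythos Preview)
Your overall plan matches the paper's: invoke Proposition~\ref{prop:bisimulation-localmaxobliv} for (1)$\Leftrightarrow$(2)$\Leftrightarrow$``strongly local and bisimulation-closed'', and then argue that this combinatorial condition coincides with \ML-definability via characteristic formulas. The easy direction and the characteristic-formula step are fine.

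The gap is exactly where you flag it, and your proposed fix does not work. You want to pass from $\apgraph$ to its depth-$r$ tree unfolding $U^r(\apgraph)$, but neither of your two tools applies to that move. Strong locality lets you replace $\apgraph$ by its $r$-neighborhood $\apgraph\!\upharpoonright r$, which in general still has cycles and is \emph{not} $U^r(\apgraph)$. Bisimulation-closure would let you replace $\apgraph$ by any bisimilar finite pointed graph, but $U^r(\apgraph)$ is not bisimilar to $\apgraph$: its depth-$r$ leaves have degree~$1$, whereas the corresponding original nodes typically do not (this is even detectable by a modal formula of depth $r{+}1$). So you cannot conclude $\apgraph\in\calQ\Leftrightarrow U^r(\apgraph)\in\calQ$, and the chain breaks. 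Using the full infinite unfolding does not help either, since $\calQ$ is only defined on finite pointed graphs.

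The paper closes this gap (in the proof of Proposition~\ref{prop:bisimulation-localmaxobliv}, direction (3)$\Rightarrow$(2)) by invoking a construction from~\cite{Otto04}: build a \emph{finite} $\widehat{G}^u$ by unravelling $\apgraph$ to depth $r$ and then attaching a fresh disjoint copy of $G$ at every leaf. This $\widehat{G}^u$ is genuinely bisimilar to $\apgraph$ (the attached copies restore the correct local neighborhood at each leaf), while $\widehat{G}^u\!\upharpoonright r$ is acyclic. Doing the same on the $\altpgraph$ side, the two $r$-neighborhoods are depth-$r$ trees with the same $r$-bisimulation type, hence fully bisimilar. Now the diagram
\[
\apgraph \;\underset{\text{bisim}}{\longleftrightarrow}\; \widehat{G}^u \;\underset{\text{loc.}}{\longleftrightarrow}\; \widehat{G}^u\!\upharpoonright r \;\underset{\text{bisim}}{\longleftrightarrow}\; \widehat{H}^v\!\upharpoonright r \;\underset{\text{loc.}}{\longleftrightarrow}\; \widehat{H}^v \;\underset{\text{bisim}}{\longleftrightarrow}\; \altpgraph
\]
can be chased using only bisimulation-closure and strong locality of $\calQ$, yielding $r$-bisimulation invariance. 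That is the missing idea in your sketch.
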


\begin{theorem}[\cite{barceloetallogical}] ~
\label{thm:GMLloweroblivious}
\begin{enumerate}
    \item Every node query expressible in $\GML$ is 
     expressible by a key-oblivious \localsumrelu \GNN;
     \item The converse holds for first-order node queries, i.e., if a node query is definable
in first-order logic and expressed by a key-oblivious \localsum \GNN, then it is expressible in \GML. \footnote{In fact as observed in \cite{ahvonen2024logical}, it follows from results in~\cite{ElberfeldGroheTantau2016} that this holds true even when first-order logic is replaced by monadic second-order logic.}
\end{enumerate}
\end{theorem}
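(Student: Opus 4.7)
For part (1), my plan is a standard inductive translation from \GML formulas into layered key-oblivious \localsumrelu \GNNs. Given a \GML formula $\phi$, I would construct a \GNN $\agnn_\phi$ whose embedding carries one coordinate per subformula $\psi$ of $\phi$, maintained as a Boolean indicator: the value at node $u$ equals $1$ when $\apgraph \models \psi$ and $0$ otherwise. The initial embedding already supplies the atomic proposition indicators via $\lab(u)$. Boolean operations are carried out inside combination functions using the identities $x_{\neg\phi} = 1 - x_\phi$ and $x_{\phi\land\psi} = \relu(x_\phi + x_\psi - 1)$, both realisable in \FFN(\relu). Finally, $\Diamond^{\geq k}\phi$ is implemented by a layer whose sum aggregation computes $s(u) = \sum_{v : (u,v)\in E} x_\phi(v)$ and whose combination function returns $\relu(s - k + 1) - \relu(s - k)$, which equals $1$ exactly when $s \geq k$. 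I would organise the layers so that one aggregation step is performed per level of modal nesting; each layer's combination function first computes the depth-$d$ modal indicators from the aggregated sum and then all Boolean combinations at that depth, while propagating the lower-depth coordinates unchanged. Reading off the coordinate for $\phi$ and applying an affine shift produces the required \policy{>0}{\leq 0} node-classifier.

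For part (2), I would combine the color refinement characterisation of key-oblivious \localsum \GNNs with a classical bisimulation-invariance theorem for \GML. Suppose $\calQ$ is first-order definable and expressible by a key-oblivious \localsum \GNN. By Theorem~\ref{thm:localsumcontinuous expresses all local CR-invariant queries}, $\calQ$ is $r$-round CR-invariant for some $r$ and strongly local. On finite graphs, CR-equivalence coincides with graded bisimulation equivalence, so $\calQ$ is invariant under graded bisimulation. I would then invoke the classical theorem of de Rijke, which states that a first-order definable node query invariant under graded bisimulation is expressible in \GML; this yields the desired \GML formula for $\calQ$.

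The main obstacle is the de Rijke theorem invoked in part (2): the transition from ``CR-invariant and first-order'' to an actual \GML formula is inherently non-constructive (using an Ehrenfeucht--Fra\"iss\'e/compactness style argument) and genuinely requires the first-order assumption, since CR-invariant queries need not be first-order and hence need not lie in \GML. The potential obstacle in part (1)---ensuring that Boolean and modal operations at each modal depth are carried out in the correct order within a single layer's combination function so that the next layer's aggregation operates on the right indicators---is routine bookkeeping, resolved by processing the subformulas of $\phi$ in order of increasing modal depth and dedicating one aggregation layer per depth level.
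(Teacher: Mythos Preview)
The paper does not supply its own proof of this theorem: it is quoted from \cite{barceloetallogical}, and the only related argument in the paper is Lemma~\ref{lem:GMLexpressedbySumReLU1/0}, which simply points to Barcel\'o et al.'s Proposition~4.1 and notes that the truncated \relu they use equals $\relu(x)-\relu(x-1)$. So there is no in-paper proof to compare against; what follows is an assessment of your sketch against the cited work.

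Your Part~(1) is exactly the standard construction from \cite{barceloetallogical}: one coordinate per subformula, Booleans via clipped linear combinations, and $\Diamond^{\ge k}$ via a sum followed by a thresholding $\relu(s-k+1)-\relu(s-k)$, which is precisely the truncated \relu the paper singles out. This is correct and matches the source.

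Your Part~(2) is the right high-level strategy, and it is also the route taken in \cite{barceloetallogical}: deduce CR-invariance (equivalently, graded-bisimulation invariance) from GNN expressibility, then invoke a van~Benthem-style characterisation of \GML. The one point to tighten is the attribution and the method you hint at. You cite ``de Rijke'' and mention a ``compactness style argument''. De Rijke's characterisation of \GML as the graded-bisimulation-invariant fragment of FO is proved over \emph{all} structures using compactness; compactness is unavailable over finite structures, which is the setting here. What is actually needed---and what Barcel\'o et al.\ use---is the \emph{finite-model} version of this characterisation, due to Otto, whose proof proceeds by locality and Ehrenfeucht--Fra\"iss\'e-type arguments rather than compactness. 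Your plan goes through once you swap in Otto's finite-structure theorem; without that correction, the step ``FO $+$ graded-bisimulation-invariant on finite graphs $\Rightarrow$ \GML'' is not justified by de Rijke's result alone.
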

An immediate consequence of the above results is that every node query expressible by a key-oblivious \localmax GNN is also expressible by a \localsum GNN.
The latter result also shows that for key-oblivious \localsum \GNNs, when restricted to first-order logic, arbitrary combination functions are %precisely 
as expressive as \FFN(\ReLU). Beyond first-order logic the situation is different. While \localsumrelu \GNNs can express queries not definable in first-order logic such as ``there are more $p$-neighbors than $q$-neighbors'', they are strictly less expressive than \localsumcontinuous \GNNs. Recall that $\calQ_{\text{even}}$ contains the pointed graph in which the distinguished node has an even number of neighbors. 
\begin{restatable}{proposition}{PropQevenSumvsRelu}
\label{prop:Q_even sum cont vs sum relu}
$\calQ_{\text{even}}$ is expressed by a key-oblivious \localsumcontinuous \GNN, but not by a key-oblivious \localsumrelu \GNN.    
\end{restatable}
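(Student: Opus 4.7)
The two directions can be handled separately, and are of very different character.

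For expressibility by a key-oblivious \localsumcontinuous GNN, I would use two layers. The first combination function $\com_1$ is the constant function with value $1$ (which is continuous), so every node has embedding $1$ after layer~1. Sum-aggregation at $u$ in the second layer then yields $\deg(u)$, and I take $\com_2(x,y)=\cos(\pi y)$, which is continuous. The output at $u$ is therefore $(-1)^{\deg(u)}$, strictly positive iff $\deg(u)$ is even, so the GNN expresses $\calQ_{\text{even}}$ under the default $\policy{>0}{\leq 0}$ acceptance policy.

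For inexpressibility by \localsumrelu, suppose toward contradiction that such a GNN $\agnn$ with $\ell$ layers expresses $\calQ_{\text{even}}$. I would consider the family of $n$-leaf stars $S_n^u$ centered at $u$ with a fixed uniform labeling. By the leaf-permutation symmetry, all leaves share a common embedding at every layer; writing $u_i(n)$ and $l_i(n)$ for the center's and a leaf's embeddings after layer $i$, the recurrence is
\[u_{i+1}(n) = \com_{i+1}\bigl(u_i(n),\, n\cdot l_i(n)\bigr),\qquad l_{i+1}(n) = \com_{i+1}\bigl(l_i(n),\, u_i(n)\bigr),\]
with $u_0,l_0$ constant vectors determined by the labels. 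Since each $\com_i$ is a $\ReLU$-FFN with rational weights, hence piecewise-affine and semi-algebraic, a straightforward induction on $i$ shows that $u_i(n)$ and $l_i(n)$ extend to semi-algebraic functions of a real parameter $n\in\reals$; hence $\tilde F(n) := u_\ell(n)$ is semi-algebraic as a function of $n$.

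By Tarski--Seidenberg the set $\{n\in\reals : \tilde F(n) > 0\}$ is a finite union of intervals and isolated points, so for all sufficiently large $N$ either every real $n\geq N$ lies in this set or none does. But if $\agnn$ expressed $\calQ_{\text{even}}$ then $\tilde F(n)>0$ would have to alternate with the parity of $n$ at every natural number, which is impossible. The delicate step here is the induction showing semi-algebraicity: multiplying the piecewise-affine $l_i(n)$ by the real parameter $n$ in each aggregation step raises polynomial degree, so the extended embeddings are piecewise polynomial in $n$ of growing degree rather than merely piecewise affine, and one must argue within the full semi-algebraic setting $(\reals,+,\cdot,<,0,1)$ rather than within its linear reduct.
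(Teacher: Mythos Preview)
Your proof is correct. The expressibility direction is essentially identical to the paper's (count neighbours with $\localsum(1)$, then apply a continuous function sensitive to parity; you use $\cos(\pi\,\cdot)$, the paper uses an unspecified continuous map sending evens to $1$ and odds to $0$).

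For inexpressibility, both arguments run the GNN on the family of $n$-leaf stars and exploit the fact that the output, as a function of $n$, is too tame to alternate with parity. The paper carries out a direct elementary induction showing that each feature is \emph{eventually polynomial} in $n$: a linear combination of polynomials is a polynomial, and since a polynomial has finitely many roots, each $\ReLU$ eventually acts either as the identity or as zero. You instead observe that the whole recurrence is definable over $(\reals,+,\cdot,<,0,1)$, so the output is semi-algebraic in $n$, and conclude via the structure of semi-algebraic subsets of $\reals$. Your route is heavier in that it invokes real-closed-field machinery, but it is also more robust: it would go through unchanged for any semi-algebraic activation (e.g.\ $\Heavi$, or polynomial activations), whereas the paper's ``eventually polynomial'' invariant is tailored to $\ReLU$. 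Either argument suffices here.
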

Although this inexpressibility result for $\calQ_{\text{even}}$ is in line with general expectations based on recent papers  on zero-one laws for GNNs \cite{AdamDay2023,AdamDay2024}, it does not follow from those results.

% \martin{It may be worth mentioning that "Having more red neighbors than blue neighbors" is a query that is expressible by a \localsumrelu \GNN, but not in first-order logic (and hence not in graded modal logic).}
Theorem~\ref{thm:GMLloweroblivious} shows that \GML constitutes a \emph{lower bound} on the expressive power of key-oblivious \localsumrelu \GNNs.
In a similar way, the logic we will review next constitutes an \emph{upper bound} for the same
class of \GNNs.

\paragraph{First-Order Logic with Counting (FO+C)}
First-order logic with counting extends the standard first-order logic of graphs by the ability to count elements of definable sets and perform basic arithmetic on the counts. We sketch the main parts of the definition and refer the reader to \cite{grohedescriptivegnn} for details. The logic $\FOC$ has formulas and terms. 
\emph{Formulas}
are formed by the usual rules of first-order logic in the language of graphs, using \emph{node variables}, and admitting inequalities $\theta\le\eta$ between terms as additional atomic formulas. \emph{Terms} take values over the natural numbers. They are formed from 
\emph{number variables}, the constants $0,1,\ord$, and \emph{counting terms} of the form
\begin{equation}
  \label{eq:counting-term}
    \#(x_1,\ldots,x_k,y_1<\theta_1,\ldots,y_\ell<\theta_\ell).\psi, 
\end{equation}
where the $x_i$ are node variables, the $y_i$ are number
variables, $\psi$ is a formula, and the $\theta_i$ are
terms. Terms can be combined using addition and multiplication.

To define the semantics of the logic over graphs, we
interpret formulas and terms over the 2-sorted expansion
of a graph $\agraph$ by the standard model $(\mathbb N,+,\cdot,0,1)$ of arithmetic. Node variables range
over $V(\agraph)$ and number variables range over $\mathbb N$. We inductively
define a Boolean value for each formula and a numerical value in $\mathbb N$
for each term. The constant $\ord$ is interpreted by $|V(\agraph)|$. The value of a counting term \eqref{eq:counting-term} is the
number of tuples $(v_1,\ldots,v_k,i_1,\ldots,i_\ell)\in
V(\agraph)^k\times\Nat^\ell$ such that for all $j$, $i_j$ is smaller than
the value of the term $\theta_j$ and $\psi$ holds under the assignment
$x_i\mapsto v_i,y_j\mapsto i_j$. 
Note that the bounding terms $\theta_j$ in the counting construct ensure that the count is a finite number. 
Now the inductive definition of the values of all terms and formulas is straightforward.

% The formulas and number terms of FO+C over a relational signature $\sigma$ are generated by mutual recursion:
% \[
% \begin{array}{ll}
% \phi ::= & R(v_1,\ldots,v_n) \mid v_1=v_2 \mid \theta\leq \theta' \mid \neg\phi \mid \phi\land\phi' \\
% \theta ::= & x \mid 0\mid 1  \mid \theta +\theta' \mid \theta\cdot \theta' \\& \mid \#(v_1,\ldots, v_k, x_1\leq \theta_1,\ldots,x_\ell\leq \theta_\ell).\phi
% \end{array}
% \]
% where $R\in\sigma$ is an $n$-ary relation symbol, $v_i$ are node variables ranging over vertices, $x,x_i$ are number variables ranging over natural numbers, and, in the last clause, $k+\ell\geq 1$.

% The two-variable fragment FO$^2$+C consists of the formulas and number terms that only use two node variables, but unlimited number variables. 

Every FO+C-formula $\phi(u)$ with a single free node variable and no free number variables defines a node query $\calQ_\phi$. 
For example, $\calQ_{\text{even}}$ is defined by:
$$
\phi(u)\coloneqq \exists (x<\ord) . \big(2\cdot x=\#(u').E(u,u')\big)
$$
where $\exists (x<\ord)$ is short for $\#(x<\ord)\cdots >0$.

% the
% %FO$^2$+C-formula
% FO+C-formula
% \[\phi(v) = \Big(\# (x\leq \ord). \exists v'.(E(v,v')\land \# (v).E(v',v)=x)\Big)\geq 3\] expresses that there are at least 3 numbers $n$ for which $v$ has a successor of degree $n$. 
It is not difficult to see that every GML-formula can be expressed in FO+C. 
%FO$^2$+C.
\begin{theorem}[\cite{grohedescriptivegnn}]
    Every node query expressed by a key-oblivious \localsumrelu \GNN is in 
    FO+C.
    % FO$^2$+C.
\end{theorem}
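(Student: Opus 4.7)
The plan is to prove the theorem by induction on the layers of a given key-oblivious \localsumrelu~\GNN $\agnn = (\calL_1, \ldots, \calL_\ell)$, producing for each layer $i$ and each coordinate $j \in \{1, \ldots, D_{i+1}\}$ a tuple of FO+C terms that together encode the $j$-th coordinate of the embedding of an arbitrary node $u$ after layer $i$. The main technical obstacle is that FO+C terms range over $\Nat$, whereas GNN embeddings can be negative rationals. The standard remedy is to represent every rational value $r$ by a triple $(n^+, n^-, d)$ of natural numbers with $r = (n^+ - n^-)/d$, where $d$ is fixed per layer and depends only on $\agnn$ (not on the input graph). Because the weights and biases of the ReLU-FFN combination functions are rational and finite in number, we can fix, for each layer $i$, a common denominator $d_i$ and a polynomial $q_i(x)$ such that every coordinate at layer $i$ has an admissible representation with denominator $d_i$ and numerator components bounded by $q_i(\ord)$.

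For the base case (initial embedding $\emb(v) = \lab(v) \in \{0,1\}^{|\Pi|}$), each coordinate is the indicator of a label, directly written as an FO+C term with $d_0 = 1$. For the inductive step, I would handle the three operations composing each layer as follows. First, sum-aggregation over neighbors is expressed by a counting term of the shape $\#(u' < \ord, y < \theta).\,(E(u,u') \land y < t^+_{i,j}(u'))$, which evaluates to $\sum_{u' \text{ nbr of } u} t^+_{i,j}(u')$; the bound $\theta$ can be chosen as a polynomial in $\ord$ using the inductive bound $q_i$. Rational affine maps are implemented by clearing denominators and using the existing addition/multiplication of FO+C terms on $n^+$ and $n^-$ separately, yielding a new denominator that absorbs the affine coefficients. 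Finally, ReLU on the signed-magnitude representation amounts to a case distinction: if $n^+ > n^-$ then output $(n^+ - n^-, 0, d)$, otherwise $(0, 0, d)$. Case distinctions on atomic formulas (here the inequality $t^- < t^+$) are expressible in FO+C by multiplying terms by the $0/1$ values of counting terms that witness the condition, i.e.\ by terms of the form $\#(x<1).\,\psi$. Stacking these constructions over the $\ell$ layers yields FO+C terms $t^+_{\ell,1}(u), t^-_{\ell,1}(u)$ for the one-dimensional final embedding; the classifier's acceptance condition $\agnn(\apgraph) > 0$ is then the FO+C atomic formula $t^-_{\ell,1}(u) < t^+_{\ell,1}(u)$, which defines the desired node query.

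The main obstacle is bookkeeping rather than conceptual: one must simultaneously track (i) the fixed common denominator $d_i$ per layer, (ii) a polynomial upper bound on the magnitudes of $n^\pm$ so that counting-term bounds $\theta_j$ can be supplied, and (iii) the fact that a key-oblivious \localsum~layer produces magnitudes that grow by at most a constant factor (from the affine map) times the maximum degree (bounded by $\ord$), so a polynomial bound propagates inductively. Once these invariants are fixed in advance for each layer, the ReLU, affine, and sum-aggregation steps each translate uniformly into FO+C constructions, and no further difficulty arises.
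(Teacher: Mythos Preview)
Your proposal is correct and follows precisely the approach of the cited reference \cite{grohedescriptivegnn}, to which the paper defers without giving its own proof: represent each intermediate rational by a tuple of $\FOC$-terms (with a layer-wise common denominator and a polynomial-in-$\ord$ magnitude bound), translate sum-aggregation via a counting term, affine maps by clearing denominators, and $\relu$ by a $0/1$ case-split term. One small syntactic remark: in your aggregation term, $u'$ should be a \emph{node} variable in the counting construct (i.e., $\#(u',\, y<\theta).\,(E(u,u')\wedge y<t^+_{i,j}(u'))$) rather than a number variable bounded by $\ord$, but this does not affect the argument.
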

In fact, a stronger result is proved in \cite{grohedescriptivegnn} that applies to GNNs with numerical inputs and outputs, which we will later use to upper bound the expressiveness of key-invariant \localsumrelu \GNNs.
% (Theorem \ref{thm:SumReluinOrdInvFOC}).

%\balder{Above phrasing is a bit misleading, given that this will only happen in the appendix}

% \begin{figure}
% \begin{center}
% \small
% \begin{tikzcd}[column sep=-1mm]
% \textbf{$\policy{\geq0}{<0}$} & \bm{\kern 0.04em \cap} \arrow["\equiv", sloped, phantom, d] & \textbf{$\policy{>0}{\leq0}$}\\  
% &\textbf{$\policy{>0}{<0}$} \arrow[ul] \arrow[ur] 
% % \arrow["\equiv", u, phantom]
% % \arrow["\equiv", u, phantom, rotate=45]
% \\
% & \textbf{$(\geq 1/\leq 0)$} \arrow[u]
% \end{tikzcd}\end{center}
% \caption{Acceptance policies ordered by increasing expressive power
% }
% \label{fig:acceptance_policies_hierarchy}
% \end{figure}

\section{Local Max GNNs}\label{sec:max}

Recall that the node queries expressible by key-oblivious \localmax \GNNs correspond precisely to the formulas of  modal logic, which can count over neighbors only up to $1$, whereas key-oblivious \localsum \GNNs can count up to any fixed natural number. The following example shows that key-invariant \localmax \GNNs have the capability to count up to $2$. 

\begin{example}
\label{ex:diamond2top}
   The node query defined by the GML-formula $\Diamond^{\geq 2}\top$ can be expressed by a
   key-invariant \localmaxrelu \GNN as follows:
   the \GNN first applies one round of message passing to compute, for each node $v$, the minimal and maximal key values among the neighbors of $v$. It then outputs their absolute difference. It follows from Proposition~\ref{prop:bisimulation-localmaxobliv}
   that the same node query cannot be computed by a key-oblivious \localmaxrelu \GNN. 
\end{example}

The above example shows that key-invariant
\localmaxrelu GNNs are more expressive than
key-oblivious \localmax \GNNs with arbitrary combination functions. On the
other hand, the added expressive power is 
limited: as we will see later (Theorem~\ref{thm:diamond geq2 p not expressed by localmaxifpos}) even the node query $\Diamond^{\geq 2}p$ already cannot be expressed by a key-invariant \localmaxrelu GNN,
although it can be expressed if we allow a broader class of combination functions.

In the following subsections, 
we investigate the expressive
power of key-invariant \localmax \GNNs for different classes of combination functions.

\subsection{Continuous Combination Functions}

Example~\ref{ex:diamond2top} shows that
key-invariant \localmaxcontinuous \GNNs can express node queries that are not closed under bisimulation. We find an expressiveness upper bound by further restricting to functional bisimulations (also known in the modal logic literature as bounded morphisms).
\begin{definition}[Functional bisimulations]
Given pointed graphs $\apgraph,\altpgraph$, a functional bisimulation
$B: \apgraph\fbisim \altpgraph$
is a bisimulation $B\subseteq V(\agraph)\times V(\altgraph)$ that is functional, i.e., such that for every $u \in V(\agraph)$ there is exactly one $v \in V(\altgraph)$ such that $B(u,v)$. Likewise for pointed graphs. 
% A node query $\calQ$ is closed under functional bisimulations if $(G_1,v_1) \fbisim (G_2,v_2)$ and $Q(G_1,v_1)=1$ implies $Q(G_2,v_2)=1$.
%$\calQ$ is closed under inverse functional bisimulations if $(G_1,v_1)\fbisim(G_2,v_2)$ and $\calQ(G_2,v_2)=1$ implies $\calQ(G_1,v_1)=1$.
%\balder{We could avoid this definition and instead talk about $Q^c$ being closed under functional bisimulations.}
\end{definition}
% \balder{Alternative equivalent definition: a functional bisimulation is a homomorphism $h:(G,v)\to (H,u)$ with the additional property the $h$ maps the neighbors of a node $w\in V(G)$ surjectively (but not necessarily bijectively) onto the neighbors of $h(w)$. This makes the definition more parallel to that of coverings below, which requires bijectiveness.}
%\begin{theorem} \label{thm:yes-closed closed under fbisim}
%   If $\calQ$ is expressed by a key-invariant \policy{\geq0}{<0}\localmaxcontinuous \GNN, then $\calQ$ is closed under functional bisimulations
%\end{theorem}
\begin{restatable}{theorem}{ThmMaxClosedUnderFbisim}
\label{thm:max yes-open closed under inverse fbisim}
   If a key-invariant %\policy{>0}{\leq0}
   \localmaxcontinuous \GNN expresses $\calQ$, then $\calQ^c$ is closed under functional bisimulations.
\end{restatable}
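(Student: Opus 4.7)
The plan is to prove the contrapositive: assuming $B : \apgraph \fbisim \altpgraph$ is a functional bisimulation and $\altpgraph \in \calQ$, we derive $\apgraph \in \calQ$. Fix any keying $\key'$ of $\altpgraph$; by key-invariance $\agnn(\altpgraph_{\key'}) > 0$. It suffices to construct a keying $\key$ of $\apgraph$ with $\agnn(\apgraph_\key) > 0$, since then key-invariance gives $\apgraph \in \calQ$.

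Write $\agnn = (\calL_1, \dots, \calL_\ell)$ with $\calL_i = (\max, \com_i)$, and let $\emb_G^i$ and $\emb_H^i$ denote the embeddings after layer $i$ on $\apgraph_\key$ and $\altpgraph_{\key'}$ respectively (with $i = 0$ giving the initial embeddings). The technical core is the claim that for every $\epsilon > 0$ one can choose an injective $\key$ such that
\[ \|\emb_G^i(u_0) - \emb_H^i(B(u_0))\|_\infty < \epsilon \]
for every $u_0 \in V(\agraph)$ and every $0 \leq i \leq \ell$. Specializing to $u_0 := u$, $i := \ell$, and $\epsilon < \agnn(\altpgraph_{\key'})$ then immediately yields $\agnn(\apgraph_\key) > 0$, as required.

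The claim is proved by induction on $i$. The base case is arranged by the construction of $\key$: since $B$ preserves labels, the initial embeddings of $u_0$ and $B(u_0)$ agree except in the key coordinate, and because $V(\agraph)$ is finite we may pick pairwise distinct reals $\key(u_0)$ within $\epsilon$ of $\key'(B(u_0))$ (shrinking $\epsilon$ below half the minimum gap of $\key'$ to preserve injectivity across distinct fibres of $B$). For the inductive step, functionality together with the two bisimulation clauses yields the set equality $\{B(u_1) : (u_0,u_1)\in E(\agraph)\} = \{v_1 : (B(u_0),v_1)\in E(\altgraph)\}$: the forth inclusion rewrites ``$\exists v_1 : B(u_1,v_1)$'' as ``$v_1 = B(u_1)$'' by functionality, and the back inclusion uses the back clause to exhibit a preimage. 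Since coordinate-wise $\max$ is $1$-Lipschitz in $\|\cdot\|_\infty$ and ignores multiplicities, the inductive bound transfers componentwise to the aggregation step; continuity of $\com_i$ in the combination step then transfers it to $\emb^i$.

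The main obstacle is ensuring that a single starting $\epsilon$ suffices to control the bound uniformly across all $\ell$ layers. The embeddings computed on the fixed input $\altpgraph_{\key'}$ form a finite, hence compact, set, and any sufficiently small perturbation of $\key$ keeps the embeddings on $\apgraph_\key$ within a bounded neighborhood of them. Each $\com_i$ is therefore uniformly continuous on the relevant region, with some modulus $\omega_i$; composing the $\ell$ moduli gives a function of $\epsilon$ vanishing at $0$, which can be inverted to choose a sufficiently small starting $\epsilon$. This bookkeeping is routine but needs the compactness argument to invoke \emph{uniform} continuity, which is exactly where the hypothesis that the combination functions be continuous is essential.
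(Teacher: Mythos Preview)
Your proof is correct and shares the paper's core idea: pull back the keying of $\altpgraph$ along $B$ and then perturb to obtain an injective keying of $\apgraph$ that the GNN still accepts. The paper's execution is cleaner, however. Rather than tracking an $\epsilon$-error layer by layer and invoking uniform continuity on compact sets, the paper first observes (as a one-line lemma) that on the pulled-back \emph{non-injective} valuation $\val := \key' \circ B$, the GNN output on $\apgraph_\val$ is \emph{exactly} equal to that on $\altpgraph_{\key'}$---this is immediate since $B$ is a value-preserving bisimulation and $\max$ ignores multiplicities. Then, because the GNN output on $\apgraph$ is a continuous function of the finite vector $\val \in \reals^{|V(\agraph)|}$, ordinary continuity at this single point (no uniform continuity, no compactness bookkeeping) suffices to perturb $\val$ into an injective keying while keeping the output positive. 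Your inductive argument is essentially re-deriving this continuity by hand; factoring out the exact-equality step makes the perturbation a one-liner.
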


As a first application of this theorem, 
we obtain that the node
property $\Diamond^{\leq 1}\top$ 
%\balder{We didn't introduce $\Diamond^{\leq}$ only $\Diamond^{\geq}$}
is
not expressible by a key-invariant \localmaxcontinuous \GNN: let 
 $\apgraph$ be a pointed graph consisting
 of a single edge $(u,u')$ and let $\altpgraph$ be a pointed graph consisting of two edges $(v,v'), (v,v'')$. Then $\apgraph\models\Diamond^{\leq 1}\top$, 
$\altpgraph\not\models\Diamond^{\leq 1}\top$,
and 
$\altpgraph\fbisim\apgraph$.

In combination with Example~\ref{ex:diamond2top}, this also shows that the class of queries expressible by \localmaxcontinuous \GNNs is not closed under complementation. This is a consequence of our asymmetric \policy{>0}{\le0} acceptance condition
(see also Section~\ref{sec:policies}).

As a second application, consider any pointed graph $\apgraph$, and let $\altgraph$ be the graph obtained from $\agraph$ by adding an extra node $v$ with the same label and neighbors as $u$. The identity map extended by sending $v$ to $u$ is a functional bisimulation from $\altpgraph$ to $\apgraph$. It follows immediately that \emph{no isomorphism type over connected graphs is definable by a \localmaxcontinuous \GNN}, except for the isomorphism type of a single isolated node:
\begin{restatable}{corollary}{CorIsoCollapseMaxContinuous}
\label{cor:Max iso continuous collapse}
Let $\apgraph$ be a pointed connected graph with isomorphism type $\calQ_{\apgraph}$. Then the following are equivalent:
\begin{enumerate}
    \item $\calQ_{\apgraph}$ is expressed by a key-invariant \localmaxcontinuous \GNN;
    \item $\calQ_{\apgraph}$ is expressed by a key-oblivious \localmaxcontinuous \GNN;
    \item $\apgraph$ is a single isolated node.
\end{enumerate}
\end{restatable}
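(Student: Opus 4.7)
The plan is to prove the cycle $(3)\Rightarrow(2)\Rightarrow(1)\Rightarrow(3)$. The implication $(2)\Rightarrow(1)$ is immediate, since every key-oblivious \GNN can be regarded as a key-invariant \GNN that simply ignores the value coordinate of its initial embedding. For $(3)\Rightarrow(2)$, if $\apgraph$ consists of a single isolated node with label vector $\mathbf{b}\in\{0,1\}^{|\Pi|}$, then $\calQ_{\apgraph}$ is exactly the node query \emph{the distinguished node has label $\mathbf{b}$ and no neighbors}, which is defined by the modal logic formula $\bigwedge_{i:b_i=1} p_i \wedge \bigwedge_{i:b_i=0} \neg p_i \wedge \neg\Diamond\top$. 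By Theorem~\ref{thm:ML-localmaxobliv} it is therefore expressed by a key-oblivious \localmaxrelu \GNN, and a fortiori by a key-oblivious \localmaxcontinuous \GNN.

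For the main implication $(1)\Rightarrow(3)$, I would argue contrapositively: assume $\apgraph$ is connected and $u$ has at least one neighbor in $\agraph$, and derive that $\calQ_{\apgraph}$ cannot be expressed by a key-invariant \localmaxcontinuous \GNN. Construct $\altgraph$ from $\agraph$ by adding a fresh node $v$ whose label agrees with that of $u$ and whose neighbors in $\altgraph$ are exactly the neighbors of $u$ in $\agraph$; keep the distinguished node as $u$. Define $B\colon V(\altgraph)\to V(\agraph)$ by $B(x)=x$ for $x\in V(\agraph)$ and $B(v)=u$. Labels are preserved because $u$ and $v$ share a label; the forward and backward bisimulation conditions hold on the identity part because $\agraph$ is a substructure of $\altgraph$, and they hold at $v$ because $v$ and $u$ have identical neighbor sets in $\altgraph$, so $B\colon\altpgraph\fbisim\apgraph$. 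Because $u$ has some neighbor $w$, the path $v\mbox{-}w\mbox{-}u$ lies in $\altgraph$, so $v$ belongs to the connected component of $u$ in $\altpgraph$; this component therefore has $|V(\agraph)|+1$ nodes and cannot be isomorphic to $\apgraph$, giving $\altpgraph\notin\calQ_{\apgraph}$. Applying Theorem~\ref{thm:max yes-open closed under inverse fbisim}, $\calQ_{\apgraph}^c$ is closed under functional bisimulations, so $\altpgraph\in\calQ_{\apgraph}^c$ together with $\altpgraph\fbisim\apgraph$ forces $\apgraph\in\calQ_{\apgraph}^c$, contradicting $\apgraph\in\calQ_{\apgraph}$.

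The whole argument is short, and the only point that needs real care is arranging the twin construction so that $B$ simultaneously satisfies the bisimulation conditions (which requires $v$ and $u$ to have \emph{exactly} the same neighborhoods, so that no ``extra'' edge from $v$ needs to be matched at $u$) and yields a strictly larger connected component at $u$ (which needs $u$ to have at least one neighbor in $\agraph$, so that $v$ reaches $u$ via a two-step path). Both compatibility conditions hold precisely because $\apgraph$ is connected but is not a single isolated node, which is exactly why case~(3) is the unique exception in the corollary.
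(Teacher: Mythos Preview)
Your proof is correct and follows essentially the same approach as the paper's: the implications $(3)\Rightarrow(2)\Rightarrow(1)$ are handled as in the paper (you make the modal-logic definability of the singleton case explicit, which the paper leaves as ``immediate''), and for $(1)\Rightarrow(3)$ you use exactly the same twin-node construction and the same appeal to Theorem~\ref{thm:max yes-open closed under inverse fbisim}. Your write-up is more detailed in verifying the bisimulation conditions and in spelling out why $v$ lands in the connected component of $u$, but the argument is identical in substance.
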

% \balder{For example, consider any pointed graph $(G,v)$ and let $G'$ be the graph obtained from $G$ by adding an extra node $u$ with the same color and the same neighbors as $v$. The identity map extended by sending $u$ to $v$ is a functional bisimulation from $(G',v)$ to $(G,v)$. it follows immediately that no isomorphism type is definable by a continuous local max GNN, except for the isomorphism type of single-node graphs.}

%\begin{corollary}
%\label{cor:continuous_max_implies_bm_or_ibm_closed}
%    If $\calQ$ is expressed by a key-invariant \localmaxcontinuous \GNN it is either closed under functional bisimulations or inverse functional bisimulations. This holds for all acceptance policies in figure~\ref{fig:acceptance_policies_hierarchy}.
%\end{corollary}

As a third application, 
recall that a $\policy{>0}{<0}$ \GNN node-classifier never outputs $0$. Theorem~\ref{thm:max yes-open closed under inverse fbisim} can be used to show that, in the case of \localmaxcontinuous, this restriction collapses key-invariant classifiers to key-oblivious classifiers:
\begin{restatable}{theorem}{ThmMaxCollapse}
\label{thm:collapseyesnolocalmax}
   Let $\calQ$ be a node query. The following are equivalent:
   \begin{enumerate}
   \item $\calQ$ is expressed by a key-invariant \policy{>0}{<0} \localmaxcontinuous \GNN;
    \item Both $\calQ$ and $\calQ^c$ are expressed by key-invariant 
      % \policy{>0}{\leq 0)}
 \localmaxcontinuous \GNNs;
    \item $\calQ$ is expressed by a key-oblivious \localmaxrelu  \GNN;
    \item $
    \calQ $ is expressible in Modal Logic.
   \end{enumerate}
\end{restatable}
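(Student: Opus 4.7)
The plan is to prove the four conditions equivalent by closing the cycle (3)\,$\Rightarrow$\,(1)\,$\Rightarrow$\,(2)\,$\Rightarrow$\,(3), while noting that (3)\,$\Leftrightarrow$\,(4) is immediate from Theorem~\ref{thm:ML-localmaxobliv}. For (3)\,$\Rightarrow$\,(1), I would take a key-oblivious \localmaxrelu \GNN expressing $\calQ$; by Theorem~\ref{thm:ML-localmaxobliv} I may assume it uses the stricter \policy{\geq 1}{\leq 0} policy. Composing its output with the affine map $x \mapsto x - \tfrac{1}{2}$ (which lies in \FFN(\ReLU)) produces a GNN whose output is strictly positive on yes-instances and strictly negative on no-instances. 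Being key-oblivious, it is trivially key-invariant, so (1) holds. For (1)\,$\Rightarrow$\,(2), negating the output of the \policy{>0}{<0} GNN for $\calQ$ yields a key-invariant \localmaxcontinuous \GNN whose output is strictly positive on $\calQ^c$ and strictly negative on $\calQ$; combined with the original GNN, this shows that both $\calQ$ and $\calQ^c$ are expressed, giving (2).

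The main step is (2)\,$\Rightarrow$\,(3). Applying Theorem~\ref{thm:max yes-open closed under inverse fbisim} to the GNN expressing $\calQ$ shows that $\calQ^c$ is closed under functional bisimulations, and applying it to the GNN expressing $\calQ^c$ shows that $\calQ = (\calQ^c)^c$ is closed under functional bisimulations. Together, these imply that whenever $\apgraph \fbisim \altpgraph$ one has $\apgraph \in \calQ$ iff $\altpgraph \in \calQ$, i.e., $\calQ$ is invariant under functional bisimulations in both directions. To upgrade this to invariance under arbitrary bisimulations, given bisimilar pointed graphs $\apgraph$ and $\altpgraph$ I form the bisimulation quotient $\calC$ of their disjoint union $\apgraph \sqcup \altpgraph$, taking as distinguished node the common equivalence class $[u]=[v]$. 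The canonical quotient maps $\apgraph \to \calC$ and $\altpgraph \to \calC$ are both functional bisimulations: the forth-condition is immediate from how edges in $\calC$ are defined, and the back-condition follows directly from the zig-zag property of the bisimilarity relation itself. Hence $\apgraph \in \calQ \Leftrightarrow \calC \in \calQ \Leftrightarrow \altpgraph \in \calQ$, so $\calQ$ is bisimulation-invariant.

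It remains to observe that $\calQ$ is strongly local: any \GNN has a fixed finite depth $\ell$, so its output at $u$ depends only on nodes within $\ell$ steps, and therefore $\apgraph \in \calQ$ iff $\apgraph\neighborhood{\ell} \in \calQ$. Combined with bisimulation closure, Proposition~\ref{prop:bisimulation-localmaxobliv} then yields that $\calQ$ is expressible by a key-oblivious \localmaxrelu \GNN, establishing (3) and closing the cycle.

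The main obstacle is the passage from functional-bisimulation invariance to ordinary bisimulation invariance inside (2)\,$\Rightarrow$\,(3); the bisimulation quotient construction resolves it, but one must carefully verify the back-condition of the quotient map, which is precisely where the full symmetric nature of $\sim$ is used (as opposed to a single functional bisimulation). All other implications reduce to straightforward manipulations of acceptance policies and direct invocations of the previously stated characterizations.
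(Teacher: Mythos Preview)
Your overall strategy matches the paper's: the same cycle of implications, the same appeal to Theorem~\ref{thm:max yes-open closed under inverse fbisim} to obtain closure of both $\calQ$ and $\calQ^c$ under functional bisimulations, and then Proposition~\ref{prop:bisimulation-localmaxobliv} to conclude. The only substantive difference is in how you upgrade two-sided closure under functional bisimulations to closure under arbitrary bisimulations. The paper (in its Lemma~\ref{lem:fbisim both ways implies bisim}) uses the \emph{bisimulation product}: given a bisimulation $B$ between $\apgraph$ and $\altpgraph$, it takes the subgraph $K$ of the direct product $G\times H$ induced by $B$, and observes that the two projections $K^{(u,v)}\fbisim\apgraph$ and $K^{(u,v)}\fbisim\altpgraph$ are functional bisimulations. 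You instead go to the \emph{bisimulation quotient} of the disjoint union. Both are standard modal-logic constructions and both yield the desired chain of biconditionals.

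There is, however, a small technical wrinkle with your quotient that the product avoids. In this paper graphs are required to have an \emph{irreflexive} edge relation, and node queries are only defined on such graphs. The bisimulation quotient can introduce self-loops: if two adjacent nodes of $G$ happen to be bisimilar (e.g.\ the two endpoints of a single edge carrying the same label), their common class acquires a loop in $\calC$. Then $\calC$ is not a graph in the paper's sense, ``$\calC\in\calQ$'' is undefined, and the middle link of your chain $\apgraph\in\calQ\Leftrightarrow\calC\in\calQ\Leftrightarrow\altpgraph\in\calQ$ breaks. The product construction sidesteps this entirely, since an edge $((u_1,v_1),(u_2,v_2))$ in $G\times H$ requires $(u_1,u_2)\in E(G)$, which is irreflexive. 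Swapping in the product for your quotient (or otherwise arguing that Theorem~\ref{thm:max yes-open closed under inverse fbisim} extends harmlessly to targets with loops) closes the gap; the rest of your argument is correct as written.
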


The above results give limitations on the expressive power of key-invariant \localmaxcontinuous \GNNs. In contrast, the next theorem showcases the added expressive power of key-invariant \localmaxrelu and, more generally, \localmaxcontinuous
\GNNs, compared to key-oblivious ones. We introduce two extensions of modal logic:

\begin{definition}[Weakly Graded Modal Logic] \label{def:weaklygradedml}
\begin{align*}
    % &\qquad\ML \,\,:  p \,\mid\, \neg \phi \,\mid\, \phi \vee \psi \,\mid\, \Diamond \phi\\
    &\WGMLtrue: 
    \begin{cases}
    \phi ::= \chi \mid  \phi_1 \vee \phi_2 \mid \phi_1 \wedge \phi_2 \mid \Diamond \phi \mid \Diamond^{\geq 2} \top\\
    \chi ::= \ML
    \end{cases}
    \\
    &\WGMLmodal: 
    \begin{cases}
    \phi ::= \chi \mid  \phi_1 \vee \phi_2\mid \phi_1 \wedge \phi_2 \mid \Diamond \phi \mid \Diamond^{\geq 2} \chi\\
    \chi ::= \ML
    \end{cases}
    % \\
    % &\WGMLall: \ML \,\mid\,
    % \Diamond^{\geq 2} \phi\\
\end{align*}
\end{definition}
where $\apgraph \models \Diamond^{\geq2} \phi$ if and only if $u$ has at least two neighbors that satisfy $\phi$. In other words, $\WGMLtrue$ and $\WGMLmodal$ extend \ML with $\Diamond^{\geq 2} \top$ and $\Diamond^{\geq 2} \chi$, respectively, where $\chi$ is allowed to be a modal formula, and where this counting modality may not occur under a negation.
\begin{restatable}{theorem}{LocalMaxEncodesWGML}
\label{thm:WGML encoded by localmax}
\leavevmode\par
\begin{itemize}
    \item[a]
    For every $\phi \in \WGMLtrue$, $\calQ_\phi$ is expressed by a key-invariant \localmaxrelu \GNN.
    
    \item[b]
    For every $\phi \in \WGMLmodal$, $\calQ_\phi$ is expressed by a key-invariant \localmaxsigmoid \GNN.
\end{itemize}
\end{restatable}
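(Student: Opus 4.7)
The plan is to prove both parts by structural induction on the formula, building for each subformula $\psi$ a key-invariant \localmax \GNN $\agnn_\psi$ (with combination functions in the class required by the statement) such that, on every keyed graph, the resulting embedding holds at each node $v$ a distinguished coordinate with value $f_\psi(v)\ge 0$ satisfying $f_\psi(v)>0$ iff $\apgraph^v\models\psi$. I additionally maintain that every layer's embedding carries the node's key and label in dedicated coordinates, propagated unchanged by the combination function. The top-level \GNN then reads off $f_\phi(u)$ at the root under the default $\policy{>0}{\le 0}$ policy.

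For the base case, $\psi$ is a modal-logic formula: Theorem~\ref{thm:ML-localmaxobliv} supplies a key-oblivious \localmaxrelu \GNN computing $\psi$ at every node with $\policy{\geq 1}{\leq 0}$ acceptance, and I compose its output with the \relu-expressible clamp $x\mapsto 1-\relu(1-\relu(x))$ to binarize it to $\{0,1\}$; I then lift to a key-invariant \GNN by padding each layer's embedding with a key-carrying coordinate copied through by each combination function. For $\psi_1\vee\psi_2$, $\psi_1\wedge\psi_2$, and $\Diamond\phi$ I run the sub-\GNNs in parallel---concatenating embedding coordinates in each layer and padding the shorter with identity layers to match depths---and append one final step. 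For $\vee$ the combination outputs $f_1+f_2$, which is nonnegative and positive iff some $f_i>0$; for $\wedge$ it outputs $\min(f_1,f_2)=f_1-\relu(f_1-f_2)$, positive iff both $f_i>0$; for $\Diamond\phi$ the aggregation takes the coordinate-wise $\max$ of $f_\phi$ over neighbors and the combination passes it through, yielding a value positive iff some neighbor satisfies $\phi$.

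The counting modalities are where the keys are used. For $\Diamond^{\geq 2}\top$ in part (a) I append a single layer whose aggregation returns the coordinate-wise $\max$ of both the key and the negated-key coordinates over the neighbors, and whose combination adds them, producing $\max_{v\sim u}\key(v)-\min_{v\sim u}\key(v)$. Since the keying is injective this is nonnegative, equals $0$ when $u$ has at most one neighbor, and is strictly positive as soon as $u$ has two or more. For $\Diamond^{\geq 2}\chi$ in part (b) I first invoke the base case to obtain $\chi(v)\in\{0,1\}$ at every node, then use one combination function in $\FFN(\relu,\sigmoid)$ to produce at each $v$
\[
    g_1(v)=\relu\bigl(\sigmoid(\key(v))+\chi(v)-1\bigr),\qquad g_2(v)=g_1(v)+\bigl(1-\chi(v)\bigr).
\]
Since $\sigmoid(\reals)\subset(0,1)$, case-splitting on $\chi(v)$ gives $g_1(v)=g_2(v)=\sigmoid(\key(v))$ when $\chi(v)=1$, and $g_1(v)=0$, $g_2(v)=1$ when $\chi(v)=0$. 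A final layer aggregates $\max g_1$ and $\max(-g_2)$ over neighbors and its combination returns $\relu(\max_{v\sim u}g_1(v)-\min_{v\sim u}g_2(v))$: if at most one neighbor of $u$ satisfies $\chi$ the value is $0$; if two or more do, strict monotonicity of $\sigmoid$ combined with injectivity of the keying forces $\max g_1>\min g_2$, making the value strictly positive.

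Most of the argument is bookkeeping (copying keys, concatenating embeddings, padding depths). The only real obstacle is the $\Diamond^{\geq 2}\chi$ case: I must gate a continuous sigmoid-valued signal by a Boolean $\chi$ and then detect whether two of the surviving values are distinct, using only $\relu$ and $\sigmoid$. The identity $\relu(\sigmoid(x)+\chi-1)=\chi\cdot\sigmoid(x)$, valid because $\sigmoid(\reals)\subset(0,1)$ and $\chi\in\{0,1\}$, performs the multiplicative gating exactly; this explains both why part (b) requires $\sigmoid$ beyond pure ReLU and why the base case must produce sharp Boolean values rather than arbitrary nonnegative ones.
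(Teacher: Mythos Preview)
Your proof is correct and follows essentially the same approach as the paper: the inductive treatment of $\vee,\wedge,\Diamond$ matches the paper's Lemma handling those connectives via $\relu$-sum, $\min$, and $\localmax$; your construction for $\Diamond^{\geq 2}\top$ via $\max\key-\min\key$ is identical to the paper's; and your gated-sigmoid construction for $\Diamond^{\geq 2}\chi$ is the same mechanism the paper uses, just repackaged (your $g_1$ is exactly the paper's first aggregated expression, and $-g_2$ equals the paper's second). The only cosmetic differences are that you maintain a nonnegativity invariant throughout (hence no $\relu$ is needed in the $\vee$ case and an extra $\relu$ appears after $\Diamond^{\geq 2}\chi$), you explicitly clamp the modal base case to $\{0,1\}$ rather than relying on the cited translation already producing Boolean outputs, and the paper additionally observes that the same trick works for $\Diamond^{\geq 2}_r$ at arbitrary radius $r$ by iterating $\localmax$.
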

The first result is optimal in the sense that $\calQ_{\Diamond^{\geq2}p}$ is not expressible by a \localmaxrelu \GNN: this will follow from Theorem \ref{thm:diamond geq2 p not expressed by localmaxifpos}. The proof of the second result also implies that \localmaxsigmoid \GNNs can use keys to count up to $2$ over nodes at a fixed distance $k$ for any $k \geq 1$. Key-invariant \localmaxsigmoid \GNNs thus express queries, such as ``there are at least $2$ nodes reachable in two steps'', that are not CR-invariant and hence cannot be expressed by any key-oblivious \localmax or \localsum \GNN. \footnote{Note that this query distinguishes the a cycle of length 4 from a cycle of length 2.}
% The second result shows that key-invariant \localmaxsigmoid \GNNs express queries that are not expressed by any key-oblivious \localsum \GNN. For example, $\calQ_{\Diamond^{\geq2}_2 p}$ is not CR-invariant.
\subsection{Arbitrary Combination Functions}

We now examine what happens when \localmax \GNNs use discontinuous combination functions. \localmaxsemilinear \GNNs can express step functions as well as all connected isomorphism types. We further upper bound the expressiveness of \localmax \GNNs with arbitrary combination functions by order-invariant FO.

\subsubsection{\uddl and \localmaxsemilinear \GNNs}
\label{sec:uddl}
We introduce a new modal-style logic, \uddl; We will prove that every 
\uddl formula is expressible by a key-invariant \localmaxsemilinear \GNN.

\uddl (Local-Determinism Dynamic Logic)
is a  variant of propositional dynamic logic (PDL) interpreted over graphs.
There are two kinds of terms. \emph{Formulas} define node properties, and \emph{Program expressions} define binary relations, interpreted as ways to transition from one node to another in the graph.

\begin{tabular}{l}
    Formulas: \\ $\phi ::= p \mid \top \mid \phi_1\land \phi_2 \mid \neg\phi \mid \langle \pi\rangle\phi \mid  [\pi]\phi \mid \langle\pi\rangle^{=1}\phi$ \\ \\
    Program expressions: \\
$\pi ::= \step \mid \test(\phi) \mid \pi_1;\pi_2 \mid \pi_1\cup \pi_2$ 
\end{tabular}

Here $p$ is a proposition in in $\Pi$. The semantics are given in Table~\ref{tab:uddl}.
We write $\apgraph\models\phi$ if $u\in \sem{\phi}_{\agraph}$ and we also write $\agraph,(u,v)\models\pi$ if $(u,v)\in \sem{\pi}_{G}$. 
We use $\stay$ as shorthand for $\test(\top)$. Each \uddl formula defines a node query--the set of pointed graphs on which it is satisfied.

\begin{table}
\caption{Semantics of \uddl, where $\agraph=(V,E,\lab)$.}
\[\begin{array}{lll}
\sem{p_i}_{G} &=& \{u\in V\mid \lab(u)_i=1\}\\
\sem{\top}_{G} &=& V\\
\sem{\phi_1\land\phi_2}_{G} &=& \sem{\phi_1}_{G}\cap\sem{\phi_2}_{G}\\
\sem{\neg\phi}_{G} &=& 
V\setminus\sem{\phi}_{G} \\ \
\sem{\langle \pi\rangle\phi}_{G} &=& \{u\in V\mid\exists v\in\sem{\phi}_{G} \text{ s.t. }  (u,v)\in\sem{\pi}_{G}\}\\
\sem{[\pi]\phi}_{G} &=& 
\{u\in V\mid\forall v,(u,v)\in\sem{\pi}_{G}\Rightarrow v\in\sem{\phi}_{G}\}\\
\sem{\langle \pi\rangle^{=1}\phi}_{G} &=&
\{u\in V\mid \exists^{=1} v\in \sem{\phi}_{G} \text{ s.t. } (u,v)\in \sem{\pi}_{G}\}\\ \\
\sem{\step}_{G} &=& E \\
\sem{\test(\phi)}_{G} &=& \{(u,u)\mid u\in \sem{\phi}_{G}\} \\
\sem{\pi_1;\pi_2}_{G} &=& \sem{\pi_1}_{G} \circ \sem{\pi_2}_{G} \\
\sem{\pi_1\cup\pi_2}_{G} &=& \sem{\pi_1}_{G}\cup \sem{\pi_2}_{G}
\end{array}\]
\label{tab:uddl}
\end{table}

In other words, $\uddl$ is star-free PDL interpreted over graphs, extended with a restricted counting construct that can be applied to a complex program, and that allows only to distinguish zero, one and more-than-one.
$\uddl$ strictly subsumes weakly graded modal logic, and contains formulas outside of graded modal logic: 

\begin{example} \label{ex:uddl}
The \uddl formula $\langle step;step\rangle^{=1}p$ 
expresses that 
there is exactly one $p$-labeled node reachable in two steps from the distinguished node.
% The formula $\phi_1=\langle\step\rangle^{=1} \greencolor$  checks that
% there is a unique Green neighbor. The formula $\phi_2=\langle\step;\test(\redcolor);\step\rangle^{=1} \greencolor$  checks that
% there is a unique Green node reachable in two steps by going through a Red node. Let $\phi_3$ be defined as
% \begin{align*}
%  \langle \step \cup (\step;\test(\redcolor);\step)\rangle^{=1} \greencolor
% \end{align*}
% $\phi_3$ expresses that there is a unique Green node that is reachable either via a single step or two-steps via a Red node.
% Thus the conjunction of $\phi_1$, $\phi_2$, and $\phi_3$ implies  that the distinguished node  participates in a triangle, with one of the other vertices being Red the other Green, the 
% Green node being the sole one reachable via a red-green walk.
\end{example}

Our first main result regarding key-invariant \localmaxsemilinear \GNNs is that they express all queries in \uddl.

\begin{restatable}{theorem}{thmuddllower} \label{thm:uddllower} 
Let $\phi$ be a $\uddl$ formula. Then $\calQ_\phi$ is expressed by a key-invariant \localmaxsemilinear \GNN.
\end{restatable}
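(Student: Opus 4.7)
The plan is to prove Theorem~\ref{thm:uddllower} by simultaneous structural induction on $\uddl$ formulas $\phi$ and program expressions $\pi$. For each subformula $\phi$, the construction yields a \GNN fragment that computes at every node $u$ a $\{0,1\}$-valued indicator $b_\phi(u)$ with $b_\phi(u)=1$ iff $\apgraph\models\phi$. For each program $\pi$, it yields a fragment realizing an operator $\calM_\pi$: given an intermediate embedding $x:V(\agraph)\to\reals^d$ already stored at the current layer, $\calM_\pi$ outputs at each $u$ the coordinate-wise maximum of $x(w)$ over all $w$ with $(u,w)\in\sem{\pi}_\agraph$, together with an emptiness flag for that set (and a fixed default when the set is empty). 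The two constructions are interleaved: $\calM_{\test(\psi)}$ calls the indicator $b_\psi$, while $\langle\pi\rangle\psi$, $[\pi]\psi$, and $\langle\pi\rangle^{=1}\psi$ all call $\calM_\pi$.

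The inductive cases for programs are essentially mechanical. $\calM_\step$ is a single max-aggregation layer; $\calM_{\test(\psi)}$ gates the value $x(u)$ by $b_\psi(u)$; $\calM_{\pi_1\cup\pi_2}$ runs the two sub-operators in parallel, takes coordinate-wise max, and ORs the emptiness flags; and $\calM_{\pi_1;\pi_2}$ composes $\calM_{\pi_1}$ after $\calM_{\pi_2}$. All combination functions used here can be taken in $\FFN(\ReLU,\Heavi)$, hence semilinear by Lemma~\ref{lem:unary semilinear = relu,H}. For the formulas, the atomic and Boolean cases are handled by a single semilinear combination; for $\phi=\langle\pi\rangle\psi$ one applies $\calM_\pi$ to the one-dimensional embedding $v\mapsto b_\psi(v)$ and tests positivity; and $[\pi]\psi$ reduces via negation to $\neg\langle\pi\rangle\neg\psi$.

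The genuinely new case, and the one that requires the keys, is $\phi=\langle\pi\rangle^{=1}\psi$. The strategy is to compute, at each $u$, both the maximum and the minimum of $\key(v)$ over the conditional set $S_u=\{v:(u,v)\in\sem{\pi}_\agraph\text{ and }b_\psi(v)=1\}$; since the keying is injective, $|S_u|=1$ iff $S_u$ is non-empty and these two values coincide. The main obstacle is that $\calM_\pi$ aggregates over \emph{all} $\pi$-successors, so the naive encoding $b_\psi(v)\cdot\key(v)$ fails: the $0$ contributions from non-$\psi$-successors may dominate or tie the actual $\psi$-keys, which can have arbitrary sign and magnitude. To circumvent this, the \GNN first pre-computes, in a bounded number of initial max-aggregation rounds, a local absolute-key bound $B(v)$ covering a sufficiently large neighbourhood of each node $v$ (depending only on the program depth of $\pi$); using $\ifPos$, each node $v$ then forms the shifted pair $\bigl(\key(v)-(B(v){+}1)(1-b_\psi(v)),\;-\key(v)-(B(v){+}1)(1-b_\psi(v))\bigr)$, which equals $(\key(v),-\key(v))$ when $b_\psi(v)=1$ and is pushed strictly below every relevant key otherwise. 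Applying $\calM_\pi$ coordinate-wise to this encoding and post-processing at $u$ then recovers the maximum and, via negation of the second coordinate, the minimum of $\key$ restricted to $S_u$; a final semilinear combination with the emptiness flag yields the desired indicator $b_\phi(u)$. Composing all these fragments along the parse tree of $\phi$ produces the required key-invariant \localmaxsemilinear \GNN.
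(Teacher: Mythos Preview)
Your overall strategy matches the paper's: a simultaneous induction on formulas and programs, with $\langle\pi\rangle^{=1}\psi$ handled by computing the maximum and minimum key over the conditional successor set and testing equality (using injectivity of the keying). That is the right idea.

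There is, however, a real gap in your treatment of the program operator $\calM_\pi$. You claim that $\calM_{\test(\chi)}$ ``gates'' $x(u)$ by $b_\chi(u)$ with a \emph{fixed} default, and that $\calM_{\pi_1;\pi_2}$ is simply $\calM_{\pi_1}\circ\calM_{\pi_2}$. This is fine when $x$ is $\{0,1\}$-valued (so the $\langle\pi\rangle\psi$ case goes through with default $0$), but it breaks for the shifted key-encoding used in the $\langle\pi\rangle^{=1}\psi$ case, where the values can be arbitrarily negative. Concretely, take $\pi=\step;\test(\chi);\step$ and a node $u$ with two neighbours $v_1\models\chi$ and $v_2\not\models\chi$; suppose $v_1$ has a unique $\psi$-neighbour $w_1$ with $\key(w_1)=-5$ (and all other relevant shifted values are below $-5$). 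After the inner $\calM_\step$ and $\calM_{\test(\chi)}$, node $v_1$ carries $-5$ while $v_2$ carries your default $0$; the outer $\calM_\step$ at $u$ returns $\max(-5,0)=0$. So your ``max key over $S_u$'' is $0$, whereas the min (via the negated coordinate, where the genuine $+5$ dominates the default $0$) is correctly recovered as $-5$. You output $|S_u|\neq 1$ even though $|S_u|=\{w_1\}$. The point is that your $B(v)$-shift pushes non-$\psi$ \emph{leaves} out of the way, but the fixed default at an intermediate $\test(\chi)$ re-injects a value that can dominate.

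The paper handles this differently. It first normalizes programs to unions of chains and then, in the $\step;\pi'$ case, replaces the contribution of each out-of-domain neighbour $v$ not by a fixed default but by $\localmin(\localmin(\feature^{\max}_{\pi'}))(v)$: since the in-domain neighbour achieving the true maximum is always a $2$-neighbour of $v$, this replacement is guaranteed to be no larger than the true maximum. This local ``safe replacement'' trick is precisely the step your proposal is missing. (Your $B(v)$-shift idea could be made to work, but only if you re-apply it at every $\test$ along the chain with a bound large enough to cover all intermediate values; you do not do this.)

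A smaller point: the gate in $\calM_{\test(\psi)}$---``output $x(u)$ if $b_\psi(u)=1$, else default''---requires an honest if-then-else on an unbounded real argument, i.e.\ $\ifPos$. This is semilinear, but not in $\FFN(\ReLU,\Heavi)=\FFN(\unarysemilinear)$ (cf.\ Theorem~\ref{thm:diamond geq2 p not expressed by localmaxifpos}), so your claim that the program cases use only $\FFN(\ReLU,\Heavi)$ combinations is incorrect. The conclusion of the theorem (semilinear suffices) is unaffected.
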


The above encoding of $\uddl$ requires the full power of semilinear combination functions: 
$\FFNs$ with unary semilinear activation functions (equivalently, \FFN(\relu,\Heavi)) do not suffice:
\begin{restatable}{theorem}{thmMaxContUpperBound}
\label{thm:diamond geq2 p not expressed by localmaxifpos}
    There does not exist a key-invariant \localmaxunarysemilinear \GNN that expresses $Q_{\Diamond^{\geq 2} p}$. 
\end{restatable}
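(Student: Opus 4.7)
My plan is proof by contradiction. Suppose a key-invariant $\localmaxunarysemilinear$ \GNN $\calG$ expresses $\calQ_{\Diamond^{\geq 2} p}$. By Lemma~\ref{lem:unary semilinear = relu,H}, every combination function of $\calG$ is an $\FFN(\relu,\Heavi)$, hence piecewise linear with only finitely many affine pieces on rational polyhedra. Coordinate-wise $\max$ preserves this structure, so on any fixed pointed graph $\calG$'s output is a piecewise linear function of the keys with finitely many pieces. I will exploit this finiteness to force a conflict with key-invariance.

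I first isolate a key identity on two small graphs: $G_1$, the pointed edge $u{-}v$ with $v$ labelled $p$, and $G_2$, the star at $u$ with two $p$-labelled leaves $v_1, v_2$. Let $F_1(k_v,k_u)$ and $F_2(k_1,k_2,k_u)$ denote the corresponding \GNN outputs. An easy induction on layers yields the \emph{duplication identity} $F_2(k,k,k_u)=F_1(k,k_u)$: when $v_1$ and $v_2$ share a label and key, max aggregation and deterministic combination keep them in lock-step at every layer, so $u$'s $G_2$-aggregate collapses to its $G_1$-aggregate. Key-invariance demands $F_1\le 0$ on its distinct-key region and $F_2>0$ on its pairwise-distinct region. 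Fixing $k_u=0$ and analysing the unbounded PL piece $\{k_1>k_2\gg 0\}$, I write $F_2=\alpha k_1+\beta k_2+\gamma$ affinely; max-symmetry mirrors this across the diagonal. The duplication identity then yields $F_1(k,0)=(\alpha+\beta)k+\gamma$ for large $k$, so $F_1\le 0$ forces $\alpha+\beta\le 0$, while $F_2(k_2+1,k_2,0)>0$ as $k_2\to\infty$ forces $\alpha+\beta\ge 0$; hence $\alpha+\beta=0$. A short further step using $F_2>0$ for small $k_1-k_2>0$ together with $F_1\le 0$ on the diagonal pins down $\gamma=0$ and $\alpha>0$, so asymptotically $F_1\equiv 0$ and $F_2(k_1,k_2,0)=\alpha\lvert k_1-k_2\rvert$.

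The main obstacle is upgrading this asymptotic data into a genuine contradiction; my plan is to introduce the graph $H$ in which $u$ has one $p$-neighbour $v$ and one non-$p$-neighbour $w$. Since $H\notin\calQ$, the output $F_H(k_v,k_w,0)$ must remain $\le 0$. The layer-$1$ embedding at $u$ depends only on the coordinate-wise maxes of its neighbours' initial embeddings, so I can choose keys in $H$ to match those of a given $G_2$-configuration on every coordinate save the single non-$p$ label indicator. At layer two the individual neighbour outputs enter through $h_p(k)$ and $h_{\bar p}(k)$, each a PL function of one variable with finitely many pieces. For $F_2$ to realise $\alpha\lvert k_1-k_2\rvert$, the function $h_p$ must be coordinate-wise non-monotonic in $k$ (so that coordinate-wise max over two distinct $p$-neighbours strictly exceeds $h_p(\max(k_1,k_2))$); I then plan to use this non-monotonicity against $\calG$: on the coordinates where $h_p$ decreases, I pick $k_w$ so that $h_{\bar p}(k_w)$ coordinate-wise dominates $h_p(k_v)$, forcing the layer-$2$ aggregate on $H$ to coincide with one arising from some valid $G_2$-configuration, so that $F_H$ and $F_2$ must agree --- contradicting $F_H\le 0<F_2$. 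Making this alignment watertight through the \GNN's finitely many PL pieces is the main technical hurdle. The underlying principle is that no PL function $\mathbb R\to\mathbb R$ with finitely many pieces is simultaneously bounded and injective on an unbounded set (each unbounded piece is eventually affine, hence either unbounded or constant), which blocks the natural ``bounded injective filter'' trick that $\FFN(\ifPos)$ accesses through its genuinely bivariate conditional --- exactly the reason \uddl formulas like $\langle\step\rangle^{=1}p$ lie inside $\localmaxsemilinear$ but, as the theorem claims, outside $\localmaxunarysemilinear$.
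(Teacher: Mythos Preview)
Your proposal has interesting ideas---the duplication identity is correct and the asymptotic constraints you derive are on the right track---but the argument has a genuine gap in its final step, and some of the intermediate claims are not as solid as presented.

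The most serious issue is the last paragraph. You explicitly flag the alignment of $h_{\bar p}(k_w)$ against $h_p(k_v)$ as ``the main technical hurdle'', and indeed it is not clear that it can be overcome with only two neighbours. Two obstructions: (i) your $h_p(k)$ implicitly treats a leaf's embedding as a function of its own key only, but from layer~$2$ onward each leaf's embedding also depends on $u$'s previous embedding, which already encodes $\max(k_1,k_2)$---so the objects you are trying to align are not univariate; (ii) even granting a fixed layer, you need $h_{\bar p}(k_w)$ to dominate $h_p(k_v)$ on exactly the coordinates where $h_p$ is decreasing and be dominated on the others; nothing about $\FFN(\relu,\Heavi)$ guarantees that any such $k_w$ exists, since $h_{\bar p}$ is an unrelated PL function. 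A secondary issue: the claim that $F_2$ is affine on a single piece $\{k_1>k_2\gg 0\}$ is not justified; $\FFN(\relu,\Heavi)$ functions of two real arguments can have discontinuities along rays like $k_1=2k_2$, so the coefficients $\alpha,\beta,\gamma$ need not be globally well-defined on that half-space.

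The paper's proof avoids these difficulties by taking a completely different route. Rather than constraining the output function and seeking a contradiction, it \emph{constructs} two star graphs---one with two $p$-leaves and $\ell$ non-$p$-leaves, the other with one $p$-leaf and $\ell$ non-$p$-leaves, where $\ell$ is the number of combination functions of the \GNN---and exhibits keyings on which the \GNN cannot distinguish them. The key technical lemma is that any $f\in\FFN(\relu,\Heavi)$ with signature $\reals\times\reals\to\reals$ is eventually linear in its first argument with a slope that is \emph{independent} of the second (finite-range) argument. This lets each of the $\ell$ extra non-$p$-leaves absorb, via a single linear inequality on its key, the constraint imposed by one combination function, ensuring that the extra $p$-leaf $u_0$ never wins the max at $u$. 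The resulting system of inequalities is then shown to be satisfiable. Your two- and three-node graphs simply do not provide enough free non-$p$-leaves to absorb these per-layer constraints, which is why the alignment step stalls.
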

% \balder{Could phrase this in terms of \FFN(\unarysemilinear).}
% Hence there also does not exist a \localmaxrelu \GNN that expresses this query.
Similarly, it follows from Theorem~\ref{thm:max yes-open closed under inverse fbisim} that \emph{continuous} combination functions are not enough to express \uddl, since $\calQ^c_{\Diamond^{=1}p}$ is not closed under functional bisimulation. 

The following result attests to the considerable expressive power of $\uddl$, compared to, say, GML:

\begin{restatable}{theorem}{thmundecidable}\label{thm:undecidableuddl} Satisfiability is undecidable for $\uddl$. %, thus the satisfiability problem for key-invariant \localmaxsemilinear \GNNs is undecidable.
\end{restatable}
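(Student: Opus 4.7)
The plan is to reduce from the undecidable plane-tiling problem to $\uddl$-satisfiability. Given a tile system $(T, H_t, V_t)$ with $H_t, V_t \subseteq T \times T$ the horizontal and vertical compatibility relations, I would construct a $\uddl$-formula $\phi_T$ that is satisfiable iff $T$ admits a tiling of $\mathbb{Z}\times\mathbb{Z}$.

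The grid is encoded via a $9$-coloring: propositions $C_{(a,b)}$ for $(a,b)\in\{0,1,2\}^2$, with cell $(i,j)$ carrying color $(i\bmod 3,\, j\bmod 3)$. Nine colors (rather than four) are necessary because the edge relation is undirected and irreflexive: a $\bmod 2$ coloring would fail to separate a cell from its horizontal predecessor, whereas in $\mathbb{Z}_3$ the shifts $+1$ and $-1$ are distinct, so each of a cell's four grid neighbors falls into its own color class. For each color $(a,b)$ the formula asserts $C_{(a,b)}\to\langle\step\rangle^{=1}C_{(a+1,b)}$ and $C_{(a,b)}\to\langle\step\rangle^{=1}C_{(a,b+1)}$, fixing unique $H$- and $V$-successors by color, together with a commutativity schema $C_{(a,b)}\to\langle\pi_{HV}\cup\pi_{VH}\rangle^{=1}C_{(a+1,b+1)}$, where $\pi_{HV}$ and $\pi_{VH}$ are the two color-typed $\step;\test(\cdot);\step;\test(\cdot)$ programs walking the two sides of the commuting square. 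Since $\langle\cdot\cup\cdot\rangle^{=1}$ counts distinct endpoints, the two walks are forced to coincide. Tile placement uses propositions $p_t$ for each $t\in T$, with an exactly-one-tile-per-cell constraint and matching schemas asserting that the unique $H$- (resp.\ $V$-) successor of a $p_t$-cell carries a $p_{t'}$ with $(t,t')\in H_t$ (resp.\ $V_t$). A tiling of $\mathbb{Z}\times\mathbb{Z}$ trivially yields a model of $\phi_T$; conversely, a model gives a grid homomorphism from $\mathbb{Z}\times\mathbb{Z}$ whose $p_t$-labels form a valid tiling.

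The main obstacle I anticipate is that $\uddl$ is star-free, so $\phi_T$ has bounded modal depth and directly constrains only a bounded-depth neighborhood of the root. The idea for overcoming this is that the $=1$ modality over unions of color-typed programs forces the model to contain uniquely determined new nodes whose colors are fixed; at each such node the same conditional $C_{(a,b)}\to\cdots$ schemas bite (within the depth of the formula), so the grid is extended one layer at a time. The technically delicate step would be to argue that the combined bounded-depth $=1$ constraints in fact pin down an infinite consistent grid in every satisfying model, rather than just a finite initial fragment, which is the crucial place where local determinism is parlayed into global structure.
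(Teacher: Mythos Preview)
Your local encoding is sound: the $9$-coloring with $\mathbb{Z}_3\times\mathbb{Z}_3$ correctly separates the four grid neighbors in an undirected setting, and enforcing commutativity via $\langle\pi_{HV}\cup\pi_{VH}\rangle^{=1}$ is exactly the right use of the unique-witness modality (the paper does the same in its $\phi_4$). The problem is that the obstacle you yourself flag is fatal to your plan, and your proposed workaround does not work. An $\uddl$ formula $\phi_T$ of modal depth $d$ is evaluated at a single distinguished node $u$; your conditional schemas $C_{(a,b)}\to\cdots$ live inside some $[\pi]$ with $\pi$ of bounded length, so they constrain only nodes within distance $d$ of $u$. Nothing you wrote forces any node at distance $>d$ to have a well-typed $H$- or $V$-successor, let alone a commuting square. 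A model can satisfy $\phi_T$ by carrying a correct $d\times d$ grid patch around $u$ and being arbitrary (or simply stopping) beyond it, so the ``satisfiable $\Rightarrow$ tiling exists'' direction collapses. Your sentence ``at each such node the same conditional schemas bite'' is precisely the step that fails: the schemas do not re-apply at newly reached nodes, because there is no star and no global quantifier to transport them there.

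The paper closes this gap with a \emph{spy point}: a single node $v^{Spy}$ adjacent to every other node in the connected component. One enforces this with a bounded-depth formula (uniqueness of the spy via $\langle\stay\cup\step;\step\cup\step;\step;\step\rangle^{=1}\textsf{Spy}$, plus $[\step;\step]\langle\step\rangle\textsf{Spy}$). Once the spy exists, $[\step]\psi$ evaluated at the spy imposes $\psi$ on \emph{every} node, so all your local successor and commutativity constraints become global while the formula still has bounded depth. A second point: graphs here are finite, so the paper reduces from the undecidable \emph{periodic} tiling problem rather than from tiling $\mathbb{Z}\times\mathbb{Z}$; finiteness of a satisfying model then forces periodicity of the extracted tiling. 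You would need to make the same adjustment.
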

% \michael{I think the "thus..." isn't well-posed, since we don't have a syntax for key-invariant localmax semilinear GNNs. When I first proposed this, I had a corollary that satisfiability of general \localmaxsemilinear GNNs is undecidable -- but this is out of scope. We could just leave it here without
% the "thus..."?}
The idea of the proof is that one can enforce a tiling structure, with distinct edge and node coloring patterns corresponding to vertical
and horizontal axes. Using the ``unique witness modality'' $\langle \pi\rangle^{=1}\phi$ we can enforce grid-likeness of the graph structure,
for example that moving horizontally and then vertically gets you to the same node as moving vertically and then horizontally.
We will make further use of this in Section~\ref{sec:testing} to show undecidability of testing key-invariance for \localmaxsemilinear \GNNs.

\subsubsection{Isomorphism Types}

Besides subsuming \uddl node queries, 
key-invariant \localmaxsemilinear \GNNs are expressive enough to express all isomorphism types of connected pointed graphs.

\begin{restatable}{theorem}{ThmMaxIsoTypes}
\label{thm:localmaxsemilinear recognizes up to isomorphism}
    Let $\apgraph$ be a connected pointed graph. The isomorphism type $\calQ_{\apgraph}$ is expressed by a key-invariant \localmaxsemilinear \GNN.
\end{restatable}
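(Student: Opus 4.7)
My plan is to fix a connected pointed graph $\apgraph$ with $n$ vertices, diameter $d$, and root $u$, and construct a key-invariant \localmaxsemilinear \GNN with $O(nd)$ layers that computes, at each node $w$ of the input graph $\altpgraph_\key$, a canonical rank-based encoding of the connected component of $w$, then checks whether this encoding matches that of $\apgraph$. The central observation is that keys induce a total order on the nodes of each component, so the component can be described by a sorted tuple of labels together with a rank-labeled adjacency relation; this description is invariant under any rekeying because it uses only ranks, not specific key values.

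I would proceed in four phases. First, use $d$ rounds of max-aggregation to propagate keys so every node learns the maximum key $M$ in its component; then, using the $\ifPos$ construct available in semilinear combination functions, each node marks itself as the rank-$1$ node iff its key equals $M$, and broadcasts this designation back for another $d$ rounds. Second, iterate $n-1$ times: suppress the already-extracted rank-$i$ key with $\ifPos$ during aggregation, take the new maximum to obtain the rank-$(i{+}1)$ key, and broadcast it for $d$ further rounds; after $n$ iterations every node holds, in designated coordinates, the full sorted sequence of keys of its component, padded with a sentinel if the component has fewer than $n$ nodes. Third, in parallel channels, collect for each pair of ranks $(i,j)$ a bit indicating whether the rank-$i$ and rank-$j$ nodes are adjacent, by having the rank-$i$ node broadcast its neighbors' keys and matching these against the already-ranked-key coordinates.

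A final semilinear layer at $v$ compares the resulting fingerprint---the tuple of rank-indexed labels, the rank of $v$ itself, and the rank-labeled adjacency matrix---against the finitely many fingerprints arising from $\apgraph$ under the possible rankings of its vertices. This check is a finite Boolean combination of equalities over the fixed object $\apgraph$, hence semilinear, and it outputs a positive value exactly on a match. Key-invariance is automatic: the fingerprint only uses ranks, labels and adjacency, so any two distinct-key extensions of isomorphic pointed graphs yield the same output.

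The main obstacle I expect is the iterative extraction in the second phase. Max-aggregation is coordinate-wise, so the naive approach of ``merging sorted key lists from neighbors'' fails, since the coordinate-wise max of two sorted vectors is not their union-and-resort. The suppress-and-remax scheme only succeeds because after the first phase a unique global rank-$1$ key is known to every node and can be filtered out uniformly with $\ifPos$; I would therefore re-propagate each newly extracted rank-$i$ key for $d$ rounds before moving to rank $i+1$, so that the new topmost unranked key is again the unique visible maximum throughout the component. This both explains the $O(nd)$ layer count and pinpoints where the discontinuity of semilinear (as opposed to merely continuous) combination functions is essential, consistent with the fact---shown elsewhere in the paper---that \localmaxcontinuous \GNNs cannot express non-trivial isomorphism types.
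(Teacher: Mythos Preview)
Your approach is essentially the paper's: iteratively extract the sorted key sequence via max-then-suppress, then test isomorphism against all orderings of $\apgraph$. The paper organises the final comparison directly as a disjunction over permutations $\pi \in S_N$, checking for each whether the map ``$u_i \mapsto$ node with $\pi(i)$-th largest key'' is an isomorphism; your rank-labeled adjacency matrix is an equivalent repackaging.

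There is, however, a genuine gap in your round count. You propagate for $d = \operatorname{diam}(\apgraph)$ rounds per phase, but $d$ is the diameter of the \emph{target}, not of the input. On an input component of larger diameter---which includes components with exactly $n$ vertices that are not isomorphic to $\apgraph$, as well as all components with more than $n$ vertices---$d$ rounds do not bring the global maximum to every node, so different nodes compute different ``rank-$1$'' keys and then suppress inconsistently; it is not evident that the resulting fingerprint must fail the match. The paper avoids this in two ways. First, it propagates for $2N$ rounds (with $N = |V(\apgraph)|$), which reaches the whole component whenever its size is at most $N$. Second, it computes an explicit size-check feature $\feature_N$ that equals $1$ exactly when the $N$-neighbourhood of the distinguished node contains precisely $N$ keys (by verifying that one key is deactivated per step and none remains active after step $N$); only when $\feature_N = 1$ is the permutation test applied. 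Replacing $d$ by $N$ and adding this check repairs your argument. A smaller imprecision in phase~3: ``broadcasting neighbors' keys'' via $\localmax$ yields only the largest one; what you actually need (and what the paper does) is, for each $j$, the indicator $\ifPos(|\val - \feature_{k_j}|,0,1)$ at each node, whose $\localmax$ detects whether some neighbor is the rank-$j$ node.
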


This result is still uniform in the sense that a single \GNN separates a pointed graph from all other graphs. The proof proceeds by showing that, for every natural number $N$, there is a key-invariant \localmaxsemilinear \GNN that accepts a pointed graph precisely if the 
number of nodes reachable from the distinguished 
node is exactly $N$, and that computes the list of $N$ keys in descending order.

\subsubsection{An expressive upper bound: order-invariant FO}
\label{para:order-inv}

We now show that the expressive power of 
key-invariant \localmax \GNNs with arbitrary combination functions is upper-bounded by that of order-invariant FO.

% \begin{proposition} \label{prop:orderinvariant} For any key-invariant local-max GNN $\agnn$, regardless of the combination functions,
% there is an order-invariant first-order formula $\phi$ in the language of graphs with an additional linear order on the nodes
% such that $\agnn$ and $\phi$ define the same query (over keyed graphs).
% \end{proposition}
% \begin{definition}[Order-Invariant FO] \label{def:orderinvfo}

 Let an FO($<$)-formula be any first-order formula built up from atomic formulas of the form
 $E(x,y)$, $x<y$ or $p(x)$ with $p\in\Pi$, using the connectives of first-order logic (i.e., Boolean operators and
existential and universal quantification). 
Such formulas are naturally interpreted over
\emph{ordered graphs} $(\agraph,<)$, i.e., graphs expanded with a linear order.
A formula $\phi(x)$ in this logic is \emph{order-invariant} if, for all
pointed graphs $\apgraph$ and linear orders $<,<'$
on $V(\agraph)$,
$(\agraph,<)\models\phi(u)$ iff $(\agraph,<')\models\phi(u)$.
Each order-invariant FO($<$)-formula $\phi(x)$  naturally
defines a node query (namely, the 
set of pointed graphs that, when expanded with an arbitrary linear order, satisfy $\phi$).
We let $\ordinvfo$ be the collection of node queries defined by order-invariant FO($<$)-formulas in this way.
\ordinvfo is strictly more expressive than plain first-order logic, as was first shown by Gurevich in the 1980s
\cite{Gurevich84}, cf.~also~\cite{Schweikardt13:invariant}.
% \martin{I believe the reference where this appeared first is \cite{Gurevich84}.}
% \michael{If you are not sure I think it is safer, and maybe more helpful to readers, to point to Nicole's article.
% There she writes: ``These examples go back to Gurevich (who
% did not publish this example; but it can be found in the textbooks [1,14]), Potthoff [18],
% and Otto [17].}

\begin{restatable}{theorem}{thmorderinvariantfo} \label{thm:orderinvariant} Let $\calQ$ be a node query expressed by a key-invariant \localmax \GNN. Then $\calQ$ is in \ordinvfo.
\end{restatable}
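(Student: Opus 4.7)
The plan is to exploit the locality of \GNN computation together with the key-invariance hypothesis to construct an equivalent \ordinvfo formula. Let $\agnn$ be a key-invariant \localmax \GNN of depth $\ell$.

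First, by the standard unrolling of message-passing, the output of $\agnn$ at a node $u$ depends only on the $\ell$-neighborhood of $u$ in $G$, together with the restriction of the keying to this neighborhood. Combined with key-invariance, this implies that the accept/reject decision at $u$ depends only on the isomorphism type of the pointed $\ell$-neighborhood, and is independent of the particular keying used. Thus the plan reduces to capturing this isomorphism-type-indexed decision in \FO($<$).

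The key technical step is then to fix any linear order $<$ on $V(G)$ and induce a canonical keying $\val_<(v):=\mathrm{rank}_<(v)\in\{1,\dots,|V(G)|\}$. I would build, by induction on the layers of $\agnn$, an \FO($<$) formula $\phi_\agnn(x)$ whose truth value on $(G,<,u)$ coincides with the sign of $\agnn$'s output at $u$ under $\val_<$. By key-invariance, $\agnn$ yields the same sign regardless of which order is chosen, so $\phi_\agnn$ is automatically order-invariant and therefore defines a query in \ordinvfo. At layer $0$, each node's embedding $(\lab(v),\val_<(v))$ is captured by its label bits together with its rank, both straightforwardly \FO($<$)-expressible. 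At each subsequent layer, the \localmax aggregation is handled coordinate-wise: on parts of the embedding derived from labels, the max collapses to disjunction; on parts derived from ranks, the max corresponds to selecting a specific neighbor under $<$.

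The principal obstacle will be that the combination functions $\com_i$ are arbitrary real-valued functions, whose output order on arbitrary inputs is not in general \FO($<$)-describable. The plan for overcoming this is to track not actual embedding values but a finite \emph{type} for each node at each layer---an equivalence class rich enough to determine the outcomes of all subsequent \localmax selections and ultimately the sign of the top-layer output. The crux is that by key-invariance the top-layer sign is constant as $\val_<$ ranges over all possible orderings of the same underlying neighborhood, so the quotient of the combination-function outputs by this equivalence is controlled by the isomorphism type of the $\ell$-neighborhood alone. Carrying out this inductive bookkeeping---showing that the type of $v$ at layer $i+1$ is an \FO($<$) function of the types of $v$ and its neighbors at layer $i$, with only finitely many types arising per layer over any fixed graph---should yield the desired formula $\phi_\agnn$, establishing that $\calQ$ is in \ordinvfo.
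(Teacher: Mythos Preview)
Your proposal has a genuine gap at the ``principal obstacle'' you identify, and the proposed fix does not work. The issue is that intermediate feature values are \emph{not} key-invariant: only the final sign is. At layer $i+1$, carrying out \localmax requires knowing which neighbor of $v$ has the largest layer-$i$ value, and that comparison depends on the outputs of arbitrary combination functions applied to specific numerical keys, not merely on the order of those keys. Your ``types'' would therefore have to be fine enough to determine, for every pair of nodes, which one has the larger layer-$i$ value---but that ordering is induced by the combination functions acting on the concrete ranks $1,\ldots,n$, and there is no reason for it to be describable by a single \FO($<$) formula uniform in $n$. The remark that ``only finitely many types arise per layer over any fixed graph'' is true but useless: you need one formula that works for all graphs simultaneously, so a uniform finite bound on types is what would be required, and there is none.

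The paper's proof takes a fundamentally different route that sidesteps this obstacle. It views a keyed graph as a finite structure embedded in $(\reals,<,(f)_{f\in F})$, where $F$ is the set of combination functions, and shows inductively that each feature value at $v$ can be written as $F^t(\vec w)$ for a fixed function $F^t$ (a composition of combination functions) applied to a tuple $\vec w$ of active-domain elements that is itself definable by an active-domain first-order formula \emph{in the extended language with symbols for the $f\in F$}. This yields an active-domain formula over $\{E,<,p_1,\ldots,p_n\}\cup F$ expressing acceptance. Key-invariance is precisely genericity of this formula, and the locally-generic collapse theorem of Benedikt--Libkin then eliminates the function symbols, producing an equivalent \FO($<$) formula. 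The collapse theorem is doing exactly the work your ``types'' argument would need to do, and it is not something one can replace by elementary bookkeeping.
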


Note that this result applies to \localmax \GNNs with arbitrary, not necessarily semilinear, combination functions.

For example, recall that 
$\calQ_{\text{even}}$ is the node query that tests if the number of
neighbors of the distinguished node is even. It follows from known results in finite model theory that this node query is 
not $\ordinvfo$-definable. Hence, it is 
also not expressible by any key-invariant \localmax GNN.

The proof of Theorem~\ref{thm:orderinvariant}
uses techniques from embedded finite model theory. By identifying each node with its key, 
we can view a keyed graph as an embedded finite structure. More precisely, fix a set $F$ of 
real-valued functions. A keyed graph $G_\key$ can be represented by an infinite structure
$(\reals,<, (f)_{f\in F}, V,E,p_1,\ldots, p_n)$, where $V$ is a finite subset of $\mathbb{R}$, $E\subseteq V^2$, and $p_i\subseteq V$. 
Structures of this form are known as embedded finite structures, since they consist of a 
finite structure embedded within some fixed infinite structure (in this case, $(\reals,<, (f)_{f\in F})$).
Every \localmax \GNN using combination functions from $F$ can be inductively translated to
a first-order formula interpreted over embedded finite structures of the above kind, and, in fact, to an \emph{active-domain} formula, i.e., one in which all quantifiers are restricted to range over the vertex set $V$. Key-invariance of a \GNN amounts to \emph{genericity} of the corresponding active-domain first-order formula in the usual sense from database theory \cite{AHV}: it expresses a query whose output depends only on the isomorphism type of the embedded graph, not on how it is embedded in the reals. The \emph{locally-generic collapse} theorem from \cite{collapsejacm} implies that every generic active-domain first-order formula can be rewritten to an equivalent \ordinvfo formula.
%\balder{Thanks! Rewrote it a bit (in particular, to avoid talking about feature expressions here)}

\section{Local Sum GNNs}\label{sec:sum}

We give an example showing that key-invariant \localsumsigmoid \GNNs are more expressive than key-oblivious \localsum \GNNs with arbitrary combination functions:
% \localsum \GNNs can use keys to distinguish nodes that are reachable via unique paths:
\begin{example}
\label{ex:localsum}
    Let $G_{\triangle p}^u$ be a pointed 3-cycle in which precisely one node $u'$ is labeled $p$,  where $u\neq u'$. We can construct a key-invariant
    \localsumsigmoid \GNN node-classifier that 
    accepts a connected pointed graph precisely if it is \emph{not} isomorphic to  $G_{\triangle p}^u$. To see this, let $\phi$ be the GML-formula
    \[\Diamond^{=2}\top\land\Diamond(\neg p\land\Diamond^{=2}\top\land\Diamond p\land\Diamond\neg p)\land\Diamond(p\land\Diamond^{=2}\top\land\Box\neg p)\]
    A connected pointed graph is isomorphic to $G_{\triangle p}^u$ iff 
    it satisfies $\phi$, and, furthermore,
    the unique $p$-neighbor of $u$ is equal to the unique 
    $p$-node that can be reached from $u$ in two steps. 

    The \GNN proceeds as follows. Let $i$ be the index of $p$ in the labeling. Each node sends a normalized key in the range $(0,1)$ if it satisfies $p$, and $0$ otherwise, using $\relu(\sigmoid(\key(v)) - \lab(v)_i)$. This is the only place where $\sigmoid$ is used. 
    Then, the \GNN 
    computes $x+|y-z|$ where
    \begin{itemize}
    \item $x$ is 0 if $u$ satisfies $\phi$
        and 1 otherwise (using~Theorem~\ref{thm:GMLloweroblivious}),
    \item 
        $y=\sum_{\substack{(u,v)\in E\\ \lab(v)_i=1}}\sigmoid(\key(v))$, 
        
        which can be computed with one round of \ \localsum aggregation,
        %, by subtracting $1$ from the key if $p$ is false and then applying $\relu$).
    \item $z=\sum_{\substack{(u,w)\in E, (w,v)\in E\\ \lab(v)_i=1}}\sigmoid(\key(v))$, 

    which
    can be computed with two rounds of \ \localsum aggregation.
    \end{itemize}
    %It is now easy to see that $x+|y-z|=0$ if and only if the input pointed graph is isomorphic to $G_{\triangle p}^u$.

    See also Appendix~\ref{appendix:prelims} for a  more detailed specification of the \GNN.
\end{example}
To provide some more intuition for this example, 
let us call a node in a pointed graph \emph{uniquely addressable} if one can specify a walk consisting of \GML formulas that deterministically navigates from the distinguished node to this node (see Definition \ref{def:uniqely addressable}). Key-invariant \localsumsigmoid \GNNs can test if two uniquely addressable nodes are non-identical. In the example above, this capacity is used to separate two uniquely addressable nodes that satisfy $p$. We will make this more precise in Proposition \ref{prop:sum continuous unique paths}. 
% The above example shows that key-invariant \localsumsigmoid \GNNs are more expressive than key-oblivious \localsum \GNNs with arbitrary combination functions. 

As in the case for \localmax, we will see that this added expressive power is limited. For example, key-invariant \localsumcontinuous \GNNs express exactly the same isomorphism types as key-oblivious \localsumcontinuous \GNNs (Theorem \ref{thm:sum-isomorphism-collapse}). In the remainder of this section, we investigate the expressive power of \localsum \GNNs, again for continuous as well as discontinuous combination functions.

\subsection{Continuous Combination Functions}

Theorem \ref{thm:max yes-open closed under inverse fbisim} showed that queries expressed by \localmaxcontinuous \GNNs are closed under a strengthening of bisimulation. We now upper bound the expressiveness of \localsumcontinuous via a strengthening of CR-equivalence. 
Subsequently we show that key-invariant \localsumcontinuous \GNNs that never output $0$ collapse to key-oblivious \localsumcontinuous \GNNs, analogous to the case for \localmaxcontinuous \GNNs
(Theorem \ref{thm:collapseyesnolocalmax}).
% \subsubsection{Upper bounds}

% In the \localmax case, 
% Theorem \ref{thm:max yes-open closed under inverse fbisim} showed that queries expressed by \localmaxcontinuous \GNNs are closed under a strengthening of bisimulation. We find that queries expressed by \localsumcontinuous \GNNs are closed under a strengthening of cr-invariance:

\begin{definition}[Coverings]
Given pointed graphs $\apgraph, \altpgraph$, a homomorphism $h:\apgraph\to \altpgraph$ is a map from $V(\agraph)$ to $V(\altgraph)$ satisfying 
\begin{enumerate}
    \item $(u_0,u_1)\in E(\agraph)\Rightarrow (h(u_0),h(u_1))\in E(\altgraph)$;
    \item
$\lab_{\altgraph}(h(u_0))=\lab_{\agraph}(u_0)$;
\item $h(u)=v$.
\end{enumerate}
A \emph{covering} $h:\apgraph\covers \altpgraph$ is a homomorphism with the additional property that, for each $u_0 \in V(\agraph)$, 
$h$ bijectively maps the neighbors of $u_0$ onto the neighbors of
$h(u_0)$.
% A node query $\calQ$ is closed under coverings if $(G_1,v_1) \covers (G_2,v_2)$ and $Q(G_1,v_1)=1$ implies $Q(G_2,v_2)=1$.
%$\calQ$ is closed under inverse coverings if $(G_1,v_1) \covers (G_2,v_2)$ and $\calQ(G_2,v_2)=1$ implies $\calQ(G_1,v_1)=1$.
\end{definition}

It is not difficult to see that, when $\apgraph\covers\altpgraph$,  $CR(\apgraph)=CR(\altpgraph)$. Thus
the covering relation can be viewed
as a strengthening of CR-equivalence.
Coverings are a well studied notion in graph theory (cf.~e.g.,~\cite{kratochvil2025graphcoverssemicoversstronger}). They can be equivalently described as 
functional graded bisimulations, i.e., functions whose graph is a graded bisimulation.

% \begin{theorem}
% \label{thmr:continuous_sum_yes-closed_implies_cov_closed}
% Let $\calQ$ be a node query decided by an invariant \policy{\geq0}{<0} \localsumcontinuous\ \GNN. Then $\calQ$ is closed under coverings.
% \end{theorem}

\begin{restatable}{theorem}{ThmSumClosedUnderCoverings}
\label{thm:closed-under-coverings}
If a key-invariant %\policy{>0}{\leq0} 
\localsumcontinuous\ \GNN expresses $\calQ$, then $\calQ^c$ is closed under coverings.
\end{restatable}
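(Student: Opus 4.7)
The plan is to establish the contrapositive: given a covering $h:\apgraph\covers\altpgraph$ and a key-invariant \localsumcontinuous GNN $\agnn$ expressing $\calQ$, I will show that if $\altpgraph\in\calQ$ then $\apgraph\in\calQ$. Equivalently, acceptance is transferred from the target of a covering to its source. The three ingredients are (i) the classical fact that coverings preserve \localsum GNN computations, (ii) continuity of the embeddings as a function of the real-valued keys, and (iii) density of injective valuations among all valuations.

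First I would fix an arbitrary keying $\val$ of $\altgraph$ witnessing $\altpgraph\in\calQ$, so that $\agnn(\altpgraph_\val)>0$. Define the ``pulled-back'' valuation $\val':V(\agraph)\to\reals$ by $\val'(u)=\val(h(u))$. This need not be a keying, since $h$ is generally not injective. Nevertheless, I would prove by induction on the number of layers that the embedding computed by $\agnn$ on $(\agraph,\val')$ satisfies $\emb_i(u)=\emb_i(h(u))$ for every node $u$. The induction step uses the defining property of a covering: $h$ restricts to a bijection between the neighbors of $u$ and the neighbors of $h(u)$, so the multisets aggregated at $u$ in $(\agraph,\val')$ and at $h(u)$ in $\altgraph_\val$ coincide; since both graphs have labels matching $(\lab(u)=\lab(h(u)))$ and values agree by definition, applying the same combination function yields equal outputs. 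Applied at the distinguished node, this gives
$$\agnn(\agraph_{\val'},u)\;=\;\agnn(\altpgraph_\val)\;>\;0.$$

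Next, since every aggregation is sum and every combination function is continuous, the map $\nu\mapsto \agnn(\agraph_\nu,u)$ is a continuous function $\reals^{V(\agraph)}\to\reals$. Hence the preimage of $(0,\infty)$ under this map is open, and it contains $\val'$. The set of injective functions $V(\agraph)\to\reals$ is the complement of the finite union of hyperplanes $\{\nu:\nu(u_1)=\nu(u_2)\}$, hence dense in $\reals^{V(\agraph)}$. I can therefore pick a keying $\val''$ of $\agraph$ arbitrarily close to $\val'$ and still satisfying $\agnn(\agraph_{\val''},u)>0$. Then $\apgraph_{\val''}$ is accepted, and by key-invariance $\apgraph\in\calQ$, as desired.

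I do not foresee a serious obstacle; the argument is essentially a continuity-plus-density refinement of the standard covering-preservation lemma for \localsum GNNs. The one subtlety worth being careful about is the asymmetry of the acceptance policy \policy{>0}{\leq0}: only the condition ``output $>0$'' is an open condition on $\val$, which is exactly why the argument transfers acceptance (and hence rejection closes under the reverse direction of the covering, giving $\calQ^c$ closed under coverings) but does not transfer rejection, so no corresponding closure statement for $\calQ$ itself can be extracted from this proof.
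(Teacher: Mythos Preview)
Your proof is correct and follows essentially the same approach as the paper's own proof: pull back a keying of the target graph along the covering to a (possibly non-injective) valuation of the source, use the covering-preservation lemma to transfer the positive output, then invoke continuity and density of injective valuations to obtain a genuine keying on which the GNN accepts, and conclude by key-invariance. Your added remarks on the continuity of the map $\nu\mapsto\agnn(\agraph_\nu,u)$ and on the asymmetry of the \policy{>0}{\leq 0} acceptance policy make explicit precisely the points the paper leaves implicit.
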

As a first application, consider the
pointed graphs $C^u_3$ and $C^v_6$ where
$C_n$ is the cycle of length $n$. 
Since $C^v_6\covers C^u_3$, every 
key-invariant \localsumcontinuous \GNN
that accepts $C^u_3$ must also accept
$C^v_6$. Thus, for example, the 
query ``the distinguished node lies on a 3-cycle'' is not expressible by
a key-invariant \localsumcontinuous \GNN.
This also shows that  the isomorphism
type of $C^u_3$ is not expressible by
a key-invariant \localsumcontinuous \GNN.

The same argument applies to every connected pointed graph with a cycle, so that their isomorphism type   is not expressible with a key-invariant \localsumcontinuous \GNN. On the other hand, the isomorphism type of every connected \emph{acyclic} graph is expressible by key-oblivious \localsumcontinuous \GNNs. Thus, as a second application, we obtain the following result:

\begin{restatable}{theorem}{thmSumIsomorphismCollapse}
\label{thm:sum-isomorphism-collapse}
   Let $\apgraph$ be a connected pointed graph with isomorphism type $\calQ_{\apgraph}$. The following are equivalent:
   \begin{enumerate}
       \item $\calQ_{\apgraph}$ is expressed by a key-invariant \localsumcontinuous \GNN;
       \item $\calQ_{\apgraph}$ is expressed by a key-oblivious \localsumcontinuous \GNN;
       \item $\apgraph$ is acyclic.
   \end{enumerate}
\end{restatable}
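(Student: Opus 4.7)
The plan is to prove the three equivalences in turn. The implication $(2)\Rightarrow(1)$ is immediate, since any key-oblivious \GNN can be read as a key-invariant \GNN that ignores its key input.

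For $(1)\Rightarrow(3)$, I argue by contraposition using Theorem~\ref{thm:closed-under-coverings}. Suppose $\apgraph$ contains a cycle and pick an edge $(a,b)\in E(\agraph)$ lying on it. I build a two-fold cover $\altgraph$ with vertex set $V(\agraph)\times\{0,1\}$ by copying all labels through the projection $\pi\colon(w,i)\mapsto w$ and duplicating every edge of $\agraph$ on both copies, except that the pair $\{(a,b),(b,a)\}$ is replaced by crossing edges between $(a,0),(b,1)$ and between $(a,1),(b,0)$. Then $\pi$ maps the neighbors of any $(w,i)$ bijectively onto the neighbors of $w$, so with distinguished node $v=(u,0)$ we obtain a covering $\altpgraph\covers\apgraph$. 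Because $(a,b)$ lies on a cycle of $\agraph$, traversing this cycle once from $(a,0)$ in $\altgraph$ yields a walk to $(a,1)$; combined with lifts of paths in $\agraph$, this shows that $\altgraph$ is connected. Hence the connected component of $v$ in $\altpgraph$ has $2|V(\agraph)|$ nodes and cannot be isomorphic to $\apgraph$, giving $\altpgraph\in\calQ_{\apgraph}^c$. Theorem~\ref{thm:closed-under-coverings} applied to $\altpgraph\covers\apgraph$ then forces $\apgraph\in\calQ_{\apgraph}^c$, contradicting $\apgraph\in\calQ_{\apgraph}$.

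For $(3)\Rightarrow(2)$, I rely on the fact, already observed in the text preceding the theorem, that the isomorphism type of any connected acyclic pointed graph is expressible by a key-oblivious \localsumcontinuous \GNN. Briefly, by Theorem~\ref{thm:localsumcontinuous expresses all local CR-invariant queries} it suffices to show that $\calQ_{\apgraph}$ is $r$-round CR-invariant for some $r$; this follows from the completeness of color refinement on finite pointed trees, together with the fact that color refinement eventually distinguishes any node lying on a cycle from any node in a finite tree (a cycle contributes extra branches to the computation tree that no node in a finite tree can match).

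The main obstacle is the cover construction in $(1)\Rightarrow(3)$: one must simultaneously verify that the edge-swap yields a well-defined covering map, that the pointed cover is connected (exactly where membership on a cycle is used), and that the strict inequality $|V(\altgraph)|>|V(\agraph)|$ witnesses non-membership in $\calQ_{\apgraph}$. The rest of the argument is routine given Theorems~\ref{thm:closed-under-coverings} and~\ref{thm:localsumcontinuous expresses all local CR-invariant queries}.
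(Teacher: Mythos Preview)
Your proposal is correct and follows essentially the same approach as the paper: the double cover obtained by swapping one edge on a cycle is exactly the paper's construction for $(1)\Rightarrow(3)$, and for $(3)\Rightarrow(2)$ both arguments reduce to showing that the isomorphism type of a finite pointed tree is strongly local and CR-invariant. The only cosmetic difference is that the paper routes through Theorem~\ref{thm:>0/<0 localsumcontinuous collapses to key-oblivious} rather than Theorem~\ref{thm:localsumcontinuous expresses all local CR-invariant queries} and spells out the cycle-vs-tree CR separation via unravellings, whereas you invoke it as a known fact.
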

Note however that there do exist \emph{complements} of isomorphism types that are expressed by a key-invariant, but not a key-oblivious \localsumcontinuous \GNN, as is shown by Example \ref{ex:localsum}.

As a third application, 
recall that a $\policy{>0}{<0}$ \GNN node-classifier never outputs $0$. 
Theorem~\ref{thm:closed-under-coverings}
can be used to show that the expressive power of  key-invariant
$\policy{>0}{<0}$ \localsumcontinuous \GNNs
collapses to that of key-oblivious \localsumcontinuous \GNNs.

\begin{restatable}{theorem}{thmgreaterthanzeroslashlessthanzerolocalsumcontinuouscollapsestokeyoblivious}
\label{thm:>0/<0 localsumcontinuous collapses to key-oblivious}
    Let $\calQ$ be a node query. The following are equivalent:
    \begin{enumerate}
        \item $\calQ$ is expressed by a key-invariant \policy{>0}{<0} 
        \localsumcontinuous\ \GNN;
        \item $\calQ$ and $\calQ^c$ are both expressed by a  key-invariant  
        \localsumcontinuous\ \GNN;
        %$\calQ$ is expressed by a key-invariant \policy{\geq0}{<0} \localsumcontinuous\ \GNN and by a key-invariant \policy{>0}{\leq 0} \localsumcontinuous\ \GNN.        
        \item $\calQ$ is expressed by a key-oblivious \localsumcontinuous\ \GNN;
        
        \item $\calQ$ is strongly local (Def. \ref{def:local queries}) and CR-invariant (Def. \ref{def:CR invariant queries}).
    \end{enumerate}
   % Here, $\calQ$ is local if for some $r>0$ pointed graphs with isomorphic radius $r$ induced subgraphs are not separated by $\calQ$. $\calQ$ is CR-invariant if pointed graphs inseparable by CR are not separated by $\calQ$.
\end{restatable}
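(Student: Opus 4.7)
The plan is to establish the cycle $(3)\Rightarrow(1)\Rightarrow(2)\Rightarrow(4)\Rightarrow(3)$, with Theorem~\ref{thm:localsumcontinuous expresses all local CR-invariant queries} supplying the connection between $(3)$ and $(4)$. That theorem gives $(4)\Rightarrow(3)$ directly. For $(3)\Rightarrow(1)$, the same theorem yields a key-oblivious \policy{\geq 1}{\leq 0}\ \localsumcontinuous\ \GNN expressing $\calQ$; composing its output with the continuous shift $x\mapsto x-\tfrac12$ produces a \policy{>0}{<0} classifier, and any key-oblivious \GNN is vacuously key-invariant. For $(1)\Rightarrow(2)$, given a \policy{>0}{<0} classifier $\agnn$ expressing $\calQ$, the \GNN $-\agnn$ (negating the final output) is a key-invariant \localsumcontinuous\ \GNN expressing $\calQ^c$ under the default \policy{>0}{\leq 0} policy.

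The substantive step is $(2)\Rightarrow(4)$. Fix key-invariant \localsumcontinuous\ \GNNs $\agnn_1,\agnn_2$ expressing $\calQ$ and $\calQ^c$, and let $\ell$ bound their layer counts. For \emph{strong $\ell$-locality}: locality of message passing implies that, for any keying $\key$ of $\apgraph$, $\agnn_1(\apgraph_\key) = \agnn_1((\apgraph\neighborhood{\ell})_{\key'})$, where $\key'$ is the restriction of $\key$ to the $\ell$-neighborhood of $u$; applying key-invariance separately to $\apgraph$ and to $\apgraph\neighborhood{\ell}$ then yields $\apgraph\in\calQ \Leftrightarrow \apgraph\neighborhood{\ell}\in\calQ$. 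For \emph{CR-invariance}, Theorem~\ref{thm:closed-under-coverings} applied to $\agnn_1$ and $\agnn_2$ in turn shows that both $\calQ$ and $\calQ^c$ are closed under coverings, which unpacks to: whenever $\altpgraph\covers\apgraph$, $\apgraph\in\calQ \Leftrightarrow \altpgraph\in\calQ$. I would then invoke the classical fact that any two CR-equivalent finite pointed graphs $\apgraph,\altpgraph$ admit a common finite pointed cover $K^w$ with $K^w\covers\apgraph$ and $K^w\covers\altpgraph$---a consequence of the characterization of CR-equivalence via isomorphism of universal covers combined with (a pointed, labeled version of) Leighton's theorem on common finite covers. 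Chaining invariance across the two coverings yields $\apgraph\in\calQ \Leftrightarrow K^w\in\calQ \Leftrightarrow \altpgraph\in\calQ$, establishing CR-invariance.

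The main technical obstacle is the common-cover ingredient: closure under coverings in both directions is delivered directly by Theorem~\ref{thm:closed-under-coverings}, but promoting this to CR-invariance requires the nontrivial graph-theoretic result that any two CR-equivalent finite pointed graphs share a common finite cover. Once this classical ingredient is granted, every other step in the cycle is either a direct appeal to a previously established theorem or a routine manipulation of acceptance policies.
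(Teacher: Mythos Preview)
Your proposal is correct and follows essentially the same route as the paper: the cycle $(3)\Leftrightarrow(4)$ via Theorem~\ref{thm:localsumcontinuous expresses all local CR-invariant queries}, $(1)\Rightarrow(2)$ by negating the output, and the substantive step $(2)\Rightarrow(4)$ by applying Theorem~\ref{thm:closed-under-coverings} to both $\agnn_1$ and $\agnn_2$ and then invoking Leighton's common-cover theorem. The paper states the needed pointed, labeled version of Leighton's result explicitly as a separate theorem (with a short proof sketch), so your identification of this as the one nontrivial external ingredient is exactly right; your $(3)\Rightarrow(1)$ via the shift $x\mapsto x-\tfrac12$ is slightly more explicit than the paper's one-line appeal to Theorem~\ref{thm:localsumcontinuous expresses all local CR-invariant queries}, but amounts to the same thing.
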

In the case of \policy{\geq 1}{\leq 0}
we get an even stronger form of collapse:

\begin{restatable}{theorem}{thmlocalsumgapcollapse} \label{thm:localsumgapcollapse}
Let $\calQ$ be a node query expressed by a key-invariant \policy{\geq1}{\leq0} \localsumcontinuous\ \GNN. Then $\calQ$  is expressed by a key-oblivious \localsumcontinuous\ \GNN with the same combination functions.
\end{restatable}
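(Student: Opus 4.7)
\emph{Proof proposal.} The plan is to show that the key-oblivious GNN $\agnn'$ obtained from $\agnn$ by hard-coding the key input to the constant $0$ expresses the same query $\calQ$, and that this construction preserves the class of combination functions used.

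Concretely, write $\agnn = (\calL_1, \ldots, \calL_\ell)$ with $\calL_i = (\agg_i, \com_i)$ and initial dimension $|\Pi|+1$. I would construct $\agnn' = (\calL_1', \calL_2, \ldots, \calL_\ell)$, where $\calL_1' = (\localsum, \com_1')$ operates on initial dimension $|\Pi|$ and
\[
\com_1'(x, y) \;:=\; \com_1(x \oplus 0,\ y \oplus 0)\quad\text{for } x,y \in \reals^{|\Pi|}.
\]
The function $\com_1'$ inherits continuity from $\com_1$, and more generally inherits membership in the standard subclasses (\FFN(\relu), semilinear, etc.), since it is obtained from $\com_1$ by fixing two input coordinates to the constant $0$. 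A direct unrolling of the GNN semantics shows $\agnn'(\apgraph) = \agnn(\apgraph_0)$, where $\apgraph_0$ denotes the (non-injective) valued pointed graph obtained from $\apgraph$ by assigning value $0$ to every node.

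The main step is to verify that $\agnn(\apgraph_0) \geq 1$ when $\apgraph \in \calQ$, and $\agnn(\apgraph_0) \leq 0$ otherwise. The argument rests on two observations: (i) since \localsum aggregation and every $\com_i$ are continuous, the map $\val \mapsto \agnn(\apgraph_\val)$ is a continuous function $\reals^{V(\agraph)} \to \reals$; and (ii) the set of injective valuations is open and dense in $\reals^{V(\agraph)}$. On injective valuations, key-invariance together with the $\policy{\geq 1}{\leq 0}$ gap forces $\agnn(\apgraph_\key) \in [1, \infty)$ if $\apgraph \in \calQ$ and $\agnn(\apgraph_\key) \in (-\infty, 0]$ otherwise. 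Since both $[1, \infty)$ and $(-\infty, 0]$ are \emph{closed} subsets of $\reals$, continuity extends this dichotomy to every valuation, including $\apgraph_0$; hence $\agnn'$ is a key-oblivious $\policy{\geq 1}{\leq 0}$ \localsumcontinuous GNN expressing $\calQ$.

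There is no significant technical obstacle; the crucial ingredient is the closedness of the two regions complementary to the gap $(0,1)$, which is exactly what the $\policy{\geq 1}{\leq 0}$ policy provides. Under the weaker $\policy{>0}{<0}$ policy the corresponding regions $(0, \infty)$ and $(-\infty, 0)$ are merely open, and $0$ can arise as a limit value, so the simple continuity-plus-density argument no longer suffices; this is why Theorem~\ref{thm:>0/<0 localsumcontinuous collapses to key-oblivious} must route through the logical characterization of strongly local CR-invariant queries and does not, on its face, preserve the original combination functions.
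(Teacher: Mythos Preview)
Your proof is correct and follows essentially the same approach as the paper: replace all keys by a fixed constant and use continuity of the GNN together with the closedness of $[1,\infty)$ and $(-\infty,0]$ to extend the acceptance/rejection behavior from keyings to the constant valuation. The paper phrases the limiting step via a sequence of keyings converging to the constant valuation rather than via density of keyings in $\reals^{V(\agraph)}$, but the two formulations are equivalent.
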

Indeed, the proof shows that, in the case of
key-invariant \policy{\geq1}{\leq0} \localsumcontinuous\ \GNNs, all keys can be
replaced by a fixed value (say, $0$) without 
affecting the predictions of the \GNN.

The above results give limitations in expressive power for
key-invariant \localsumcontinuous \GNNs.
 Next, we will showcase a concrete class of queries that \emph{are} expressible by key-invariant \localsumcontinuous \GNNs, expanding on the intuition given in Example \ref{ex:localsum}.

\begin{definition}[Uniquely addressable]
\label{def:uniqely addressable}
Let $n \geq 1$, let $\phi_1, \dots \phi_n$ be \GML formulas and let $\apgraph$ be a pointed graph. A node $u' \in V(\agraph)$ is \emph{uniquely addressable} by $\phi_1,\dots,\phi_n$ if there is a single walk $u,u_1, \dots u_{n}$ in $\agraph$ where each $u_i$ satisfies $\phi_i$, and if $u'=u_n$.     
\end{definition}

We will denote by $\calQ_{\langle\phi_1,\dots,\phi_n\rangle\neq\langle\psi_1,\dots,\psi_m\rangle}$ the node query that consists of all pointed graphs $\apgraph$ containing nodes $u_\phi,u_\psi \in V(\agraph)$ uniquely addressable by $\phi_1,\dots,\phi_n$ and $\psi_1,\dots,\psi_m$ respectively such that $u_\phi \neq u_\psi$. We call such a query \emph{unique address separating}.
% \arie{The iso-types are a bit tricky: for every graph with unique paths to nodes $u_1, \dots u_k$ 
% where for all CR-equivalent graphs there are no $v_1, \dots v_k$ such that all the `unique addressing' paths to $u_i$ are also `unique addressing' paths to $v_i$, we can express the co-isomorphism type with \localsumcontinuous.
% For example, if all nodes are uniquely addressable in a single way, every CR-equivalent graph has the same uniquely addressing paths (since we can enforce this with GML). Then also every CR-equivalent graph has $v_1, \dots v_k$ such that all unique paths to $u_i$ are also to $v_i$, hence
% \localsumcontinuous can't express any co-iso types beyond key-oblivious. With \localsumsemilinear we can do more.}
%%%%%%%%%%%%%
\begin{restatable}{proposition}{LocalSumContUniquePaths}
\label{prop:sum continuous unique paths}
Every unique address separating query $\calQ_{\langle\phi_1,\dots,\phi_n\rangle\neq\langle\psi_1,\dots,\psi_m\rangle}$ specified by \GML-formulas is expressible by a \localsumsigmoid \GNN.
\end{restatable}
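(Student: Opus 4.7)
The plan is to design a GNN that, at the distinguished node $u$, computes four scalar features: binary flags $a,b\in\{0,1\}$ indicating whether $u$ is the source of a unique $\phi$-walk, respectively $\psi$-walk, and normalized key values $y,z\in[0,1)$ such that $y=\sigmoid(\key(u_\phi))$ whenever $a=1$ and $z=\sigmoid(\key(u_\psi))$ whenever $b=1$ (with $y=0$ when $a=0$ and $z=0$ when $b=0$). The final output is then
\[
  \relu\bigl(|y-z|-(1-\relu(a+b-1))\bigr),\qquad\text{where}\quad |y-z|=\relu(y-z)+\relu(z-y).
\]
When $a=b=1$ this equals $|y-z|$, which is strictly positive iff $u_\phi\neq u_\psi$ (since $\sigmoid\circ\key$ is injective); otherwise, case analysis on $(a,b)$ shows that it equals $0$. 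Hence the classifier accepts precisely the pointed graphs in $\calQ_{\langle\phi_1,\ldots,\phi_n\rangle\neq\langle\psi_1,\ldots,\psi_m\rangle}$; since acceptance depends only on the underlying graph, the GNN is key-invariant.

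The flags $a,b$ are computable by a key-oblivious \localsumrelu sub-network, because unique addressability is expressible in GML. Indeed, set $U_0^\phi:=\top$, $Z_0^\phi:=\bot$, and for $i\ge 1$
\[
  U_i^\phi := \Diamond^{=1}\bigl(\phi_{n-i+1}\land U_{i-1}^\phi\bigr)\land\Box\bigl(\phi_{n-i+1}\to U_{i-1}^\phi\lor Z_{i-1}^\phi\bigr),
\]
\[
  Z_i^\phi := \Box\bigl(\neg\phi_{n-i+1}\lor Z_{i-1}^\phi\bigr).
\]
A routine induction shows that $U_i^\phi(v)$ holds iff there is exactly one walk $v,v_1,\ldots,v_i$ with $v_j\models\phi_{n-i+j}$, and $Z_i^\phi(v)$ iff no such walk exists; this uses that the count of length-$i$ walks is the sum, over $\phi_{n-i+1}$-neighbours, of their length-$(i-1)$ counts, and a sum of non-negative integers equals $1$ iff one term is $1$ and all others are $0$. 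By Theorem~\ref{thm:GMLloweroblivious}, the indicator of $U_n^\phi(u)$ is computable by a key-oblivious \localsumrelu GNN, giving $a$; analogously for $b$. The same sub-network also produces, at every node $v$, the auxiliary binary features $B_i^\phi(v)$ indicating $v\models\phi_{n-i+1}\land U_{i-1}^\phi$.

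For $y$, I propagate the sigmoid of the key along unique walks, maintaining the invariant that $\tilde Y_i(v)\in[0,1)$, equal to $\sigmoid(\key(u_n^v))$ when $U_i^\phi(v)$ holds (with $u_n^v$ the endpoint of the unique length-$i$ walk from $v$) and equal to $0$ otherwise. The base is $\tilde Y_0(v):=\sigmoid(\key(v))\in(0,1)$. For the inductive step, since $\tilde Y_{i-1}(v')\in[0,1)$ and $B_i^\phi(v')\in\{0,1\}$, the gated contribution $B_i^\phi(v')\cdot\tilde Y_{i-1}(v')$ equals $\relu(\tilde Y_{i-1}(v')+B_i^\phi(v')-1)$, so one \localsum aggregation yields
\[
  S_i(v) := \sum_{v'\in N(v)} B_i^\phi(v')\cdot\tilde Y_{i-1}(v').
\]
If $U_i^\phi(v)$ holds, the $\phi_{n-i+1}$-neighbours of $v$ split into a single one in $U_{i-1}^\phi$ (contributing $\sigmoid(\key(u_n^v))\in(0,1)$) and others in $Z_{i-1}^\phi$ where $U_{i-1}^\phi$ fails and so $\tilde Y_{i-1}=0$ by the invariant; hence $S_i(v)=\sigmoid(\key(u_n^v))\in(0,1)$. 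Otherwise $S_i(v)$ may be arbitrarily large. I then clip and re-gate:
\[
  \tilde Y_i(v):=\relu\Bigl(\bigl(S_i(v)-\relu(S_i(v)-1)\bigr)+a_i^\phi(v)-1\Bigr),
\]
where $a_i^\phi(v)$ denotes the indicator of $U_i^\phi(v)$. The inner expression $S_i-\relu(S_i-1)=\min(S_i,1)$ lies in $[0,1]$, so the outer $\relu$-gate returns $S_i$ when $a_i^\phi=1$ and $0$ when $a_i^\phi=0$, preserving the invariant. A symmetric construction produces $z:=\tilde Z_m(u)$.

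The main technical obstacle is precisely the unboundedness of $S_i(v)$ at nodes where $U_i^\phi(v)$ fails: arbitrarily many bounded sigmoid-valued contributions can accumulate, and a bare ReLU gate $\relu(S_i+a_i^\phi-1)$ would not suppress $S_i$. The explicit $\min(S_i,1)$ clipping step is what makes the gate uniform, and this is the only non-routine ingredient. Composing the \localsumrelu GML sub-network (computing all $U_i^\phi$, $Z_i^\phi$, $B_i^\phi$ and their $\psi$-counterparts) with the $n+m+O(1)$ sum-aggregation-plus-ReLU layers described above yields the required key-invariant \localsumsigmoid GNN; $\sigmoid$ is invoked only in the initial layer to map keys into $(0,1)$.
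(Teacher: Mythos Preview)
Your proof is correct and follows the same overall strategy as the paper: propagate $\sigmoid(\key)$ backwards along the (unique) $\phi$- and $\psi$-walks via ReLU-gated \localsum, then compare the two values at the root together with a GML indicator for ``both unique walks exist''. Your version is in fact more careful than the paper's: the paper gates each propagation step only by the bare indicator of $\phi_i$, whereas you gate by the finer indicator $B_i^\phi$ (which additionally requires $U_{i-1}^\phi$) and, crucially, insert the clipping step $\min(S_i,1)$ before re-gating by $a_i^\phi$. Without this clipping, a node $v$ off the unique walk that happens to have several $\phi_{n-i+1}$-neighbours can accumulate a sum exceeding $1$, so the gate $\relu(\cdot+\feature_{\phi_i}-1)$ alone would not zero it out; your explicit handling of this point is exactly what is needed to make the invariant $\tilde Y_i\in[0,1)$ go through.
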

These queries are not always
expressible by a key-oblivious \localsum \GNNs. As an example consider $\calQ_{\langle p \rangle \neq \langle \top, p \rangle}$. Now, if $\agraph$ is a $3$-cycle with a single $p$ node, $\altgraph$ is a $6$-cycle with two $p$ nodes at distance $3$ from each other, $u$ is a non-$p$ node in $\agraph$ and $v$ is a non-$p$ node in $\altgraph$, then $\apgraph \not\in \calQ_{\langle p \rangle \neq \langle \top, p \rangle}, \altpgraph \in \calQ_{\langle p \rangle \neq \langle \top, p \rangle}$ while $CR(\apgraph)=CR(\altpgraph)$. Note that the \GNN in Example \ref{ex:localsum} expresses $\calQ_{\neg\phi} \cup \calQ_{\langle p \rangle \neq \langle \top, p \rangle}$, where $\phi$ is a \GML formula.

\subsubsection{\localsumrelu \GNNs}

Above, we showed that keys increase the expressive power of \localsum \GNNs with \FFN(\relu,\sigmoid) combination functions, and with continuous combination functions in general. It remains open whether the same is true for \localsumrelu \GNNs. In Section \ref{sec:restricted} we argue that in the restricted setting where keys are from a bounded real interval, key-invariant \localsumrelu \GNNs do express queries that are not expressed by any key-oblivious \localsum \GNN.

One can easily separate key-invariant \localsum \GNNs with \FFN(\relu) combination functions from \localsumcontinuous \GNNs. There are only countably many classifiers in the former class, since we require the parameters to be rationals, whereas the latter class contains uncountably many queries.
Theorem~\ref{thm:localsumgapcollapse} yields a concrete separating example for \policy{\geq1}{\leq0} \localsumrelu\ \GNNs: it implies 
 (together with Proposition~\ref{prop:Q_even sum cont vs sum relu})
that $\calQ_{\text{even}}$ is not expressible by a a key-invariant \policy{\geq1}{\leq0} \localsumrelu\ \GNN, whereas
it is expressed by a key-invariant \policy{\geq1}{\leq0} \localsumcontinuous\ \GNN (in fact, by a key-oblivious one), by Theorem~\ref{thm:localsumcontinuous expresses all local CR-invariant queries}. 

We now provide a logical upper bound for the expressiveness of key-invariant \localsumrelu \GNNs. Order-invariant $\FOC$ (\ordinvfoc) is defined in the same way as order-invariant
$\FO$ (see Section~\ref{para:order-inv}). 
It was shown in \cite{Barrington1990} that \ordinvfoc captures the circuit complexity class TC$^0$.

\begin{restatable}{theorem}{ThmSumReluinOrdInvFOC}
\label{thm:SumReluinOrdInvFOC}
  Let $\calQ$ be a node query expressed by a key-invariant \localsumrelu \GNN. Then $\calQ$ is in \ordinvfoc.
\end{restatable}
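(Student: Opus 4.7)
The plan is to reduce the result to the strengthened version of Grohe's descriptive complexity theorem for \GNNs with numerical inputs and outputs, which is referenced right after Theorem~\ref{thm:GMLloweroblivious} in the paper. The strategy mirrors the structure of Theorem~\ref{thm:orderinvariant}, but now counting quantifiers are available and directly correspond to the \localsum aggregation.

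The key observation is that, because $\calG$ is key-invariant, nothing about the specific real values of the keys matters for the final output: only their relative ordering does. More precisely, for any ordered graph $(\agraph,<)$ on $n$ vertices, let $\key_<\colon V(\agraph)\to\{1,\dots,n\}$ assign to each node its integer rank under $<$. Then for any keyed extension $\apgraph_\key$, we have $\calG(\apgraph_\key) = \calG(\apgraph_{\key_{<_\key}})$ where $<_\key$ is the order on $V(\agraph)$ induced by $\key$; so without loss of generality we may restrict attention to integer rank keyings. First I would invoke the strengthened Grohe theorem on the resulting class of \localsumrelu \GNNs with integer node inputs (label bits together with the integer rank), obtaining an $\FOC$ formula $\phi_\calG(x)$ over ordered graphs $(\agraph,<)$ that is satisfied at $u$ iff $\calG$ accepts $(\apgraph,\key_<)$. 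Since $\calG$ is key-invariant and every linear order $<$ on $V(\agraph)$ yields some rank keying, the truth value of $\phi_\calG(x)$ at $u$ depends only on $\apgraph$ and not on the choice of $<$, so $\phi_\calG$ is order-invariant and witnesses $\calQ\in\ordinvfoc$.

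The main obstacle is verifying the faithfulness of this reduction. The \GNN internally manipulates real-valued quantities through $\relu$ and rational-affine combinations of the original real keys, so replacing real keys by integer ranks genuinely changes the intermediate numerical values. What saves us is that key-invariance only demands preservation of the final accept/reject output, not of intermediates; so we may freely use the integer-keyed instance as the vehicle for the $\FOC$ capture. In this integer regime, the intermediate quantities computed by $\calG$ are rationals whose numerators and denominators are bounded by a polynomial in $|V(\agraph)|$ (by the standard analysis of \FFN(\relu) with rational weights applied to inputs of magnitude at most $|V(\agraph)|$), and so are representable and comparable within $\FOC$ in the usual way, which is exactly what the strengthened Grohe capture provides.

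A secondary route, parallel to the proof of Theorem~\ref{thm:orderinvariant}, would be to translate $\calG$ directly into an active-domain $\FOC$ formula over embedded finite structures $(\reals,+,<,0,1,\ldots)$ with a real-valued key function, interpret key-invariance as the genericity condition of \cite{collapsejacm}, and apply an $\FOC$-variant of the locally-generic collapse to obtain an order-invariant $\FOC$ formula. This avoids the explicit rank-substitution step but requires either importing or reproving the collapse theorem for $\FOC$; the combinatorial core of the collapse argument commutes with counting quantifiers over the active-domain sort, so the extension is routine but tedious. I would prefer the direct route above for conciseness.
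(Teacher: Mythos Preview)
Your proposal is correct and follows essentially the same route as the paper: restrict to integer rank keyings (what the paper calls \emph{numberings}), invoke the exact capture of \localsumrelu \GNNs with integer inputs by $\FOC$ from \cite{grohedescriptivegnn} (Theorem~5.1 and Remark~5.3), and then observe that key-invariance makes the resulting formula order-invariant once the value function is replaced by the rank term $\#(x').x'\le x$. The paper carries out the substitution of the value function by the rank term explicitly as a separate step, but the underlying argument is the same as yours.
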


The theorem essentially follows from \cite[Theorem~5.1, Remark~5.3]{grohedescriptivegnn}.

\subsection{Arbitrary Combination Functions}

\subsubsection{Expressive completeness}

If we allow arbitrary (not-necessarily continuous) combination functions, 
key-invariant \localsum \GNNs become expressively complete:

\begin{restatable}{theorem}{thmlocalsumarbunlimited} Every strongly local node query is expressed by a key-invariant \localsum \GNN.
\label{thm:localsumarbunlimited}
\end{restatable}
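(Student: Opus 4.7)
The plan is to construct, for any strongly local query $\calQ$ with locality radius $r$, a key-invariant \localsum GNN that computes at the distinguished node $u$ a complete encoding of the $r$-neighborhood $\apgraph\rneighborhood$ (including all keys), and then applies the characteristic function of $\calQ$ (viewed as a function on isomorphism types) via the final combination function. Since $\calQ$ is isomorphism-invariant and strongly local, the output depends only on the iso-type of $\apgraph\rneighborhood$, which will give key-invariance for free.

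The main technical ingredient is an injective-sum lemma: there exists a function $g\colon \reals^d \to \reals$ such that for every finite multiset $M$ of vectors in $\reals^d$, the multiset $M$ is uniquely recoverable from the single real value $\sum_{x \in M} g(x)$. One concrete construction uses a Hamel basis $B$ of $\reals$ as a $\mathbb{Q}$-vector space: since $|B| = |\reals^d|$, there is a bijection $\beta \colon \reals^d \to B$, and setting $g := \beta$, the $\mathbb{Q}$-linear independence of $B$ ensures that any natural-coefficient combination $\sum_i n_i b_i$ uniquely determines the multiset $\multiset{b_i}$, hence the preimage multiset $M$ under $\beta$. Crucially, decoding the multiset from the sum is a (highly discontinuous) function $\reals \to \multisets{\reals^d}$, which is allowed since our combination functions are arbitrary.

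With this lemma, I would build a GNN with $r$ aggregation layers whose initial embedding of each node $v$ records the pair $(\lab(v),\key(v))$. The combination function at each layer is designed to (i) apply $g$ to its output, so that the next round's sum aggregation losslessly encodes the multiset of neighbor embeddings as a single real; and (ii) when processing its own input arguments, first decode the aggregated sum via the inverse of $g$-sum, and pair the resulting multiset with the node's own previous (also decodable) embedding. By induction on $i$, after $i$ rounds each node's embedding encodes the $i$-ball around it together with the keys of all nodes in that ball. In particular, after $r$ rounds, $u$'s embedding encodes $\apgraph\rneighborhood$ with keys, and a final (arbitrary) combination function strips the keys to obtain the underlying labeled pointed graph $\apgraph\rneighborhood$ up to isomorphism, outputting $1$ if it lies in $\calQ$ and $-1$ otherwise, which yields the desired $\policy{>0}{\leq 0}$ classifier.

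The hard part is the injective-sum lemma: handling multisets of arbitrary finite size with a single, fixed function $g\colon\reals^d\to\reals$ seems to require a non-constructive step such as the Hamel-basis recipe above, since bounded-degree tricks like power-sum encodings would force the dimension of the embeddings to depend on the input graph. Once $g$ is fixed, however, the layer-by-layer construction, the decoding, and the verification of key-invariance (which follows immediately because the output depends only on the iso-type of $\apgraph\rneighborhood$) are routine.
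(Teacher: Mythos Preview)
Your proposal is correct and follows essentially the same route as the paper: both hinge on the existence of a moment-injective function $g:\reals\to\reals$ built from a Hamel basis of $\reals$ over $\mathbb{Q}$, then iterate ``apply $g$, aggregate by sum, pair with own state'' so that after enough rounds the distinguished node's embedding losslessly encodes its keyed $r$-neighborhood, and finally apply an arbitrary decode function that tests membership in $\calQ$. One small point to tighten (also glossed over in the paper): after exactly $r$ rounds the encoding captures the depth-$r$ tree unravelling with keys, from which you can recover all nodes at distance $\le r$ and all edges incident to a node at distance $\le r-1$, but possibly not edges between two nodes both at distance exactly $r$; using $2r$ (or $r{+}1$) rounds fixes this without changing the argument.
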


The proof uses a 
moment-injective function $f:\reals\to\reals$, i.e., a function
such that the map $\widehat{f}(\multiset{x_1, \ldots, x_m})=\sum_{i=1\ldots m} f(x_i)$ from 
finite multisets of reals to reals is injective. By applying such a function $f$ to the embedding vector of each node before every round of \localsum aggregation, one obtains a maximally-distinguishing \GNN (cf.~\cite{amir2023neural}).

We can further use this to show that 
there are queries expressible by a key-invariant \localsum \GNN but not with either continuous or semilinear combination functions:
\begin{restatable}{proposition}{propSumArbitraryBeyondContandSemilin} \label{prop:sumarbitrarybeyondcontandsemilin}
There exist node queries that are expressed by a key-invariant \localsum \GNN, but not by a key-invariant \localsumcontinuous or key-invariant \localsumsemilinear \GNN.  
\end{restatable}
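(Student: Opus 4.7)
The plan is to use strongly local node queries as separating witnesses: one for the separation from \localsumcontinuous, and one for \localsumsemilinear. By Theorem~\ref{thm:localsumarbunlimited}, every strongly local node query is expressible by a key-invariant \localsum\ \GNN (with arbitrary combination functions), so the only task is to produce strongly local queries falling outside the other two classes.

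For the separation from \localsumcontinuous, I take $\calQ_{C^u_3}$, the isomorphism type of a pointed triangle with trivial node labels. This query is strongly local of radius $1$, since in a $3$-cycle the distance-$1$ neighborhood of $u$ already comprises the entire connected component. Theorem~\ref{thm:sum-isomorphism-collapse} then immediately gives that $\calQ_{C^u_3}$ is not expressible by a key-invariant \localsumcontinuous\ \GNN, because $C^u_3$ is not acyclic. Equivalently, one may invoke Theorem~\ref{thm:closed-under-coverings} directly via the covering $C^v_6 \covers C^u_3$: the pair $C^v_6 \in \calQ_{C^u_3}^c$ and $C^u_3 \in \calQ_{C^u_3}$ violates covering-closure of the complement.

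For the separation from \localsumsemilinear, I invoke a cardinality argument. By the definition in Section~\ref{sec:prelims}, each semilinear function is specified by a first-order formula without parameters over $(\reals,+,<,0,1)$, of which there are only countably many. Consequently, the collection of key-invariant \localsumsemilinear\ \GNNs---each being a finite sequence of layers with semilinear combination functions---is countable, so these \GNNs express at most countably many queries. On the other hand, for each subset $S$ of the (countably infinite) set of isomorphism types of labeled pointed graphs in which every vertex is adjacent to the distinguished node, the query ``the $1$-neighborhood of $u$ has isomorphism type in $S$'' is a distinct strongly local query; this yields $2^{\aleph_0}$ strongly local queries, all expressible by a key-invariant \localsum\ \GNN via Theorem~\ref{thm:localsumarbunlimited}. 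Hence by cardinality, uncountably many of them are not expressible by any key-invariant \localsumsemilinear\ \GNN.

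The primary subtlety of this plan is combining the two separations into a single query, if the proposition is read that way. This can be arranged: among the $2^{\aleph_0}$ subsets $S$ that contain the $1$-neighborhood isomorphism type of $C^u_3$ but exclude that of the three-vertex path (which is the $1$-neighborhood of $v$ in $C^v_6$), only countably many can yield a \localsumsemilinear-expressible query, so uncountably many such $\calQ_S$ are inexpressible in both classes simultaneously---in \localsumcontinuous by the same covering argument as above applied to the pair $(C^v_6, C^u_3)$, and in \localsumsemilinear by the cardinality bound.
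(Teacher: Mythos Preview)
Your proposal is correct and follows essentially the same approach as the paper: both arguments rely on Theorem~\ref{thm:localsumarbunlimited} for expressibility, the covering $C^v_6\covers C^u_3$ together with Theorem~\ref{thm:closed-under-coverings} to rule out \localsumcontinuous, and a cardinality count to rule out \localsumsemilinear, then combine these into a single uncountable family of strongly local queries all of which violate the covering condition. The paper parametrizes its family by subsets $S\subseteq\mathbb{N}\setminus\{2\}$ (encoding allowed degrees, with the triangle test hard-wired at degree~$2$) rather than by sets of $1$-neighborhood isomorphism types, but this is only a cosmetic difference.
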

This result follows from the observations that \localsumcontinuous \GNNs are limited in expressive power by Theorem \ref{thm:closed-under-coverings} and that there are only countably many queries expressed by \localsumsemilinear \GNNs, while neither of these restrictions applies to \localsum \GNNs with arbitrary combination functions.

\subsubsection{\localsumsemilinear \GNNs and isomorphism types}

Recall that unique address separating queries $\calQ_{\langle\phi_1,\dots,\phi_n\rangle\neq\langle\psi_1,\dots,\psi_m\rangle}$ are expressible by key-invariant \localsumsigmoid \GNNs (Proposition \ref{prop:sum continuous unique paths}). The same can be achieved with semilinear combination functions:
\begin{restatable}{proposition}{LocalSumSemiUniquePaths}
\label{prop:sum semilinear unique paths}
  Every unique address separating query $\calQ_{\langle\phi_1,\dots,\phi_n\rangle\neq\langle\psi_1,\dots,\psi_m\rangle}$ specified by \GML-formulas is expressible by a \localsumsemilinear \GNN.
\end{restatable}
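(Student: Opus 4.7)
The plan is to adapt the construction used in Proposition~\ref{prop:sum continuous unique paths} (illustrated in Example~\ref{ex:localsum}) by replacing every use of $\sigmoid$ with a gating built from $\ifPos$. In the sigmoid proof, $\sigmoid$ is used to turn keys into values on which summing preserves enough injectivity; in the semilinear setting we cannot make arbitrary sums over multisets of keys injective, but we do not need to. The uniqueness built into ``uniquely addressable'' guarantees that at each step of the addressing walk there is exactly one relevant neighbour, so a single gated \localsum picks out the correct value and we can propagate $\key(u_\phi)$ back to $u$ without any normalisation.

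First I would define the saturated formulas $\phi_n' := \phi_n$ and $\phi_i' := \phi_i \wedge \Diamond^{=1}\phi_{i+1}'$, and analogously $\psi_j'$. A routine backward induction on $i$ shows that $u$ is uniquely addressable by $\phi_1,\dots,\phi_n$ iff $\apgraph \models \Diamond^{=1}\phi_1'$, and that in this case every $u_i$ on the unique walk satisfies $\phi_i'$ and hence has exactly one $\phi_{i+1}'$-neighbour, namely $u_{i+1}$. By Theorem~\ref{thm:GMLloweroblivious} each indicator $\mathbf{1}[v \models \phi_i']$ is computable at every node by a key-oblivious \localsumrelu \GNN, and $\FFN(\ReLU)$ is semilinear. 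Next, the key-propagation sublayer performs $n$ rounds of gated aggregation: at round $1$ each node $w$ emits $g_1(w) := \ifPos(\mathbf{1}[w \models \phi_n'] - \tfrac{1}{2},\, \key(w),\, 0)$ and the aggregation yields $f_1(v) := \sum_{w \in N(v)} g_1(w)$; at round $i \geq 2$, $w$ emits $g_i(w) := \ifPos(\mathbf{1}[w \models \phi_{n-i+1}'] - \tfrac{1}{2},\, f_{i-1}(w),\, 0)$ and then $f_i(v) := \sum_{w \in N(v)} g_i(w)$. Using saturation, an induction on $i$ gives $f_i(u_{n-i}) = \key(u_n)$, and in particular $f_n(u) = \key(u_\phi)$; a parallel construction in separate embedding coordinates yields $\key(u_\psi)$ at $u$.

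Finally, at $u$ the \GNN computes $d := \ReLU(\key(u_\phi) - \key(u_\psi)) + \ReLU(\key(u_\psi) - \key(u_\phi))$, which equals $|\key(u_\phi) - \key(u_\psi)|$, and outputs $\ifPos(A_\phi + A_\psi - \tfrac{3}{2},\, d,\, 0)$, where $A_\phi$ and $A_\psi$ are the GML indicators for unique addressability. This value is strictly positive precisely when both addresses are defined and $u_\phi \neq u_\psi$, so the classifier expresses $\calQ_{\langle\phi_1,\dots,\phi_n\rangle\neq\langle\psi_1,\dots,\psi_m\rangle}$; key-invariance is immediate, since the only key-dependent quantity in the prediction is $|\key(u_\phi) - \key(u_\psi)|$, whose sign depends only on whether $u_\phi = u_\psi$. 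I expect the main obstacle to be the propagation identity $f_i(u_{n-i}) = \key(u_n)$: one must check that at each on-path node no stray contribution from a ``dead-end'' $\phi_{i+1}'$-neighbour survives the gating, and this is exactly what the backward saturation of the $\phi_i'$ rules out.
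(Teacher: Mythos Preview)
Your overall approach---replace the $\sigmoid$-based gating from Proposition~\ref{prop:sum continuous unique paths} by an $\ifPos$-gate so that a node emits its raw key when it satisfies the relevant $\GML$ test and $0$ otherwise---is exactly what the paper does, and your propagation argument together with the final $\ifPos(A_\phi+A_\psi-\tfrac32,\,d,\,0)$ is correct once the inputs are as intended.

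There is, however, a genuine gap in your characterisation of unique addressability. The claimed equivalence ``$u$ is uniquely addressable by $\phi_1,\dots,\phi_n$ iff $\apgraph\models\Diamond^{=1}\phi_1'$'' holds only in the forward direction. Counterexample for $n=2$: let $u$ have two $\phi_1$-neighbours $v,w$ (and no others satisfy $\phi_1$); let $v$ have a single $\phi_2$-neighbour $x$, and let $w$ have two $\phi_2$-neighbours $y,z$ (with $u\not\models\phi_2$). There are three addressing walks $u,v,x$ and $u,w,y$ and $u,w,z$, so $u$ is \emph{not} uniquely addressable. Yet $\phi_2'=\phi_2$ and $\phi_1'=\phi_1\wedge\Diamond^{=1}\phi_2$, so $v\models\phi_1'$ while $w\not\models\phi_1'$, and hence $u\models\Diamond^{=1}\phi_1'$. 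In this graph your \GNN sets $A_\phi=1$ and propagates $f_2(u)=\key(x)$; if the $\psi$-side is genuinely uniquely addressable to some node different from $x$, your classifier accepts, whereas $\calQ_{\langle\phi_1,\phi_2\rangle\neq\langle\psi_1,\dots,\psi_m\rangle}$ must reject. So the \GNN is key-invariant but expresses the wrong query.

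The fix is local: keep your $\phi_i'$-gated propagation (your induction $f_i(u_{n-i})=\key(u_n)$ is correct on uniquely addressable inputs, since in that case each on-path $u_i$ has exactly one $\phi_{i+1}'$-neighbour), but replace $A_\phi$ by the indicator of the $\GML$ formula used in the proof of Proposition~\ref{prop:sum continuous unique paths}, namely
\[
\Diamond(\phi_1\wedge\Diamond(\phi_2\wedge\cdots\wedge\Diamond\phi_n))\ \wedge\ \neg\Big(\Diamond^{\ge2}(\phi_1\wedge\cdots)\ \vee\ \Diamond(\phi_1\wedge\Diamond^{\ge2}(\phi_2\wedge\cdots))\ \vee\cdots\Big),
\]
which correctly detects ``exactly one addressing walk'' rather than ``exactly one locally-deterministic addressing walk''.
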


This shows that key-invariant \localsumsemilinear \GNNs express queries beyond the reach of any key-oblivious \localsum \GNN. The following result in combination with Theorem~\ref{thm:sum-isomorphism-collapse} further separates key-invariant \localsum \GNNs with semilinear and continuous combination functions.
\begin{restatable}{proposition}{propSumSemilinExpressesTriangleIso}
\label{prop:SumSemilinExpressesTriangleCoIso}
    Let $\apgraph_{\triangle p}$ be the pointed $3$-cycle with a single $p$ node from example \ref{ex:localsum}. The isomorphism type $\calQ_{\apgraph_{\triangle p}}$ is expressed by a \localsumsemilinear \GNN.
\end{restatable}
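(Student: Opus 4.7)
The plan is to adapt the construction from Example \ref{ex:localsum}, which produces a \localsumsigmoid \GNN for the \emph{complement} of $\calQ_{\apgraph_{\triangle p}}$ among connected pointed graphs, into a \localsumsemilinear \GNN for $\calQ_{\apgraph_{\triangle p}}$ itself. The starting observation, already made in that example, is that a connected pointed graph $\apgraph$ is isomorphic to $\apgraph_{\triangle p}$ iff (i) $u$ satisfies the \GML formula $\phi$ of Example \ref{ex:localsum}, and (ii) the unique $p$-neighbour of $u$ coincides with the unique $p$-node reachable from $u$ by a walk of length exactly $2$. Condition (i) is a key-oblivious \GML property encodable by a \localsumrelu---and hence \localsumsemilinear---sub-network via Theorem \ref{thm:GMLloweroblivious}. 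Condition (ii) is where the keys come in; unlike in Example \ref{ex:localsum}, I would not normalise keys through $\sigmoid$, since semilinear combination functions contain $\ifPos$ and can therefore test equality of two real numbers exactly, regardless of the range of the keys.

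Concretely, I would build a \GNN that maintains several channels at $u$: a value $x \in \{0,1\}$ equal to $1$ iff $u \models \phi$, produced by the sub-network of Theorem \ref{thm:GMLloweroblivious} and rounded to $\{0,1\}$ by a final $\ifPos$; the value $y = \sum_{v \in N(u),\, v \models p} \key(v)$, obtained by propagating the semilinear message $\ifPos(\lab(v)_p - \tfrac{1}{2},\, \key(v),\, 0)$ through one round of \localsum aggregation; and $z = \sum_{u \to w \to v,\, v \models p} \key(v)$, obtained with one further round of the same propagation. The final combination function outputs
\[
   \ifPos\!\bigl(x - \tfrac{1}{2},\; \ifPos(|y-z|,\, 0,\, 1),\; 0\bigr),
\]
where $|y-z|$ is expanded as $\relu(y-z) + \relu(z-y)$. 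By construction this function is semilinear, so the resulting \GNN is a \localsumsemilinear \GNN.

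For correctness: if $u \not\models \phi$ then $x = 0$ and the output is $0$, so we reject. If $u \models \phi$, then reading off the shape of $\phi$ forces $u$ to have exactly one $p$-neighbour $v_p$ (whose neighbours are all $\neg p$ by the $\Box \neg p$ conjunct) and exactly one $\neg p$-neighbour $w$ with a unique $p$-neighbour $p_w$; hence the only walk of length $2$ from $u$ to a $p$-node is $u \to w \to p_w$, so $y = \key(v_p)$ and $z = \key(p_w)$. If $\apgraph$ is isomorphic to $\apgraph_{\triangle p}$ then $v_p = p_w$, so $y = z$ and the output is $1$; otherwise $v_p \neq p_w$, and injectivity of the keying yields $y \neq z$, hence output $0$. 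Since the output depends only on the isomorphism-invariant condition ``$v_p = p_w$'' on the connected component of $u$, the \GNN is automatically key-invariant and expresses $\calQ_{\apgraph_{\triangle p}}$.

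The main obstacle I anticipate is not conceptual but bookkeeping: one must verify that the three channels can be computed in parallel by a single \localsum \GNN with semilinear combination functions and an appropriate number of layers, which requires carefully tracking how the messages for $y$ and $z$ propagate alongside the \GML sub-network. One might briefly worry that Theorem \ref{thm:localsumgapcollapse} would block such a construction since the output lies in $\{0,1\}$, making the \GNN a \policy{\geq 1}{\leq 0} classifier; however, that collapse result applies only to \emph{continuous} combination functions, and the essential use of $\ifPos$ places our construction outside the continuous class, so the obstruction does not arise.
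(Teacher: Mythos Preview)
Your proposal is correct and follows essentially the same approach as the paper's proof: replace the $\sigmoid$-based key extraction from Example~\ref{ex:localsum} by the semilinear gadget $\ifPos(\feature_p,\val,0)$, and then exploit the fact that semilinear combination functions (unlike continuous ones) can test exact equality of two reals. The only cosmetic difference is that the paper first builds a \localsumsemilinear \GNN for the complement $\calQ^c_{\apgraph_{\triangle p}}$ (exactly mirroring Example~\ref{ex:localsum}) and then observes that semilinear combination functions are closed under negation, whereas you assemble the final output for $\calQ_{\apgraph_{\triangle p}}$ directly via a nested $\ifPos$; both routes are equivalent and rely on the same key idea.
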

In addition, every isomorphism type of connected pointed graphs is expressible using semilinear combination functions \emph{with squaring}:
\begin{restatable}{theorem}{ThmSumSemilinearSquareIsotypes}
\label{thm:iso test with local sum}
    Let $\apgraph$ be a connected pointed graph. Its isomorphism type $\calQ_{\apgraph}$ is expressed by a key-invariant \localsum \GNN with combination functions composed of semilinear functions and the squaring operation $(\cdot)^2: \reals \to \reals$.
\end{restatable}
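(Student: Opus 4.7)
The plan is to construct, for each connected pointed graph $\apgraph$ with $N$ nodes, an explicit key-invariant \localsum \GNN whose combination functions are built from semilinear functions together with squaring, and which accepts $\altpgraph_\key$ precisely when $\altpgraph\cong\apgraph$. The construction combines a \GML-level bisimulation filter with a polynomial key-based test that resolves the covering ambiguity permitted by bisimulation. A convenient preliminary observation is that, since $xy = \tfrac{1}{4}\bigl((x+y)^2 - (x-y)^2\bigr)$ and higher powers are built by iterated squaring, the class of combination functions under consideration is closed under arbitrary polynomials of rational coefficients, and (via $\ifPos$) under piecewise-polynomial branching.

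\textbf{Step 1 (Bisimulation filter).} First, encode a \GML-characteristic formula $\phi_\apgraph$ of modal depth $N$ that is satisfied exactly by the pointed graphs bisimilar to $\apgraph$. By Theorem~\ref{thm:GMLloweroblivious}, this is implementable by a key-oblivious \localsum+\ReLU sub-GNN, which lies within our class. We use its output to gate the rest of the computation: if $\phi_\apgraph$ fails at $u$, the final score will be forced to $0$.

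\textbf{Step 2 (Polynomial walk moments).} Run $D=N-1$ rounds of \localsum aggregation. In each round, before aggregating, apply polynomial (power) functions via iterated squaring inside the combination. The resulting embedding at $u$ includes, for every $d\le D$ and every $k\le N$, the walk moment
\[
M_{u,d,k} \;=\; \sum_{v\in V(\altgraph)} n^{(d)}_{u,v}\cdot \key(v)^k,
\]
where $n^{(d)}_{u,v}$ counts walks of length $d$ from $u$ to $v$. By a Prony-type argument, the family $\{M_{u,d,k}\}_{k\le N}$ determines the multiset of pairs $(n^{(d)}_{u,v},\key(v))$ for nodes $v$ reachable from $u$ in exactly $d$ steps, because the keys are distinct reals and the relevant Vandermonde system is invertible. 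Hence, jointly, the moments encode the keyed connected component of $u$ up to relabelling of keys.

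\textbf{Step 3 (Polynomial isomorphism decoder).} At $u$, apply a fixed semilinear+squaring combination function that enumerates the at most $N!$ bijections $\pi\colon V(\apgraph)\to[N]$ with $\pi(u_\apgraph)=1$ and, for each such $\pi$, checks whether the computed moment vector coincides with the moment vector that $\apgraph$ would produce when its nodes were assigned the keys of $\altgraph$ via $\pi$. Each matching condition reduces to a finite conjunction of polynomial equalities of the form $(M_{u,d,k} - M^\pi_{u,d,k})^2 = 0$, and these can be combined across $d,k$ and then disjoined across $\pi$ using $\ifPos$. The final scalar output is positive iff at least one bijection witnesses an isomorphism (conjoined with the Step~1 gate), and non-positive otherwise.

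\textbf{Key-invariance and main obstacle.} Key-invariance is automatic: because Step 3 quantifies existentially over bijections $\pi$, the acceptance decision depends only on whether the structural pattern of moments matches for \emph{some} key assignment, hence only on the isomorphism class of $\altpgraph$. The main technical difficulty is the \emph{soundness} of the decoder in Step 3: one must show that a matching bijection $\pi$ in fact forces $\altpgraph\cong\apgraph$, rather than merely witnessing a covering of $\apgraph$. This is where Step 1 is essential---the bisimulation filter already eliminates graphs that are not bisimilar to $\apgraph$, and the moment vector then resolves, via Prony-style inversion, exactly the covering ambiguity left open by bisimulation (the multiset $\{(n^{(d)}_{u,v},\key(v))\}$ pins down not only how many distinct nodes sit at each BFS level but also which keys are shared across levels, and hence the full adjacency pattern). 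The remaining work in the proof is to verify that this decoding can indeed be carried out by a fixed polynomial circuit of the allowed form, and to bound the number of layers and embedding dimension by a function of $N$ alone.
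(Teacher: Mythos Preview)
Your approach has a fatal soundness gap in Step~3. The walk moments $M_{u,d,k}=\sum_v (A^d)_{u,v}\,\key(v)^k$, together with the \GML filter from Step~1, do \emph{not} determine the isomorphism type. Take $\apgraph$ to be the $4\times 4$ rook's graph and $\altgraph$ the Shrikhande graph: both are connected, strongly regular with parameters $(16,6,2,2)$, CR-equivalent (hence indistinguishable by any \GML formula), and non-isomorphic. In a strongly regular graph the walk count $(A^d)_{u,v}$ depends only on whether $v=u$, $v\sim u$, or $v\not\sim u$, and the three resulting sequences are identical in the two graphs. Hence, for any keying of $\altgraph$, the bijection $\pi$ that sends distinguished vertex to distinguished vertex, neighbors to neighbors, and non-neighbors to non-neighbors makes every target moment $M^{\pi}_{u,d,k}$ coincide with the observed $M_{u,d,k}$. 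Your decoder would therefore accept the Shrikhande graph when $\apgraph$ is the rook's graph. The Prony-style inversion you invoke recovers only the multiset of pairs $\bigl((A^d)_{u,v},\key(v)\bigr)$, i.e.\ the walk profile from $u$, and that profile does not pin down adjacency; the Step~1 filter cannot help because the two graphs are CR-equivalent.

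There is also a computability gap you acknowledge but do not close: forming $M^{\pi}_{u,d,k}$ in Step~3 requires the individual keys $k_1,\ldots,k_N$ at $u$, and extracting them from power sums via Prony's method amounts to root-finding for a degree-$N$ polynomial, which is not a semilinear-plus-squaring operation. The paper's proof takes a completely different route that avoids both problems. Its central ingredient is the Cauchy--Schwarz equality case: for a multiset with $a$ nonzero entries, the quantity $a\cdot\sum_i y_i^2-\bigl(\sum_i y_i\bigr)^2$ vanishes iff all nonzero $y_i$ are equal, and this test is expressible with one squaring and otherwise semilinear operations. Using this gadget inside a repeated averaging-and-thresholding loop, the paper extracts the $N$ keys in descending order \emph{at every node} (not only at $u$). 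Once each node holds the full sorted key list, each of the $N!$ candidate bijections is checked by a direct adjacency verification via \localsum, with no reference to walk moments at all.
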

It is worth contrasting this result with 
the one for \localmax GNNs (Theorem~\ref{thm:localmaxsemilinear recognizes up to isomorphism}). There we were able to express isomorphism types using semilinear combination functions. In the \localsum case, the proof is substantially more complex. One of the ingredients in the proof of Theorem~\ref{thm:iso test with local sum} is the Cauchy-Schwarz inequality, which tells us 
that the sum and sum-of-squares of a multiset of reals together form an encoding that allows one to detect whether the multiset contains at most one nonzero value. 

\section{Further Aspects}
\label{sec:discussion}

\subsection{Other Acceptance Policies}
\label{sec:policies}

Our definition of \GNN node-classifiers included an acceptance policy 
\policy{>0}{\leq 0}, where the other acceptance policies added further restrictions. 
One could instead consider a $\policy{\geq 0}{<0}$
acceptance policy. The latter could be called \emph{yes-closed} \GNN node-classifiers (as opposed to \emph{yes-open}). A yes-closed \GNN node classifier for $\calQ$ can be 
equivalently viewed as a yes-open GNN node classifier for $\calQ^c$, simply by modifying the last layer to turn the final value $x$ into its negative $-x$.  
Thus, all results carry over to the yes-closed setting when one replaces $\calQ$ by $\calQ^c$. 

Key-invariant yes-open \localsumcontinuous \GNNs express the same isomorphism types as key-oblivious \localsumcontinuous \GNNs (Theorem \ref{thm:sum-isomorphism-collapse}).  Example~\ref{ex:localsum} shows that the same is not true with yes-closed acceptance, since by complementation it provides
a non-trivial example of an isomorphism type that is expressed by a key-invariant yes-closed \localsumsigmoid \GNN. We also provide a negative example of this capacity to express isomorphism types. Let $n \geq 3$ and let $C_n^u$
be the cycle of length $n$
(with uniform labeling, and where the distinguished node $u$ is any node on the cycle). 
Then $C_{4n}^u\covers C_{2n}^u\covers C_n^u$. It follows
by Theorem~\ref{thm:closed-under-coverings} that the isomorphism type of $C_{2n}^u$ is not expressed by a \localsumcontinuous GNN, regardless of whether one uses a yes-open or yes-closed acceptance policy.
 
 \subsection{Composite Keys}
 \label{sec:composite}

In the introduction, we mentioned that in geometric graph learning the input graph often comes equipped with uniquely identifying node features, such as Euclidean coordinates. In this setting,  keys may be composite, consisting of more than one node feature (e.g., the $x$- and $y$-coordinates). It is easy to see that all our \emph{upper-bound} results on the expressive power of key-invariant \GNNs remain valid in this setting. Indeed, if a query is expressed by a key-invariant \GNN $\agnn$ given composite keys, it can also be expressed with non-composite keys by first translating each key $k$ to the composite key $\langle k, \ldots,k\rangle$ and then running $\agnn$. 
Most of our \emph{lower-bound} results on expressive power of key-invariant \GNNs extend to the composite-key setting as well, although this requires careful inspection of the proofs.

\subsection{Invariance for Specific Symmetry Groups}
\label{sec:groups}

Let $\group$ be a group of permutations on the  reals. $\group$ induces an equivalence relation $\equiv_\group$ on pointed  keyed graphs where two pointed  keyed graphs $\apgraph$ and $\altpgraph$ are equivalent if there exists an isomorphism $f$ between the  underlying pointed  graphs, and a $g \in \group$ so that for each $w \in V(G): g(\val(w)) = \val(f(w))$.
We say that a node query is \emph{$\group$-invariant} if it returns the same value on keyed graph inputs that are equivalent under $\equiv_\group$.
Key-invariance can be understood equivalently as 
$\group_{\text{full}}$-invariance, where $\group_{\text{full}}$ is the  full symmetry group (i.e., the group of all bijections).
However, other symmetry groups are also natural to consider, such as the symmetry group $\group_<$ of all order-preserving bijections, or the symmetry group $\group_{\text{cont}}$ of all continuous bijections.  
Note that $\group' \subset \group$ implies that every $\group$-invariant query is also $\group'$-invariant. In particular, key-invariant queries are $\group$-invariant for \emph{every} symmetry group $\group$. 

In the context of geometric graph learning, there has been considerable effort on the design of \GNN that are invariant, or  equivariant, for different symmetry groups~\cite{han2025survey},
but we are not aware of any work that specifically investigates the leverage such \GNNs can get from having unique node identifiers.

We present one result along these lines, as it follows immediately from one of our proofs. Let $\group_<$ be the symmetry group that consists of all order-preserving bijections on $\reals$. 
Each $\group_<$-invariant \GNN naturally defines a node query over ordered graphs (i.e., graphs augmented with a linear order on the vertex set).
Let FO(<) be the language of first-order logic over
ordered graphs. The proof of Theorem~\ref{thm:orderinvariant} then shows:

\begin{theorem}
    Every node query expressible by an 
    $\group_<$-invariant \localmax \GNN
    is FO(<)-definable.
\end{theorem}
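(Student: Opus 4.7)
The plan is to reuse the proof of Theorem~\ref{thm:orderinvariant} almost verbatim, changing only the invocation of the collapse theorem at the end. Recall that the proof of Theorem~\ref{thm:orderinvariant} represents each keyed graph $\agraph_\key$ as an embedded finite structure $(\reals, <, (f)_{f \in F}, V, E, p_1, \ldots, p_n)$, where $F$ is the (finite) set of combination functions used by the \GNN and $V$ is the image of the key function. It then inductively translates the \localmax \GNN into an active-domain first-order formula $\phi(x)$ over this embedded finite structure. This translation depends only on the shape of \localmax aggregation and on the availability of the combination functions as function symbols in the background structure $(\reals, <, (f)_{f \in F})$; it is insensitive to the invariance properties of the \GNN, so it applies verbatim in the current setting.

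The point where the two proofs diverge is the classification of the invariance satisfied by $\phi$. In Theorem~\ref{thm:orderinvariant}, key-invariance (i.e., $\group_{\text{full}}$-invariance) of the \GNN translates to \emph{genericity} of $\phi$ in the sense of database theory, and the locally-generic collapse theorem of~\cite{collapsejacm} then yields equivalence with an \ordinvfo formula. Here, the weaker assumption of $\group_<$-invariance translates analogously to \emph{order-genericity} of $\phi$: its truth value depends only on the order-isomorphism type of the embedded finite structure, equivalently, on the underlying ordered graph $(V, E, p_1, \ldots, p_n, <)$.

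To conclude, I would invoke the corresponding \emph{order-generic collapse} theorem from embedded finite model theory (a parallel result to the locally-generic collapse, also due to the line of work in~\cite{collapsejacm}), which asserts that every order-generic active-domain first-order formula over an embedded finite structure is equivalent to a plain FO($<$) formula over the associated finite ordered structure. Applying this to $\phi$ yields an FO($<$) formula defining $\calQ$ over ordered graphs, as required.

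The main thing to verify — and the only step that is not completely mechanical — is that $\group_<$-invariance of the \GNN does indeed translate into order-genericity of $\phi$ in the precise sense demanded by the collapse theorem. This follows the same template as the analogous step in Theorem~\ref{thm:orderinvariant}: a permutation of the active domain $V$ extends to an element of $\group_<$ exactly when it is order-preserving, so the class of active-domain permutations under which $\phi$ is invariant matches the class required by the order-generic collapse. Once this correspondence is established, the rest is immediate.
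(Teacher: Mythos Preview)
Your proposal is correct and matches the paper's approach: the paper simply says ``The proof of Theorem~\ref{thm:orderinvariant} then shows'' this result, and indeed that proof already contains the remark that the collapse step ``in fact requires less than key-invariance: it is enough if $\phi$ is invariant under order-preserving isomorphisms.'' One small clarification: you speak of the ``order-generic collapse'' as a \emph{parallel} result to the ``locally-generic collapse'' of~\cite{collapsejacm}, but these are the same theorem --- in that paper, \emph{local genericity} is precisely invariance under order-preserving bijections of the active domain, so Corollary~1 of~\cite{collapsejacm} already gives you the collapse to FO($<$) directly; the only extra step in Theorem~\ref{thm:orderinvariant} was using full key-invariance at the very end to conclude order-invariance of the resulting formula, which you correctly omit here.
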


To illustrate this, consider the \localmax \GNN node classifier that 
 applies one round of aggregation to (i) collect
the minimum key $k$ across all neighbors of the distinguished node that satisfy $p_1$, (ii) collect
the maximum key $k'$ across all neighbors of the distinguished node that satisfy $p_2$, and (iii) outputs $k'-k$. This \GNN, which can be implemented using only \FFN(\relu,\sigmoid) combination functions, is $\group_<$-invariant, and the node query over ordered graph that it expresses is defined by 
the FO(<) formula $\phi(x)=\exists y,z(E(x,y)\land E(x,z)\land p_1(y)\land p_2(z)\land y<z)$. Here, \sigmoid is used to normalize the keys so that they belong to the interval $(0,1)$, as in Example \ref{ex:localsum}.

\subsection{Restricted Key Spaces}
 \label{sec:restricted}

We have assumed nothing about the node identifiers other than that they are unique. In practice, node identifiers may satisfy further structural properties, depending on the way they are generated. For example, keys can be fixed precision rational numbers, or they can be derived from Laplacian eigenvectors, as is common for positional encodings. This allows for a relaxation of ``key invariance'', requiring the \GNN to be invariant only over keyed graphs that satisfy such additional properties--which may increase the expressive power of key-invariant \GNNs. We illustrate this with two examples. First, for fixed precision rational keys the following result applies:
\begin{restatable}{theorem}{epsilonapart} \label{thm:epsilonapart}
    Every strongly local node query can be expressed on $\epsilon$-apart keyed graphs by a key-invariant \localsumcontinuous \GNN with policy $\geq 1/\leq0$.
\end{restatable}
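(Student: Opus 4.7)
Let $\calQ$ be a strongly local node query with radius $r$. The strategy is to adapt the proof of Theorem~\ref{thm:localsumarbunlimited} (all strongly local queries are expressible by key-invariant \localsum \GNNs with \emph{arbitrary} combination functions) to the continuous setting, using the $\epsilon$-apart hypothesis to ensure that the moment-injective function used there can be chosen continuous.

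The first step is to replace the moment-injective function $f:\reals\to\reals$ used in the proof of Theorem~\ref{thm:localsumarbunlimited} with a continuous map $f_\epsilon:\reals\to\reals^D$ that is multiset-injective on all finite multisets of pairwise $\epsilon$-apart reals (and, more generally, on tuples whose ``key coordinate'' is $\epsilon$-apart from that of the other elements of the multiset). I would construct $f_\epsilon$ from a sequence of continuous bump-like components at length-scale strictly less than $\epsilon$, together with a bounded normalisation of the key (e.g.\ via $\sigmoid$), so that distinct keys activate disjoint portions of the image and therefore cannot cancel inside a sum. The $\epsilon$-apart assumption is essential here, since in the unrestricted setting no continuous multiset-injective map into finite-dimensional space exists on arbitrarily large multisets of reals. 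Using $f_\epsilon$ in place of the moment-injective encoding, a layerwise application of \localsum aggregation with continuous combination functions over $r$ rounds produces at the distinguished node an embedding $\vec x(\apgraph_\key)\in\reals^D$ from which the keyed $r$-ball $\apgraph\rneighborhood$ is uniquely recoverable.

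For each isomorphism type $I$ of a pointed graph that can arise as an $r$-ball, let $S_I\subseteq\reals^D$ be the set of encodings $\vec x(\apgraph_\key)$ obtained from all $\epsilon$-apart keyings of a representative of $I$. By injectivity, the $S_I$ are pairwise disjoint, and I would show, using the $\epsilon$-apart property, that $S_{\calQ}:=\bigcup_{I\in\calQ}S_I$ and $S_{\calQ^c}:=\bigcup_{I\notin\calQ}S_I$ are closed subsets of $\reals^D$ at strictly positive distance from one another. A final continuous combination function $g:\reals^D\to\reals$ is then obtained by a Tietze-extension argument applied to the two closed sets with target values $1$ and $0$ and a subsequent affine rescaling, giving the required \policy{\geq 1}{\leq 0} output and concluding the construction.

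The main obstacle is the simultaneous construction of $f_\epsilon$ and the closure/separation analysis of the sets $S_I$. Because a strongly local query places no bound on the degree of the graph, the multisets that are aggregated at each layer can be arbitrarily large, so standard moment-based constructions of the form $(x,x^2,\ldots,x^k)$ do not yield a continuous multiset-injective map of fixed dimension; the construction of $f_\epsilon$ must therefore exploit the $\epsilon$-apart structure directly, and the separation of $S_{\calQ}$ from $S_{\calQ^c}$ must be verified uniformly across the (countably many) iso-types, again relying on the $\epsilon$-apart condition to prevent encodings of distinct types from accumulating. Once these two technical points are in place, the rest of the argument follows the template of Theorem~\ref{thm:localsumarbunlimited} adapted to continuous combination functions.
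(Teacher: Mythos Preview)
Your plan has the right shape but leaves unresolved precisely the two points you yourself flag as obstacles, and the paper's proof shows that both can be sidestepped by a single observation you are missing. The $\epsilon$-apart hypothesis is read as saying that all keys come from a fixed subset $K\subset\reals$ whose elements are pairwise at least $\epsilon$ apart; such a $K$ is discrete and countable, so there is a bijection $f:\Pi\times K\to\mathbb{N}$ that extends to a continuous map on $\reals$. Once labels and keys have been replaced by integer colours, one simply runs colour refinement: Proposition~\ref{prop:Oblivious continuous sum implements WL} (built on Lemma~\ref{lem:bijective}) already shows that each CR round on integer-coloured graphs is implementable by a \localsumcontinuous layer with integer-valued output, so discreteness is maintained automatically through all $r{+}1$ rounds. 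A short auxiliary lemma (Lemma~\ref{lem:cr equiv is isomorphism on individualized graphs}) shows that on discretely coloured graphs the CR colour after $r{+}1$ rounds determines the isomorphism type of the $r$-ball, so the final readout is a function $g:\mathbb{N}\to\{0,1\}$, which again extends continuously to $\reals$.

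This dissolves both of your obstacles at once. In your direct approach, after a single aggregation round the node embeddings in $\reals^D$ are no longer $\epsilon$-apart in any usable sense, so the multiset-injectivity hypothesis on $f_\epsilon$ cannot be reused inductively; your parenthetical about a ``key coordinate'' does not survive the first \localsum. And your Tietze step needs $S_{\calQ}$ and $S_{\calQ^c}$ to be closed and at positive distance in $\reals^D$, which you do not establish and which is doubtful when the encodings vary over all admissible keyings. The map-to-$\mathbb{N}$ trick keeps every intermediate value on the integer lattice, where separation is automatic and continuous extension is trivial; that is the missing idea.
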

Under this assumption, key-invariant \localsumcontinuous \GNNs are
thus as powerful as key-invariant \localsum \GNNs with arbitrary combination functions, which is not true with real keys (Theorems \ref{thm:closed-under-coverings} and \ref{thm:localsumarbunlimited}). Intuitively, this follows because
on $\epsilon$-apart keyed graphs
we can replace
arbitrary combination functions by continuous ones. As a second example, if we restrict keys to a bounded interval of reals, Example~\ref{ex:localsum} can be adapted to show that key-invariant 
\localsumrelu \GNNs express node queries that 
cannot be expressed by any key-oblivious \localsum \GNN (it suffices to simply omit the initial application of $\sigmoid$).

\subsection{Global Aggregation}

So far, we have only considered \GNNs with local aggregation.
It remains future work to extend our results to GNNs with global aggregation or global readout. We present here a single result for \globallocalsum \GNNs, in which layers $\calL = (\agg, \com)$ compute a function from embedded graphs to embeddings defined by:
\begin{align*}
    \calL(\agraph,\emb)(u) = ~& \com\big(\emb(u),\agg(\multiset{\emb(v) \,\mid (u,v) \in E(G)}),\\ & \hspace{18mm} \agg(\multiset{\emb(v) \,\mid v \in V(G)})\big)
\end{align*}
With semilinear combination functions these models can express isomorphism types, now without requiring the squaring function, and for pointed graphs that are not necessarily connected:
\begin{restatable}{theorem}{ThmIsoWithGlobalSum}
\label{thm:iso test with global sum}
    Let $\apgraph$ be a pointed graph. The query $\calQ_{\apgraph}$ that tests isomorphism with $\apgraph$ is expressed by a key-invariant \globalsumsemilinear \GNN.
\end{restatable}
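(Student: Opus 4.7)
The plan is to use the unique keys to compute a canonical rank-indexed encoding of $H^v$ and then check whether this encoding matches one of the $(n-1)!$ permutation-rearrangements of $G^u$ (with $n := |V(G)|$).

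The first ingredient is a size check: a single \globalsum layer on the constant embedding $1$ produces $|V(H)|$ at each node, and a semilinear gate (expressible via $\Heavi$) rejects unless $|V(H)| = n$. After this, $n$ can be treated as a known constant baked into the architecture. The technical heart is then a rank-computation phase: I would compute, for each node $w$, a one-hot encoding $e_{r_w} \in \{0,1\}^n$ of its rank in the key ordering. The plan is to use $O(\log n)$ rounds of adaptive binary refinement; in each round, every node's embedding carries its current candidate rank-interval, and via \globalsum we compute the mean key within each such interval (possible because the number of distinct intervals is bounded by $n$, and $1/n$ is a valid semilinear coefficient). A $\Heavi$ comparison against its interval's mean then splits the interval, and after $O(\log n)$ rounds each interval is a singleton, yielding the exact rank.

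With rank encodings in hand, one \localsum round computes the neighbor-rank set $N_w \in \{0,1\}^n$ at each node, and one \globalsum of the outer products $\sum_w e_{r_w} \otimes N_w$ assembles the rank-indexed adjacency matrix $A \in \{0,1\}^{n \times n}$; labels are handled analogously to produce a matrix $L$. At the distinguished node $v$, which knows its own rank-encoding $e_{r_v}$, a final semilinear layer enumerates the $(n-1)!$ permutations of $V(G)$ sending $u$ to position $r_v$ and accepts if one of them rearranges $G^u$'s adjacency and labels to match $(A,L)$ entry-by-entry; this is a finite Boolean combination of comparisons on $\{0,1\}$-valued inputs, hence semilinear.

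The main obstacle is the rank-extraction step, because \globalsum aggregates only per-sender functions of embeddings and cannot directly express per-pair comparisons; threshold information must be carried through successive layers to emulate a distributed sort. In contrast to the \localsum-only proof of Theorem~\ref{thm:iso test with local sum}, which invokes Cauchy-Schwarz through the squaring operation, global aggregation here supplies the counting power needed to avoid any quadratic nonlinearity, explaining why semilinear combination functions alone suffice in the \globalsum setting.
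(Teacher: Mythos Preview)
Your high-level plan is sound, but the rank-extraction step has a real gap: the claim that $O(\log n)$ rounds of mean-based refinement suffice is false. Splitting an interval at its mean need not halve it. For keys growing fast enough---say $a_i = 2^{2^i}$ for $i = 1,\ldots,m$---the interval mean exceeds every key except the largest, so each round peels off exactly one element and $m-1$ rounds are required. A \GNN built with only $\lceil\log_2 n\rceil$ refinement layers would therefore fail to resolve ranks on some keyings and would not express $\calQ_{\apgraph}$. The repair is easy (use $n-1$ refinement layers; $n$ is a constant baked into the architecture), but as written the construction is incorrect. There is also a smaller imprecision: computing an interval's mean requires dividing by that interval's data-dependent size $c\in\{1,\ldots,n\}$, not by the fixed $n$; this is still semilinear, but only via an explicit case split of the form $\sum_{c=1}^{n}\ifPos\bigl(1-|c-\text{count}|,\ \text{sum}/c,\ 0\bigr)$, not simply by invoking ``$1/n$ as a coefficient.''

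For comparison, the paper uses the same averaging-and-thresholding primitive but sequentially rather than in parallel: it extracts the sorted key list $k_1>\cdots>k_n$ one entry at a time, isolating each $k_i$ by $n$ rounds of ``zero out those active keys lying below the global average of the current non-zero survivors.'' With every node then holding all $n$ sorted keys (hence able to read off its own rank), the paper tests each $\pi\in S_n$ by a distributed local-match check---does my label and the set of my neighbors' ranks agree with those of my preimage under $\pi$?---aggregated via a single $\globalsum$, rather than assembling the full adjacency matrix at the distinguished node. Your parallel refinement and centralized adjacency-matrix check are legitimate alternatives once the round count is corrected; neither route needs squaring, for exactly the reason you identify.
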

While for \localsum we require an iterative procedure that applies $(\cdot)^2$ to determine the size of a graph, \globallocalsum \GNNs can do this with a single global aggregation where every node sends $1$.

\subsection{Testing Key-invariance}
\label{sec:testing}

We have focused on the expressive power
of key-invariant \GNNs, but another natural question is whether one can effectively test if a \GNN is key-invariant. Our results regarding $\uddl$ imply the following undecidability result:

\begin{restatable}{theorem}{thmUndecidabilityKeyInvariance} 
\label{thm:undecidable-invariance}
It is undecidable whether a given \localmaxsemilinear \GNN is key-invariant.
\end{restatable}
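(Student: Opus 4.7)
The plan is to reduce from $\uddl$ satisfiability, undecidable by Theorem~\ref{thm:undecidableuddl}. For each $\uddl$ formula $\phi$ I will effectively construct a \localmaxsemilinear \GNN $\agnn_\phi^*$ that is key-invariant iff $\phi$ is unsatisfiable. Since unsatisfiability of $\uddl$ is undecidable (as the complement of an undecidable set), key-invariance of \localmaxsemilinear \GNNs must be undecidable too.

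The construction proceeds in three steps. First, apply Theorem~\ref{thm:uddllower} to obtain a key-invariant \localmaxsemilinear \GNN $\agnn_\phi$ expressing $\calQ_\phi$, so that $\agnn_\phi(\apgraph_\key)>0$ iff $\apgraph\models\phi$, for every keying $\key$. Second, augment the embedding with one additional coordinate that, at every node and every layer, always carries the node's own key; this stays within \localmaxsemilinear because, for this coordinate, the combination function need only return the self-input and discard the aggregate. Third, append one final output layer computing the semilinear function $\kw{ifPos}\bigl(\agnn_\phi(\apgraph_\key),\,\key(u),\,-1\bigr)$. The resulting $\agnn_\phi^*$ is \localmaxsemilinear by construction and is effectively produced from $\phi$.

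For correctness: if $\phi$ is unsatisfiable, then $\agnn_\phi(\apgraph_\key)\le 0$ on every pointed keyed graph, so $\agnn_\phi^*$ always outputs $-1$, rejecting every input and thus being trivially key-invariant. Conversely, if $\phi$ has a witness $\apgraph$, then $\agnn_\phi^*(\apgraph_\key)=\key(u)$ for every keying; choosing two injective keyings on $\apgraph$ with $\key(u)>0$ and with $\key(u)<0$ respectively produces differing classifications on the same underlying graph, so $\agnn_\phi^*$ is not key-invariant.

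The real difficulty hides in the prerequisites, not in the reduction itself: Theorem~\ref{thm:undecidableuddl} requires a tiling encoding in $\uddl$ that leverages the unique-witness modality $\langle\pi\rangle^{=1}\phi$ to force grid-like behavior, as sketched in the excerpt, and Theorem~\ref{thm:uddllower} supplies the $\uddl$-to-\localmaxsemilinear translation. Given these two ingredients, the reduction above is routine plumbing: carry the key of the distinguished node through the layers, and, in a single final semilinear layer, expose it exactly when $\phi$ holds at that node.
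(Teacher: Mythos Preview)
Your proof is correct and follows essentially the same approach as the paper: both reduce from $\uddl$ satisfiability via Theorems~\ref{thm:uddllower} and~\ref{thm:undecidableuddl}, and both gate the classifier's key-dependence on whether $\phi$ holds at the distinguished node. The only cosmetic difference is the gadget used: the paper adds a feature that is~$1$ exactly when $\key(u)>0$ and takes the minimum with the final feature of $\agnn_\phi$, whereas you carry the raw key through and apply $\ifPos(\agnn_\phi,\key(u),-1)$ at the end; both constructions yield a classifier that accepts iff $\phi$ holds and $\key(u)>0$.
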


We leave it open whether key-invariance is decidable for \localmaxrelu \GNNs. Note that \localsumrelu \GNNs are known to have an undecidable satisfiability problem \cite{usicalp24}, which, by the same argument as for Theorem~\ref{thm:undecidable-invariance}, implies undecidability for testing key-invariance as well.

\section{Conclusion}

We proposed a notion of \emph{key-invariance} for \GNNs and established an initial set of results
on the expressive power of key-invariant \localmax and \localsum \GNNs across various classes of combination functions and acceptance policies,
as summarized in Tables~\ref{tab:localmxsummary}, and~\ref{tab:localsumsummary}, and Figures~\ref{fig:max GNNs hierarchy} and ~\ref{fig:sum-hierarchy}. 
Many interesting questions still remain open. To name a few:
\begin{itemize}
    \item Are key-invariant \localsumrelu \GNNs more expressive than key-oblivious ones?
    \item Are key-invariant \localmax \GNNs subsumed by FO? (we only showed subsumption by $\ordinvfo$)
    \item Are key-invariant \localsumsemilinear \GNNs subsumed by $\ordinvfoc$?
    \item Is every key-invariant \localmax \GNN equivalent to one with semilinear combination functions?
\end{itemize}

In addition, we identified several directions in Section~\ref{sec:discussion} that remain to be explored, such as key-invariance for specific positional encodings and key-invariance under specific symmetry groups. 

Our analysis draws on a range of techniques from logic and functional analysis, including collapse results from embedded finite model theory; functional bisimulations and coverings; modal and dynamic logics; and moment-injective real-valued functions. We expect that a systematic study of invariant \GNNs will deepen our understanding of their behavior in practical settings, to which the methods presented here provide a starting point.

This work is very much in line with research on order-invariant logics and on generic definability over embedded models, which has played an important role in the development of descriptive complexity and finite model theory. We view invariant definability as an important perspective on the interface between logic and computation.

\begin{acks}
Martin Grohe is funded by the European Union (ERC, SymSim, 101054974).
\end{acks}

\bibliographystyle{ACM-reference-format}
\bibliography{references}

\newpage
\appendix
\onecolumn

\section{Preliminaries to the appendix}
\label{appendix:prelims}
\subsection{Feature Expressions}

Throughout the appendix, 
we will freely work with an alternative presentation of GNNs which is more flexible than the standard one.
Fix a set $F$ of real-valued functions $f:\reals^n\to\reals$ A
\localmaxF feature expression is any expression built up inductively as follows:
\begin{enumerate}
    \item For each $p\in\Pi$, $\feature_p$ is an atomic feature expression. In addition, $\val$ is an atomic feature expression.
    \item If $\feature_1, \ldots, \feature_n$ are feature expressions and $f\in F$ is an $n$-ary function, then 
    $f(\feature_1,\ldots,\feature_n)$ is a feature expression.
    \item If $\feature$ is a feature expression, then
    $\localmax(\feature)$ is a feature expressions.
\end{enumerate}
Analogously, a \localsumF feature expression is built up as above but using $\localsum$ instead of $\localmax$. By the \emph{aggregation depth} of a feature expression we will mean the maximal nesting of \localmax and/or \localsum operators.

The semantics of a feature expression, denoted  $\sem{\feature}$, is a function from pointed valued graphs to real numbers. Formally:
\[\begin{array}{ll}
\sem{p_i}_{\agraph}(u) &=\lab(u)_i \\
\sem{\val}_{\agraph}(u) &=\val(u) \\
\sem{f(\feature_1, \ldots, \feature_n)}_{\agraph}(u) &=f(\sem{\feature_1}_{\agraph}(u),\ldots,\sem{\feature_n}_{\agraph}(u)) \\
\sem{\localmax(\feature)}_{\agraph}(u) &=\max_{(u,v)\in E(G)}\sem{\feature}_{G}(v) \\
\sem{\localsum(\feature)}_{\agraph}(u) &=\sum_{(u,v)\in E(G)}\sem{\feature}_{G}(v) \\
\end{array}\]
where, as elsewhere, we use the convention that $\max\emptyset=0$.
Note that in feature expressions, \localmax and \localsum operate on individual real-valued features, whereas in \GNNs they are applied coordinate-wise to vectors of reals.

Feature expressions over a function class $F$ correspond in a precise way to \GNNs that use functions from $F$ as combination functions. In order to state the correspondence precisely we must address a small discrepancy: for a set $F$ of real valued functions $f:\reals^n\to\reals$, let us denote by $\widehat{F}$ the set of all functions $g:\reals^n\to\reals^m$ for which there are functions $f_1,\ldots,f_m\in F$ such that $g(\vec{x})=\langle f_1(\vec{x}),\ldots,f_m(\vec{x})\rangle$. We note that all combination function classes we consider in the paper are of the form $\widehat{F}$.

\begin{proposition}
    Let $F$ be any set of real valued functions that includes all linear functions with rational coefficients.
    
    For every $\localmax$ feature expression $\feature$ using functions from $F$, there is a 
        $(|\Pi|+1,1)$ \localmax--$\widehat{F}$ \GNN $\agnn$, such that for all pointed valued graphs $\apgraph$, \[ \agnn(G,\emb_G)(u) = \langle \sem{\feature}_{\agraph}(u)\rangle.\]

        Conversely, for every  $(|\Pi|+1,D)$ \localmax--$\widehat{F}$ \GNN $\agnn$, there are $\localmax$ feature expressions $\feature_1,\ldots,\feature_D$ using functions from $F$, such that for all pointed valued graphs $\apgraph$, \[ \agnn(G,\emb_G)(u) = \langle \sem{\feature_1}_{\agraph}(u),\ldots,\sem{\feature_1}_{\agraph}(u)\rangle .\]
    
    The same applies in the \localsum case.
\end{proposition}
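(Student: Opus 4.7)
This is a routine correspondence between the inductive definitions of feature expressions and the layered structure of \GNNs. I would prove the two directions separately, each by induction.

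\emph{Forward direction (feature expression $\to$ \GNN).} The key move is to strengthen the inductive hypothesis: rather than proving the statement for a single feature expression $\feature$, I would prove by induction on the maximum structural depth of the syntax tree that for any finite tuple $(\feature_1,\ldots,\feature_D)$ of $\localmax$--$F$ feature expressions, there exists a $(|\Pi|+1,D)$ $\localmax$--$\widehat{F}$ \GNN $\agnn$ with $\agnn(G,\emb_G)(u) = \langle \sem{\feature_1}_\agraph(u),\ldots,\sem{\feature_D}_\agraph(u)\rangle$ for every pointed valued graph $\apgraph$. The strengthening is needed because a composite expression $f(\feature_1,\ldots,\feature_n)$ requires its sub-expressions to be evaluated in parallel within a single embedding vector. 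In the base case all $\feature_i$ are atomic ($\feature_p$ or $\val$), so the initial embedding $\lab(u)\oplus(\val(u))$ already supplies the required values and a single layer whose combination function projects the appropriate coordinates (possible since $F$ contains rational linear functions) suffices; the aggregation in that layer is discarded by the projection.

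For the inductive step, I would collect the immediate sub-expressions $\feature'_1,\ldots,\feature'_{D'}$ occurring in $\feature_1,\ldots,\feature_D$, apply the induction hypothesis to obtain a \GNN $\agnn'$ computing these in parallel, and then append one final layer $\calL = (\agg,\com)$. The aggregation $\agg$ is the coordinate-wise max over the neighbors of the $D'$-dimensional embedding produced by $\agnn'$, so its $j$-th coordinate at $u$ equals $\sem{\localmax(\feature'_j)}_\agraph(u)$. The combination $\com$ then builds each output coordinate $\feature_i$ from the current vector and the aggregated vector: if $\feature_i = \feature'_j$ it projects, if $\feature_i = f(\feature'_{i_1},\ldots,\feature'_{i_n})$ with $f\in F$ it applies $f$ to the relevant coordinates of the current-node part, and if $\feature_i = \localmax(\feature'_j)$ it projects the $j$-th coordinate of the aggregated part. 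Each output coordinate is thus a function in $F$ of the layer's $2D'$-dimensional input, so $\com\in\widehat{F}$ as required.

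\emph{Converse direction (\GNN $\to$ feature expressions).} I would induct on the layer index $i$, showing that for every coordinate $j$ of the embedding after layer $i$ there is a $\localmax$--$F$ feature expression $\feature_{i,j}$ with $(\calL_i\circ\cdots\circ\calL_1)(G,\emb_G)(u)_j = \sem{\feature_{i,j}}_\agraph(u)$. The base case $i=0$ is immediate from the definition of $\emb_G$: the first $|\Pi|$ coordinates are $\feature_{p_1},\ldots,\feature_{p_{|\Pi|}}$ and the last is $\val$. For the inductive step, write $\calL_i=(\agg_i,\com_i)$ with $\com_i\in\widehat{F}$ having component functions $f_{i,1},\ldots,f_{i,D_{i+1}}\in F$; since $\agg_i$ is coordinate-wise max, the $k$-th output coordinate of $\calL_i$ at $u$ equals $f_{i,k}\bigl(\feature_{i-1,1},\ldots,\feature_{i-1,D_i},\localmax(\feature_{i-1,1}),\ldots,\localmax(\feature_{i-1,D_i})\bigr)$ evaluated at $u$, which is precisely $\sem{\feature_{i,k}}_\agraph(u)$ for the feature expression $\feature_{i,k}$ obtained by substitution. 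After $\ell$ layers we obtain feature expressions $\feature_1,\ldots,\feature_D$ for the $D$ output coordinates, as desired.

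The \localsum case is identical except that coordinate-wise sum plays the role of coordinate-wise max. There is no real obstacle to overcome; the whole argument is bookkeeping, and the only subtle point is the strengthening of the inductive hypothesis in the forward direction so that sub-expressions sharing subterms can be computed jointly in a single vector-valued embedding.
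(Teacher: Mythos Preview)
Your proposal is correct and is precisely the kind of straightforward induction the paper has in mind; in fact the paper omits the proof entirely, stating only that ``it is a straightforward induction.'' Your write-up, including the strengthening of the inductive hypothesis to tuples of feature expressions in the forward direction, is exactly the right way to make that induction go through.
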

We omit the proof as it is a straightforward induction.

    In particular, this shows that 
    \localmax--$\widehat{F}$ \GNN node classifiers (which, by definition, have output dimension 1) stand in a two-way correspondence with 
    \localmax--$F$ feature expressions; and likewise in the \localsum case.
Thus, when reasoning about \GNN node classifiers
we can equivalently reason in terms of feature expressions. This can make things easier. For 
instance the \GNNs that were described informally as examples in the paper can be specified precisely and succinctly with feature expressions:

\begin{example} (Examples revisited)
\begin{itemize}
    \item
The \GNN described in Example~\ref{ex:gnn} is 
specified
by the expression
$\localmax(p_i)$.
\item
The \GNN described in Example~\ref{ex:diamond2top}
is specified
by 
the feature expression
\[|\localmax(\val)-\localmin(\val)|\]
where $\localmin(x)$ is short for $-\localmax(-x)$
and where $|x|$ is short for $\relu(x)+\relu(-x)$.
\item
The \GNN described in Example~\ref{ex:localsum}
is specified 
by the feature expression $\feature_x+|\feature_y-\feature_z|$ where $\feature_x$ is given by Theorem~\ref{thm:GMLloweroblivious} (and the above proposition), $\feature_y=\localsum(\relu(\sigmoid(\val) - 1 +\feature_p)$, and $\feature_z=\localsum(\localsum(\relu(\sigmoid(\val) - 1 +\feature_p))$.
\end{itemize}
\end{example}

In what follows, we will use \GNNs and feature expressions interchangeably.

\section{Proofs for Section~\ref{sec:prelims}}

\lemSemilinearOne*

\begin{proof}
    A well known equivalent definition of the semilinear functions is as  the functions $f:\reals^n\to \reals$ for which there is a finite 
  partition of $\reals^n$ definable by
  linear inequalities with rational coefficients, such that, within each   
  part of the partition,  $f$
  coincides with a linear function with rational coefficients. 
The  lemma follows quite immediately from this equivalent definition of semilinear functions.
\end{proof}

\lemSemilinearTwo*
\begin{proof}
      Let $f:\reals\to \reals$ be a unary semilinear function. Then there is a partition of the real line, definable by
  linear inequalities with rational coefficients, into finitely many intervals with rational end points such that $f$ behaves linearly in each interval. 
   From there it's easy to see that $f$ is 
  definable by a FFN with rational parameters using \relu and \Heavi. For example, consider the unary semilinear function
  \[ f(x)=\begin{cases} 
     f_1(x)&\text{if $x< a$}\\
     f_2(x)&\text{if $a\leq x\leq b$}\\
     f_3(x)&\text{if $x>b$}
     \end{cases}
  \]
  where $f_1,f_2,f_3$ are linear functions and $a,b \in \mathbb{Q}$. Then $f$ has an alternate definition as %michael: previous version was convoluted to parse
  $f(x) = f^{(-\infty,a)}(x) + f^{[a,b]}(x) + f^{(b,\infty)}(x)$ with:
  \begin{itemize}
      \item $f^{(-\infty,a)} = f_1(a -\relu(a-x)) - \Heavi(x-a)f_1(a)$
      \item $f^{[a,b]} = f_2(b -\relu(b-a-\relu(x-a))) - (1-\Heavi(x-a))f_2(a) - (1-\Heavi(b-x))f_2(b)$
      \item $f^{(b,\infty)} = f_3( b -\relu(x-b) ) - \Heavi(b-x)f_3(b)$
  \end{itemize}  
  Note that $f^{(-\infty,a)}(x)$ is equal to $f_1(x)$ for $x$ in the interval $(-\infty,a)$ and is equal to zero for all other inputs; and likewise for the other functions above. In each case,  $\relu$ is used (possibly twice) to clip the input of the function to the required interval, and $H$ is used to correct the offset outside of this interval.
\end{proof}
\section{Proofs for Section~\ref{sec:oblivious}}

\subsection{Proofs of Theorem~\ref{thm:localsumcontinuous expresses all local CR-invariant queries} and Proposition~\ref{prop:bisimulation-localmaxobliv}: Combinatorial Characterizations of Key-oblivious \GNNs}

The proof of Theorem \ref{thm:localsumcontinuous expresses all local CR-invariant queries} requires several auxiliary results. %michael: added

\begin{lemma} \label{lem:bijective}
    Let $\mathcal{M}(\mathbb{N})$ be the 
    set of all finite multisets of natural numbers. 
    There are continuous functions $f,g:\mathbb{R}\to\mathbb{R}$ such that the function $\multiset{n_1, \ldots, n_k} \mapsto 
        g(\Sigma_i f(n_i))$ is a bijection from $\mathcal{M}(\mathbb{N})$ to $\mathbb{N}$. 
\end{lemma}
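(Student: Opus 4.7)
The plan is to convert the additive structure demanded by the statement into the multiplicative structure of the fundamental theorem of arithmetic, by passing through a logarithm. Enumerate the primes in increasing order as $p_0 = 2$, $p_1 = 3$, $p_2 = 5$, and so on. I will take $f$ to agree with $n \mapsto \log p_n$ at every natural number (extending continuously elsewhere in an arbitrary way), and set $g(x) = e^x - 1$.

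Under these choices, a multiset $M = \multiset{n_1, \ldots, n_k} \in \multisets{\NN}$ is mapped to
\begin{equation*}
  g\Bigl(\sum_{i=1}^k f(n_i)\Bigr) \;=\; \exp\Bigl(\sum_{i=1}^k \log p_{n_i}\Bigr) - 1 \;=\; \Bigl(\prod_{i=1}^k p_{n_i}\Bigr) - 1.
\end{equation*}
By the fundamental theorem of arithmetic, the assignment $M \mapsto \prod_{i} p_{n_i}$ is a bijection from $\multisets{\NN}$ onto the positive integers: the empty multiset is sent to the empty product $1$, and every positive integer $m$ has a unique factorization $\prod_i p_i^{a_i}$, which in turn encodes the unique multiset in which $i$ occurs with multiplicity $a_i$. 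Subtracting $1$ then shifts this to a bijection onto $\NN = \{0,1,2,\ldots\}$, as required.

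It remains to justify that $f$ can be chosen continuous on all of $\reals$. This is the only step involving real analysis, and it is trivial: since $\NN$ is discrete in $\reals$, any prescription of real values at the integer points extends to a continuous function $\reals \to \reals$, for instance by piecewise linear interpolation between consecutive integer points. The values of $f$ outside $\NN$ play no further role, and $g(x) = e^x - 1$ is manifestly continuous on $\reals$. I therefore anticipate no substantive obstacle; the entire content of the proof is the idea of trading additive uniqueness for multiplicative uniqueness via $\log$ and $\exp$.
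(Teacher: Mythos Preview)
Your proof is correct and takes a genuinely different route from the paper's. The paper sets $f(x)=e^x$, invokes the (stated but not proved) fact that $\multiset{n_1,\ldots,n_k}\mapsto\sum_i e^{n_i}$ is injective on finite multisets of naturals, then argues that the image set $S\subseteq\reals$ is discrete and countably infinite, so that one can extend some bijection $S\to\NN$ to a continuous $g:\reals\to\reals$. Your approach instead chooses $f(n)=\log p_n$ and $g(x)=e^x-1$, reducing the injectivity step to unique prime factorisation and making both $g$ and the bijection onto $\NN$ completely explicit. This is more elementary and self-contained: you avoid both the appeal to moment-injectivity of the exponential and the separate discreteness argument, at the cost only of needing to extend $f$ off $\NN$ by interpolation (which the paper also implicitly needs for its $g$). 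The paper's choice of $f$ has the mild advantage of being a single closed-form continuous function rather than a piecewise interpolant, but for the purposes of this lemma that is immaterial.
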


\begin{proof}
    Let $f(x)=e^x$. It is known that $f$ is moment-injective on the natural numbers, meaning that  
    the map $\multiset{n_1, \ldots, n_k}\mapsto\Sigma_i f(n_i)$  injectively maps finite multisets of natural numbers to reals.
    Furthermore, it can be shown that $S=\{\sum_{i} e^{n_i}\mid \multiset{n_1, \dots, n_k} \text{ a finite multiset of natural numbers}\}$ is a discrete countably infinite set, i.e. for every $x \in S$ there is $\epsilon >0$ so that $(x-\epsilon, x+\epsilon)\cap S = \{x\}$. To see this, let $x \in S$, there exists some $j \in \mathbb{N}$ such that $x < e^j$. Then $(0, e^j]\cap S$ is a finite set, hence we let $\epsilon = min_{y \in (0, e^j]\cap S} \frac{|y-x|}{2}$. 
    Since $S$ is discrete we can apply a continuous bijection from $S$ to $\mathbb{N}$.
\end{proof}
\begin{proposition}
\label{prop:Oblivious continuous sum implements WL}
    For each $n\geq 0$, there is a key-oblivious local sum GNN $\feature_n$ with continuous combination functions, such that
    in each (unkeyed) graph $G$ and each vertex $v$, $\feature_n(v)\in\mathbb{N}$ 
    is the color assigned to $v$ by the CR-algorithm after $n$ rounds.
\end{proposition}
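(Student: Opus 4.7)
The plan is induction on $n$, with Lemma~\ref{lem:bijective} carrying the inductive step. Since the $\HASH$ function fixed in Section~\ref{subsec:colorrefbisimulation} is only required to be injective, we may without loss of generality take $\HASH$ to be the specific continuous hashing that our construction produces; this does not affect the induced partition on vertices and indeed makes the GNN's output literally coincide with the CR color.

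For the base case $n=0$, we define the key-oblivious \localsum feature expression
\[
\feature_0 \;:=\; \sum_{i=1}^{|\Pi|} 2^{i-1}\cdot \feature_{p_i},
\]
which uses only a linear (hence continuous) combination function applied to the input label features and no aggregation. This map is injective on $\{0,1\}^{|\Pi|}$ with image in $\{0,\ldots,2^{|\Pi|}-1\}\subseteq\Nat$, and we take the restriction of $\HASH$ to $2^\Pi$ to be this encoding. For the inductive step, assume $\feature_{n-1}$ is a key-oblivious \localsum feature expression with continuous combination functions satisfying $\sem{\feature_{n-1}}_{\agraph}(u)=CR^{(n-1)}(u)$ for every $\agraph$ and $u$. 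Let $f,g$ be the continuous functions supplied by Lemma~\ref{lem:bijective}, and let $\pi:\reals^2\to\reals$ be any continuous pairing whose restriction to $\Nat^2$ is a bijection onto $\Nat$ (e.g., the Cantor polynomial $\pi(x,y)=\tfrac12(x+y)(x+y+1)+y$). Define
\[
\feature_n \;:=\; \pi\bigl(\feature_{n-1},\; g(\localsum(f(\feature_{n-1})))\bigr).
\]
Here $f$ is applied pointwise as a unary combination function, $\localsum$ aggregates the transformed values of neighbors, and $g$ and $\pi$ are continuous combination functions that merge the resulting multiset-hash with the node's previous color.

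Setting $\HASH$ on $\Nat\times\multisets{\Nat}$ to be $(c,\multiset{c_1,\ldots,c_k})\mapsto \pi\bigl(c,\,g(\sum_i f(c_i))\bigr)$ yields an injection into $\Nat$, by Lemma~\ref{lem:bijective} together with the injectivity of $\pi$ on $\Nat^2$. Thus by the inductive hypothesis, $\sem{\feature_n}_{\agraph}(u)$ equals $\HASH\bigl(CR^{(n-1)}(u),\multiset{CR^{(n-1)}(v)\mid (u,v)\in E(\agraph)}\bigr)=CR^{(n)}(u)$, as required. No step presents a genuine obstacle: the crux---injectively hashing multisets of natural numbers through continuous real operations---is already delivered by Lemma~\ref{lem:bijective}; the only mild subtlety is the freedom to specify $\HASH$, which is harmless since CR is defined only up to the arbitrary choice of an injective hash.
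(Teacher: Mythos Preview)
Your proposal is correct and follows the same inductive strategy as the paper, using Lemma~\ref{lem:bijective} to hash the multiset of neighbor colors. The one difference is in the pairing step: the paper encodes the pair (own previous color, multiset-hash) by applying Lemma~\ref{lem:bijective} a second time (writing $g\bigl(f(\feature_{n-1})+f(g(\localsum(f(\feature_{n-1}))))\bigr)$), whereas you use the Cantor polynomial. Your choice is in fact the cleaner one: Cantor pairing is injective on \emph{ordered} pairs, while the paper's second application of the multiset hash only encodes the \emph{unordered} pair $\{a,b\}$ and so, as written, could in principle conflate $(a,b)$ with $(b,a)$; a small patch (e.g., first sending one argument to the even naturals and the other to the odd naturals) is needed to make that version fully correct. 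Your proof also makes explicit the harmless freedom in choosing $\HASH$, which the paper leaves implicit.
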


\begin{proof}
    By induction on $n$.
    The case for $n=0$ is
    simply because there are only finitely many local types. For $n+1$, recall
    that $col_{n+1}(v)=HASH(col_n(v),\multiset{col_n(u)\mid vEu})$. We apply the above lemma twice: let
    \[f'= g(f(\feature_n)+f(g(\localsum(f(\feature_n)))\]
    By construction and induction hypothesis, 
    $f'$ outputs a distinct natural number for each possible value
    of $col_{n+1}$. It remains only to map these natural numbers back to the actual corresponding values of $col_{n+1}$. This can be done by a continuous function, because every bijection on the natural numbers can be extended to a continuous function over the reals. 
\end{proof}

\begin{theorem}[{\cite{Otto04}, cf.~\cite[Proposition 3.7]{Otto12}}]
\label{thm:highgirthcover}
Let $G$ be an undirected graph and $k\ge 1$. Then there is a finite cover of $G$ of girth at least $k$.
\end{theorem}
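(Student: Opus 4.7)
The plan is to build the cover using the correspondence between graph coverings and subgroups of the fundamental group. First, I reduce to the connected case: if we can find, for each connected component $C$ of $G$, a finite cover $\widetilde{C}\covers C$ of girth at least $k$, then the disjoint union $\widetilde{G}=\bigsqcup_C \widetilde{C}$ is a finite cover of $G$ of girth at least $k$. So assume $G$ is connected, and fix a base vertex $u\in V(G)$.

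Next, I recall (and sketch verification of) the standard correspondence: connected coverings of $G$ correspond, up to equivalence, to conjugacy classes of subgroups of the fundamental group $\pi_1(G,u)$, and a cover is finite iff the corresponding subgroup has finite index. Under this correspondence, a closed walk $w$ in $G$ starting at $u$ lifts to a closed walk in the cover iff the homotopy class $[w]\in \pi_1(G,u)$ lies in the subgroup. In particular, if $N\leq \pi_1(G,u)$ is a finite-index subgroup and the covering $\widetilde{G}_N \covers G$ has a cycle of length $\ell$, then (by choosing a path in $\widetilde{G}_N$ from a fixed preimage of $u$ to a vertex on this cycle) there is a non-trivial element of $N$ representable as a reduced closed walk of length at most $\ell+2\cdot\mathrm{diam}(\widetilde{G}_N)$; by conjugating, we may restrict attention to a fixed finite set of non-trivial elements of $\pi_1(G,u)$ of bounded word length that we must avoid.

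Third, I invoke residual finiteness of free groups. Since $G$ is connected, $\pi_1(G,u)$ is a free group (generated by the non-spanning-tree edges). The set $S\subseteq \pi_1(G,u)$ of non-trivial elements represented by reduced closed walks at $u$ of length less than $k$ is finite. Free groups are residually finite, so for every $s\in S$ there is a finite-index normal subgroup $N_s\trianglelefteq \pi_1(G,u)$ with $s\notin N_s$; the intersection $N=\bigcap_{s\in S} N_s$ is again finite-index and normal. Care is needed, however, to account for cycles that are not based at the chosen preimage of $u$: since we take $N$ normal, conjugates of $S$ are also avoided, so no reduced closed walk of length less than $k$ (based anywhere) lifts to a closed walk in $\widetilde{G}_N$. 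Hence the corresponding cover $\widetilde{G}_N$ is finite, connected, and has girth at least $k$.

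The main obstacle I expect is making the correspondence between short cycles in the cover and short elements of the fundamental group fully precise in the undirected simple-graph setting of the paper, where the covering relation $\covers$ is defined without reference to edge orientations or a CW-structure. If the algebraic framework feels heavy, an alternative is a direct construction via random permutation voltages: pick $n$ large, take vertex set $V(G)\times[n]$, and for each edge $\{u,v\}\in E(G)$ glue the two copies using a uniformly random permutation in $S_n$. The projection $(u,i)\mapsto u$ is automatically a covering, and a standard union bound over the at-most $|V(G)|\cdot\Delta^k$ non-backtracking closed walks of length less than $k$ (where $\Delta$ is the maximum degree) shows that each such walk lifts to a cycle with probability $O(1/n)$, so for $n$ sufficiently large a cover of girth at least $k$ exists with positive probability.
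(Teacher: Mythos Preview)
Your overall strategy via residual finiteness of $\pi_1(G,u)$ is a standard and viable route, but the set $S$ you define in the third paragraph is wrong, and this is a genuine gap. You take $S$ to be the non-trivial elements of $\pi_1(G,u)$ represented by reduced closed walks \emph{at $u$} of length $<k$, and then claim that because $N$ is normal, avoiding conjugates of $S$ handles short cycles based anywhere. But a short cycle far from $u$ need not lie in the conjugacy class of any element of your $S$. Concretely: let $G$ be a path of length $10$ from $u$ to $v$ with a triangle attached at $v$, and take $k=5$. Then $\pi_1(G,u)\cong\mathbb{Z}$, every non-trivial reduced closed walk at $u$ has length $\ge 23$, so $S=\emptyset$; your construction allows $N=\pi_1(G,u)$, whose cover is $G$ itself with girth $3$. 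The fix is easy once seen: let $S$ be a set of representatives of the finitely many conjugacy classes in $\pi_1(G,u)$ corresponding to cyclically non-backtracking closed walks of length $<k$ \emph{anywhere} in $G$; then the argument goes through.

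Your alternative via random permutation lifts also does not work as a union bound. For a fixed cyclically reduced walk $w$ in $G$ and a fixed fibre index $i\in[n]$, the probability that the lift of $w$ starting at index $i$ closes up is indeed $O(1/n)$; but a cycle in the cover can start at any of the $n$ indices, so the expected number of short cycles in the cover is $\Theta(1)$, not $o(1)$, and a first-moment bound cannot conclude. (If instead you mean ``some lift of $w$ is closed'', that probability is bounded away from $0$.) The random-lift statement is true, but it needs a Poisson-approximation or second-moment argument in the style of Amit--Linial, not a union bound.

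For comparison, the paper takes a different, fully explicit route: properly edge-colour $G$ with colours $1,\ldots,d$, invoke a lemma of Alon producing a finite group $\Gamma$ with involutive generators $\delta_1,\ldots,\delta_d$ such that no reduced word of length $<k$ in the $\delta_i$ equals the identity, and form the cover on vertex set $V(G)\times\Gamma$ with edges $(v,\gamma)(w,\gamma\delta_i)$ whenever $vw$ is an $i$-coloured edge. This is a voltage-graph construction with deterministic voltages; the high-girth property follows directly from Alon's lemma, with no appeal to covering-space theory or probability.
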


Since \cite{Otto04,Otto12} uses a slightly different notion of covers, we include a self-contained proof below. It uses the following lemma. 

\begin{lemma}[\cite{Alon1995}]
Let $k,d\ge 1$. There is a finite group $\Gamma$ with a generating set $\Delta$ of order $|\Delta|=d$ such that

\begin{enumerate}
\item $\delta=\delta^{-1}$ for all $\delta\in\Delta$;
\item for all $\ell\ge 1$, $\delta_1,\ldots,\delta_\ell\in\Delta$ with $\delta_i\neq\delta_{i+1}$ for all $i$,
if $\delta_1\ldots\delta_\ell=1$ then $\ell\ge k$.
\end{enumerate}
\end{lemma}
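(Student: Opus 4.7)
The plan is to obtain $\Gamma$ as a finite quotient of the free product $F = \mathbb{Z}/2 \ast \mathbb{Z}/2 \ast \cdots \ast \mathbb{Z}/2$ of $d$ copies of $\mathbb{Z}/2$, with free generators $x_1, \ldots, x_d$. The key structural fact I would invoke is the normal form theorem for free products: every non-identity element of $F$ has a unique expression as a reduced word $x_{i_1} x_{i_2} \cdots x_{i_\ell}$ with $\ell \ge 1$ and $i_j \neq i_{j+1}$ for every $j$. In particular, no reduced word equals the identity in $F$. The target finite group $\Gamma$ will be a quotient of $F$ in which this non-triviality of reduced words is preserved up to length $k-1$.

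The main step is to produce such a finite quotient. For this I would appeal to residual finiteness of $F$: since each factor $\mathbb{Z}/2$ is finite, the free product $F$ is residually finite (this is the classical theorem of Gruenberg on free products of residually finite groups; alternatively one may cite Mal'cev's theorem together with the fact that $F$ admits a faithful representation into $GL_n$ over some number field). There are only finitely many reduced words of length at most $k-1$ in $F$, say $w_1, \ldots, w_N$, and none of them is the identity. By residual finiteness, for each $w_r$ there is a normal subgroup $N_r \trianglelefteq F$ of finite index with $w_r \notin N_r$. Set $N = \bigcap_r N_r$; this is normal, of finite index, and avoids each $w_r$. Define $\Gamma = F/N$ and let $\Delta = \{\delta_1, \ldots, \delta_d\}$ be the images of $x_1, \ldots, x_d$ in $\Gamma$.

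It remains to verify the two listed properties together with $|\Delta| = d$. Property (1) is immediate: each $x_i$ has order dividing $2$ in $F$, so $\delta_i^2 = 1$ in $\Gamma$, hence $\delta_i = \delta_i^{-1}$ (and $\delta_i \neq 1$ for $k \ge 2$ since $x_i$ itself is a reduced word of length $1$, hence one of the $w_r$'s). Property (2) follows by construction: if $\delta_{i_1} \cdots \delta_{i_\ell} = 1$ in $\Gamma$ with $i_j \neq i_{j+1}$ and $\ell < k$, then $x_{i_1} \cdots x_{i_\ell} \in N$, contradicting our choice of $N$ since the left-hand side is a reduced word of length $< k$ in $F$. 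Finally, for $k \ge 3$ we have $|\Delta| = d$: if $\delta_i = \delta_j$ with $i \neq j$, then $\delta_i \delta_j = 1$ would violate (2) for $\ell = 2 < k$. The small cases $k \in \{1,2\}$ can be handled separately by taking $\Gamma = (\mathbb{Z}/2)^d$ with the standard generators.

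The main obstacle is invoking residual finiteness of the free product at the right level of generality; once that is cited as a black box, the rest of the argument is bookkeeping about reduced words. An alternative, fully explicit route would be to take $\Gamma$ to be a quotient of $F$ realized inside $PSL_2(\mathbb{F}_p)$ for a sufficiently large prime $p$ (as in Alon's original expander construction), but this is substantially heavier and unnecessary for the purely existential statement.
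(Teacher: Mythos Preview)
Your proof is correct but takes a genuinely different route from the paper's. The paper gives an explicit, elementary construction: take a finite $d$-regular tree $T$ whose root is at distance at least $k/2$ from every leaf, properly edge-color $T$ with $d$ colors, and let $\delta_i$ be the involution of $V(T)$ swapping each node with its color-$i$ neighbor (fixing leaves lacking one). The group $\Gamma$ is the subgroup of $\mathrm{Sym}(V(T))$ generated by these involutions; condition~(2) follows because a reduced word in the $\delta_i$'s walks the root along a non-backtracking path until it hits a leaf, forcing length at least $k$ before it can return.

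Your approach instead realizes $\Gamma$ abstractly as a finite quotient of the free product $(\mathbb Z/2)^{*d}$, invoking residual finiteness of free products of finite groups (Gruenberg) to kill no reduced word of length below $k$. This is conceptually cleaner and generalizes with no extra work, but treats a nontrivial theorem as a black box and yields no explicit description of $\Gamma$ or bound on its order. The paper's tree construction is more hands-on and entirely self-contained, and gives a concrete permutation representation with size roughly $d^{k/2}$; since the lemma is used only existentially in the subsequent high-girth cover construction, either argument suffices for the paper's purposes.
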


\begin{proof}
    Take a finite $d$-regular tree $T$ that has a node $r$ (the
``root'') of distance at least $k/2$ from all leaves. color its edges
with $d$ colors $1,\ldots,d$ such that each node is either a leaf or is
incident with exactly one edge of color $i$. The group we shall
construct is a subgroup of the symmetric group on $V(T)$. For
$i=1,\ldots,d$, let $\delta_i$ be the permutation of $V(T)$  that maps
each node $v$ to its unique $i$-neighbor, or fixes $v$ if it is a
leaf without an $i$-neighbor. Let $\ell\ge 1$ and $1\le
i_1,\ldots,i_\ell\le d$ such that $i_j\neq i_{j+1}$ for all $j$ and
$\delta:=\delta_{i_1}\ldots\delta_{j_\ell}=1$. Then $r^\delta=r$ for the
root $r$. To achieve this, $\delta$ must first take $r$ to a leaf and
then back to $r$. This implies that $\ell\ge k$, which proves the lemma.
\end{proof}

Recall that the \emph{girth} of an undirected graph is the
size of the smallest cycle (or $\infty$ if the graph is acyclic). By a \emph{cover} of a graph $G$ we will mean a graph $H$ such that $H\covers G$.

\begin{proof}[Proof of Theorem~\ref{thm:highgirthcover}]
Take an arbitrary edge-coloring of $G=(V,E)$ such that no vertex is
incident to two edges of the same color. Say, the colors are $1,\ldots,d$,
and let $E_i$ be the set of all edges of color $i$. Let $\Gamma$ be a group
with generating set $\Delta=\{\delta_1,\ldots,\delta_d\}$ as in the lemma.

We define a graph $H$ as the product of $G$ with the Cayley graph of
$\Gamma$ in the following way: the vertex set is $V(H):=V\times\Gamma$
and the edge set is
$$
E(H):=\{ (v,\gamma)(w,\gamma\delta_i) | i\in[d],vw\in E_i \}.
$$
Then $H$ is an undirected graph, because
$\delta_i=\delta_i^{-1}$ for all $i$. Moreover, the projection
$\pi:(v,\gamma)\mapsto v$ is a covering map:
\begin{itemize}
\item it is surjective;
\item it is a homomorphism, because if $(v,\gamma)(w,\gamma')\in E(H)$ then
  for some $i\in[d]$ it holds that $vw\in E_i\subseteq E$;
\item  it is locally bijective, %michael: not defined, but I guess it is pretty clear
because if $v\in V$ with neighbors
  $w_1,\ldots,w_\ell$ such that $vw_i\in E_{j_i}$ and
  $\gamma\in\Gamma$ then the neighbors of $(v,\gamma)$ are $(w_1,\gamma\delta_{j_1}),\ldots,(w_\ell,\gamma\delta_{j_\ell})$.
\end{itemize}

Finally, the girth of $H$ is at least $k$. To see this, suppose that
$(v_1,\gamma_1),\ldots,(v_\ell,\gamma_\ell)$ is a cycle in $H$. For
every $i\in[\ell]$, let $j_i$ be the color of the edge $v_iv_{i+1}$,
where we take $(v_{\ell+1},\gamma_{\ell+1}):=(v_1,\gamma_1)$. Then
$j_i\neq j_{i+1}$, and we have
$\gamma_1=\gamma_1\delta_{j_1}\ldots\delta_{j_\ell}$, which implies
$\delta_{j_1}\ldots\delta_{j_\ell}=1$ and hence $\ell\ge k$.  
\end{proof}

\begin{lemma}
\label{lem:same root color implies isomorphic pointed trees}
Let $r\geq 0$, and let $\apgraph, \altpgraph$ be pointed trees such that $CR^{(r)}(\apgraph) = CR^{(r)}(\altpgraph)$. Then $\apgraph \rneighborhood$ is isomorphic to $\altpgraph \rneighborhood$.    
\end{lemma}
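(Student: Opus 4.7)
My plan is to proceed by induction on $r$, but with a slightly strengthened hypothesis that keeps track of where a prescribed neighbor gets mapped. Specifically, I will prove: if $CR^{(r)}(\apgraph) = CR^{(r)}(\altpgraph)$ for pointed trees, then there is an isomorphism $\phi : \apgraph\rneighborhood \to \altpgraph\rneighborhood$ of pointed graphs; moreover, whenever $r \geq 1$ and a pair $u_0 \in N_{\agraph}(u),\, v_0 \in N_{\altgraph}(v)$ satisfies $CR^{(r-1)}(\agraph^{u_0}) = CR^{(r-1)}(\altgraph^{v_0})$, the isomorphism can additionally be required to satisfy $\phi(u_0) = v_0$. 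This extra clause is the crux of the argument: it is what allows the recursive step to control the ``parent'' edge when it later serves as a child edge in a subproblem.

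The base case $r = 0$ will be immediate: $CR^{(0)}$-equality forces $\lab(u)=\lab(v)$ by injectivity of \HASH, the $0$-neighborhoods are singletons, and the neighbor-matching clause is vacuous. For the inductive step, assume the statement holds at level $r$ and suppose $CR^{(r+1)}(\apgraph) = CR^{(r+1)}(\altpgraph)$. Injectivity of \HASH will yield $\lab(u)=\lab(v)$ together with the multiset identity $\multiset{CR^{(r)}(\agraph^{u_i}) : u_i \in N(u)} = \multiset{CR^{(r)}(\altgraph^{v_j}) : v_j \in N(v)}$. From this I extract a witnessing bijection $\sigma : N(u) \to N(v)$, chosen so that $\sigma(u_0)=v_0$ when a pair is prescribed. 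For each $i$, I then apply the induction hypothesis to the pair of pointed trees $(\agraph^{u_i}, \altgraph^{\sigma(u_i)})$, with the pair of neighbors $(u,v)$ as the prescribed matching; the side condition $CR^{(r-1)}(\agraph^u) = CR^{(r-1)}(\altgraph^v)$ follows automatically from the assumed $CR^{(r+1)}$-equality, since $CR^{(r+1)}$ refines $CR^{(r-1)}$. This yields an isomorphism $\phi_i : (\agraph^{u_i})\rneighborhood \to (\altgraph^{\sigma(u_i)})\rneighborhood$ with $\phi_i(u_i)=\sigma(u_i)$ and $\phi_i(u)=v$.

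The final step is to glue the local isomorphisms into a global one using the tree structure. Because $\agraph$ is a tree, the $(r+1)$-neighborhood of $u$ decomposes as $\{u\}$ together with the disjoint union, over neighbors $u_i$ of $u$, of the ``downward'' subtrees $D_i$, each consisting of $u_i$ and its descendants at distance at most $r$ going away from $u$. A direct tree argument shows that $D_i$ is precisely the connected component of $u_i$ in $(\agraph^{u_i})\rneighborhood \setminus \{u\}$, since any path from $u_i$ to a node in another sibling subtree must pass through $u$. Since each $\phi_i$ sends $u \mapsto v$, it restricts to a bijection from $D_i$ onto the analogous downward subtree $E_{\sigma(u_i)}$ on the $\altgraph$ side. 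Setting $\Phi(u) := v$ and $\Phi|_{D_i} := \phi_i|_{D_i}$ for each $i$ then produces a well-defined pointed graph isomorphism $\apgraph\upharpoonright(r+1) \to \altpgraph\upharpoonright(r+1)$, with edge- and label-preservation checked directly, and the optional $\Phi(u_0)=v_0$ clause inherited from the choice of $\sigma$. The main technical obstacle is precisely that, given only that $\phi_i$ is an abstract isomorphism of pointed $r$-neighborhoods, there is no intrinsic way to locate $u$ among the neighbors of $u_i$; the strengthened induction hypothesis is exactly what removes this ambiguity and ensures the local pieces mesh into a coherent global map.
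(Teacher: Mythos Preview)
Your inductive approach is correct and complete in all essential respects; the only cosmetic wrinkle is the case $r=0$ of the inductive step, where the prescription clause is vacuous and $\phi_i$ is just the one-point map $u_i\mapsto\sigma(u_i)$, so the phrase ``$\phi_i(u)=v$'' should simply be dropped there. With that small adjustment the gluing argument goes through exactly as you describe.

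The paper argues rather differently. Instead of an induction on $r$, it takes a single pair $\apgraph,\altpgraph$ and picks inclusion-maximal downward-closed sets $X\subseteq V(\agraph)$, $X'\subseteq V(\altgraph)$ admitting a bijection $h:X\to X'$ that is an isomorphism of induced subgraphs, preserves distance to the root, and satisfies the level-by-level color condition $CR^{(r-i)}(\agraph^x)=CR^{(r-i)}(\altgraph^{h(x)})$ whenever $\text{dist}(u,x)=i$. It then derives a contradiction to maximality by showing that any missing vertex at the smallest unreached depth can be matched, using the multiset equality of neighbor colors at the parent. Your recursive construction is more explicit and avoids the maximality/contradiction machinery, at the cost of carrying the strengthened ``prescribed neighbor'' hypothesis through the induction; the paper's argument avoids that strengthening by encoding the same information directly into the invariant on $h$. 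Both proofs ultimately exploit the same combinatorial fact---that $CR^{(r-i+1)}$-equality at a parent forces a color-matching bijection among children---but package it differently.
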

\begin{proof}

We choose inclusion-wise maximal sets $X\subseteq V(\agraph)$ and $X'\subseteq V(\altgraph)$ such that there is a bijective mapping $h:X\to X'$ satisfying the following conditions.
\begin{enumerate}
    \item $u \in X, v \in X'$ and $h(u)=v$.
    \item $h$ is an isomorphism from the induced subgraphs of $X$ and $X'$.
    \item If $h(x)=x'$ then $dist(u,x)=dist(v,x')$, and if $dist(u,x)=i$ then $CR^{(r-i)}(\agraph^x)=CR^{(r-i)}(\altgraph^{x'})$.
    \item $X$, $X'$ are downward closed, that is, if $x\in X$ and $y$ is the parent of $x$ then $y\in X$, and similarly for $X'$.
\end{enumerate}
Such sets $X,X'$ exists, because $X=\{u\},X'=\{v\}$ with $h:u\mapsto v$ satisfy conditions (1)--(4). We claim that $\apgraph \rneighborhood \subseteq X$ and $\altpgraph \rneighborhood \subseteq X'$.

Suppose for contradiction that the claim is false. Let $i\le r$ be the minimum such that either there is a vertex $y\in V_G\setminus X$ with $dist(u,y)=i$ or there is a vertex $y'\in V_H\setminus X'$ with $dist(v,y')=i$. By symmetry, we may assume the former, so let  $y\in V_G\setminus X$ with $dist(u,y)=i$. Let $x$ be the parent of $y$. Then $x\in X$. Let $x':=h(x)$, let $d:=CR^{r-i}(\agraph^y)$ and let $y_1,..,y_k$ be a list of all neighbors of $x$ in $\agraph$ such that $CR^{r-i}(\agraph^{y_i})=d$. Assume that $y=y_k$ and for some $\ell\in[k]$ we have $y_1,..y_{\ell-1}\in X$ and $y_\ell,...,y_k\not\in X$. Since $CR^{r-i+1}(\agraph^x)=CR^{r-i+1}(\altgraph^{x'})$, there are distinct neighbors $y_1',...,y_k'$ of $x'$ in $\altgraph$ such that $CR^{r-i}(\altgraph^{y_i'})=d$. Among these neighbors are $h(y_1),..,h(y_{\ell-1})$. Without loss of generality, we may assume that $h(y_j)=y_j'$ for $j\in[\ell-1]$. Then $y_{\ell}',\ldots,y_k'\not\in X'$, because if $y_j'\in X'$ then $h^{-1}(y_j')$ is a neighbor of $x$ in $X$ and hence among $y_1,...,y_{\ell-1}$. In particular, $y_k'$ is a neighbor of $x'$ in $V(H)\setminus X'$ with  $CR^{r-i}(\altgraph^{y'_k})=d=CR^{r-i}(\agraph^{y_k})$.
Since $X'$ is downward closed, $y_k'$ is not the parent of $x'$ in $\altpgraph$ and hence $dist(v,y_k')=dist(v,x')+1=dist(u,x)+1=dist(u,y)=i$.

Thus we can extend $X,X',h$ by adding $y_k,y_k'$. This contradicts the maximality of $X,X'$.
\end{proof}

We are now ready to prove Theorem \ref{thm:localsumcontinuous expresses all local CR-invariant queries}.

\thmOblivLocalSumCR*

\begin{proof}
    (4) to (2) is by Proposition \ref{prop:Oblivious continuous sum implements WL}: we can compute $col_n$. 
    Finally, for any subset $X\subseteq \mathbb{N}$ there is a
    continuous function $f$  over the reals such that $f(n)=1$ for
    $n\in X$ and $f(n)=0$ for
    $n\in\mathbb{N}\setminus X$.

    (2) to (1) is immediate.

    (1) to (3):  strong locality is immediate (where the radius is the number of layers of the \GNN); CR-invariance was proven in \cite{howpowerful,goneural}.

    (3) to (4): 
    Let $\calQ$ be strongly local with radius $r$, and CR-invariant. Furthermore, 
    let $\apgraph$ and $\altpgraph$ be pointed graphs such that
$CR^{(r)}\apgraph=CR^{(r)}\altpgraph$ and $\altpgraph\in\calQ$.
It is our task to show that $\apgraph\in\calQ$.
Let $\widehat{G}^u$ and $\widehat{H}^v$ be covers of $\apgraph$ and $\altpgraph$ of girth greater than $r$ (as provided by Theorem~\ref{thm:highgirthcover}, where the node features of each node are inherited through the covering projection).
Now consider the subgraphs
$\widehat{G}^u\upharpoonright r$ and $\widehat{H}^v\upharpoonright r$. By construction, these are acyclic, and $CR^{(r)}(\widehat{G}^u \rneighborhood) = CR^{(r)}(\widehat{H}^v \rneighborhood)$. Hence, by Lemma \ref{lem:same root color implies isomorphic pointed trees} they are isomorphic. 

Thus, we have the following diagram.
    \[
\begin{tikzcd}[row sep=4em, column sep=7em]
\widehat{G}^u\upharpoonright r \arrow[r, no head, "\text{Isomorphic}"] & \widehat{H}^v\upharpoonright r \\
\widehat{G}^u \arrow[u, no head, "\text{Locally isomorphic}"] & \widehat{H}^v  \arrow[u, no head, "\text{Locally isomorphic}"']\\
\apgraph \arrow[u, no head, "\text{CR-equivalent}"] \arrow[r, no head, "\text{$r$-round CR-equivalent}"] & \altpgraph \in \calQ \arrow[u, no head, "\text{CR-equivalent}"']
\end{tikzcd}
\]
Chasing the diagram, we obtain that $\apgraph\in\calQ$.
\end{proof}

\propbisimulationlocalmaxobliv*

\begin{proof}
    The implication from (2) to (1) is immediate. 
    
    From (1) to (3), it is clear that every node query expressed by a \GNN with local aggregation
    is strongly local (where the locality radius is determined by the number of layers). The 
    fact that \localmax \GNNs are invariant for bisimulations is shown in the cited papers. 
    
    The direction (3) to (2) is not proven in the cited papers, but follows by standard arguments from the modal logic literature. We describe the argument in some detail. Let $r$ be the radius with respect to which $\calQ$ is strongly local.
    We define an \emph{$r$-bisimulation} between graphs $\apgraph$ and $\altpgraph$ to be a tuple
    $(Z_0, Z_1, \ldots, Z_r)$ with
     $Z_0 \subseteq Z_1 \dots \subseteq Z_r \subseteq V(G)\times V(H)$ so that $(u,v) \in Z_0$, and for all $(u_0,v_0) \in Z_i$:
    \begin{itemize}
        \item $\lab_G(u_0)=\lab_H(v_0)$;
        \item (for $i<r$) For each edge $(u_0,u_1)\in E(G)$ there is an edge $(v_0,v_1)\in E(H)$ with $(u_1,v_1)\in Z_{i+1}$;
        \item (for $i<r$) For each edge $(v_0,v_1)\in E(H)$ there is an edge $(u_0,u_1)\in E(G)$ with $(u_1,v_1)\in Z_{i+1}$.
    \end{itemize}
    A well-known inductive argument shows that,
    for every pointed graph $\apgraph$ and $r\geq 0$,
    there is a modal formula $\phi_{\apgraph}$ of modal depth $r$ that completely describes $\apgraph$ up to $r$-bisimulation, i.e., such that $\altpgraph\models\phi_{\apgraph}$ if and only if there is an $r$-bisimulation between $\apgraph$ and $\altpgraph$ \cite{blackburn2001m}. Furthermore, it is also well known that there are only finitely many modal formulas of modal depth $r$ up to equivalence. Let $\psi$ now
    be the (finite) disjunction of $\phi_{\apgraph}$ for all $\apgraph\in\calQ$. We will show that $\psi$ defines $\calQ$ --- it then follows by results in the papers cited above that $\psi$ can be expressed by a \localmax \GNN. Clearly, 
    every $\apgraph\in\calQ$ satisfies $\psi$. Conversely, let $\apgraph\models\psi$. By construction of $\psi$, then there is a 
    pointed graph $\altpgraph\in\calQ$ such that
    $\apgraph$ and $\altpgraph$ are $r$-bisimilar.
    As shown in \cite[Lemma 24]{Otto04}, from $G$ and $H$ one can 
    construct graphs $\widehat{G}$ and $\widehat{H}$, respectively, such
    that the $r$-bisimulation between $\apgraph$ and $\altpgraph$ can be ``upgraded'' to a proper bisimulation as in the
    following diagram:

\[
\begin{tikzcd}[row sep=4em, column sep=5em]
\widehat{G}^u\rneighborhood \arrow[r, no head, "\text{bisimulation}"] & \widehat{H}^v\rneighborhood \\
\widehat{G}^u \arrow[u, no head, "\text{locally isomorphic}"] & \widehat{H}^v  \arrow[u, no head, "\text{locally isomorphic}"']\\
\apgraph \arrow[u, no head, "\text{bisimulation}"] \arrow[r, no head, "\text{$r$-bisimulation}"] & \altpgraph \in \calQ \arrow[u, no head, "\text{bisimulation}"']
\end{tikzcd}
\]
Intuitively, $\widehat{G}^u$ and $\widehat{H}^v$ are  obtained by unravelling $\apgraph$ and $\altpgraph$
up to depth $r$ and attaching disjoint copies of the original graph to the leaves. This suffices to ensure that $\widehat{G}^u\rneighborhood$ and $\widehat{H}^v\upharpoonright$ are acyclic so that $\widehat{G}^u\rneighborhood$ is bisimilar to $\widehat{H}^v\upharpoonright$.
We note that the construction is described in \cite{Otto04} for directed graphs, but it can be adapted for undirected graphs simply by taking the symmetric closure of $\widehat{G}$ and $\widehat{H}$.
Chasing the diagram we now obtain that $\apgraph\in\calQ$
\end{proof}

\subsection{Proof of Proposition~\ref{prop:Q_even sum cont vs sum relu}: $\calQ_{\text{even}}$ separates \localsumcontinuous from \localsumrelu}
\PropQevenSumvsRelu*

\begin{proof}
A key-oblivious \localsumcontinuous \GNN expresses $\calQ_{\text{even}}$ by counting the neighbors with $\localsum(1)$, and applying a continuous function that maps natural numbers to $1$ if they are even and $0$ if they are odd. Now let $\agnn$ be a \localsumrelu \GNN and let $\{G_k : k \in \mathbb{N}\}$ be a set of graphs where $V(G_k) = \{u_k, u^1_k, \dots u^k_k\}$, the only edges are $(u_k, u^i_k)$ for $1 \leq i \leq k$, and all nodes have the empty labeling. We show that the output feature $\feature_\agnn$ computed by $\agnn$ is eventually polynomial in $k$,
meaning that there exist polynomials $f_{\feature_\agnn}, g_{\feature_\agnn}$ and $S \in \mathbb{N}$ such that when $k \geq S$:
    \begin{enumerate}
        \item $\sem{\feature_\agnn}_{G_k}(u_k) = f_{\feature_\agnn}(k)$.
        \item $\sem{\feature_\agnn}_{G_k}(u^i_k) = g_{\feature_\agnn}(k)$ for $1 \leq i \leq k$. 
    \end{enumerate}    
Assuming we have proven this, then since $\calQ_{\text{even}}$ is not eventually polynomial it is not expressed by $\agnn$. 

To prove the claim we reason by induction on the feature expression. The input features are constant and hence polynomial. Assume sub-expression $\feature$ of $\feature_\agnn$ is polynomial for $k \geq S$, i.e., for $k \geq S$ and $i \leq k$ $\sem{\feature}_{\agraph_k}(u_k) = f_\feature(k)$ and $\sem{\feature}_{\agraph_k}(u^i_k) = g_\feature(k)$. Now suppose $\feature' = \localsum(\feature)$. Then:
    \begin{align*}
        \sem{\feature'}_{G_k}(u_k) &= \sum^k_{i=1} \left(\sem{\feature}_{G_k}(u^i_k)\right) = k \cdot g_\feature(k)\\
       \text{for $1 \leq i \leq k$} \sem{\feature'}_{G_k}(u^i_k) &= \sem{\feature}_{G_k}(u_k) = f_\feature(k)
    \end{align*}
    Thus $\feature'$ is polynomial for $k \geq S$.

    Now let $\feature'$ be computed by a \FFN(\relu) from eventually polynomial features $\feature_1, \dots \feature_m$. For sufficiently large $k$, all input features are polynomial in $k$. Every pre-activation is then a linear combination of polynomials, which is a polynomial $h(k)$. Since $h$ has finitely many roots, beyond some threshold of $k$ each \relu gives either output $0$ or output $h(k)$, so that the outputs of the first layer are polynomial. By induction over the network depth, $\feature'$ is eventually polynomial.
\end{proof}

\section{Proofs for section \ref{sec:max}}

\subsection{Proofs of Theorem \ref{thm:max yes-open closed under inverse fbisim}, Corollary \ref{cor:Max iso continuous collapse} and Theorem \ref{thm:collapseyesnolocalmax}: Upper Bounds on \localmaxcontinuous \GNNs}

The following lemma is derived by straightforward induction over the feature expression:
\begin{lemma}
\label{lem:ValuedBisimLimitsMax}
Let $\agnn$ be a \localmax \GNN and let $B$ be a bisimulation between $\apgraph_\val$ and $\altpgraph_\val$ that preserves values. Then for all $(u,v) \in B$, $\agnn(\agraph_\val)(u) = \agnn(\altgraph_\val)(v)$.
\end{lemma}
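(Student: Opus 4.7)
The proof proceeds by a straightforward structural induction on the feature expression representing the \GNN, exploiting the fact (recalled in Appendix~\ref{appendix:prelims}) that \localmax \GNN node-classifiers correspond to \localmax feature expressions. I would first set up the statement to prove inductively: for every \localmax feature expression $\feature$ (over any combination-function class) and every $(u,v)\in B$, $\sem{\feature}_{\agraph_\val}(u) = \sem{\feature}_{\altgraph_\val}(v)$. The lemma follows by applying this to the output feature expression corresponding to $\agnn$.

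For the base cases, consider the atomic feature expressions $\feature_{p_i}$ and $\val$. Since $B$ is a bisimulation, condition (1) of the definition gives $\lab_\agraph(u)=\lab_\altgraph(v)$, so $\sem{\feature_{p_i}}_{\agraph_\val}(u) = \lab(u)_i = \lab(v)_i = \sem{\feature_{p_i}}_{\altgraph_\val}(v)$. Since $B$ preserves values by assumption, $\sem{\val}_{\agraph_\val}(u)=\val(u)=\val(v)=\sem{\val}_{\altgraph_\val}(v)$.

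The inductive step for combination functions is immediate: if $\feature = f(\feature_1,\ldots,\feature_n)$ and each $\feature_i$ agrees on pairs in $B$ by the induction hypothesis, then applying $f$ to both tuples of values yields the same real number. The interesting case is $\feature=\localmax(\feature')$. Here I would use the back-and-forth conditions (2) and (3) of bisimulation together with the induction hypothesis on $\feature'$. Concretely, for any neighbor $u_1$ of $u$ in $\agraph$ condition (2) yields a neighbor $v_1$ of $v$ in $\altgraph$ with $(u_1,v_1)\in B$, so $\sem{\feature'}_{\agraph_\val}(u_1)=\sem{\feature'}_{\altgraph_\val}(v_1)$; this shows
\[\{\sem{\feature'}_{\agraph_\val}(u_1)\mid (u,u_1)\in E(\agraph)\}\subseteq \{\sem{\feature'}_{\altgraph_\val}(v_1)\mid (v,v_1)\in E(\altgraph)\},\]
and condition (3) gives the reverse inclusion. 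Hence the two sets (and thus their maxima, with the convention $\max\emptyset=0$) coincide, so $\sem{\localmax(\feature')}_{\agraph_\val}(u)=\sem{\localmax(\feature')}_{\altgraph_\val}(v)$.

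There is essentially no obstacle here: the whole argument is a mechanical induction, and the only delicate point is remembering the $\max\emptyset=0$ convention so that the case where $u$ (and hence $v$) has no neighbors is handled correctly by the back-and-forth conditions (when $u$ has no neighbor, neither does $v$, by condition (3) applied in the contrapositive together with condition (2)). I would state this explicitly in the proof to avoid any ambiguity.
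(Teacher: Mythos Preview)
Your proposal is correct and matches the paper's approach exactly: the paper states only that the lemma ``is derived by straightforward induction over the feature expression,'' and you have simply spelled out that induction in full detail. The handling of the base cases, the combination step, and the \localmax case via the back-and-forth conditions is precisely what is intended.
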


\ThmMaxClosedUnderFbisim*
\begin{proof}
    Let $\agnn$ be a \localmaxcontinuous \GNN that decides $\calQ$. Let $\apgraph \in \calQ^c$, let $B$ be a functional bisimulation from $\apgraph$ to $\altpgraph$ and assume $\altpgraph \not\in \calQ^c$. We derive a contradiction.

    Take a keyed extension $\altpgraph_\key$ and let $\apgraph_\val$ be the valued extension obtained by assigning to each node in $\apgraph$ the key of its image under $B$. $\apgraph_\val$ need not be keyed since $B$ need not be injective. Since $B$ is a bisimulation between $\apgraph_\val$ and $\altpgraph_\key$ that preserves values, by Lemma~\ref{lem:ValuedBisimLimitsMax} they obtain the same output value from $\agnn$. This must be a positive value since $\altpgraph \in \calQ$. By continuity, it follows that for sufficiently small $\epsilon$, $\agnn$ outputs a positive value for each $\apgraph_{\val'}$ where the values are perturbed by at most $\epsilon$. Among such $\apgraph_{\val'}$ we can find a keyed extension of $\apgraph$ that is accepted by $\agnn$, contradicting $\apgraph \in \calQ^c$.
\end{proof}

\CorIsoCollapseMaxContinuous*
\begin{proof}
    (3) to (2) and (2) to (1) are immediate. For (1) to (3), let $\apgraph$ be a non-singleton pointed graph and let $\altgraph$ be the graph obtained from $\agraph$ by adding an extra node $v$ with the same label and neighbors as $u$. The identity map extended by sending $v$ to $u$ is a functional bisimulation from $\altpgraph$ to $\apgraph$. By Theorem~\ref{thm:max yes-open closed under inverse fbisim}, $\calQ_{\apgraph}$ is not expressed by a key-invariant \localmaxcontinuous \GNN.
\end{proof}

\begin{lemma}
\label{lem:fbisim both ways implies bisim}
Let $\calQ$ be a node query such that $\calQ$ and $\calQ^c$ are closed under functional bisimulations. Then $\calQ$ is closed under bisimulations.
\end{lemma}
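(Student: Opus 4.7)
The plan is to exploit the two closure hypotheses symmetrically by producing, for any bisimilar pair $\apgraph, \altpgraph$, a common pointed \emph{target} $K^w$ such that both $\apgraph \fbisim K^w$ and $\altpgraph \fbisim K^w$. Once such a $K^w$ is in hand, forward closure of $\calQ$ along $\apgraph \fbisim K^w$ transports $\apgraph \in \calQ$ to $K^w \in \calQ$, and then the contrapositive of forward closure of $\calQ^c$ along $\altpgraph \fbisim K^w$ (which is exactly closure of $\calQ$ along \emph{inverse} functional bisimulations) transports $K^w \in \calQ$ to $\altpgraph \in \calQ$. This gives $\apgraph \in \calQ \Leftrightarrow K^w \in \calQ \Leftrightarrow \altpgraph \in \calQ$, as required.

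To construct $K^w$ I would use the standard bisimulation-quotient construction. Form the disjoint union $\agraph \sqcup \altgraph$ and let $\sim$ be the maximal bisimulation equivalence on it (this is an equivalence relation on $V(\agraph) \sqcup V(\altgraph)$ that preserves labels). Define $K := (\agraph \sqcup \altgraph)/{\sim}$, where the label of a class $[x]$ is inherited from any representative (well-defined since $\sim$ respects labels) and $[x]$ is adjacent to $[y]$ in $K$ iff some representatives are adjacent in $\agraph \sqcup \altgraph$. Let $w := [u]_\sim = [v]_\sim$; the two classes coincide precisely because $\apgraph$ and $\altpgraph$ are bisimilar by hypothesis. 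One checks by a routine back-and-forth argument that the quotient map $\agraph \sqcup \altgraph \to K$ is a functional bisimulation, and so its restrictions to $G$ and $H$ yield the two desired functional bisimulations $\apgraph \fbisim K^w$ and $\altpgraph \fbisim K^w$.

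The step that needs the most care is verifying the ``back'' condition for the quotient map: given an edge $([x],[z])$ in $K$ arising from some pair $(x',z')$ with $x' \sim x$, one must produce a genuine neighbor $z''$ of $x$ with $z'' \sim z'$; this uses precisely the fact that $\sim$ is itself a bisimulation, so the edge can be shifted from $x'$ back to $x$ while staying inside the class of $z'$. Beyond that, the rest of the argument is a short diagram chase through the two closure properties, together with some attention to the \emph{direction} of the functional bisimulations (the quotient construction is what lets us place $\apgraph$ and $\altpgraph$ symmetrically on the source side of a common target, which is exactly what the two closure hypotheses require).
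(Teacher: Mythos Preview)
Your approach is the natural dual of the paper's: where you build a common \emph{target} (the bisimulation quotient of $G\sqcup H$) and place both pointed graphs on the source side of a pair of functional bisimulations, the paper builds a common \emph{source} (the subgraph of the direct product $G\times H$ induced by the given bisimulation $B$, with distinguished node $(u,v)$) and uses the two projections as functional bisimulations $K^{(u,v)}\fbisim\apgraph$ and $K^{(u,v)}\fbisim\altpgraph$. The diagram chase you describe then runs in the opposite order, and either direction uses both closure hypotheses exactly once.

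There is, however, a genuine technical gap in the quotient construction for the data model used here. Graphs in this paper are required to be \emph{irreflexive}, and the bisimulation quotient can create self-loops: whenever two adjacent nodes of $G$ (or of $H$) happen to be $\sim$-equivalent, their common class $[x]$ in $K$ acquires a loop. (A single edge whose endpoints share the same label already exhibits this.) In that case $K^w$ is not a pointed graph in the paper's sense, so neither ``$K^w\in\calQ$'' nor the closure hypotheses on $\calQ$ and $\calQ^c$ are meaningful for it, and the chain $\apgraph\in\calQ\Rightarrow K^w\in\calQ\Rightarrow\altpgraph\in\calQ$ breaks. Simply deleting the loop does not help, because then the back condition fails for the quotient map. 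The paper's product construction sidesteps this entirely: an edge in $G\times H$ requires an edge in each factor, so irreflexivity is inherited. If you want to salvage the quotient idea, you would need to argue separately that a loop-free representative can always be chosen, but the cleanest fix is to switch to the product/common-source construction.
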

\begin{proof}
Suppose $\apgraph \in \calQ$ and let $B$ be a bisimulation between $\apgraph$ and $\altpgraph$. We show $\altpgraph \in \calQ$. We use a construction from modal logic known as \emph{bisimulation products}. Let $\agraph \times \altgraph$ be the direct product of $\agraph$ and $\altgraph$ with nodes $V(\agraph) \times V(\altgraph)$, edge set $\{((u_1,u_2),(v_1,v_2)) | (u_1,u_2) \in V(\agraph) \text{ and} (v_2,v_2) \in V(\altgraph)\}$ and labeling $\lab((u_1,v_1)) = \lab_\agraph(u_1) \cap \lab_\altgraph(v_1)$, where $\lab_\agraph, \lab_\altgraph$ are the labelings of $\agraph, \altgraph$. Observe that $B$ is a subset of the domain of $G\times H$. Let $K$ be the subgraph of $G \times H$ induced by the set $B$. Finally, for $i\in\{1,2\}$,
    let $B_i=\{((w_1,w_2),w_i)\mid (w_1,w_2)\in B\}$. In other
    words, $B_1$ and $B_2$ are the graphs of the natural projections.

    Now $B_1$ is a functional bisimulation from $K^{(u,v)}$ to $\apgraph$ and $B_2$ is a functional bisimulation from $K^{(u,v)}$ to $\altpgraph$. Since $\calQ$ and $\calQ^c$ are both closed under functional bisimulation and $\apgraph \in \calQ$, it follows that $\altpgraph \in \calQ$.
\end{proof}

\ThmMaxCollapse*
\begin{proof}
    By Theorem \ref{thm:ML-localmaxobliv} (3) and (4) are equivalent. It remains to show that (1), (2) and (3) are equivalent. The directions from (3) to (1) and (1) to (2) are immediate. Now assume (2). By Theorem \ref{thm:max yes-open closed under inverse fbisim} $\calQ$ and $\calQ^c$ are closed under functional bisimulation. By Lemma \ref{lem:fbisim both ways implies bisim} then $\calQ$ is closed under bisimulation. $\calQ$ is also strongly local since it is expressed by a key-invariant \localmaxcontinuous \GNN. By Proposition~\ref{prop:bisimulation-localmaxobliv} then $\calQ$ is expressed by a key-oblivious \policy{>0}{<0} \localmaxrelu \GNN.
\end{proof}

\subsection{Proof of Theorem \ref{thm:WGML encoded by localmax}: \localmaxcontinuous \GNNs express \WGML}
\begin{lemma}
\label{lem:localmaxrelu continues formula construction}
    Let $\phi, \psi$ be formula's in \WGMLmodal, and suppose \GNNs $\agnn_\phi, \agnn_\psi$ express $\calQ_\phi, \calQ_\psi$. Let $\xi$ be either one of $\phi \vee \psi,  \phi \wedge \psi, \Diamond \phi$. There then exist a \GNN expressing $\calQ_\xi$ that uses combination and aggregation functions from $\agnn_\phi, \agnn_\psi$, as well as max aggregation and \FFN(\relu) combination.
\end{lemma}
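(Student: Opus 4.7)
The plan is to work in the feature-expression presentation from Appendix~\ref{appendix:prelims}. Recall that every \localmax \GNN node-classifier corresponds to a \localmax feature expression whose real-valued output at a node is $>0$ iff the node is accepted. Let $\feature_\phi$ and $\feature_\psi$ be the feature expressions associated with $\agnn_\phi$ and $\agnn_\psi$, respectively; these only use the aggregation and combination functions already present in those \GNNs. The three cases are then independent routine constructions.

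For the disjunction $\xi = \phi\vee\psi$, I would set
\[
\feature_\xi \;=\; \feature_\phi + \relu(\feature_\psi - \feature_\phi),
\]
which computes $\max(\feature_\phi,\feature_\psi)$, and is $>0$ iff at least one of the two sub-features is $>0$, i.e.\ iff $\apgraph\models\phi\vee\psi$. For the conjunction $\xi = \phi\wedge\psi$, the dual identity
\[
\feature_\xi \;=\; \feature_\phi - \relu(\feature_\phi - \feature_\psi)
\]
computes $\min(\feature_\phi,\feature_\psi)$, which is $>0$ iff both sub-features are $>0$, i.e.\ iff $\apgraph\models\phi\wedge\psi$. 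Each combinator uses only affine functions with rational coefficients together with a single $\relu$, hence lives in \FFN(\relu).

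For the modality $\xi = \Diamond\phi$, I would stack one further aggregation step on top of $\feature_\phi$, taking
\[
\feature_\xi \;=\; \localmax(\feature_\phi).
\]
Using the convention $\max\emptyset = 0$ from Appendix~\ref{appendix:prelims}, this is $>0$ iff the distinguished node has at least one neighbor $v$ with $\sem{\feature_\phi}_\agraph(v)>0$, i.e.\ iff $\apgraph\models\Diamond\phi$.

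In all three cases the new feature expression reuses $\feature_\phi$ and $\feature_\psi$ as subexpressions, so the resulting \GNN inherits every aggregation and combination function from $\agnn_\phi,\agnn_\psi$ and introduces only \localmax aggregation (for $\Diamond$) and \FFN(\relu) combinators (for $\vee,\wedge$), exactly as the lemma allows. There is no substantive obstacle: the main thing to verify is the threshold semantics under the $\policy{>0}{\leq0}$ policy, which follows immediately from the identities $\max(a,b)=a+\relu(b-a)$, $\min(a,b)=a-\relu(a-b)$, and the semantics of $\localmax$. The only mild bookkeeping, if one insists on translating back to the layer-based \GNN formalism, is to run $\agnn_\phi$ and $\agnn_\psi$ in parallel by widening each shared layer with the disjoint union of their combination functions and padding the shorter \GNN with identity layers (implemented by $\relu(x)-\relu(-x)$), but the feature-expression viewpoint makes this entirely transparent.
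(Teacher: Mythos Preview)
Your proof is correct and essentially identical to the paper's. The only cosmetic difference is in the $\vee$ case: the paper uses $\relu(\feature_\phi)+\relu(\feature_\psi)$ whereas you use $\max(\feature_\phi,\feature_\psi)=\feature_\phi+\relu(\feature_\psi-\feature_\phi)$; both are \FFN(\relu) expressions that are $>0$ exactly when at least one argument is, and the $\wedge$ and $\Diamond$ cases match the paper verbatim.
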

\begin{proof}
    \begin{align*}
        \feature_{\phi \vee \psi} &= \relu(\feature_\phi) + \relu(\feature_\psi)\\
        \feature_{\phi \wedge \psi} &= \min(\feature_\phi, \feature_\psi)\\
        \feature_{\Diamond \phi} &= \localmax(\feature_\phi)
    \end{align*}
    where $\min(x,y)= x - \relu(x-y)$.
\end{proof}

\begin{lemma}
\label{lem:MLexpressedbyMaxReLU1/0}
    Let $\phi \in ML$, then $\calQ_\phi$ is expressed by a key-oblivious \policy{\geq 1}{\leq 0} \localmaxrelu \GNN.
\end{lemma}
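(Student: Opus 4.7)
The plan is to proceed by structural induction on $\phi \in \ML$, maintaining the stronger invariant that the constructed feature expression $\feature_\phi$ takes values exactly in $\{0,1\}$ on every pointed graph, with $\sem{\feature_\phi}_G(u) = 1$ iff $\apgraph \models \phi$ and $\sem{\feature_\phi}_G(u) = 0$ otherwise. This invariant immediately yields the \policy{\geq 1}{\leq 0} acceptance policy, since the only outputs produced are $0$ and $1$.

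For the base cases, if $\phi = p_i$, take $\feature_\phi := \feature_{p_i}$, the atomic feature, which by definition equals $\lab(u)_i \in \{0,1\}$. If $\phi = \top$, take the constant expression $1$, realized as an affine combination with no inputs (or $\relu(1)$). For the Boolean inductive cases, I would use:
\begin{align*}
\feature_{\neg \phi} &:= 1 - \feature_\phi, \\
\feature_{\phi \wedge \psi} &:= \relu(\feature_\phi + \feature_\psi - 1), \\
\feature_{\phi \vee \psi} &:= 1 - \relu(1 - \feature_\phi - \feature_\psi),
\end{align*}
each of which is a rational affine combination composed with $\relu$, hence lies in \FFN(\relu). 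A direct case check on $\{0,1\}$-valued inputs shows that each preserves the $\{0,1\}$ invariant and computes the intended Boolean connective.

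For the modal case $\phi = \Diamond \psi$, set $\feature_{\Diamond \psi} := \localmax(\feature_\psi)$. By the induction hypothesis $\feature_\psi \in \{0,1\}$, and by our convention $\localmax(\emptyset) = 0$, the value at $u$ is $1$ precisely when some neighbor $v$ of $u$ satisfies $\psi$, and $0$ otherwise. This is exactly the semantics of $\Diamond \psi$, and it preserves the $\{0,1\}$ invariant.

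There is no real obstacle here: the argument is the standard encoding of propositional-with-$\Diamond$ formulas into a Boolean circuit with max-aggregation, and the only thing to check is that our \relu-based gadgets for $\neg, \wedge, \vee$ compute the right Boolean functions on $\{0,1\}$-valued inputs, which is routine. Translating the resulting feature expression into an honest \localmaxrelu \GNN is handled by the correspondence between feature expressions and GNNs recorded in Appendix~\ref{appendix:prelims}.
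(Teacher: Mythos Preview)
Your proof is correct. The structural induction maintaining a strict $\{0,1\}$-valued invariant is the standard argument, and your \relu gadgets for $\neg$, $\wedge$, $\vee$ are easily checked on Boolean inputs; the $\Diamond$ case via $\localmax$ is immediate given the convention $\max\emptyset = 0$.

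By way of comparison, the paper does not spell out this argument at all: its proof of the lemma is a one-line citation to Theorem~4 of \cite{schonherr2025logical}. So you are supplying a self-contained elementary proof where the paper defers to the literature. Your construction is essentially what one expects that cited theorem to contain (and indeed it matches the clauses the paper itself uses elsewhere, e.g., in Lemma~\ref{lem:localmaxrelu continues formula construction} and in the \uddl translation of Table~\ref{tab:uddl-translation}), so there is no substantive methodological difference---just a difference in whether the argument is written out.
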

\begin{proof}
    This follows from \cite{schonherr2025logical}(Theorem 4).
\end{proof}

\LocalMaxEncodesWGML*
\begin{proof}
    Let $\apgraph \models \Diamond^{\geq 2}_r \phi$ if and only if $u$ has at least two nodes at distance $r$ that satisfy $\phi$. We show that a \localmaxrelu \GNN expresses $\calQ_{\Diamond^{\geq 2}_r \top}$ and a \localmaxsigmoid \GNN expresses $\calQ_{\Diamond^{\geq 2}_r \phi}$ for a modal formula $\phi$ and $r \geq 1$. The case $r=1$ is sufficient for the theorem. By Lemma \ref{lem:MLexpressedbyMaxReLU1/0} every modal formula is expressed by a key-invariant \policy{\geq 1}{\leq 0} \localmaxrelu \GNN.

    Now for (a), let $\phi = \Diamond^{\geq 2}_r \top$ for some $r \geq 1$. Then $\calQ_\phi$ is expressed by the following feature:
    \begin{align*}
        \feature_\phi = \localmax_r(\feature_\val) + \localmax_r(-\feature_\val)
    \end{align*}
    where $\localmax_r$ denotes $r$ consecutive applications of $\localmax$. Now if node $u$ in  $\apgraph$ has no $r$-hop neighbors, $\sem{\feature_\phi}_\agraph(u) = 0+0$. If $u$ has $1$ $r$-hop neighbor with key $k_v$, $\sem{\feature_\phi}_\agraph(u) = k_v-k_v=0$. However, if $u$ has at least $2$ $r$-hop neighbors, $\sem{\feature_\phi}_\agraph(u)$ is the difference between the largest and smallest key of these nodes.
    
    For (b), let $r \geq 1$ and let $\phi$ be a modal formula.    
We show $\calQ_{\Diamond^{\geq 2}\phi}$ is expressed by a feature that uses $\FFN(\relu,\sigmoid)$ combination functions:
        \begin{align*}
        \feature_{\val,b} &= \underbrace{\sigmoid(\feature_\val)}_{\text{bounded in}(0,1)}\\
        \feature_{\Diamond^{\geq 2}\phi} &=  \localmax_r(\underbrace{\relu(\feature_{\val,b} - 1 + \feature_{\phi})}_{\feature_{\val,b} \text{ for $\phi$ nodes and $0$ for non-$\phi$ nodes}})
        + \localmax_r(\underbrace{\relu(-\feature_{\val,b}+\feature_{\phi})-1}_{-\feature_{\val,b}\text{ for $\phi$ nodes and $-1$ for non-$\phi$ nodes}})
    \end{align*}
where again $\localmax_r$ represents $r$ consecutive applications of $\localmax$. If node $u$ in $\apgraph$ has no $r$-hop neighbors that satisfy $\phi$, $\sem{\feature_{\Diamond^{\geq 2}\phi}}_\agraph(u) = 0-1$. If $u$ has $1$ such $r$-hop neighbor with key $k_v$, $\sem{\feature_{\Diamond^{\geq 2}\phi}}_\agraph(u) = \sigmoid(k_v)-\sigmoid(k_v)=0$. If $u$ has at least $2$ $r$-hop neighbors that satisfy $\phi$, $\sem{\feature_{\Diamond^{\geq 2}\phi}}_\agraph(u)$ is the difference (after applying $\sigmoid$) between the largest and smallest key of these nodes. 

For both (a) and (b), the result now follows by applying Lemma~\ref{lem:localmaxrelu continues formula construction}.
\end{proof}

\subsection{Proof of Theorem \ref{thm:uddllower}: $\uddl$ Formulas can be translated to Semilinear {\localmax} GNNs }

Recall the statement:
\thmuddllower*

\begin{proof}

Recall the grammar of $\uddl$

\begin{tabular}{l}
    Formulas: \\ $\phi ::= p \mid \top \mid \phi_1\land \phi_2 \mid \neg\phi \mid \langle \pi\rangle\phi \mid  [\pi]\phi \mid \langle\pi\rangle^{=1}\phi$ \\ \\
    Program expressions: \\
$\pi ::= \step \mid \test(\phi) \mid \pi_1;\pi_2 \mid \pi_1\cup \pi_2$ 
\end{tabular}

To reduce the number of cases in the inductive translation, we will make some
simplifying assumptions on the input formula. First of all,
using the distribution laws
\[\begin{array}{lll}
(\pi_1\cup\pi_2);\pi_3 &\equiv& \pi_1;\pi_3\cup \pi_2;\pi_3\\
\pi_1;(\pi_2\cup\pi_3) &\equiv& \pi_1;\pi_2\cup \pi_1;\pi_3\\
\end{array}\]
together with the associativity of the composition operator $;$
and the fact that $\stay$ acts as an identity for
composition,
we can 
ensure that all program expressions occurring in the input formula are
unions of program expressions of the form $\pi_1;\cdots;\pi_n;\stay$
with each $\pi_i$ of the form $\test(\phi)$ or $\step$.
In addition, we can rewrite $[\pi]\phi$ to $\neg\langle\pi\rangle\neg\phi$ so that we can
assume the $[\cdot]\cdot$ operator does not occur in the formula. Finally, 
we have that $\langle\pi\rangle\phi$
and $\langle\pi\rangle^{=1}\phi$ are
equivalent to 
$\langle\pi;\test(\phi)\rangle\top$
and $\langle\pi;\test(\phi)\rangle^{=1}\top$, respectively, so that we can assume that all subformulas of the input formula of the form $\langle\pi\rangle\phi$ or $\langle\pi\rangle^{=1}\phi$ are such that $\phi=\top$.

Recall that  a $\uddl$ program $\pi$ defines a binary relation $\sem{\pi}_G$ on each graph $G$. 
It is easy to see that a formula of the form $\langle\pi\rangle\top$ is satisfied at a node $u$ precisely if $u\in dom(\sem{\pi}_G)$, where $dom(\sem{\pi}_G)=\{u \in V \mid (u,v) \in \sem{\pi}_G\}$. 

We prove by simultaneous induction the following claims:

\begin{enumerate}
    \item For each formula $\phi$ there is a GNN $\feature_\phi$
    such that, for all pointed keyed graphs $\apgraph$, $\feature_\phi(u)=1$ when $u\in \sem{\phi}_{G}$ and $0$ otherwise,
    \item For each program expression $\pi$, there is a GNN
    $\feature_\pi$ such that, for all pointed keyed graphs $\apgraph$, $\feature_\pi(u)=1$ when $u\in dom(\sem{\pi}_G)$ and 0 otherwise,
    \item For each program expression $\pi$, there are GNNs
    $\feature_\pi^{\min}$ and $\feature_\pi^{\max}$ such that,
    for all pointed keyed graphs $\apgraph$ with $u\in dom(\sem{\pi}_G)$, 
    \begin{itemize}
        \item $\feature_\pi^{\min}(u)=\min \{\val(v)\mid (u,v)\in\sem{\pi}_{G}\}$,
    and \item $\feature_\pi^{\max}(u)=\max \{\val(v)\mid (u,v)\in\sem{\pi}_{G}\}$.
    \end{itemize}
\end{enumerate}

The base cases and inductive cases for the above two claims are all listed in Table~\ref{tab:uddl-translation}.

\begin{table}
\[\begin{array}{ll}
\feature_{p_i} &= p_i \\
\feature_{\top} &= 1 \\
\feature_{\phi\land\psi} &= \text{ReLU}(\feature_\phi+\feature_\psi-1) \\
\feature_{\neg\phi} &= 1-\feature_\phi \\
\feature_{\langle\pi\rangle\top} &= \feature_{\pi} \\
\feature_{\langle\pi\rangle^{=1}\top} &= 
\text{IfZero}(\feature_{\pi},0, \text{IfZero}(\feature_{\pi}^{\max}-\feature_{\pi}^{\min},1,0)) \\
\\
\feature_{\stay} &= 1 \\
\feature_{\test(\phi);\pi} &= \text{IfZero}(\feature_\phi, 0, \feature_\pi) \\
\feature_{\step;\pi} &= \localmax(\feature_\pi)\\
\\
\feature^{\min}_{\stay} &= \val\\
\feature^{\max}_{\stay} &= \val \\
\\
\feature^{\min}_{\test(\phi);\pi} &= \text{IfZero}(\feature_\phi,\feature^{\min}_\pi,0) \\
\feature^{\max}_{\test(\phi);\pi} &= \text{IfZero}(\feature_\phi,\feature^{\max}_\pi,0) 
\\
\\
\feature^{\min}_{\step;\pi} &= \localmin(\text{IfZero}(\feature_{\pi},f,\feature^{\min}_{\pi}))  \\
 & \text{where $f=\localmax(\localmax(\feature_\pi^{\min}))$} \\ \\
\feature^{\max}_{\step;\pi} &= \localmax(\text{IfZero}(\feature_{\pi},f,\feature^{\max}_{\pi}) \\
& \text{where $f=\localmin(\localmin(\feature_\pi^{\max}))$} 
\end{array}\]
\caption{Translation from \uddl to \localmaxsemilinear feature expressions.}
\label{tab:uddl-translation}
\end{table}

All clauses above speak for themselves. except perhaps the
last two. We explain only the last clause, as the penultimate one is dual. Let $\apgraph$  be a pointed keyed graph such that $u\in dom(\sem{\step;\pi}_G)$. Let $v^*$ be a neighbor of $u$ maximizing the value of $\feature^{\max}_\pi(v^*)$. For each neighbor $v$ of $u$, one of the following cases applies:
\begin{enumerate}
    \item $v\in dom(\sem{\pi}_G)$. In this case, $v$ contributes $\feature^{\max}_\pi(v) \leq \feature^{\max}_\pi(v^*)$ to the \localmax aggregation.
    \item $v\not\in dom(\sem{\pi}_G)$. In this case, 
    the first branch of the $\text{IfZero}$ case distinction kicks in, and $v$ contributes to the \localmax aggregation the value constructed by the expression $f$, which is, by construction, some value upper bounded by $\feature^{\max}_\pi(v^*)$ (since it is computed by two iterations of \localmin, and $v^*$ is a two-step neighbor of $v$).
\end{enumerate}
Thus, each node $v$ contributes to the \localmax aggregation some value upper bounded by $\feature^{\max}_\pi(v^*)$, with the node $v^*$ itself contributing $\feature^{\max}_\pi(v^*)$.
It follows that the result of the \localmax aggregation is precisely 
$\feature^{\max}_\pi(v^*)$.
By induction hypothesis, this is equal to the intended output.

\end{proof}

\subsection{Proof of Theorem \ref{thm:diamond geq2 p not expressed by localmaxifpos}: $\Diamond^{\geq 2}p$ is not expressed by a \localmaxunarysemilinear \GNN}

The proof of Theorem~\ref{thm:diamond geq2 p not expressed by localmaxifpos} will involve \localmax--\FFN(\relu,\Heavi) \GNN node classifiers that run over graphs with a single
proposition letter. Such a \GNN node classifier can be identified with a \localmax--\FFN(\relu,\Heavi) feature expression built up from two atomic features, namely $\val$ and $p$. Towards this, it will be helpful to study \FFN(\relu,\Heavi)-functions $f(x,y)$, where the variable $y$ takes values from a finite set (think: $\{0,1\}$). 
The next two lemmas intuitively show that, for such functions, 
the arguments $x$ and $y$ can interact with each other only in a limited way.

\begin{lemma}
\label{lem:Univariate semilinear is eventually_linear}    
    Let $f:\reals \times \reals \to \reals^d \in \FFN(\relu,\Heavi)$.
    Then $f$ is eventually linear 
    in the first argument, where the offset can depend on the second argument but the slope does not.
 
    More precisely, for $f:(x,y)\mapsto \vec z$ 
    there exist $\{S_y \in \reals \,\mid\, y \in \reals\}, \vec a \in \reals^{d}$,
    $\{\vec b_y \in \reals^{d} \,\mid\, y \in \reals\}$ such that $f(x,y)=\vec a x + \vec b_{y}$ when $x \geq S_y$.   
\end{lemma}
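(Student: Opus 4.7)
The plan is to proceed by induction on the depth of the \FFN{} computing $f$, with a strengthened invariant tracked on every hidden unit. First I would reduce to the scalar case $d = 1$: the $i$-th output coordinate $f_i$ is itself computable by a scalar \FFN{} (compose the original network with the $i$-th projection), so if a common slope $a_i \in \reals$ and per-$y$ offsets $b_{i,y}$ work for each coordinate, then stacking with $S_y := \max_i S_{i,y}$ produces $\vec a$ and $\vec b_y$ as required.

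The strengthened inductive invariant on a hidden unit $h : \reals \times \reals \to \reals$ reads as follows: there is a constant $A \in \reals$, independent of $y$, such that for every $y \in \reals$ one can find $S_y, B_y \in \reals$ with $h(x, y) = A x + B_y$ for all $x \geq S_y$. The base case handles the two input coordinates (take $A = 1, B_y = 0$ for the $x$-projection and $A = 0, B_y = y$ for the $y$-projection, both with $S_y = 0$). For the inductive step, using that \FFN(\relu, \Heavi) equals \FFN(\unarysemilinear) by Lemma~\ref{lem:unary semilinear = relu,H}, I may assume activations are arbitrary unary semilinear functions $\sigma$, each of which is asymptotically affine at $\pm \infty$ with limiting slopes $\lambda^\pm$ and offsets $\mu^\pm$. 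A pre-activation at the next layer has the form $\alpha x + \beta y + \gamma + \sum_j \delta_j h_j(x, y)$ with rational coefficients; by the induction hypothesis this equals $A' x + B'_y$ on $x \geq \max_j S_{j, y}$, where $A' := \alpha + \sum_j \delta_j A_j$ is a $y$-independent constant and $B'_y$ is the aggregated offset.

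A case split on the sign of $A'$ then completes the induction. When $A' > 0$, the pre-activation tends to $+\infty$ and past a further threshold (depending on $B'_y$) equals $\lambda^+ A' x + (\lambda^+ B'_y + \mu^+)$, so the new slope $\lambda^+ A'$ is independent of $y$. When $A' < 0$, the symmetric argument with $(\lambda^-, \mu^-)$ yields new slope $\lambda^- A'$. When $A' = 0$, the pre-activation equals $B'_y$ past the threshold, so $\sigma(B'_y)$ is a function of $y$ alone and the new slope is $0$. Applying the invariant one last time to the final affine read-out closes the induction and yields the lemma.

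The main technical point will be a clean bookkeeping of the $y$-dependent thresholds $S_y$ through the recursion, especially verifying in the $A' = 0$ case that the pre-activation truly stabilises at $B'_y$ for $x \geq S_y$, so that $\sigma$ is being applied to a genuinely $y$-only quantity. Once the invariant isolates the slope $A$ as the $y$-independent datum, the rest is a routine composition of the asymptotic affineness of affine combinations and of unary semilinear activations.
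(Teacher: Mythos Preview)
Your proposal is correct and follows essentially the same approach as the paper: an induction on the structure of the \FFN{} with the invariant that the eventual slope in $x$ is $y$-independent. The only notable difference is organizational---you work with arbitrary unary semilinear activations and their asymptotic affineness at $\pm\infty$ (splitting on the sign of the pre-activation slope $A'$), whereas the paper handles $\relu$ and $\Heavi$ directly with a four-way case split on the signs of both the slope $a_i$ and the value $g(S_y,y)_i$; your packaging is slightly cleaner but the substance is the same.
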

\begin{proof}
    By induction on $f$. If $f$ is $I$, $S=0, \vec a = (1,0), \vec b_y=(0,y)$. If $g(x,y) \in \FFN(\relu, \Heavi)$ is eventually linear in the first argument, then this also holds for $A\,  g(x,y)+B$, where $A$ is a linear operator and $B$ a vector.     
    Suppose $g$ is eventually linear, there exist $\{S_y | y \in \reals\}, \vec a, \{\vec b_y | y \in \reals\}$ such that for $x \geq S_y$, $g(x,y)=\vec a x + \vec b_y$. Let $h$ apply $\relu$ to the $i$-th index.
    We show $h \circ g$ is eventually linear. Let $a_{i}, b_{y,i}$
    denote the $i$-th indices of $\vec a, \vec b_y$. There are four cases:
    \begin{align*}
    \begin{cases}
        g(S_y,y)_i \ge 0,\; a_i \ge 0, \text{ then } h(g(x,y))_i = g(x,y)_i \text{ when } x \ge S_y\\
        g(S_y,y)_i < 0,\; a_i > 0,
        \text{ then } h(g(x,y))_i = g(x,y)_i 
        \text{ when } x \ge S_y - \frac{g(S_y,y)_i}{a_i}\\
        g(S_y,y)_i < 0,\; a_i \le 0, \text{ then } h(g(x,y))_i = 0 \text{ when } x \ge S_y\\
        g(S_y,y)_i \ge 0,\; a_i < 0, \text{ then } h(g(x,y))_i = 0 \text{ when } x \ge S_y - \frac{g(S_y,y)_i}{a_i},
    \end{cases}
    \end{align*}
In all cases, $h \circ g$ is linear when $x$ is sufficiently large as required by these four cases. An analogous argument applies when $h$ applies $\Heavi$ to the $i'$th index.
\end{proof}

\begin{lemma}
\label{lem:collisions for ReLU FFNN}
      Let $f:\reals \times \reals \to \reals \in \FFN(\relu,\Heavi)$, and let $\Omega \subset \reals$ be finite. Then there exists $S_\Omega \in \reals$, $M_\Omega \in \reals$ and $\approx \in \{\leq, \geq\}$ such that for every $y \in \Omega$, one of the following holds:
      \begin{enumerate}
          \item $f$ is  constant over all inputs of the form $(x,y)$ with $x \geq S_\Omega$, or 
          \item for every $y' \in \Omega$ and all $x,x'\geq S_\Omega$ it holds that $x'\approx x+M$ implies $f(x,y)\leq f(x',y')$. 
      \end{enumerate}
      \end{lemma}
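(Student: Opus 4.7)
My plan is to leverage the previous lemma (eventually-linear behavior in the first argument with a slope independent of the second) and then do a case split on the sign of that shared slope.

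First I would apply Lemma~\ref{lem:Univariate semilinear is eventually_linear} (in its scalar $d=1$ form) to $f:\reals\times\reals\to\reals$: this yields a single slope $a\in\reals$, offsets $b_y\in\reals$, and thresholds $S_y\in\reals$ such that $f(x,y)=ax+b_y$ whenever $x\geq S_y$. Because $\Omega$ is finite, I can set $S_\Omega:=\max_{y\in\Omega}S_y$, so that $f(x,y)=ax+b_y$ uniformly for every $y\in\Omega$ and every $x\geq S_\Omega$. After this reduction, the slice $f(\cdot,y)$ is affine with the same slope $a$ for every $y\in\Omega$, and the $y$-dependence only moves the graph up or down by $b_y$.

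Next I would split on the value of $a$. If $a=0$, then $f(x,y)=b_y$ does not depend on $x$ once $x\geq S_\Omega$, so alternative~(1) of the conclusion holds for every $y\in\Omega$ (with any choice of $M_\Omega$ and $\approx$). If $a>0$, I would set $\approx\,=\,\geq$ and pick $M_\Omega$ large enough that $aM_\Omega > \max_{y,y'\in\Omega}(b_y-b_{y'})$; this is possible since $\Omega$ is finite. Then for any $y,y'\in\Omega$ and any $x,x'\geq S_\Omega$ with $x'\geq x+M_\Omega$, we have
\[
f(x',y')-f(x,y) \;=\; a(x'-x)+(b_{y'}-b_y) \;\geq\; aM_\Omega+(b_{y'}-b_y) \;>\; 0,
\]
so alternative~(2) holds for every $y\in\Omega$. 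Symmetrically, if $a<0$, I would set $\approx\,=\,\leq$ and pick $M_\Omega$ sufficiently negative that $aM_\Omega > \max_{y,y'\in\Omega}(b_y-b_{y'})$ (the product $aM_\Omega$ becomes large positive as $M_\Omega\to-\infty$). Then $x'\leq x+M_\Omega$ combined with $a<0$ gives $a(x'-x)\geq aM_\Omega$, and the same display yields $f(x',y')\geq f(x,y)$.

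There is no real obstacle here: once the previous lemma is in hand, the problem collapses to comparing one-variable affine functions with a common slope, and the only care required is getting the signs and direction of $\approx$ right when $a<0$. The finiteness of $\Omega$ is used in exactly two places, namely to take a finite maximum of the thresholds $S_y$ and a finite maximum of the offset gaps $b_y-b_{y'}$; both are needed to make the choice of $S_\Omega$ and $M_\Omega$ uniform across $\Omega$.
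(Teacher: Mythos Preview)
Your proposal is correct and follows essentially the same approach as the paper: apply the eventually-linear lemma to obtain a shared slope $a$ and offsets $b_y$, set $S_\Omega=\max_{y\in\Omega}S_y$, and then case-split on the sign of $a$. The only cosmetic difference is that the paper computes explicit values $M_{y,y'}=(b_y-b_{y'})/a$ and takes $M_\Omega$ as their max (or min, when $a<0$), whereas you pick any $M_\Omega$ with $aM_\Omega$ exceeding the maximal offset gap; these are equivalent and yield the same conclusion.
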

\begin{proof}
  By Lemma \ref{lem:Univariate semilinear is eventually_linear}, there exist $\{S_y \in \reals \mid y \in \reals\}, a \in \reals, \{b_y \in \reals | y \in \reals\}$ such that $x \geq S_y$ implies $f(x, y) = a\cdot x + b_{y}$.

    If $a=0$, $f$ is constant in $x$ when $x\geq S_y$. Suppose $a>0$, and given $y,y'$ let $M_{y,y'}=\frac{b_{y} - b_{y'}}{a}$. Then if $x,x'\geq \max(S_y,S_{y'})$ and $x' \geq x+M_{y,y'}$:
    \begin{align*}
        f(x', y') &= a \cdot x' + b_{y'}\\
        &\geq a \cdot (x+M_{y,y'}) + b_{y'}\\
        &= a \cdot x + b_{y} = f(x, y) 
    \end{align*}
    The same holds, if $a<0$ when $x' \leq x+M_{y,y'}$.

    Now let $S_\Omega = \max\{S_y \mid y \in \Omega\}$. If $a>0$ let $\approx$ be $\geq$ and let $M_\Omega= \max\{M_{y,y'} \mid y,y' \in \Omega\}$; If $a<0$ let $\approx$ be $\leq$ and let $M_\Omega = \min\{M_{y,y'} \mid y,y' \in \Omega\}$.
\end{proof}

\thmMaxContUpperBound*
\begin{proof}
    We proof the claim for \localmax \GNNs with $\FFN(\relu,\Heavi)$ combination functions. The theorem then follows by Lemma
    \ref{lem:unary semilinear = relu,H}. Given a \localmax--$\FFN(\relu,\Heavi)$ \GNN $\agnn$, we construct $\apgraph,\altpgraph$ inseparable by $\agnn$ such that $\apgraph \models \Diamond^{\geq 2}p$, $\altpgraph \not\models \Diamond^{\geq 2}p$. Let $l$ be the number of combination functions of $\agnn$, and let $\feature_\agnn$ be the feature expression corresponding to the output of $\agnn$. $\agraph$ has node $u$ with neighbors $u_0,u_1, \dots u_{l+1}$, and $\altgraph$ has node $v$ with neighbors $v_1, \dots v_{l+1}$. $G$ and $H$ have no further edges, $p$ is only true at $u_0 \, u_1 \, v_1$ and all other propositions are false at all nodes. We define keyings for $\agraph,\altgraph$ to be ``good'' if they satisfy the following:    
    \begin{enumerate}
        \item For every sub-expression $\feature$ of $\feature_\agnn$, $\sem{\feature}_G(u) = \sem{\feature}_H(v)$; 
        \item For every sub-expression $\feature$ of $\feature_\agnn$ and $i \geq 1$, $\sem{\feature}_G(u_i) = \sem{\feature}_H(v_i)$;
        \item For every sub-expression $\feature$ of $\feature_\agnn$ there exists $i \geq 1$ such that $\sem{\feature}_G(u_0) \leq \sem{\feature}_H(v_i)$.
    \end{enumerate}
    We construct a system of inequalities with the keys $k_u, k_{u,0}, \dots k_{u,l+1}, k_v, k_{v,1}, \dots k_{v,l+1}$ as variables such that every solution of this system is a ``good'' keying. If there exists a solution to this system the theorem follows by (1): with this keying $u$ and $v$ are not separated by $\agnn$.

    We first add constraints to ensure the three requirements are satisfied by the input features.
    \begin{align*}
        k_u&=k_v &\text{(for constraint $(1)$})\\
        \text{for $i \geq 1$:}\,\, k_{u,i}&=k_{v,i} &\text{(for constraint $(2)$})\\
        k_{u,0} & \leq k_{u,1} &\text{(for constraint $(3)$})
    \end{align*}
    We now iterate over $\agnn$ adding inequalities to ensure (1),(2),(3) are preserved during feature construction. Suppose the constraints hold for all features defined with at most $m$ combination functions
    and $m'$ applications of \localmax for some $l>m \geq 0, m'\geq 0$, 
    and let $\feature$ be constructed with \localmax or an \FFN(\relu, \Heavi) from such features.

    First, suppose $\feature=\localmax(\feature')$. By (1) and (2) every pair $(u,v)$ and $(u_i,v_i)$ with $i \geq 1$ shares the same $\feature'$ value and by (3) $\sem{\feature'}_G(u_0)\leq \sem{\feature'}_G(u_i)$ for some $i \geq 1$. Thus, applying \localmax gives the same value for $u,v$, as well as for each pair $u_i,v_i$ so that (1) and (2) are preserved. Further, by (1) $\localmax(\feature')$ is constant over all $u_i, v_j$ and (3) is preserved, so no new constraints need to be added.

    Secondly suppose $\feature$ is constructed with an $\FFN(\relu, \Heavi)$. This preserves (1) and (2). It remains to ensure (3) is preserved. We set requirements on the keys so that:
    \begin{align}
        \sem{\feature}_G(u_0) = \sem{\feature}_H(v_1) \text{ or }
        \sem{\feature}_G(u_0) &\leq \sem{\feature}_H(v_{m+2}) \tag{*}
    \end{align}
    Note that $\sem{\feature}_G(u_0), \sem{\feature}_H(v_1),\sem{\feature}_H(v_{m+2})$ are computed by a sequence of \FFNs~ and \localmax operations. Note further that by $(3)$, none of the outputs of these \localmax operations depend on features at $u_0$. Thus there exists a single $f \in$ \FFN(\relu, \Heavi) that computes the $\feature$ value for nodes $u_0, v_1$ and $v_{m+1}$ given their input features followed by the input features of all nodes in their graph ordered by node index with exception of $u_0$. %\michael{previous sentence tough to parse} 
    Letting $\vec x_{w}$ represent the input feature values for node $w$, we derive:
    \begin{align*}
       \sem{\feature}_G(u_0) &= f(\vec x_{u_0}, \vec x_u, \vec x_{u_1}, \dots \vec x_{u_{l+1}})\\
       \sem{\feature}_H(v_1) &= f(\vec x_{v_1}, \vec x_v, \vec x_{v_1}, \dots \vec x_{v_{l+1}})\\
       \sem{\feature}_H(v_{m+2}) &= f(\vec x_{v_{m+2}}, \vec x_v, \vec x_{v_1}, \dots \vec x_{v_{l+1}})   
    \end{align*}
    To see this, note that $f$ can apply $\FFNs$ locally at every node, and can apply \localmax operations by computing maxima and copying feature values. $f$ does not depend on the keying, but is fixed for $\agnn$.
    Now since $\vec x_u = \vec x_v$ and $\vec x_{u_i} = \vec x_{v_i}$ for all $i \geq 1$, and since all input features are constant over $G$ and $H$ except $\feature_p$ and $\feature_\val$ there exists another $\FFN$ $f'$ so that $(*)$ is satisfied if and only if:
    \begin{align*}
        f'(k_{u_0}, 1) = f'(k_{v_1}, 1) \text{ or }  f'(k_{u_0}, 1) \leq f'(k_{v_{m+2}}, 0)
        \end{align*}
    Using Lemma~\ref{lem:collisions for ReLU FFNN} there exist $S,M \in \reals$ and $\approx \in \{\leq, \geq\}$ so that for keys larger or equal than $S$, either $f'(k_{u_0}, 1) = f'(k_{v_{1}}, 1)$, or $k_{v_{m+2}} \approx k_{u_0}+M$ implies $f'(k_{u_0}, 1)\leq f'(k_{v_{m+2}}, 0)$. In both cases (3) is preserved, and we add constraints on $k_{u_0}, k_{v_1}, k_{v_{m+2}}$ accordingly.    

    Now taking stock of all constraints, there are $S_1, \dots, S_{l+1} \in \reals$ and $M_1, \dots, M_{l+1} \in \reals$ and $\approx_1, \dots, \approx_{l+1} \in \{\leq, \geq\}$ such that a  keying is good if the following system is satisfied:
    \begin{align*}
        k_u &= k_v\\
        \text{for }1 \leq i \leq l+1: k_{u_i} &= k_{v,i}\\
         \text{for }i \in \{0,1\}: k_{u_0} &\geq \max(S_1, \dots, S_{l+1})\\
        \text{for }1 \leq i \leq l+1: k_{v_i} &\geq S_i\\
        \text{for }1 \leq i \leq l+1: k_{v_i} &\approx_i k_{u_0}+M_i
    \end{align*}
    Here $S_1=0,M_1=0, \approx_1=\geq$ encode a strengthening of the input constraints, and $S_2, \dots S_{l+1}, M_2, \dots M_{l+1}, \approx_2, \dots \approx_{l+1}$ encode the constraints from the inductive step for each of the $l$ combination functions of $\agnn$.

    Let $k_{u,0} = \max(S_1, \dots, S_{l+1})+\max(|M_1|, \dots, |M_{l+1}|)+1$. Then for $i \geq 1$ we can choose keys $k_{u,i}=k_{v,i}$ above $\max(S_1, \dots, S_k)$ and $\approx_i k_{u,0} + M_i$ to satisfy all constraints, concluding the proof.
\end{proof}

\subsection{Proof of Theorem  \ref{thm:undecidableuddl}: Undecidability of $\uddl$}
\thmundecidable*

\begin{proof}
\begin{figure}
\[
\begin{tikzcd}[arrows={-}, column sep=tiny]
v^{Spy} \arrow[rd] \arrow[rrd] \arrow[rrrd] \arrow[rdd] \arrow[rddd] \\
& v_{11} \arrow[r] \arrow[d] & \cdot^{H_1} \arrow[r] & \cdot^{H_2} \arrow[r] & v_{12} \arrow[r] \arrow[d]  & \cdot^{H_1} \arrow[r] & \cdot^{H_2} \arrow[r]& v_{13} \arrow[d] \arrow[r] & \cdots \\
& \cdot^{V_1} \arrow[d] &&& \cdot^{V_1} \arrow[d] &&& \cdot^{V_1} \arrow[d] \\
& \cdot^{V_2} \arrow[d] &&& \cdot^{V_2} \arrow[d] &&& \cdot^{V_2} \arrow[d] \\
& v_{21} \arrow[r] \arrow[d] & \cdot^{H_1} \arrow[r] & \cdot^{H_2} \arrow[r] & v_{22} \arrow[r] \arrow[d]  & \cdot^{H_1} \arrow[r] & \cdot^{H_2} \arrow[r]  & v_{23} \arrow[d] \arrow[r] & \cdots \\
& \cdot^{V_1} \arrow[d] &&& \cdot^{V_1} \arrow[d] &&& \cdot^{V_1} \arrow[d] \\
& \cdot^{V_2} \arrow[d] &&& \cdot^{V_2} \arrow[d] &&& \cdot^{V_2} \arrow[d] \\
& v_{31} \arrow[r] \arrow[d]          & \cdot^{H_1} \arrow[r] & \cdot^{H_2} \arrow[r]  & v_{32} \arrow[r] \arrow[d]          & \cdot^{H_1} \arrow[r] & \cdot^{H_2} \arrow[r] & v_{33} \arrow[d] \arrow[r] & \cdots\\
& \vdots &&& \vdots &&& \vdots
\end{tikzcd}
\]
\caption{Grid-shaped graph used in the undecidability proof. We use unary predicates (i.e., binary node features) $Spy, H_1, H_2, V_1, V_2$, as well as others that are explained in  the proof. The spy node is connected to every other node (these edges are not all drawn, to avoid clutter).}
\label{fig:grid}
\end{figure}
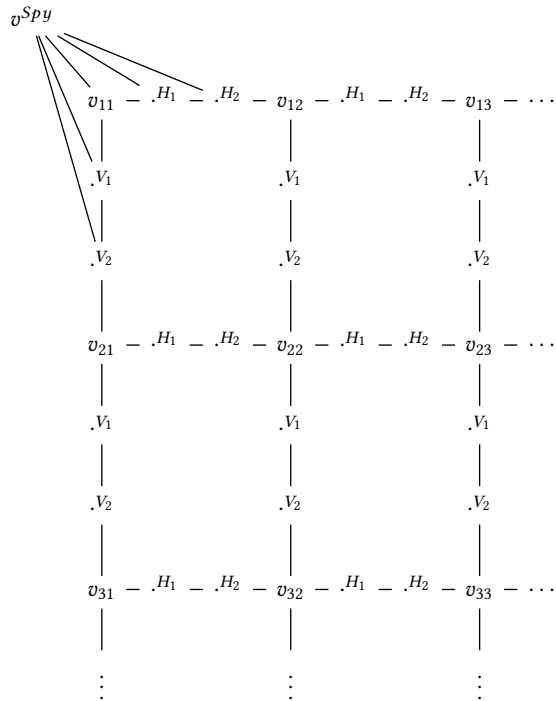
We prove this by reducing from the undecidable \emph{periodic tiling} problem \cite{periodictiling}. The input to this problem is
 a finite set $T$ of tile types and two binary relations $C_h, C_v\subseteq T\times T$ representing horizontal and vertical compatibility of tiles.
The problem is to decide whether there is a tiling  $t:\mathbb{N}\times\mathbb{N}\to T$ that respects the horizontal and vertical compatibility relations (i.e., $(t(x,y),t(x+1,y))\in C_h$ and $(t(x,y),t(x,y+1))\in C_h$ for all $x,y\in\mathbb{N}$)
and such that the tiling is periodic (i.e., there are $k,\ell>0$ such that, for all
$x,y\in\mathbb{N}$,
$t(x,y)=t(x+k)$ and $t(x,y)=t(x,y+\ell)$).

Let $T=\{t_1, \ldots, t_n\}$ and $C_h, C_v$ be given.
Figure~\ref{fig:grid} illustrates the encoding. The nodes that don't satisfy any of the unary predicates $Spy, H_i, V_i$  are the ``normal'' nodes that represented a grid point that is tiled, and they will satisfy exactly one of the unary predicates
$P_i$ for $i\leq n$ to indicate the tile type.

Let \[\textsf{hsucc} = \test(\bigvee_i P_i);\step;\test(H_1);\step;\test(H_2);\step;\test(\bigvee_i P_i)\]
and 
\[\textsf{vsucc} = \test(\bigvee_i P_i);\step;\test(V_1);\step;\test(V_2);\step;\test(\bigvee_i P_i)\]

Our formula $\phi$ will be the conjunction $\phi_1\land\phi_2\land\phi_3\land\phi_4$ where

\begin{itemize}
    \item $\phi_1 = \textsf{Spy} \land ( ~ [\step;\step]\langle \step\rangle \textsf{Spy} ~ ) \land \langle \stay\cup\step;\step\cup\step;\step;\step\rangle^{= 1}Spy$
    
``the spy point is unique  and has a direct edge to every other node (within the connected component)''

    The first conjunct states that the distinguished node satisfies spy.  The second states that every neighbor of a neighbor a spay node is also a neighbor of a spy node.
    Combined with the third conjunct, this will mean that there is a unique spy node within the connected component,  and that node will have a direct edge to every other node within the component.

    \item $\phi_2 = \langle\step\rangle(\bigvee_i P_i)$

    ``some node is tiled'' 

    \item $\phi_3 = \bigwedge_i [\step](P_i \to \big(\bigvee_{(t_i,t_j)\in C_h}\langle\textsf{hsucc}\rangle P_j \land \bigvee_{(t_i,t_j)\in C_v}\langle\textsf{vsucc}\rangle P_j\big))$

    ``every tiled node has successors satisfying the horizontal and vertical compatibility relations''

    \item $\phi_4 = [\step]((\bigvee_i P_i)\to\langle\textsf{hsucc};\textsf{vsucc}\cup \textsf{vsucc};\textsf{hsucc}\rangle^{= 1}\top)$

    ``$\textsf{hsucc}$ and $\textsf{vsucc}$  commute.'' 

    \item $\phi_5 = [\step] \bigwedge_{P \neq P'} \neg (P \wedge P')$, where $P,P'$ range over all distinct labels

    ``every node has at most one label''
\end{itemize}

Standard arguments show that $\phi$ is satisfiable (in the finite) if and only if a periodic tiling exists.
We sketch only one direction, assuming the $\uddl$ formula is satisfied in a finite pointed graph $\apgraph$, and arguing for a periodic tiling.
By $\phi_1$ the distinguished node $u$ is the unique node in the connected component
satisfying \textsf{Spy}.

By $\phi_2$ the distinguished node has a neighbor $v_{11}$, carrying one of the tiles.
By $\phi_3$ such a neighbor $v_{11}$ has a horizontal successor and vertical successor $3$-hop path, carrying
tiles that satisfy the appropriate compatibility relations:
call the targets of those paths $v_{12}$ and $v_{21}$, respectively. By $\phi_5$ and
the definitions of $\textsf{hsucc}$ and $\textsf{vsucc}$, these paths must lead to different nodes.
By $\phi_1$, $v_{12}$ and $v_{21}$ are also neighbors of a spy node,
which (by uniqueness) must be $u$.
Thus we can apply $\phi_3$ again, saying that $v_{12}$ and $v_{21}$ have
$3$-hop paths to compatible nodes. Note that $\phi_5$ and the definitions
of $\textsf{hsucc}$ and $\textsf{vsucc}$, will imply that these successor
nodes are not only different from each other, but different from the horizontal
and vertical predecessors.
Continuing indefinitely,  we get a tiling. 
But since our graph is finite, the tiling must be periodic.

\end{proof}
\subsection{Proof of Theorem \ref{thm:localmaxsemilinear recognizes up to isomorphism}: \localmaxsemilinear \GNNs express Connected Isomorphism Types}

\begin{lemma}
\label{lem:Get size and keys with local max}
    Let $N>0$. There exists a \localmaxsemilinear \GNN that computes $\feature_{N}, \feature_{k_1}, \dots \feature_{k_N}$, where for a connected pointed keyed graph $\apgraph_\key$, $\sem{\feature_{N}}_G(u)$ is $1$ if $\apgraph_\key$ has size $N$ and $0$ otherwise, and if $\apgraph$ has size $N$, $\sem{\feature_{k_1}}_G(u), \dots \sem{\feature_{k_N}}_G(u)$ are the keys of $\apgraph_\key$ in descending order.
\end{lemma}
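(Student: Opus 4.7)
The plan is to construct, for each fixed $N$, a feature expression that (i) broadcasts information across the component using iterated $\localmax$, (ii) extracts the $N$ largest keys one by one by repeatedly masking out previously extracted keys, and (iii) tests whether the component has exactly $N$ nodes by checking that the $N$-th extracted value is a real key while the $(N{+}1)$-th is not. All operations will be expressible via semilinear functions, specifically using $\ifPos$, affine combinations, and $\localmax$.

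First I would establish the basic primitive: for any feature $\feature$, there is a feature $\feature^{\max}$ that, at each node $v$, computes the maximum of $\feature$ over the connected component of $v$. Iterating $\feature^{(i+1)} = \max(\feature^{(i)}, \localmax(\feature^{(i)}))$ for $N-1$ rounds suffices on components of size at most $N$, where $\max(a,b) = \ifPos(a-b,a,b)$ is semilinear. The one subtlety is that $\localmax$ on a node with no neighbors returns $0$ (our convention), which would be incorrect on a single-node component if $\val$ were negative; this is easily handled by computing a "has a neighbor" indicator $d = \localmax(1)$ and branching with $\ifPos(d - 1/2, \max(\feature, \localmax(\feature)), \feature)$. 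Using this, I would compute $M_1 := \val^{\max}$ and $m := -(-\val)^{\max}$, giving the global max and min keys in the component of $u$.

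Next I would extract keys iteratively. Set $\val_1 := \val$ and, for $i = 1, \ldots, N$, define
\[
  \val_{i+1} := \ifPos(|M_i - \val_i|,\ \val_i,\ m - 1), \qquad M_{i+1} := \val_{i+1}^{\max},
\]
where $|x| = \ifPos(x, x, -x)$. The semantics: a node carrying a key equal to $M_i$ gets replaced by the sentinel value $m-1$ (strictly below every real key in the component), and otherwise is untouched. Hence $M_1 > M_2 > \cdots$ enumerates the distinct keys of the component in descending order; once the component is exhausted, all subsequent $M_j$ collapse to $m-1$. I set $\feature_{k_i} := M_i$ for $i = 1, \ldots, N$.

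Finally I would define the size indicator. The component has exactly $N$ nodes iff $M_N \geq m$ (i.e.\ $M_N$ is a genuine key) and $M_{N+1} \leq m - 1$ (i.e.\ no further key exists). Both conditions are semilinear thresholds, and I combine them via
\[
  \feature_N := \min\bigl(\ifPos(M_N - m + \tfrac12, 1, 0),\ \ifPos(m - M_{N+1} - \tfrac12, 1, 0)\bigr),
\]
using $\min(a,b) = a - \ifPos(a-b, a-b, 0)$. On components of size $<N$ the first factor vanishes; on components of size $>N$ the second factor vanishes; on components of size exactly $N$ both factors equal $1$. The main conceptual step is the masking trick using $\ifPos(|M_i - \val_i|, \cdot, \cdot)$ --- it leverages the fact that equality of reals is semilinearly definable --- which requires a bounded key space only implicitly, via the sentinel $m - 1$ derived from the component itself rather than any global bound.
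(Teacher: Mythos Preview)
Your construction is clean on components of size at most $N$: with $N-1$ rounds you see the whole component, $m$ is the global minimum, and the masking-with-sentinel trick correctly extracts the keys in order. The gap is in the size-$>N$ case. You assert that ``on components of size $>N$ the second factor vanishes'', i.e.\ that $M_{N+1} \geq m - \tfrac12$, but this relies on $m$ being the global minimum key --- which $N-1$ rounds of propagation cannot guarantee when the diameter exceeds $N-1$. Concretely, take $N=2$ and a path $u\text{--}v\text{--}w$ with keys $3,2,1$. One round gives $m(u)=2$ and $M_1(u)=3$; masking sends $u$ to the sentinel $m(u)-1=1$, so $\val_2=(1,2,1)$ and $M_2(u)=2$; another mask gives $\val_3=(1,0,1)$ and $M_3(u)=1=m(u)-1$. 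Your test then reads $M_2=2\geq m-\tfrac12$ and $M_3=1\leq m-\tfrac12$, so $\feature_2(u)=1$ even though the component has three nodes.

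No fixed number of propagation rounds repairs this directly, since a component of size $>N$ can have arbitrarily large diameter. The paper's proof avoids the problem by never attempting to compute a global minimum: it propagates for $2N$ rounds and then tests two \emph{local} conditions on the $N$-neighborhood of $u$ --- that a deactivation occurs there at each of the $N$ steps, and that nothing in it remains active at step $N+1$. This works because any two nodes in the $N$-ball are within distance $2N$ of each other (so at most one can be a $2N$-local maximum per step), and because a connected graph with more than $N$ vertices necessarily places more than $N$ vertices within distance $N$ of $u$. Your sentinel idea is salvageable, but the size test must be rephrased as a property of the $N$-ball rather than as a comparison against a would-be global minimum.
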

\begin{proof}
We assume all keys are non-zero, which can be achieved with $\ifPos(-\val, \val, \val+1)$. We label each node with the smallest key in its $2N$ neighborhood (the subgraph induced by nodes at maximum distance $2N$):
 \begin{align*}
    \feature^0_{\min} &= \val\\
\text{for $1 \leq j \leq 2N$ }    \feature^j_{\min} &= \min \left(\feature^{j-1}_{\min}, -\localmin(-\feature^{j-1}_{\min})\right)\\
\feature_{\min} &= \feature^{2N}_{\min}
\end{align*}
where $\localmin(x) = -\localmax(-x)$, and  $\min(x,y)= x - \relu(x-y)$.
Now for $1 < i \leq N+1$ let $\feature_{\val,i}$ store keys that are smaller than $\feature_{k,i-1}$, we call these keys `active at step $i$'. Nodes with inactive keys store a value below the minimum key in their $2N$ neighborhood:
    \begin{align*}
        \feature_{\val,1} &= \val\\
        \feature_{\val,i} &= \ifPos(\feature_{k,i-1}-\val, \val, \feature_{\min}-1)
    \end{align*}
For $1 \leq i \leq N$, we compute the largest active key in its $2N$ neighborhood.
\begin{align*}
        \feature^0_{k,\max,i} &= \feature_{\val,i}\\
\text{for $1 \leq j \leq 2N$ }    \feature^j_{k,\max,i} &= \max \left(\feature^{j-1}_{k,\max,i},\localmax(\feature^{j-1}_{k,\max,i})\right)\\
\feature_{k_i} &= \feature^{2N}_{k,\max,i}
\end{align*}
Then at every step $i$ a key is deactivated if and only if it is the maximum of all active keys in its $2N$ neighborhood. Note that in the $N$ neighborhood of $u$ at most $1$ key is deactivated at each step (since two nodes at distance $\leq 2N$ from each other cannot both have the maximum active key). If $|V(\apgraph_\key)|\leq N$ exactly $1$ key is deactivated at each step (the largest active key). However, if $|V(\apgraph_\key)|>N$ it could be the case at some step that $0$ keys in the $N$ neighborhood of $u$ are deactivated. We thus check the following two conditions, which both hold if and only if $|V(\apgraph_\key)|=N$:
\begin{enumerate}
    \item A key in the $N$ neighborhood is deactivated at steps $1,\dots,N$.
    \item No key in the $N$ neighborhood is active at step $N+1$
\end{enumerate}
We represent both conditions as features. For $1 \leq i \leq N$ define a feature that is $1$ if a node in the $N$ neighborhood is deactivated at step $i$ and $0$ otherwise as:
\begin{align*}
\feature^0_{\text{deactivated},i} &= \ifPos(\feature_{\val,i}-\feature_{\val,i+1},1,0)\\
\text{for $1 \leq j \leq N$ }\feature^j_{\text{deactivated},i} &= \max\left(\feature^{j-1}_{\text{deactivated},i}, \localmax(\feature^{j-1}_{\text{deactivated},i})\right)
\end{align*}
Then define a feature that is $1$ if a node in the $N$ neighborhood is active at step $N+1$ and $0$ otherwise.
\begin{align*}
\feature^0_{\text{active},N+1} &= \ifPos(\val-\feature_{\val,N},0,1)\\
\text{for $1 \leq j \leq N$ }\feature^j_{\text{active},N+1} &= \max\left(\feature^{j-1}_{\text{active},N+1}, \localmax(\feature^{j-1}_{\text{active},N+1})\right)
\end{align*}
Now $\feature_N = \relu\left(\sum_{1 \leq i \leq N} \feature^N_{\text{deactivated},i} + (1-\feature^N_{\text{active},N}) - N\right)$ is $1$ if both conditions hold and $0$ otherwise. If $\sem{\feature^N}_G(u)=1$, then for $1 \leq i \leq N$, $\sem{\feature_{k_i}}_G(u)$ is the largest active key at step $i$, which is the $i$-th largest key of $\apgraph_\key$.
\end{proof}

\michael{This is pretty hard to read (and very notation-heavy); maybe I am missing something, but I am surprised it is this complicated. Also it would be nice to isolate exactly what is the significant GNN functionality required to do this.  I thought the idea was: we already know how to get all the keys in the graph, and we know the radius. Now what we need is a) an adjacency lookup GNN $G_r$  that takes
a pointed graph where the distinguished node is annotated with two (additional) key values, returning 1 if those two values are in 
radius $r$ and are adjacent, and zero otherwise; b) for each label p, a ``label lookup''' $GNN_p$ that takes a pointed graph with one additional key value, returning 1 if there is a node associated with that key value labeled with p}

\ThmMaxIsoTypes*
\begin{proof}
Let $N$ be the size of $\agraph$ and let $\feature_N, \feature_{k_1}, \dots \feature_{k_N}$ be as defined in Lemma \ref{lem:Get size and keys with local max}.
Given a pointed keyed graph $\altpgraph_\key$, note that for every $v' \in V(\altgraph)$, $\sem{\feature_N}_{\altgraph_\key}(v')=1$ if $\altgraph$ is of size $N$ and $0$ otherwise, and if $\altgraph$ is of size $N$, $\sem{\feature_{k_1}}_\altgraph(v'), \dots, \sem{\feature_{k_N}}_\altgraph(v')$ are the $N$ keys of $\altgraph$ in descending order. We construct a \GNN $\agnn$ that rejects $\altpgraph_\key$ if $\sem{\feature_N}_\altgraph(v)=0$, and that checks for each bijection between $V(\agraph)$ and $V(\altgraph)$ if it is an isomorphism from $\apgraph$ to $\altpgraph$.

Let $u_1, \dots u_N$ enumerate the nodes of $\agraph$ where $u_1$ is the distinguished node, and let $\pi \in S_N$ be a permutation of $(1,\dots,N)$. We check if the bijection that maps each $u_i \in V(\agraph)$ to the node in $\altgraph$ with the $\pi(i)'$th largest key is an isomorphism. For each $u_i$, $\feature^\pi_{u_i,\key} = \ifPos(|\feature_{k_{\pi(i)}}-\val|,0,1)$ checks if a node has the $\pi(i)$'th largest key. We define $\feature^\pi_{u_i}$ that is $1$ if a node has the $\pi(i)$'th largest key and the labeling and edges match that of $u_i$ and $0$ otherwise: 
\begin{align*}
    \feature_{u_i,\lab} &= \relu\left(\sum_{1 \leq l \leq D \,:\, \lab(u_i)_l=1} \left(p_l\right) + \sum_{1 \leq l \leq D \,:\, \lab(u_i)_l=0} \left(1-p_l\right) - (D - 1)\right)\\
    \feature^\pi_{u_i,\mathcal{N}} &= \relu\left(\sum_{u_j \in \mathcal{N}(u_i)} \left(\localmax(\feature^\pi_{u_j,\key})\right) + \sum_{u_j \in V(G)\setminus\mathcal{N}(u_i)} (1-\localmax(\feature^\pi_{u_j,\key}))-(N-1)\right)\\
    \feature^\pi_{u_i} &= \relu(\feature^\pi_{u_i,\key} + \feature_{u_i,\lab} + \feature^\pi_{u_i,\mathcal{N}} - 2)
    \end{align*}
    Here, $D$ is the labeling dimension, and $\feature_{u_i,\lab}$, $\feature^\pi_{u_i,\mathcal{N}}$ check if the labeling and edges respectively match that of $u_i$. Taking the local maximum $N$ times and checking that the designated nodes are related gives a feature that is $1$ at $v$ if the bijection defined by $\pi$ is an isomorphism between $\apgraph$ and $\altpgraph$ and $0$ otherwise, if $\altgraph$ is of size $N$. $\agnn$ checks if an isomorphism exists by summing over all $\pi \in S_N$, and outputs the following $\feature$:
    \begin{align*}
        \feature^\pi &= \relu(\sum^N_{i=1} \left(\localmax_N(\feature^\pi_{u_i})\right)+\feature^\pi_{u_1,\key}-N)\\
        \feature &= \ifPos(\feature_N, \sum_{\pi\in S_N}\feature^\pi, 0)
    \end{align*}
\end{proof}

\subsection{Proof of Theorem \ref{thm:orderinvariant}: $\localmax$ \GNNs are bounded by Order-invariant FO}

\thmorderinvariantfo*

\begin{proof} In the following argument, we assume that the distinguished node is not isolated: the isolated case can be detected in first-order logic, 
and dealt with separately. 
Fix $\localmax$ GNN $\agnn$ with arbitrary combination functions, and let $Q_\agnn$ be the corresponding node query.
We will translate $Q_\agnn$   to an equivalent order-invariant FO formula,  assuming that the distinguished node is not isolated.

We begin with the ``change in perspective'' mentioned in the proof sketch in the body
of the paper. Consider  the variant of keyed GNNs, where instead of
referring to the key, we simply take the nodes to be real numbers.
That is, there is a bijection that associates a keyed graph with a (finite) graph with vertices in the reals, identifying a node with its key.
Clearly any query, real-valued or Boolean-valued,  written in one data model can be applied to the other using this bijection. In particular, we can
interpret $Q_\agnn$ or any feature used to construct it in this model.

We explain the analog of first-order logic for finite labeled graphs with vertices in the reals.
Consider  the signature $S= \{G, C_1 \ldots C_k\}$ where $G$ is binary, $C_1 \ldots C_k$ are unary predicates (the labels in the GNN), and the signature $L_\combineclass$ of the combination functions. An embedded finite  graph is a  interpretation
of each $S$ relation by a finite set of tuples in the reals. The active domain of such a structure is the set of reals occurring in the interpretation of one of the $S$ predicates. An embedded finite pointed graph is an embedded finite graph along with a distinguished element of the active domain.
\emph{Active Domain First-Order Logic} (Active Domain FO) has the same syntax
as first-order logic over a signature decomposed into $S \cup L$, but where the quantified variables range only over the active domain.
We write the quantifiers as $\exists x \in \adom$, $\forall x \in \adom$ to emphasize this.
In particular, active domain formulas with one free variable define Boolean functions on embedded finite pointed graphs.

We first claim that any $\localmax$ GNN with combination functions in $\combineclass$  and any of our acceptance policies is equivalent, as a function of embedded finite pointed graphs, to an active domain FO formula over $S \cup L_\combineclass \cup \{<\}$, where $<$ is the usual ordering on
the reals. Note that in this argument we do not need that the GNN is key-invariant.

Inductively we need  to translate features. We cannot express a feature as an active domain formula, since it returns values that can be out of the active domain. We thus have a slightly more complex inductive invariant.

By an \emph{atomic type} we mean a specification of which labels $C_i$ hold of a real number in an embedded finite graph. In an embedded finite graph, every real in the active domain has an atomic type. We will sometimes abuse notation below and identify an atomic type with the conjunction of the atomic formulas it specifies.

We claim  that for each feature $\feature$ of our GNN,  and each atomic type $t$, there is a 
function $F^t(v_1 \ldots v_k)$ in the closure of our combination class under composition and an
active domain $S \cup L_\combineclass \cup \{<\}$ formula $\phi^t(v, w_1 \ldots w_k)$ such that the following hold in any embedded finite model:
\begin{itemize}
\item Each $\phi^t$ defines a  function. That is for every $v$ in the active domain, there is exactly one $\vec w$ such that $\phi^t(v, w_1 \ldots w_k)$. Letting $Q^t(v)$ denote the unique $\vec w$ such that  $\phi^t(v, \vec w)$, we refer to $\phi^t$ as the witness query formula and the $Q^t$ as \emph{witness query} for $t$ and $\feature$.
\item  for any $v$ in the active domain of the embedded finite model with atomic type $t$, the output of $\feature$ applied to $v$ is 
$F^t(Q^t(v))$. 
\end{itemize}
We refer to $F^t$ as the \emph{output function} for $t$, and $k$ as its \emph{dimension}. 

Note that if we can prove this inductive invariant for $\feature$, then it follows that given an acceptance policy, say $\feature>0$, 
we can produce an active domain formula that captures it: 
\[
\bigvee_t ~ t(v) \wedge \exists \vec w \in \adom ~ \phi^t(v,\vec w) \wedge (F^t(\vec w)>0)
\]
We prove this by induction on the structure of features.
For $\val$, regardless of the atomic type, we have only one witness query, the identity, and we apply the identity function to it.
For colors we use $1$ or $0$, depending on the atomic type, and have the identity as the witness query

Consider $F(\feature_1 \ldots \feature_k)$.
Fixing $t$, inductively , for each $\feature_j$ $j=1 \ldots k$, we have an output function $F^t_j(v^j_1 \ldots v^j_{o_j})$
of dimension $o_j$ and 
$\phi^t_j(v, \vec w_j)$, where $\vec w_j$ has arity $o_j$, inducing a witness query.
Our output function will have dimension the sum of the dimensions of each $F^t_j$. The function will map $\vec w_1 \ldots \vec w_j$ to $F(F^t_1(\vec w_1) \ldots F^t_k(\vec w_j))$.
Our witness query formulas will then just be the conjunction of $\phi^t_{j}(v, \vec w_j)$.

Now consider the case of $\feature= \localmax ~ \feature_0$.
Fixing $t$, we inductively have $F^t_0(v_1 \ldots v_k)$ and $\phi^{t,0}(v, \vec w)$.
We let our output function be just $F^t_0(v_1 \ldots v_k)$. We let $\phi^t(v, \vec w)$ hold
of $\vec w$ when it is the unique minimal tuple
in the lexicographic ordering on tuples such that:
there is  a neighbor $v'$ of $v$ such that $\phi^t_0(v', \vec w)$, and
$F^t_0(\vec w)$ is minimal among $F^t_0(\vec w')$ where $\vec w'$ is such that $\phi^t_0(v', \vec w')$ holds for some neighbor $v'$ of
$v$. 
Note that it is clear that the set of neighbors of $v$, and the set of $\vec w$ such that $\phi^t_0(v', \vec w)$ for such a neighbor $v'$ of $v$, is finite.
Thus there is always a minimal value within the set of values of $F^t_0(\vec w)$.  There may be several $\vec w$ that witness this minimum, but by breaking ties using the lexicographic ordering, we assure that we specify exactly one of these.
It is also straightforward to show that $\phi^t$ is expressible in active domain FO, using combination functions and the real order.

This completes the induction.

Now we are ready to use key-invariance of $Q_\agnn$.
This implies that the active domain formula $\phi$ is invariant under isomorphisms of the interpretation of $S$.
Then from the \emph{locally generic collapse}, see Corollary 1 of \cite{collapsejacm}, we can conclude that $\phi$ is equivalent (over finite interpretations of $S$ predicates in the reals)
to a first-order active domain formula  $\phi$ using only $S \cup \{<\}$. We note here that this step in fact requires less than key-invariance: it is enough if $\phi$ is invariant under order-preserving isomorphisms.

We now switch the perspective back, applying the inverse of the bijection from keyed graphs to embedded finite graphs. 
By interpreting $<$ as the ordering on keys, we can consider $\phi$ as a first-order sentence on keyed graphs.
Since our GNN was key-invariant, $\phi$ is order-invariant, as required.

\end{proof}

\section{Proofs for section \ref{sec:sum}}

\subsection{Proof of Theorem~\ref{thm:closed-under-coverings}: Combinatorial Upper Bound for \localsumcontinuous \GNNs}

The following lemma is derived by straightforward induction over feature expressions:
\begin{lemma}
\label{lem:valued coverings are GNN indistinguishable}
    Let $f$ be a covering from $\apgraph_\val$ to $\altpgraph_\val$ that preserves values and let $\agnn$ be a \localsum \GNN. Then $\agnn(\apgraph_\val) = \agnn(\altpgraph_\val)$.
\end{lemma}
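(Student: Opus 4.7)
The plan is to prove, by induction on the structure of \localsum feature expressions, the following strengthening: for every \localsum feature expression $\feature$ and every node $w\in V(\agraph)$,
\[
  \sem{\feature}_{\agraph_\val}(w) \;=\; \sem{\feature}_{\altgraph_\val}(f(w)).
\]
The lemma follows by applying this to the feature expression corresponding to the output of $\agnn$ at $w=u$, using $f(u)=v$.

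For the base cases, if $\feature = \feature_{p_i}$ then equality holds because $f$, being a homomorphism in the sense of Section~\ref{sec:sum}, preserves labels: $\lab_{\agraph}(w)_i = \lab_{\altgraph}(f(w))_i$. If $\feature = \val$ then equality holds by the hypothesis that $f$ preserves values.

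The inductive step for $\feature = g(\feature_1,\ldots,\feature_n)$ is immediate from the induction hypotheses applied pointwise. The only substantive case is $\feature = \localsum(\feature')$. By definition,
\[
  \sem{\localsum(\feature')}_{\agraph_\val}(w) \;=\; \sum_{(w,w')\in E(\agraph)} \sem{\feature'}_{\agraph_\val}(w').
\]
The key step is that, since $f$ is a covering, the restriction of $f$ to the neighbors of $w$ is a bijection onto the neighbors of $f(w)$. Using this bijection to re-index the sum, and applying the induction hypothesis to each $\sem{\feature'}_{\agraph_\val}(w') = \sem{\feature'}_{\altgraph_\val}(f(w'))$, we obtain
\[
  \sum_{(w,w')\in E(\agraph)} \sem{\feature'}_{\agraph_\val}(w') \;=\; \sum_{(f(w),z)\in E(\altgraph)} \sem{\feature'}_{\altgraph_\val}(z) \;=\; \sem{\localsum(\feature')}_{\altgraph_\val}(f(w)).
\]

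There is no real obstacle here; the proof is essentially a bookkeeping exercise. The content lies in the observation that \localsum aggregation is preserved along any map whose neighborhood-restrictions are bijective---which is precisely the defining property of coverings. Note that continuity of combination functions plays no role in this argument: the lemma holds for \localsum \GNNs with arbitrary combination functions.
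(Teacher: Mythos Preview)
Your proof is correct and follows exactly the approach the paper indicates: the paper simply states that the lemma ``is derived by straightforward induction over feature expressions'' without spelling out the details, and your argument is precisely that induction, with the covering's local-bijectivity property doing the work in the \localsum case.
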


\ThmSumClosedUnderCoverings*
\begin{proof}
     Let $\agnn$ be a \localsumcontinuous \GNN that decides $\calQ$. Let $\apgraph \in \calQ^c$ and $\apgraph \covers \altpgraph$ and assume $\altpgraph \not\in \calQ^c$. We derive a contradiction.

    Take a keyed extension $\altpgraph_\key$ and let $\apgraph_\val$ be the valued extension obtained by assigning to each node in $\apgraph$ the key of its image under the covering. $\apgraph_\val$ need not be keyed since the covering need not be injective. Since there is a covering from $\apgraph_\val$ to $\altpgraph_\key$ that preserves values, by Lemma~\ref{lem:valued coverings are GNN indistinguishable}
    both graphs obtain the same output value from $\agnn$. This must be a positive value since $\altpgraph \in \calQ$. By continuity, it follows that for sufficiently small $\epsilon$, $\agnn$ outputs a positive value for each $\apgraph_{\val'}$ where the values are preturbed by at most $\epsilon$. Among such $\apgraph_{\val'}$ we can find a keyed extension of $\apgraph$ that is accepted by $\agnn$, contradicting $\apgraph \in \calQ^c$.
\end{proof}

\subsection{Proofs of Theorems \ref{thm:>0/<0 localsumcontinuous collapses to key-oblivious}, \ref{thm:sum-isomorphism-collapse} and \ref{thm:localsumgapcollapse}: Upper Bounds for \localsumcontinuous}

Leighton~\cite{Leighton82} proved that any two graphs not
distinguished by color refinement have a common \emph{finite}
cover. We need the following version of this result, which easily
follows from Leighton's original proof,
and which can be seen as an analogue of Lemma~\ref{lem:fbisim both ways implies bisim} for color refinement.

\begin{theorem}[\cite{Leighton82}]\label{theo:leighton}
\label{thm:cr equivalence implies covering and inverse covering}
  Let $G,H$ be connected graphs and $v\in V(G),w\in V(H)$ such that
  $\colr_G(v)=\colr_H(w)$. Then there is a finite graph $F$, a
  covering map $g$ from $F$ to $G$, a covering map $h$ from $F$ to
  $H$, and a vertex $u\in V(F)$ such that $g(u)=v$ and $h(u)=w$;
\end{theorem}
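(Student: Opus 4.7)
The plan is to derive the theorem from Leighton's classical result, with care taken to align a chosen base point across the construction.

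First, I would observe that the hypothesis $\colr_G(v)=\colr_H(w)$ is equivalent to saying that the rooted universal covers $(\widetilde{G},\tilde{v})$ and $(\widetilde{H},\tilde{w})$ are isomorphic as rooted trees. This is a standard consequence of the fact that the stable color of a vertex under color refinement uniquely determines the isomorphism type of its rooted unfolding (obtained by running \refine\ to fixpoint). Because $G$ and $H$ are connected, each universal cover is a single rooted tree.

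Fixing such an isomorphism, I would identify the two universal covers with a single rooted tree $(T,\tilde{u})$ that is simultaneously the universal cover of $G$ and of $H$, with covering projections $\pi_G: T\to G$ and $\pi_H: T\to H$ satisfying $\pi_G(\tilde{u})=v$ and $\pi_H(\tilde{u})=w$. Leighton's theorem then produces a finite graph $F$ together with a covering map $\pi: T\to F$ and covering maps $g: F\to G$ and $h: F\to H$ such that $g\circ\pi=\pi_G$ and $h\circ\pi=\pi_H$. Taking $u:=\pi(\tilde{u})\in V(F)$, the factorization immediately yields $g(u)=v$ and $h(u)=w$, as required.

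The main obstacle is of course Leighton's theorem itself, which builds the common finite cover $F$ via a substantial algebraic argument involving the deck transformation groups of $T$ over $G$ and $H$ and a clever counting argument on edge colorings. Since this is the content of the cited work, the only additional work for us is to verify that Leighton's construction is compatible with the rooted setting, i.e.\ that it can be carried out in a way that preserves the chosen base point. This is straightforward: the construction depends only on the deck transformation actions on $T$, and produces $F$ as a quotient of $T$ by a finite-index subgroup common to both actions, so the image of any chosen vertex of $T$ (in particular $\tilde{u}$) serves as a base point of $F$ with the desired property.
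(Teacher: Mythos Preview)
The paper does not provide its own proof for this theorem; it simply attributes the result to \cite{Leighton82} and remarks that the pointed version ``easily follows from Leighton's original proof.'' Your proposal is precisely this approach, spelled out with a bit more care: identify the rooted universal covers (which is the standard reformulation of $\colr_G(v)=\colr_H(w)$), invoke Leighton, and observe that the construction realizes $F$ as $T/\Gamma$ for a finite-index $\Gamma\le\Gamma_G\cap\Gamma_H$, so the image of $\tilde u$ gives the required base point---this last step is indeed the one place where one must look inside Leighton's argument rather than just its statement, and both you and the paper defer it to the cited work.
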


Recall the statement of the Theorem:

\thmgreaterthanzeroslashlessthanzerolocalsumcontinuouscollapsestokeyoblivious*

\begin{proof}
    The proof is along the same lines as the proof of Theorem~\ref{thm:collapseyesnolocalmax}.
    By Theorem \ref{thm:localsumcontinuous expresses all local CR-invariant queries}, (3) and (4) are equivalent and they imply (1).
 The direction from (1) to (2) is immediate. Finally, we show that (2) implies (4). By Theorem \ref{thm:closed-under-coverings}, $\calQ$ and $\calQ^c$ are closed under coverings. By Theorem \ref{thm:cr equivalence implies covering and inverse covering},  $\calQ$ is CR-invariant. $\calQ$ is also strongly local since it is expressed by a key-invariant \localsumcontinuous \GNN. 
\end{proof}

\begin{definition}[Unravelling] \label{def:unravelling}
Given graph $\apgraph$ and $L\geq0$, the unravelling $U^L(\apgraph)$ is the graph with nodes $(u,u_1, \dots u_n)$ for every path $u,u_1 \dots u_n$ in $\agraph$ with $n\leq L$, of which $(u)$ is the distinguished node, edges $((u, u_1 \dots u_n),(u, u'_1 \dots u'_n,u'_{n+1}))$ whenever $u_i=u'_i$ for all $1 \leq i \leq n$ and labeling $\lab((u_1, \dots u_n)) = \lab_G(u_n)$.
\end{definition}

We use the following well-known property of this unravelling:
(see e.g. \cite{barceloetallogical}(Observation C.3), \cite{geerts2022expressiveness}): Graphs $\apgraph$, $\altpgraph$ are $L$-round-CR equivalent, that is, $CR^{(L)}(\apgraph)=CR^{(L)}(\altpgraph)$ if and only if $U^L(\apgraph)$ is isomorphic to $U^L(\altpgraph)$, where $U^L$ is
as in Definition \ref{def:unravelling}. 

\thmSumIsomorphismCollapse*
\begin{proof}
(2) to (1) is trivial. For (1) to (3), suppose $\apgraph$ has a cycle, i.e. a path $u_1, \dots u_{n}$ where $n\geq3$ and $(u_{n},u_1) \in E(\agraph)$. Construct $\altgraph$ by first taking $2$ disjoint copies of $\agraph$. If $v^1_1, \dots v^1_{n}$ and $v^2_1, \dots v^2_{n}$ are the two cycles obtained by copying $u_1, \dots u_n$, replace the edges $(v^1_{n},v^1_1)$ and $(v^2_{n},v^2_1)$ by the edges $(v^1_{n},v^2_1)$ and $(v^2_{n},v^1_1)$. It is easy to see that  $\altgraph$ is still connected.
Let $f: V(\altgraph) \to V(\agraph)$ map each of the two copies to the corresponding node in $\altgraph$, and let $v \in V(\altgraph)$ such that $f(v)=u$. Then $f$ is a covering from $\altpgraph$ to $\apgraph$. Hence, since $\altpgraph$ is not isomorphic to $\apgraph$, by Theorem \ref{thm:closed-under-coverings} $\calQ_{\apgraph}$ is not expressed by a key-invariant \localsumcontinuous \GNN. 

For (3) to (2), using Theorem \ref{thm:>0/<0 localsumcontinuous collapses to key-oblivious} it suffices to show that for every acyclic $\apgraph$, $\calQ_{\apgraph}$ is strongly local and CR-invariant. Clearly $\calQ_{\apgraph}$ is strongly local, where $r$ is the radius of $\apgraph$ plus $1$. To see $\calQ_{\apgraph}$ is CR-invariant, firstly suppose $\altpgraph$ is a tree and $CR(\apgraph)=CR(\altpgraph)$. By Lemma \ref{lem:same root color implies isomorphic pointed trees} $\apgraph$ and $\altpgraph$ are isomorphic. Conversely, suppose $\altpgraph$ has a cycle, we show $CR(\apgraph) \neq CR(\altpgraph)$.
Let $k$ be the length of the shortest path between $v$ and a node $v'$ in this cycle. For every $k'\geq 0$, in $U^{k+k'}(\altpgraph)$ there is a a node $(v, \dots v') \in V(U^{k+k'}(\altpgraph))$ that is itself the root of a tree that contains a perfect depth $k'$ binary tree as a subgraph, where every parent in this tree has a lower distance to the distinguished node $(v)$ than its children. Conversely, let $(u, \dots u_k)$ be a node in $U^{k+k'}(\apgraph)$ reachable by a path of length $k$ from $u$, and suppose $u_k$ is the root of a tree with a depth $k'$ perfect binary tree as subgraph, where every parent has lower distance to the distinguished $(u)$ than its children. Since $\agraph$ is acyclic, given node $(u, \dots, w)$ in this tree with children $(u, \dots, w, w_1), (u, \dots, w, w_2)$, if $w$ has distance $r$ to $u_k$ then at least one of $w_1,w_2$ has distance $r+1$. Hence $k'$ is at most the depth $r$ of $\apgraph$. Choosing $L>k+r$ then yields non-isomorphic unravellings $U^L(\apgraph)$ and $U^L(\altpgraph)$ so that $CR(\apgraph) \neq CR(\altpgraph)$.
\end{proof}

\thmlocalsumgapcollapse*
In fact, we can effectively transform a \localsumcontinuous GNN $\agnn$ to a key-oblivious one, in such a way that if the input GNN defines a key-invariant query under the policy, it is equivalent
to the output GNN.
We simply replace all references to keys in $\agnn$ by the constant $1$. Clearly this is key-oblivious.
Assume that $\agnn$ was key-invariant under the policy: we argue  for equivalence.  Consider a keyed pointed graph $\apgraph$ where the query associated
to $\agnn$ accepts. Let $\apgraph_n$ be  keyed graph
where all keys are transformed so that they converge to $1$ on every node. By key-invariance $\agnn$ has the same output on $\apgraph_n$ as on
$\apgraph$. Now let $\altpgraph$ be the valued graph where all the keys are replaced by $1$. By continuity and the fact that the acceptance condition
is closed, the output of $\agnn$ on $\apgraph_n$ converges to the output of $\altpgraph$, and thus the $\agnn$ run on $\altpgraph$ as a valued graph will accept.
But the output of $\agnn$ on the valued graph $\agraph'$ is the same as the output of our key-oblivious GNN on the underlying graph of $\agraph$.

We can reason analogously when $\agnn$ rejects on $\apgraph$, since the rejection condition is also closed.

\subsection{Proofs of Proposition~\ref{prop:sum continuous unique paths}, Proposition \ref{prop:sum semilinear unique paths} and Proposition \ref{prop:SumSemilinExpressesTriangleCoIso}: Unique Address Separating Queries}

The following lemma follows from a proof by Barcelo et al. \cite{barceloetallogical}(Proposition 4.1). The authors use truncated \relu activation $\min(\max(x,0),1)$, which equals $\relu(x) - \relu(x-1)$.
\begin{lemma}
\label{lem:GMLexpressedbySumReLU1/0}
    Let $\phi$ be a $\GML$ formula, then $\calQ_\phi$ is expressed by a key-oblivious \policy{\geq 1}{\leq 0} \localsumrelu \GNN.
\end{lemma}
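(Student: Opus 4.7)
The plan is to prove by induction on the structure of $\phi$ that one can build a key-oblivious \localsumrelu feature expression $\feature_\phi$ satisfying the strengthened invariant that, for every pointed graph $\apgraph$, $\sem{\feature_\phi}_\agraph(u) \in \{0,1\}$ and $\sem{\feature_\phi}_\agraph(u) = 1$ iff $\apgraph \models \phi$. Once this is established, the \GNN node-classifier computing $\feature_\phi$ never returns a value in the open interval $(0,1)$, and so directly witnesses $\policy{\geq 1}{\leq 0}$ acceptance of $\calQ_\phi$.

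For the base cases, take $\feature_\top = 1$ and $\feature_{p_i}$ to be the atomic input feature for $p_i$, which is already $0/1$-valued by the definition of $\lab$. For the Boolean connectives, assuming the invariant for $\feature_\phi$ and $\feature_\psi$, set $\feature_{\neg\phi} = 1 - \feature_\phi$ and $\feature_{\phi\wedge\psi} = \relu(\feature_\phi + \feature_\psi - 1)$; both operations preserve $0/1$-valuedness and realize the expected Boolean function on such inputs using only affine maps and $\relu$.

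The main step is the counting modality $\Diamond^{\geq k}\phi$. Because the inductive hypothesis guarantees that $\feature_\phi$ is $0/1$-valued, evaluating $\localsum(\feature_\phi)$ at $u$ returns exactly the natural number $n_u = |\{v : (u,v) \in E(\agraph) \text{ and } \agraph^v \models \phi\}|$. It remains to convert this integer count into the $0/1$ indicator of $n_u \geq k$, which I would do via the combination
\[
  \feature_{\Diamond^{\geq k}\phi} \;=\; \relu\bigl(\localsum(\feature_\phi) - k + 1\bigr) \;-\; \relu\bigl(\localsum(\feature_\phi) - k\bigr).
\]
A direct case analysis on $n_u \in \mathbb{N}$ verifies that this expression equals $1$ when $n_u \geq k$ and $0$ otherwise, preserving the invariant. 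The whole expression uses only \localsum aggregation composed with affine maps and $\relu$ activations, so it stays in the \localsumrelu class.

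I do not expect a major obstacle here; the point that requires care is that the formula for $\Diamond^{\geq k}$ yields a clean $0/1$ output only because the argument of $\localsum$ is strictly $0/1$-valued, so the aggregated sum is guaranteed to be a natural number rather than an arbitrary positive real. This is precisely why I would strengthen the inductive statement to include $0/1$-valuedness, rather than the weaker ``positive iff $\phi$ holds'' formulation that one might first reach for.
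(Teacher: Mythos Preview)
Your proposal is correct and is essentially the same argument the paper has in mind: the paper simply cites Barcel\'o et al.\ and notes that their construction uses the truncated \relu activation $\min(\max(x,0),1)=\relu(x)-\relu(x-1)$, which is exactly your clamping trick for $\Diamond^{\geq k}\phi$ (applied to $\localsum(\feature_\phi)-k+1$), together with the standard $0/1$ Boolean encodings you give for $\top$, $p_i$, $\neg$, and $\land$. Your explicit strengthening of the inductive invariant to $0/1$-valuedness is precisely what makes the $\policy{\geq 1}{\leq 0}$ claim immediate.
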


\LocalSumContUniquePaths*
\begin{proof}
    One can express that there is a node uniquely addressable by $\phi_1, \dots \phi_n$ with a GML formula $\phi$:
    \begin{align*}
     \phi =& \Diamond(\phi_1 \wedge \Diamond(\phi_2 \wedge \dots \wedge \Diamond(\phi_n))) \, \wedge\\
     &\neg \left(\Diamond^{\geq 2}(\phi_1 \wedge \Diamond(\phi_2 \wedge \dots \wedge \Diamond(\phi_n))) \bigvee \Diamond(\phi_1 \wedge \Diamond^{\geq 2}(\phi_2 \wedge \dots \wedge \Diamond(\phi_n))) \, \dots \bigvee \Diamond(\phi_1 \wedge \Diamond(\phi_2 \wedge \dots \wedge \Diamond^{\geq 2}(\phi_n))\right)
    \end{align*}
    and similarly for $\psi_1, \dots \psi_m$ with a \GML formula $\psi$. Using Lemma \ref{lem:GMLexpressedbySumReLU1/0} there exists a feature $\feature_{\phi\wedge\psi}$ computable by a key-oblivious \localsumrelu \GNN such that $\sem{\feature_{\phi\wedge\psi}}_\agraph(u)$ is $1$ if $G^u \models \phi \wedge \psi$. We further use features $\feature_{\phi_i}$, $\feature_{\psi_i}$ that similarly express $\phi_i,\psi_i$. We retrieve the key at the end of the unique walk over $\phi_1, \dots \phi_n$:
    \begin{align*}
    \feature_{\phi_n,\key} &= \relu(\sigmoid(\val)+\feature_{\phi_n} - 1)\\
        \text{for $1 \leq i < n$ } \feature_{\phi_i,\key} &= \relu(\localsum(\feature_{\phi_{i+1},\key}) + \feature_{\phi_i}-1)\\
        \feature_{\phi,\key} &= \localsum(\feature_{\phi_1,\key})
    \end{align*}
    and similarly for the walk over $\psi_1, \dots \psi_m$. We then define a feature that is positive if and only if $\sem{\feature_{\phi \wedge \psi}}_\agraph(u)=1$ and these keys are different.
\begin{align*}    
    \feature &= \underbrace{|\feature_{\phi,\key}-\feature_{\psi,\key}|}_{\text{bounded in }(0,1)} + \feature_{\phi \wedge \psi} - 1
\end{align*}
\end{proof}

\LocalSumSemiUniquePaths*
\begin{proof}
The proof very closely matches that of Proposition~\ref{prop:sum continuous unique paths}, with a single difference. To let a node send its key if it satisfies $\phi_n$, and $0$ otherwise, we now use:
\begin{align*}    
    \feature_{\phi_n,\key} &= \ifPos(\feature_{\phi_{n}}, \val,0)
\end{align*}
Note that, unlike for key-invariant \localsumcontinuous \GNNs, the complement of this query is also expressible by applying $\ifPos(x,0,1)$ to the output feature.
\end{proof}

\propSumSemilinExpressesTriangleIso*
\begin{proof}
    Example \ref{ex:localsum} showed that $\calQ^c_{\apgraph_{\triangle p}}$ is expressed by a key-invariant \localsumsigmoid \GNN, where $\relu$ and $\sigmoid$ are used to let each node send its key if it satisfies $p$ and $0$ otherwise. We do the same with a semilinear function, as in the proof above:
    \begin{align*}
        \feature_\key &= \ifPos(\feature_p, \val, 0)
    \end{align*}
    Since $\calQ^c_{\apgraph_{\triangle p}}$ is expressed by a key-invariant \localsumsemilinear \GNN, the same holds for $\calQ_{\apgraph_{\triangle p}}$.
\end{proof}

\subsection{Proof of Theorem~\ref{thm:SumReluinOrdInvFOC}: \localsumrelu \GNNs are bounded by Order-invariant FO+C}
\ThmSumReluinOrdInvFOC*
\begin{proof}
  As noted in the body, this follows from \cite[Theorem~5.1,
  Remark~5.3]{grohedescriptivegnn}. The precise argument comes with a
  considerable notational overhead, which we avoid here at the price
  not going into all details. Also, to simplify the presentation, we will consider keyed graphs without node labels (i.e., $\Pi=\emptyset$), but the argument extends immediately to the general case.

In \cite{grohedescriptivegnn}, valued
  graphs (with rational values) are modeled as 2-sorted structures,
  where the first sort is the vertex set of a graph and the second
  sort is the set $\mathbb N$ of nonegative integers. The edge
  relation $E$ is a binary
  relation on the vertex set, and the values are described by
  additional relations between vertices and natural numbers. Here, we
  only need to consider valued graphs using only nonnegative integers
  as values; let us call them $\mathbb N$-valued graphs. We can describe
  such $\mathbb N$-valued graphs using a single unary function symbol $f$ from vertices to numbers. \cite[Theorem~5.1]{grohedescriptivegnn} implies that for every
  \localsumrelu \GNN\ $\agnn$ there is a tuple $\FOC$-terms
  $\boldsymbol\theta_\agnn(x)$, in the format used to describe rational numbers,
  \footnote{According to the definition of FO+C we gave in the paper, terms denote natural numbers. However, the logic can be equivalently defined to have terms denoting rational numbers (one can think of such terms as pairs of terms), which is what is assumed here.}
  such that for every $\mathbb N$-valued pointed graph $\apgraph_\val$
  it holds that $\agnn(\apgraph_\val)$ is precisely the rational
  number described by $\boldsymbol\theta_\agnn(u)$ in $(G,\val)$. Actually,
  \cite[Theorem~5.1]{grohedescriptivegnn} only states that the value
  of the terms approximates $\agnn(\apgraph_\val)$, but then
  \cite[Remark~5.3]{grohedescriptivegnn} points out that for
  \localsumrelu \GNNs no approximation is needed and the value is
  exact.
  Using the terms $\boldsymbol\theta_\agnn(x)$, we can define an
  $\FOC$ formula $\phi_\agnn(x)$ such that $\agnn(\apgraph_\val)>0$ if
  and only if $(\apgraph,\val)\models\phi_\agnn(u)$. The
  vocabulary of this formula is $\{E,f\}$, where $E$ is the binary
  edge relation symbol and $f$ is the unary function symbol for the
  value function.

  Now suppose that $\agnn$ is key-invariant. For a pointed graph
  $\apgraph$ of order $n$, we only consider keyings
  $k:V(G)\to\{1,\ldots,n\}$; let us call such keyings
  \emph{numberings}. As $\agnn$ is key-invariant, for every pointed
  graph $\apgraph$ and every two numberings $\num,\num'$ of
  $\apgraph$ it holds that
  $\agnn(\apgraph_\num)=\agnn(\apgraph_{\num'})$ and hence
  \begin{equation}
    \label{eq:1}
    (\apgraph,\num)\models\phi_\agnn(u)\iff (\apgraph,{\num'})\models\phi_\agnn(u) 
  \end{equation}
  This means that the formula $\phi_\agnn$ is invariant under
  numberings. Moreover, $\phi_\agnn$ defines the query $\calQ_\agnn$, in
  the sense that for every pointed graph $\agnn$ and every numbering
  $\num$ of $\apgraph$ it holds that
  \begin{equation}
    \label{eq:2}
    \apgraph\in\calQ_\agnn\iff \apgraph_\num\models\phi_\agnn(u). 
  \end{equation}
  The last observation we need to make is that numberings and linear
  orders of a graph $G$ are in one-to-one correspondence and that this
  correspondence is definable in $\FOC$. For a linear order $\le$ of
  $V(G)$, we let $\num_\le:V(G)\to\{1,\ldots,n\}$, where $n\coloneqq
  |V(G)|$, be the numbering mapping $v\in V(G)$ to the number of $w\in
  V(G)$ such that $w\le v$. Let
  \[
    \theta(x)\coloneqq\#(x').x'\le x,
  \]
  viewed as an $\FOC$ term of vocabulary $\{E,\le\}$. Then for all
  ordered graphs $(G,\le)$ and all vertices $v\in V(G)$, the value of
  $\theta(v)$ in $(G,\le)$ is $\num_\le(v)$.

  Now let $\psi(x)$ be the
  formula obtained from $\phi(x)$ by replacing each subterm $f(z)$ by
  $\theta(z)$. Then $\psi(z)$ is a formula of vocabulary $\{E,\le\}$ and for every pointed graph $\apgraph$ and every
  linear order $\le$ on $V(G)$ it holds that
  \[
    (G,\le)\models\psi(u)\iff(G,\num_\le)\models\phi(u).
  \]
  Then it follows from \eqref{eq:1} that $\psi(x)$ is order-invariant,
  and it follows from \eqref{eq:2} that $\psi$ defines $\calQ_\agnn$.
\end{proof}

\subsection{Proof of Theorem \ref{thm:localsumarbunlimited}: Expressive Completeness of \localsum}

\thmlocalsumarbunlimited*
\begin{proof}
Fix  a $Q$ that is strongly local with radius $r$.

Doing a summation over a multiset of reals, like the keys of neighbors,  will
lose information, mapping many multisets to the same real number. The main idea, deriving
from \cite{amir2023neural}, is that  we can preserve information about all of our neighbor's keys by pre-processing the keys (similar as what we did in the proof of Theorem~\ref{thm:localsumcontinuous expresses all local CR-invariant queries}).

For a function $f$ from reals to reals, we let $\Sigma_f$ be the function on $\multisets{\reals}$ 
that
maps $\multiset{r_1 \ldots r_j}$ to $\Sigma_i f(r_i)$. 
That is, it first applies $f$ elementwise, and then does a sum.
A function $f$  over the reals is then moment-injective if the derived function $\Sigma_f$ is injective.
We note that
\begin{claim} \label{clm:injective} There is a function $f$ that is moment-injective over the reals 
\end{claim}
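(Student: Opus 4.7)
The plan is to construct $f$ using a Hamel basis of $\reals$ over $\Rat$, exploiting the fact that the injectivity condition on $\Sigma_f$ amounts to a condition on finite $\mathbb{N}$-linear combinations.

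First I would invoke the axiom of choice to fix a Hamel basis $B\subseteq\reals$ of $\reals$ as a $\Rat$-vector space. A standard cardinality computation shows $|B|=2^{\aleph_0}=|\reals|$, so I may fix an arbitrary bijection $f:\reals\to B$. This $f$ is the candidate moment-injective function. Note that no structural properties of $f$ beyond bijectivity onto $B$ are needed, which makes the verification purely algebraic.

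Next I would verify moment-injectivity. Suppose $\multiset{r_1,\ldots,r_m}$ and $\multiset{s_1,\ldots,s_n}$ are finite multisets of reals with $\sum_i f(r_i)=\sum_j f(s_j)$. Since each $f(r_i),f(s_j)$ lies in $B$, rewriting both sums by grouping equal terms yields two expressions $\sum_{b\in B}c_b\cdot b = \sum_{b\in B}c'_b\cdot b$ with $c_b,c'_b\in\Nat$ finitely supported, where $c_b$ counts the multiplicity of $b$ in $\multiset{f(r_1),\ldots,f(r_m)}$ and similarly for $c'_b$. Because $B$ is $\Rat$-linearly independent, $c_b=c'_b$ for every $b\in B$. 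Since $f$ is a bijection (hence injective on the support), this means $\multiset{r_1,\ldots,r_m}=\multiset{s_1,\ldots,s_n}$ as multisets, establishing that $\Sigma_f$ is injective.

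There is no real obstacle here beyond recognizing that a Hamel basis is exactly the right tool: its $\Rat$-linear independence gives unique representation of finite $\Nat$-combinations, which is precisely the structure that $\Sigma_f$ requires in order to be injective on multisets. The construction is non-constructive (relying on AC), but this is acceptable since the paper uses $f$ only as a combination function inside a \GNN with arbitrary combination functions, as permitted in the statement of Theorem~\ref{thm:localsumarbunlimited}.
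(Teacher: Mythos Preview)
Your proposal is correct and takes essentially the same approach as the paper: both fix a Hamel basis $B$ of $\reals$ over $\Rat$, use the cardinality equality $|B|=|\reals|$ to choose a bijection $f:\reals\to B$, and deduce injectivity of $\Sigma_f$ from $\Rat$-linear independence of $B$. The paper phrases the final step via the difference of the two sums yielding a nontrivial integer combination equal to zero, while you group by basis element and compare coefficients, but these are the same argument.
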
 

We explain how the claim implies the theorem. We first explain in the case where there
are no label predicates.
Our  GNN first applies $f$ and then does a local sum. Thus each node stores a feature $\feature_1$ encoding the keys of its neighbors.
We then apply a pairing function -- any injective function from $\reals^2$ to $\reals$ and apply it to the key of a node along with the output of $\feature_1$: now each node stores a real number encoding its own key and the keys of its neighbors.
We can repeat the process to get a new feature $\feature_2$:  at each node this will store an encoding of the multiset of pairs, where each pair is a key and a $\feature_1$ values.
Iterating this way $r$ times, we arrive at a feature which stores an encoding of the $k$-neighborhood of a node.
We then apply a function $\textbf{decode}_Q$ which holds true on such an encoding if the corresponding neighborhood satisfies $Q$.
In the presence of labels, we proceed in the same way, except we first apply a pairing function
to store at each node a pair consisting of its key and a binary representation of its labels.

The proof of the claim
 follows from the fact \cite{hamel}
 that the reals, seen as a vector space  over the rationals has a Hamel basis: a set such
 that every real can be uniquely decomposed as a linear combination (with rational coefficients) of elements of the set. This basis must be of the same cardinality as the reals, and thus
 there is a bijection $f$ from the reals to the basis.

It is easily seen that  $\Sigma_f$ is injective.
Consider distinct multisets $\multiset{r_1 \ldots r_k}$ $\multiset{r'_1 \ldots r'_n}$, possibly of different sizes, and suppose the corresponding sums $\Sigma_i f(r_i)$ and $\Sigma_j f(r_j)$ are equal.
Then looking at the difference, we get a nontrivial sum of $f(r_i)$'s with integer coefficients that is zero, contradicting the property of a basis.
\end{proof}

The standard construction of a Hamel basis is simple, but proceeds by ordinal induction. Thus it does not give us a function in a natural class.
It follows from results in \cite{limitinjective}  that it is impossible to use a continuous function for this task.

\subsection{Proof of Proposition \ref{prop:sumarbitrarybeyondcontandsemilin}: separating Arbitrary Functions from Continuous and Semilinear Functions for \localsum}
\propSumArbitraryBeyondContandSemilin*
\begin{proof}
    For each set of natural numbers $S \subset \mathbb{N}\setminus\{2\}$ let $\calQ_S$ be the query that contains pointed graph $\apgraph$ if and only if $u$ has $S$ neighbors, or $u$ has $2$ neighbors and is in a triangle. $\calQ^c_S$ is not closed under coverings for any $S$, since $C^u_6 \not\in \calQ_S$ and $C^u_3 \in \calQ_S$ but $C^u_6\covers C^u_3$ (where $u$ is any node in the cycle, and with uniform labeling). Since there are continuum many queries $\calQ^c_S$ but only countably many queries expressed by \localsumsemilinear \GNNs the proposition follows.
\end{proof}

\subsection{Proof of Theorem \ref{thm:iso test with local sum}: \localsumsemilinear with $(\cdot)^2$ express Connected Isomorphism Types}
We show in three lemmas that \localsum \GNNs with semilinear combination functions and squaring that for a given $N \in \mathbb{N}$ can decide if a graph has size $N$, and compute the $N$ keys if this is the case. We first show that these functions are sufficient to distinguish bounded size multisets with at most $1$ non-$0$ value. This is the only place where the squaring operation is used.
\begin{lemma}
\label{lem:recognize multiple nonzeros in multiset using semilinear}
Let $k\in \mathbb{N}$, and let $F$ be the composition closed class containing the semilinear functions and the squaring function $(\cdot)^2$. There exist functions $g,f_1,f_2,f_3: \reals \to \reals \in F$ such that given $m \in \mathcal{M}(\reals)$ of size at most $k$, the function:
\begin{align*}
    g\left(\sum_{x \in m} f_1(x), \sum_{x \in m} f_2(x), \sum_{x \in m} f_3(x)\right) \tag{*}
\end{align*} 
outputs $0$ if $|\{x \in m | x \neq 0\}| \leq 1$ and $1$ otherwise.
\end{lemma}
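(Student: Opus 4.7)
The plan is to exploit the Cauchy--Schwarz-type identity between the sum of absolute values and the sum of squares. Specifically, I would take $f_1(x) := |x|$, $f_2(x) := x^2$, and $f_3(x) := 0$ (the third aggregate would be unused; two suffice for the construction). All three lie in $F$: $f_1$ is semilinear since $|x| = \relu(x) + \relu(-x)$; $f_2$ is the squaring operation, which is in $F$ by hypothesis; and $f_3$ is a constant (hence semilinear).

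Writing $A := \sum_{x \in m} |x|$ and $B := \sum_{x \in m} x^2$, and letting $(x_1,\dots,x_n)$ enumerate $m$ with $n \le k$, the key algebraic identity---essentially the Cauchy--Schwarz inequality applied to $(|x_i|)_i$ against the all-ones vector---is
\[
A^2 - B \;=\; \Bigl(\sum_{i=1}^n |x_i|\Bigr)^{\!2} - \sum_{i=1}^n x_i^2 \;=\; 2 \sum_{1 \le i < j \le n} |x_i|\,|x_j|.
\]
The right-hand side is always non-negative, and it vanishes precisely when every cross term $|x_i||x_j|$ is zero, i.e., exactly when at most one $x_i$ is non-zero. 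I would then define
\[
g(a,b,c) \;:=\; \ifPos\!\bigl(a^2 - b,\; 1,\; 0\bigr),
\]
which lies in $F$ because $a^2$ uses squaring, the subtraction and the constants $0,1$ are affine, and $\ifPos$ is semilinear. By the identity above, $g(A,B,C) = 0$ exactly when $|\{x \in m : x \ne 0\}| \le 1$, and $g(A,B,C) = 1$ otherwise, which is the required behavior.

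I expect no real technical obstacle here: the whole argument hinges on a single elementary identity plus a case split via $\ifPos$. Notably, the size bound $n \le k$ would not be used anywhere in the construction; it is presumably imposed only for uniformity with the subsequent applications of the lemma, where the multisets aggregated by each \localsum layer are bounded in size by the graph.
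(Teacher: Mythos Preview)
Your construction is clean, but it solves a different problem from the one stated. The condition $|\{x \in m : x \neq 0\}| \leq 1$ asks whether the \emph{set} of distinct nonzero values has at most one element---equivalently, whether all nonzero entries of $m$ share a common value. Your criterion $A^2 - B = 2\sum_{i<j}|x_i|\,|x_j| = 0$ instead detects whether at most one \emph{entry} of the multiset is nonzero. These diverge: take $m = \multiset{y,y}$ with $y \neq 0$. Then $A^2 - B = (2|y|)^2 - 2y^2 = 2y^2 > 0$, so your $g$ outputs $1$; but $\{x \in m : x \neq 0\} = \{y\}$ has cardinality one, so the required output is $0$. That the set reading is intended is confirmed by the paper's own proof and by the downstream use in Lemma~\ref{lem:Local Sum recognizes unique local value}, which tests whether a feature takes at most one distinct nonzero value across a neighborhood (where several nodes may well share that value after averaging).

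The paper's fix is to track three aggregates: the \emph{count} $a = \sum_x \ifPos(|x|,1,0)$ of nonzero entries, the plain sum $b = \sum_x x$, and the sum of squares $c = \sum_x x^2$. Restricting to the $a$ nonzero values $y_1,\ldots,y_a$, Cauchy--Schwarz gives $a \cdot c - b^2 = a\sum y_i^2 - (\sum y_i)^2 \geq 0$ with equality iff all $y_i$ are equal. The subtlety is that $a \cdot c$ is a product of two unknowns and hence not semilinear---this is exactly where the bound $|m| \leq k$ enters: since $a \in \{0,1,\ldots,k\}$, one computes $a \cdot c$ by a finite case split $\sum_{i=1}^k \ifPos(1-|a-i|,\, i\cdot c - b^2,\, 0)$. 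So contrary to your closing remark, the size bound is not cosmetic; it is what makes the multiplication implementable within the semilinear-plus-squaring class.
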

\begin{proof}
    Let $f_1(x)= \ifPos(|x|,1,0), f_2(x)=x, f_3(x)=x^2$. 
    Here $f_1$ outputs $1$ for each non-zero value in $m$, $f_2$ outputs the values in $m$ and $f_3$ outputs their squares. Let:
\begin{align*}
    h(a,b,c) &= \sum_{1 \leq i \leq k} \left(\ifPos(1-|a-i|, i\cdot c-b^2,0\right) = \begin{cases}
        a\cdot c-b^2 \text{ if } 1\leq a\leq k\\
        0 \text{ otherwise }
    \end{cases}\\
    g(a,b,c) &= \ifPos(h(a,b,c),1,0) 
\end{align*}
We show that the function (*) has the desired property. If $|\{x \in m | x \neq 0\}|=0$, then $\sum_{x \in m}f_1(x) = 0$ so that $h$ outputs $0$. Conversely, if $|\{x \in m | x \neq 0\}|>0$, let $y_1, \dots y_{l}$ be the non-zero elements of $m$ for some $l\geq1$. We show $g$ outputs $0$ if they are all equal and $1$ otherwise. By Cauchy-Schwartz:
\begin{align*}
    \sum^l_{i=1} 1^2 \cdot \sum^l_{i=1} y_i^2 &\geq (\sum^{l}_{i=1} y_i)^2 
\end{align*}
with equality if and only if all $y_i$ are equal. Hence, using that $1\leq \sum_{x \in m} f_1(x) \leq k$:
\begin{align*}
(*)&= \ifPos\left((\sum_{x \in m} f_1(x)) \cdot (\sum_{x \in m}f_3(x))-(\sum_{x \in m} f_2(x))^2,1,0\right)\\
&= \ifPos\left( l\cdot(\sum^l_{i=1} y_i^2)-(\sum^{l}_{i=1} y_i)^2,1,0\right)\\
&= \begin{cases}
    0 \text{ if all $y_i$ are equal }\\
    1 \text{ otherwise}
\end{cases}
\end{align*}
\end{proof}

\begin{lemma}
\label{lem:Local Sum recognizes unique local value}
    Ket $k,r \in \mathbb{N}$ and let $\feature_\agnn$ describe a \GNN $\agnn$. There exists a \localsum \GNN node-classifier $\altgnn$ with semilinear combination functions and squaring that accepts a pointed keyed graph $\apgraph_\key$ of degree at most $k$ if and only if $|\{\sem{\feature_\agnn}_\agraph(u') \mid u' \in V(\apgraph \rneighborhood)\}\setminus\{0\}|\leq 1$.
\end{lemma}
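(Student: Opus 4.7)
The plan is to invoke Lemma~\ref{lem:recognize multiple nonzeros in multiset using semilinear} after using $r$ rounds of \localsum aggregation to accumulate, at the distinguished node, three walk-weighted sums of the form $\sum_w c_w \cdot f_i(\sem{\feature_\agnn}_\agraph(w))$ over the $r$-neighborhood. The degree-$k$ hypothesis will be used only at the very end, to bound the total walk count so that the semilinear ``multiplication by bounded integer'' gadget goes through.

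First, $\altgnn$ simulates $\agnn$ to produce $\phi(v) := \sem{\feature_\agnn}_\agraph(v)$ at each node, and then applies the three unary functions $f_1, f_2, f_3$ from Lemma~\ref{lem:recognize multiple nonzeros in multiset using semilinear} to obtain features $a^{(i)}(v) := f_i(\phi(v))$ for $i \in \{1,2,3\}$; computing $a^{(3)} = \phi^2$ is the only use of squaring at this stage. For each $i$, $\altgnn$ then iteratively computes $b^{(i)}_0 := a^{(i)}$ and $b^{(i)}_{j+1} := a^{(i)} + \localsum(b^{(i)}_j)$ for $j < r$. A straightforward induction on $j$ shows, for every node $u' \in V(\agraph)$,
\[
\sem{b^{(i)}_r}_\agraph(u') \;=\; \sum_{w \in V(\agraph)} c_w(u') \cdot f_i(\phi(w)),
\]
where $c_w(u')$ is the number of walks of length at most $r$ from $u'$ to $w$. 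In particular, $c_w(u) > 0$ iff $w \in V(\apgraph\rneighborhood)$, and under the degree bound $c_w(u) \leq K := 1 + k + k^2 + \cdots + k^r$.

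The crux is the Cauchy--Schwartz test at $u$. Writing $T_i := \sem{b^{(i)}_r}_\agraph(u)$, the weighted Cauchy--Schwartz inequality applied to the vectors $\bigl(\sqrt{c_w(u)}\bigr)_w$ and $\bigl(\sqrt{c_w(u)}\,\phi(w)\bigr)_w$ indexed over $\{w : \phi(w) \neq 0\}$ yields $T_2^2 \leq T_1 T_3$, with equality iff all non-zero $\phi(w)$ with $w \in V(\apgraph\rneighborhood)$ coincide, i.e., iff $|\{\phi(w) : w \in V(\apgraph\rneighborhood)\} \setminus \{0\}| \leq 1$. Since $T_1 = \sum_{w : \phi(w) \neq 0} c_w(u)$ is a non-negative integer bounded by the fixed constant $K$, we can implement the semilinear gadget
\[
h(a, b, c) \;:=\; \sum_{i=1}^{K} \ifPos\bigl(1 - |a - i|,\; i \cdot c - b^2,\; 0\bigr),
\]
which computes $T_1 T_3 - T_2^2$ whenever $T_1 \in \{1,\ldots,K\}$ and returns $0$ when $T_1 = 0$; the squaring function is used for the $b^2$ term. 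The final layer outputs $\ifPos(h(T_1,T_2,T_3), -1, 1)$, which is positive exactly when $h = 0$, as required. The main subtlety is that \localsum accumulates \emph{walks} rather than distinct nodes, but the weighted Cauchy--Schwartz precisely detects equality of the non-zero values regardless of the walk multiplicities, and the degree bound is exactly what keeps $T_1$ in the fixed integer range $\{0,1,\ldots,K\}$ required by the semilinear multiplication trick of Lemma~\ref{lem:recognize multiple nonzeros in multiset using semilinear}.
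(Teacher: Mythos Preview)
Your argument is correct and takes a genuinely different route from the paper's. The paper proceeds \emph{locally and iteratively}: it propagates values through the $r$-neighbourhood by repeated local averaging (average of the non-zero neighbour values, else keep the current value), and at each of the $r$ steps invokes Lemma~\ref{lem:recognize multiple nonzeros in multiset using semilinear} on the multiset of at most $k$ neighbour values to raise a flag as soon as some node sees two distinct non-zero values in its $1$-neighbourhood; a second flag detects when a node's own value jumps from one non-zero value to another. The flags are then propagated by $\localsum$.

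You instead perform a single \emph{global} Cauchy--Schwarz test at the end: the recursion $b^{(i)}_{j+1}=a^{(i)}+\localsum(b^{(i)}_j)$ accumulates walk-weighted sums, so that $(T_1,T_2,T_3)$ are exactly the three statistics of Lemma~\ref{lem:recognize multiple nonzeros in multiset using semilinear} applied to the multiset $\multiset{\phi(w)\text{ with multiplicity }c_w(u)}$ of size at most $K=1+k+\cdots+k^r$, and the equality case of weighted Cauchy--Schwarz detects precisely whether the non-zero values in the $r$-neighbourhood coincide. The degree hypothesis is used only to bound $T_1\le K$ so that the semilinear ``multiply by a bounded integer'' gadget applies. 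This is conceptually cleaner and avoids the somewhat delicate averaging-plus-flag bookkeeping; the price is that the multiset bound in the gadget grows from $k$ to $K$, which is irrelevant for the statement.
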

\begin{proof}
$\altgnn$ computes and sends the local average of $\feature_\agnn$ $r$ times. For $0 \leq i < r$:
\begin{align*}
    \feature_0 &= \feature_\agnn\\
    \feature_{i,\text{local avg}} &= \sum^k_{j=1}\ifPos(1-|j-\localsum(\ifPos(|\feature_i|,1,0))|, \localsum(\feature_i)/j,0)\\
    \feature_{i+1} &= \ifPos(\feature_{i,\text{local avg}}, \feature_{i,\text{local avg}}, \feature_{i})
\end{align*}
Here $\feature_{i,\text{local avg}}$ computes the average over the non-zero $\feature_i$ values of neighbors. $\feature_{i+1}$ is set to this average, or to $\feature_{i}$ if the average is $0$. In this process, using Lemma~\ref{lem:recognize multiple nonzeros in multiset using semilinear}, we raise a flag whenever a node has multiple non-zero values in its local neighborhood. For $0 \leq i < r$ let $\feature_{\text{multiple values},i}$ be $1$ if there are multiple non-zero $\feature_i$ values at the neighbors and $0$ otherwise, and:
\begin{align*}
    \feature_{\text{flag},0} =& \,\, 0\\
    \feature_{\text{flag},i+1} =& \,\, \feature_{\text{flag}, i}+\localsum(\feature_{\text{flag}, i}) + \feature_{\text{multiple values},i}\\
    &+\ifPos(\min(|\feature_{i+1}|,|\feature_{i}|), |\feature_{i+1}-\feature_{i}|,0)
\end{align*}
Then $\feature_{\text{flag},i}$ is positive if and only if there are multiple non-zero $\feature$ values in the $i$ neighborhood. $\altgnn$ has output feature $\feature_{\text{flag},r}$.
\end{proof}

\begin{lemma}
\label{lem:Get size and keys with local sum}
    Let $N>0$. There exists a \localsum \GNN with semilinear combination functions and squaring that computes $\feature_{N}, \feature_{k_1}, \dots \feature_{k_N}$, where for a connected pointed keyed graph $\apgraph_\key$, $\sem{\feature_{N}}_\agraph(u)$ is $1$ if $\apgraph_\key$ has size $N$ and $0$ otherwise, and if $\apgraph$ has size $N$, $\sem{\feature_{k_1}}_\agraph(u), \dots \sem{\feature_{k_N}}_\agraph(u)$ are the keys of $\apgraph_\key$ in descending order.
\end{lemma}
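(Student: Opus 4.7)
The plan is to mirror the structure of the \localmax analogue, Lemma~\ref{lem:Get size and keys with local max}, replacing the iterated $\localmin/\localmax$ used there to propagate extremal keys through the $(2N)$-neighborhood of $u$ by a primitive built from the two tools of this section: Lemma~\ref{lem:recognize multiple nonzeros in multiset using semilinear}, which via the Cauchy--Schwarz identity $d\cdot\sum x_i^2 - (\sum x_i)^2 \geq 0$ flags degree-bounded multisets with at least two distinct nonzero entries, and Lemma~\ref{lem:Local Sum recognizes unique local value}, which propagates the unique nonzero value of a feature throughout an $r$-neighborhood when it exists. Because in any connected graph of size $N$ the maximum degree is at most $N-1$ and the $(N-1)$-neighborhood of $u$ is all of $V(G)$, both lemmas can be instantiated with degree bound $k = N$ throughout.

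First I would normalize keys to avoid $0$ via $\ifPos(-\val,\val,\val+1)$, so that nonzero-ness can serve as an ``active'' marker while preserving injectivity. Next I would maintain through iterations $i = 1,\ldots,N$ an activity-and-key feature $\feature_{\val,i}$ whose value at $v$ equals its key if $v$ is active and equals $0$ otherwise; initially $\feature_{\val,1}$ is the normalized key. At step $i$, a sub-construction produces $\feature_{k_i}$ whose value at $u$ is the largest active key in the $(N-1)$-neighborhood, and then $\feature_{\val,i+1}$ is obtained by zeroing out the node attaining $\feature_{k_i}$, which is detectable locally after broadcasting $\feature_{k_i}$ throughout the neighborhood using the averaging mechanism of Lemma~\ref{lem:Local Sum recognizes unique local value}. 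Finally, $\feature_N$ is built from $\ifPos$ and sums as a conjunction of two tests: that a fresh key gets deactivated in each of the $N$ iterations (tallied as in the \localmax proof), and that no active key remains after step $N$ (a direct application of Lemma~\ref{lem:Local Sum recognizes unique local value} to the final activity indicator).

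The main technical obstacle is the per-iteration extraction of $\feature_{k_i}$: producing the largest active key at $u$ without a max aggregator. The tool from Lemma~\ref{lem:recognize multiple nonzeros in multiset using semilinear} only furnishes a verifier --- it detects when the active multiset has collapsed to a single distinct nonzero value, rather than constructing that value. Turning this verifier into a constructor requires a nontrivial iterative elimination schema: within step $i$, round by round, selectively deactivate nodes whose active key is dominated by some other reachable active key, until only the globally top active key survives, and then invoke the averaging of Lemma~\ref{lem:Local Sum recognizes unique local value} to push the surviving value to $u$. Implementing this pairwise domination check across the $(N-1)$-neighborhood --- given that squaring only yields power sums $\sum_v f(v)^{2^j}$ and that $\localsum$ provides no per-neighbor access --- is the delicate core of the argument; once it is in place, the remaining bookkeeping parallels the \localmax case.
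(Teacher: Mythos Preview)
Your overall plan and your identification of the crux are correct: the obstacle is isolating the maximum active key in each outer round without a max aggregator. But you have not solved that crux. You describe the target behavior (``deactivate nodes whose key is dominated by some other reachable active key''), then note yourself that squaring only produces power sums and that \localsum gives no per-neighbor access. With only $\sum_v x_v$ and $\sum_v x_v^2$ available, the Cauchy--Schwarz gadget tells you whether all nonzero values are \emph{equal}, but it cannot tell an individual node whether \emph{its} value is dominated --- which is precisely what a pairwise domination check would require. As written, the proposal restates the problem rather than supplying a mechanism.

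The paper's device is to use averaging not merely to \emph{propagate} a unique survivor (as in Lemma~\ref{lem:Local Sum recognizes unique local value}, which you invoke only for that purpose) but as the \emph{elimination engine itself}. Within each outer round, iterate $N$ times: every node computes the weighted average of the active keys in its $2N$-neighborhood (this is semilinear once the degree is bounded by $N$, since the number of reaching walks is an integer in a known finite range and one can case-split on it), and then zeroes out its own key if it lies strictly below this average. Whenever at least two distinct active keys are present, the average lies strictly between the minimum and the maximum, so at least one key is eliminated while the maximum survives; after at most $N$ inner iterations only the maximum remains. This replaces the unavailable pairwise comparison by a single global threshold that each node can locally test itself against.

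Two smaller gaps: a preliminary degree-bound flag is needed (so that both the weight case-split and Lemmas~\ref{lem:recognize multiple nonzeros in multiset using semilinear}--\ref{lem:Local Sum recognizes unique local value} apply), and the size-$N$ verification is not the \localmax scheme of counting deactivations. When $|V|>N$ the averaging inner loop can misbehave in several ways --- zero everywhere in the $N$-neighborhood of $u$, multiple survivors, or a survivor that is not the true maximum --- and the paper raises a separate per-step flag for each of these failure modes rather than just tallying deactivations.
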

\begin{proof}
    We assume all keys are non-zero, which can be achieved with $\ifPos(-\val, \val, \val+1)$. Now, for $1 \leq j \leq 3N$:
    \begin{align*}    \feature^0_{\text{flag,degree}}&= \relu(\localsum(1)-N)\\
    \feature^j_{\text{flag, degree}} &= \feature^{j-1}_{\text{flag, degree}} + \localsum(\feature^{j-1}_{\text{flag, degree}})\\
    \feature_{\text{flag, degree}} &= \feature^{3N}_{\text{flag, degree}} 
    \end{align*}
    $\feature_{\text{flag, degree}}$ is $0$ if all nodes reachable in $3N$ steps have degree bounded by $N$, and at least $1$ otherwise. For $1 < i \leq N+1$ let $\feature_{\val,i}$ store keys that are smaller than $\feature_{k,i-1}$, we call these keys `active at step $i$':
    \begin{align*}
        \feature_{\val,1} &= \val\\
        \feature_{\val,i} &= \ifPos(\feature_{k,i-1}-\val, \val, 0)
    \end{align*}    
    Given the active keys at step $i$ we perform $N$ iterations of computing the average of active keys reachable in at most $2N$ steps (weighted by reaching walks) and setting keys smaller than the average to $0$, with the aim of finding the largest active key. For $1 \leq j \leq N$:
    \begin{align*}
     \feature^1_{\val,i} &= \feature_{\val,i} \\
     \feature^j_{\val,i, \text{sum}} &= \sum^{2N}_{l=1}\left(\localsum_l(\feature^j_{\val,i})\right)\\
     \feature^j_{\val,i, \text{weight}} &= \sum^{2N}_{l=1}\left(\localsum_l(\ifPos(|\feature^j_{\val,i}|,1,0))\right)\\
     \feature^j_{\val,i, \text{avg}} &= \sum^{{(2N)}^{N+1}}_{l=1}\ifPos(1-|l-\feature^j_{\val,i, \text{weight}}|, \feature^j_{\val,i, \text{sum}}/l, 0)\\
    \feature^{j+1}_{\val,i} &= \ifPos(\feature^{j}_{\val,i,\text{avg}}-\feature_{\val,i},0, \feature_{\val,i})    \end{align*}
    Here $\feature^j_{\val,i,\text{avg}}$ computes the average over active keys within distance at most $2N$, provided that there are at most $(2N)^{N+1}$ reached keys, which holds if the degree is bounded by $N$. Let $\feature_{k,i} = \feature^N_{\val,i, \text{avg}}$. If $\apgraph_\key$ has size at most $N$ this is the average over a single active key, which is the $i'$th largest key in $\apgraph_\key$.
    
    However, if $\apgraph_\key$ is larger than $N$ it could be the case that $\feature^N_{\val,i}$ is $0$ everywhere in the $N$ neighborhood of $u$, or non-$0$ at more than one node in the $N$ neighborhood, or not the largest active key in the $N$ neighborhood. We raise a flag in all these cases:
    \begin{align*}
    \feature_{\text{flag, all-}0, i} &=\ifPos(\feature^N_{\val,i, \text{weight}},0,1)\\
    \feature_{\text{flag, multiple non-}0, i} &= \begin{cases}
            1 \text{ if there is more than 1 non-zero $\feature^N_{\val,i}$ value in the $3N$ neighborhood of $\apgraph$}\\
            0 \text{ otherwise}
        \end{cases}\\
    \feature^0_{\text{flag}, \text{no max}, i} &= \ifPos(\min(|\feature_{\val,i}|,\feature_{\val,i}-\feature^N_{\val,i, \text{avg}}), 1, 0) + \ifPos(|\feature^N_{\val,i, \text{avg}}|,0,1)\\
    \text{for $1 \leq j \leq N$ }\feature^j_{\text{flag}, \text{ no max}, i} &= \feature^{j-1}_{\text{flag}, \text{ no max}, i}\\
  \feature_{\text{flag}, \text{ no max}, i} &= \feature^N_{\text{flag}, \text{ no max}, i}
    \end{align*}
    Here, Lemma \ref{lem:Local Sum recognizes unique local value} is used to construct $\feature_{\text{flag, multiple non-}0, i}$. If the first two flags are $0$ at node $u$, this ensures that the corresponding weighted average $\feature^N_{\val,i, \text{avg}}$ has a single non-zero value in the $N$ neighborhood. Then if $\sem{\feature^N_{\text{flag}, \text{ no max}, i}}_\agraph(u)=0$, this ensures this is also the \emph{largest} active key.

    If these flag features are all $0$, then $\feature_{k,i}$ is the $i'$th largest key in the $N$ neighborhood as intended. We let $\feature_N$ be $1$ if this is true at every step $i\leq N$, and if further $\feature_{\text{flag, degree}}=0$, the $N$'th largest key $\feature_{k_N}$ is non-zero and no key in the $N$ neighborhood is active at step $N+1$. If one of these checks fails $\sem{\feature_N}_G(u)=0$. 
\end{proof}

\ThmSumSemilinearSquareIsotypes*
\begin{proof}
    The proof is very similar to that of Theorem ~\ref{thm:localmaxsemilinear recognizes up to isomorphism}. Now, $\feature_N, \feature_{k_1},\dots\feature_{k_N}$ are as defined in \ref{lem:Get size and keys with local sum}, and for each permutation $\pi \in S_N$, and each $u_i \in V(G)$, $\localsum$ is used instead of $\localmax$ to define $\feature^\pi_{u_i}$ which is $1$ if a node has the $\pi(i)$'th largest key and a local neighborhood isomorphic to that of $u_i$ with the isomorphism defined by $\pi$, and $0$ otherwise. Finally, $\agnn$ computes:
    \begin{align*}
        \feature^\pi &= \relu(\sum^N_{i=1} \max(\localsum_N(\feature^\pi_{u_i}),1)+\feature^\pi_{u_1,\key}-N)\\
        \feature &= \ifPos(\feature_N, \sum_{\pi\in S_N}\feature^\pi, 0)
    \end{align*}
\end{proof}

\section{Proofs for Section~\ref{sec:discussion}}

\subsection{Proof of Theorem \ref{thm:epsilonapart}: $\epsilon$-apart Keys}

By a \emph{discrete coloring} of a graph, we will mean a coloring such that each node is assigned a different color.

\begin{lemma}
\label{lem:cr equiv is isomorphism on individualized graphs}
    Let $\apgraph, \altpgraph$ be pointed graphs and let $\col_G, \col_H$ be discrete node colorings that distinguish labels (i.e. if $\lab_G(u_1)\neq\lab_H(v_1)$) then $\col_G(u_1)\neq\col_H(v_1)$). Suppose $\refine(\agraph,\col_G,r+1)(u) = \refine(\altgraph,\col_H,r+1)(v)$, where $\refine$ is as in Subsection \ref{subsec:colorrefbisimulation}. Then $\apgraph \rneighborhood$ and $\altpgraph \rneighborhood$ are isomorphic. 
\end{lemma}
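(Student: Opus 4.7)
The plan is to prove this by induction on $r$, leveraging the injectivity of $\HASH$ to ``peel back'' one round of refinement at a time, and the fact that an initially discrete coloring remains injective under refinement to obtain canonical bijections.

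First I would establish a general unpacking lemma: whenever $\refine(\agraph,\col_G,k)(u_1) = \refine(\altgraph,\col_H,k)(v_1)$ for some $k\ge 1$, injectivity of $\HASH$ yields both (i) $\refine(\agraph,\col_G,k-1)(u_1) = \refine(\altgraph,\col_H,k-1)(v_1)$, and (ii) the equality of multisets
\[
\multiset{\refine(\agraph,\col_G,k-1)(u_2)\mid (u_1,u_2)\in E(\agraph)} = \multiset{\refine(\altgraph,\col_H,k-1)(v_2)\mid (v_1,v_2)\in E(\altgraph)}.
\]
Iterating (i) down to $k=0$ gives $\col_G(u_1)=\col_H(v_1)$, which by the label-distinguishing hypothesis forces $\lab_G(u_1)=\lab_H(v_1)$. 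Moreover, because $\col_G,\col_H$ are discrete, an easy induction shows that $\refine(\agraph,\col_G,k)$ and $\refine(\altgraph,\col_H,k)$ remain injective on their respective vertex sets for every $k$, so the multisets in (ii) are actually sets, giving a canonical bijection between $N_G(u_1)$ and $N_H(v_1)$ that preserves $(k-1)$-round colors.

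For the base case $r=0$, the lemma applied with $k=1$ immediately gives $\lab_G(u)=\lab_H(v)$, so $\apgraph\upharpoonright 0$ and $\altpgraph\upharpoonright 0$ are isomorphic. For the inductive step, applying the lemma to $\refine(\agraph,\col_G,r+2)(u)=\refine(\altgraph,\col_H,r+2)(v)$ at level $k=r+2$ yields a bijection $\pi:N_G(u)\to N_H(v)$ with $\refine(\agraph,\col_G,r+1)(u')=\refine(\altgraph,\col_H,r+1)(\pi(u'))$ for each $u'\in N_G(u)$. I would then define a single global map $f:V(\apgraph\upharpoonright(r+1))\to V(\altpgraph\upharpoonright(r+1))$ by setting $f(u)=v$ and, for any other $u''$ in the $(r+1)$-neighborhood, letting $f(u'')$ be the unique $v''\in V(\altgraph)$ with $\col_H(v'')=\col_G(u'')$ --- uniqueness is automatic from injectivity of $\col_H$, and existence is obtained by iterating (ii) along a shortest path $u=u_0,u_1,\dots,u_d=u''$ in $\agraph$, producing matching vertices $v_0=v,v_1,\dots,v_d$ in $\altgraph$ at decreasing refinement levels $r+1,r,\dots,r+1-d\ge 0$ and matching initial colors in particular.

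Finally I would verify that $f$ is an isomorphism: label preservation follows directly from $\col_G(u'')=\col_H(f(u''))$ and the label-distinguishing hypothesis; edge preservation in both directions follows from applying the lemma at each vertex, since the unpacking of (ii) maps edges to edges; injectivity of $f$ follows from injectivity of $\col_H$, and surjectivity onto $V(\altpgraph\upharpoonright(r+1))$ is symmetric. The one subtle point --- which I expect to be the main obstacle to write up cleanly --- is well-definedness of $f$: a vertex $u''$ may be reached from $u$ by multiple paths, and one must check that the matching neighbor produced at each step is path-independent. This is precisely where discreteness is used: because $\col_H$ is injective, the candidate $v''$ is determined by $\col_G(u'')$ alone, so any two paths must yield the same value. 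The rest is bookkeeping.
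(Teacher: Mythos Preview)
Your argument is correct. The paper's route is close to yours but slightly more economical: it defines the bijection $f$ by matching \emph{1-round} refined colors rather than initial colors. Since a 1-round color already encodes both the initial color and the set of neighbors' initial colors (a set, by discreteness), edge preservation then drops out in one line---if $(u_1,u_2)\in E(G)$, equality of 1-round colors at $u_1$ and $f(u_1)$ forces $f(u_1)$ to have a neighbor with $u_2$'s initial color, which by discreteness is $f(u_2)$. Your path-tracing achieves the same thing but with more bookkeeping: you must re-establish level-$\geq 1$ agreement at every vertex before you can unpack. Two minor remarks: the induction on $r$ is never actually invoked---the argument you give in the ``inductive step'' is already a complete direct proof, so the scaffolding can be dropped; and the ``subtle point'' you flag (well-definedness of $f$) is, as you yourself observe, immediate from discreteness of $\col_H$ and not where the work lies.
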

\begin{proof}
    $\refine(\agraph,\col_G,1)$ and $\refine(\altgraph,\col_H,1)$ map onto the same colors within the $r$ neighborhoods. To see this, suppose that for some $u_1$ at distance $\leq r$ from $u$, $\refine(\agraph,\col_G,1)(u_1)$ is not in the image of $\refine(\altgraph,\col_H,1)$ restricted to the $r$ neighborhood of $v$. Then $\refine(\agraph,\col_G,r+1)(u) \neq \refine(\altgraph,\col_H,r+1)(v)$. The same argument applies in the other direction.

    Hence, let $f$ be the bijection between the $r$ neighborhoods of $u$ and $v$ where $f(u_1) = v_1$ if and only if $\refine(\agraph,\col_G,1)(u_1)=\refine(\altgraph,\col_H,1)(v_1)$. $f$ preserves the initial colors, and hence the labels in both directions. In addition, if $(u_1,u_2) \in E_G$ where both $u_1$ and $u_2$ are in the $r$ neighborhood of $u$, then $f(u_1)$ is connected to a node $v_2$ with the same initial color as $u_2$. Hence $f(u_2)=v_2$. The same argument shows that $f$ preserves edges in the other direction. Thus, $f$ is an isomorphism between $\apgraph \rneighborhood$ and $\altpgraph \rneighborhood$.
\end{proof}
\epsilonapart*
\begin{proof}
    Let $K\subset \mathbb{R}$ be the space of keys where all elements are at least $\epsilon$ apart for some $\epsilon >0$. Let $f: \Pi \times K \to \mathbb{N}$ be a bijection from labels and keys to colors. Let $\calQ$ be a strongly local query with radius $r$. By Lemma \ref{lem:cr equiv is isomorphism on individualized graphs}, $r+1$ iterations of color refinement separate discretely colored graphs in $\calQ$ from discretely colored graphs not in $\calQ$. Thus, there  is a function $g: \mathbb{N} \to \{0,1\}$
    such that on keyed pointed graph $\apgraph_\key$, $g(\refine(G,f(\lab_{G},\key),r+1)(u))=1$ if and only if $G^{u} \in \calQ$.  

    Since $f$ and $g$ can be extended to continuous functions from $\reals$ to $\reals$, and since by the proof of Proposition \ref{prop:Oblivious continuous sum implements WL}, $r+1$ rounds of color refinement are expressed with a continuous feature, there exists a key-invariant \localsumcontinuous \GNN that expresses $\calQ$.
\end{proof}

\subsection{Proof of Theorem \ref{thm:iso test with global sum}: \globallocalsum--(Semilinear) \GNNs express All Isomorphism Types}

Lemma \ref{lem:Get size and keys with local sum} showed that \localsum \GNNs with semilinear combination functions and squaring can check if the size of a connected graph is $N$ and store all the keys in descending order at each node. We now show that \globallocalsum \GNNs with only semilinear combination functions can do the same, also for non-connected graphs.
\begin{lemma}
\label{lem:Get size and keys with global sum}
    Let $N>0$. There exists a \globalsumsemilinear \GNN that computes $\feature_{N}, \feature_{k_1}, \dots \feature_{k_N}$, where for any pointed keyed graph $\apgraph_\key$, $\sem{\feature_{N}}_\agraph(u)$ is $1$ if $\apgraph_\key$ has size $N$ and $0$ otherwise, and if $\apgraph$ has size $N$, $\sem{\feature_{k_1}}_\agraph(u), \dots \sem{\feature_{k_N}}_\agraph(u)$ are the keys of $\apgraph_\key$ in descending order.
\end{lemma}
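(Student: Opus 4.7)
The proof follows the structure of Lemma~\ref{lem:Get size and keys with local sum}, but the availability of global aggregation simplifies the construction in two ways: we can directly count all nodes at any layer, and we can identify the maximum of a multiset of distinct values using only semilinear combinations, i.e.\ without the Cauchy--Schwarz / squaring trick used in the local case. As in the cited lemma, we first preprocess the keys via $\ifPos(-\val,\val,\val+1)$ so that every key is non-zero, allowing $0$ to serve as a ``placeholder''. Note that, unlike Lemma~\ref{lem:Get size and keys with local sum}, we do not require $\apgraph$ to be connected, since global aggregation sees the whole vertex set.

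To compute $\feature_N$, apply one round of global aggregation with constant message $1$ to obtain $N_{\text{graph}} = \globalsum(1) = |V(\agraph)|$ at every node, and set $\feature_N = \ifPos(1 - |N_{\text{graph}} - N|, 1, 0)$. To extract the sorted keys, proceed in $N$ stages, maintaining an availability feature $\feature_{s,i}$ with $\feature_{s,0}(w) = 1$ for all $w$. At stage $i \in \{1,\ldots,N\}$ we will set $\feature_{k_i}$ to the largest currently available key and then mark the unique node witnessing it as unavailable. The core subroutine extracts the maximum of an active set as follows. Initialize a contender indicator $t_0 = \feature_{s,i-1}$ and, for $j = 0,\ldots,N-1$, compute
\begin{align*}
S_j &= \globalsum(t_j),\\
T_j &= \globalsum(t_j\cdot\val),\\
A_j &= \sum_{l=1}^{N} \ifPos(1 - |l - S_j|,\; T_j/l,\; 0),\\
t_{j+1}(w) &= t_j(w)\cdot \ifPos(A_j - \val(w),\, 0,\, 1).
\end{align*}
The third line computes the contender average $A_j = T_j/S_j$ by enumerating the possible values $S_j\in\{1,\ldots,N\}$ (exactly as in Lemma~\ref{lem:Get size and keys with local sum}), and the fourth line retains exactly the contenders with key $\geq A_j$. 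Finally, set $\feature_{k_i} = \globalsum(t_{N-1}\cdot\val)$ and update $\feature_{s,i}(w) = \feature_{s,i-1}(w)\cdot \ifPos(|\val(w) - \feature_{k_i}|, 1, 0)$.

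For correctness of the subroutine, observe that whenever at least two contenders are active their minimum key is strictly below their average (keys are distinct), so at least one contender is dropped per iteration; conversely, a lone contender's key equals the average, so the $\geq$-test retains it. Hence after $N-1$ iterations exactly one contender remains, and it is the maximum available key, so $\feature_{k_i}$ equals the $i$-th largest key of $\apgraph_\key$; moreover $\feature_{s,i}(w) = 1$ iff $w$ is still available after stage $i$. Because $\globalsum$ produces the same value at every node, $\sem{\feature_{k_i}}_\agraph(u)$ is immediately the desired key, with no further propagation needed.

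The main obstacle compared to the local setting — detecting whether a multiset has a unique maximum — thus vanishes: instead of using squaring and Cauchy--Schwarz to recognize this situation, we force it by the strict monotonic progress of the averaging iteration on distinct keys. All operations used (global sums, absolute values, and the piecewise-linear $\ifPos$) are semilinear, so the whole construction gives a \globalsumsemilinear \GNN as required.
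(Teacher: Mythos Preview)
Your proof is correct and follows essentially the same approach as the paper: preprocess keys to be nonzero, test the size via $\globalsum(1)$, and extract the $i$-th largest key by repeatedly averaging the remaining ``active'' keys and discarding those strictly below the average. The paper stores active keys directly (key or $0$) rather than maintaining a separate Boolean indicator $t_j$, but the computations are the same. One small point worth making explicit: your updates use products like $t_j\cdot\val$ and $t_j\cdot\ifPos(\cdots)$, which are not semilinear in general; they are semilinear here only because $t_j\in\{0,1\}$, so $t_j\cdot x$ can be written as $\ifPos(t_j-\tfrac12,x,0)$.
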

\begin{proof}
    We assume all keys are non-zero, which can be achieved with $\ifPos(-\val, \val, \val+1)$. Let $\feature_N = \relu(1-|\globalsum(1)-N|)$. For $1 < i \leq N$ let $\feature_{\val,i}$ store keys that are smaller than $\feature_{k,i-1}$, we call these keys `active at step $i$':
    \begin{align*}
        \feature_{\val,1} &= \val\\
        \feature_{\val,i} &= \ifPos(\feature_{k,i-1}-\val, \val, 0)
    \end{align*}
    Given the active keys at step $i$ we perform $N$ iterations of computing the average of active keys and setting keys smaller than this average to $0$ to obtain the largest active key.     \begin{align*}
     \feature^1_{\val,i} &= \feature_{\val,i} \\
     \feature^j_{\val,i, \text{sum}} &= \globalsum(\feature^j_{\val,i}))\\
     \feature^j_{\val,i, \text{weight}} &= \globalsum(\ifPos(|\feature^j_{\val,i}|,1,0)))\\
     \feature^j_{\val,i, \text{avg}} &= \sum^{N}_{l=1}\ifPos(1-|l-\feature^j_{\val,i, \text{weight}}|, \feature^j_{\val,i, \text{sum}}/l, 0)\\
    \feature^j_{\val,i} &= \ifPos(\feature^{j-1}_{\val,i,\text{avg}}-\feature_{\val,i},0, \feature_{\val,i})
    \end{align*}
Then let $\feature_{k_i} = \feature^N_{\val,i}$. $\sem{\feature_{k_i}}(u)$ is the $i'$th largest key.
\end{proof}

\ThmIsoWithGlobalSum*
\begin{proof}
This proof closely follows that of Theorem \ref{thm:iso test with local sum}, now using Lemma \ref{lem:Get size and keys with global sum} instead of Lemma \ref{lem:Get size and keys with local sum}. The only further distinction is that now instead of $N$ applications of \localsum, a single application of \globalsum is used to check that each node is locally isomorphic to its pre-image as defined by $\pi \in S_N$.
\end{proof}

\subsection{Proof of Theorem \ref{thm:undecidable-invariance}: Undecidability of \localmaxsemilinear \GNNs}
\thmUndecidabilityKeyInvariance*

\begin{proof}
We make use of Theorem \ref{thm:undecidableuddl}, which states that satisfiability for \uddl is undecidable, and Theorem~\ref{thm:uddllower} which states
that there is a computable procedure that takes a \uddl formula $\phi$ and constructs a semilinear $\localmax$ GNN $\agnn_\phi$ that performs the same node query as $\phi$. Note that $\agnn_\phi$
 must be key-invariant, since $\phi$ does not use keys. 
We will reduce satisfiability of \uddl to key-invariance.

Given a \uddl formula $\phi$, we construct a  semilinear $\localmax$ GNN $\agnn_{\textbf{test} \phi}$.that returns true on a pointed graph if and only if  $\agnn_\phi$  
returns true and the key of the distinguished node is above zero.
Clearly we can construct such a GNN from $\agnn_\phi$ by just adding a feature $\feature_0$ that is $1$ exactly when the key is above zero, and applying a final combination function 
that takes the minimum of the final feature of $\agnn_\phi$ and $\feature_0$.
If $\phi$ is unsatisfiable, then $\agnn_{\textbf{test} \phi}$ 
always returns $\false$, and hence is key-invariant. If $\phi$ is satisfiable, let $\apgraph$ be the witness. Then $\agnn_{\textbf{test} \phi}$  will return
true on keyings of $\apgraph$ where the key of the distinguished node is above zero, and false on other keyings: so it is not key-invariant.
\end{proof}

\end{document}